\def\tr{\mbox{tr}}
\def\f12{\frac 1 2}
\def\a{\alpha}
\def\b{\beta}
\def\de{\delta}
\def\si{\sigma}
\def\pa{\partial}
\def\f12{\frac 1 2}
\newcommand{\nabb}{\mbox{$\nabla \mkern-13mu /$\,}}
\newcommand{\lessflat}
{\reflectbox{\mbox{$\flat$}}}
\newtheorem{theorem}{Theorem}[section]
\newtheorem{proposition}{Proposition}[subsection]
\newtheorem{conjecture}{Conjecture}[section]
\newtheorem{corollary}{Corollary}[section]
\newtheorem{problem}{Open problem}
\title{Lectures on black holes and linear waves}
\author{Mihalis Dafermos\thanks{University of Cambridge,
Department of Pure Mathematics and Mathematical Statistics,
Wilberforce Road, Cambridge CB3 0WB United Kingdom}
\and
 Igor Rodnianski\thanks{Princeton University,
Department of Mathematics, Fine Hall, Washington Road,
Princeton, NJ 08544 United States}
}
\begin{document}
\maketitle
\begin{abstract}
These lecture notes,  based on a course 
given at the Z\"urich Clay Summer School (June 23--July 18 2008),
review our current mathematical understanding of 
the global behaviour of waves on black hole exterior backgrounds.
Interest in this problem stems from its relationship to the non-linear stability of
the black hole spacetimes themselves
as solutions to the Einstein equations, one of the central open problems
of general relativity. 
After an introductory discussion of the Schwarzschild geometry and the black hole
concept,
the classical theorem of Kay and Wald  on the
boundedness of scalar waves on the exterior region of
Schwarzschild is reviewed. The original proof is
presented, followed by
a new more robust proof of a stronger  boundedness
statement. The problem of decay of scalar waves on Schwarzschild
is then addressed, and a theorem proving quantitative decay is stated and its proof sketched.
This decay statement is carefully contrasted with the type of statements derived
heuristically in the physics literature for the asymptotic tails
of individual spherical harmonics.
Following this, our recent proof of the boundedness of solutions to the wave equation on 
axisymmetric stationary backgrounds (including slowly-rotating Kerr and Kerr-Newman)
is reviewed and a new decay result for slowly-rotating 
Kerr spacetimes is stated and proved. This last result was announced
at the summer school and appears in print here for the first time. 
A discussion of the analogue of these problems for spacetimes with a positive
cosmological constant $\Lambda>0$ follows. Finally, a general framework
is given for capturing the red-shift effect for non-extremal black holes.
This unifies and extends some of the analysis of the previous sections.
The notes end with a collection of open problems.
\end{abstract}
\tableofcontents
\section{Introduction: General relativity and evolution}
Black holes are one of the fundamental predictions of general relativity. 
At the same time, they are one of its least understood (and most
often misunderstood) aspects.
These lectures intend to introduce the black hole concept
and the analysis of waves on black hole backgrounds $(\mathcal{M},g)$ 
by means of the example of the scalar wave equation
\begin{equation}
\label{fromintro}
\Box_g\psi=0.
\end{equation}

We do not assume the reader is familiar with general relativity, only basic analysis
and differential geometry. In this introductory
section, we briefly describe general relativity in outline form, taking
from the beginning
the evolutionary point of view which puts the Cauchy problem for the 
\emph{Einstein equations}--the system 
of nonlinear partial differential equations (see $(\ref{Eeq})$ below) governing the
theory--at the centre.  
The problem
 $(\ref{fromintro})$ can be viewed as a poor man's linearisation for the
Einstein equations. Study of $(\ref{fromintro})$ is then intimately related to the problem of
the dynamic
stability of the black hole spacetimes $(\mathcal{M},g)$ themselves.
Thus, one should view the subject of these lectures as intimately connected to the very tenability
of the black hole concept in the theory.

\subsection{General relativity and the Einstein equations}
General relativity postulates a
$4$-dimensional Lorentzian manifold $(\mathcal{M}, g)$--\emph{space-time}--which is
to satisfy the \emph{Einstein equations}
\begin{equation}
\label{Eeq}
R_{\mu\nu}-\frac12 g_{\mu\nu} R = 8\pi T_{\mu\nu}.
\end{equation}
Here,
$R_{\mu\nu}$, $R$ denote the \emph{Ricci} and \emph{scalar} curvature of
$g$, respectively, and $T_{\mu\nu}$ denotes a symmetric $2$-tensor on $\mathcal{M}$
termed the \emph{stress-energy-momentum tensor}
of matter. (Necessary background on Lorentzian geometry to understand the
above notation is given in Appendix~\ref{Lorge}.)
The equations $(\ref{Eeq})$ in of themselves do not close, but must be
coupled to ``matter equations'' satisfied by a collection $\{\Psi_i\}$ 
of matter fields defined on $\mathcal{M}$, 
together with a constitutive relation
determining $T_{\mu\nu}$ from $\{g,\Psi_i\}$. These equations and relations are stipulated
by the relevant continuum field theory (electromagnetism, fluid dynamics, etc.)~describing the matter.
The formulation of general relativity represents the culmination of 
the classical field-theoretic world-view where physics is governed by
a closed
system of partial differential equations.

Einstein was led to the system $(\ref{Eeq})$ in 1915,
after a $7$-year struggle to incorporate gravity into his earlier principle of relativity.
In the field-theoretic formulation of the ``Newtonian'' theory, gravity was described
by the \emph{Newtonian potential} $\phi$ satisfying
the \emph{Poisson equation}
\begin{equation}
\label{Poisson}
\triangle \phi =4\pi \mu,
\end{equation}
where $\mu$ denotes the \emph{mass-density} of matter. It is truly remarkable that
the constraints of consistency were so rigid that encorporating gravitation 
required finally a complete reworking of the principle of relativity, leading
to a theory where Newtonian gravity, special relativity and Euclidean geometry each
emerge as
limiting aspects of one dynamic geometrical structure--the Lorentzian metric--naturally living
on a $4$-dimensional spacetime continuum.
A second remarkable aspect of general relativity is that, in contrast to its Newtonian
predecessor, the theory is non-trivial even in the absence of matter.
In that case, we set $T_{\mu\nu}=0$ and the system  $(\ref{Eeq})$
takes the form
\begin{equation}
\label{Evac}
R_{\mu\nu}=0.
\end{equation}
The equations $(\ref{Evac})$ are known as the \emph{Einstein vacuum equations}.
Whereas $(\ref{Poisson})$ is a linear elliptic equation, $(\ref{Evac})$
can be seen to form a closed system of non-linear (but \emph{quasilinear})
wave equations. 
Essentially all of the characteristic features of the dynamics of the Einstein equations
are already present in the study of the vacuum equations $(\ref{Evac})$.

\subsection{Special solutions: Minkowski, Schwarzschild, Kerr}
To understand a theory like general relativity where the fundamental equations 
$(\ref{Evac})$ are nonlinear, the first goal often is to
identify and study important \emph{explicit solutions}, i.e.,
 solutions which can be written in closed form.\footnote{The traditional 
terminology in
general relativity for such solutions is \emph{exact solutions}.}
Much of the early history of general relativity centred around the discovery and interpretation of such
solutions. The simplest explicit solution to the Einstein vacuum equations $(\ref{Evac})$ 
is \emph{Minkowski space}
$\mathbb R^{3+1}$.  The next simplest solution of $(\ref{Evac})$ is the so-called \emph{Schwarzschild solution}, written down~\cite{Schwp} already in 1916.
This is in fact a one-parameter family of solutions $(\mathcal{M},g_M)$, the parameter
$M$ identified with \emph{mass}. See~$(\ref{incords})$ below for the metric form. 
The Schwarzschild family lives as a  subfamily
in a larger two-parameter family of explicit solutions $(\mathcal{M}, g_{M,a})$
 known as the \emph{Kerr solutions}, discussed in Section~\ref{thekerrmetric}.
These were discovered only much later~\cite{kerr} (1963).

When the Schwarzschild solution was first written down in local coordinates, the
necessary concepts to understand its geometry had not yet been developed. 
It took nearly 50 years from the time
when Schwarzschild was first discovered for its global geometry to be 
sufficiently well understood so as to be given a
suitable name: Schwarzschild and Kerr were examples
of what came to be known as \emph{black hole} spacetimes\footnote{This
name is due to  John Wheeler.}. The
Schwarzschild solution also illustrates another feature of the Einstein equations,
namely, the presence of singularities.  

We will spend Section~\ref{introsec} telling the story of the emergence of the black hole notion
and sorting out what the distinct notions of  ``black hole'' and ``singularity'' mean. 
For the purpose of the present introductory section, 
let us take the notion of ``black hole'' as a ``black box''
and make some general remarks on the role
of explicit solutions, whatever might be their properties. These remarks are relevant for any 
physical theory governed by an evolution equation.

\subsection{Dynamics and the stability problem}
Explicit solutions are indeed suggestive as to how
general solutions behave, but only if they are appropriately  ``stable''. 
In general relativity, this notion can in turn only be understood
after the problem of dynamics for $(\ref{Evac})$ has been formulated, that is to say,
the \emph{Cauchy problem}. 

In contrast to other non-linear field theories arising in physics,
in the case of general relativity, even formulating the Cauchy problem requires addressing 
several conceptual issues (e.g.~in what sense is $(\ref{Evac})$ hyperbolic?), and these took
a long time to be correctly sorted out.  
Important advances in this process include
the identification of the harmonic gauge by de Donder~\cite{deDonder},
the existence and uniqueness theorems for general quasilinear wave equations
in the 1930's based on work of Friedrichs, Schauder, Sobolev, Petrovsky, Leray and others,
and  Leray's notion of global hyperbolicity~\cite{leray}.
The well-posedness of the appropriate Cauchy problem for the vacuum equations $(\ref{Evac})$
was finally formulated and proven in celebrated work of
Choquet-Bruhat~\cite{choquet2} (1952) and 
Choquet-Bruhat--Geroch~\cite{chge:givp} (1969). 
See  Appendix~\ref{cauchyproblem} for a concise survey of these developments and
the precise statement of the existence and uniqueness
theorems and some comments on their proof. 

In retrospect, much of the confusion in early discussions
of the Schwarzschild solution can be traced to the lack of  a dynamic framework to understand
the theory.
It is only in 
the context of the language provided by~\cite{chge:givp} that one can then formulate
the dynamical stability problem and examine the relevance of various explicit
solutions.

The stability of Minkowski space
 was first proven in the monumental work of Christodoulou and Klainerman~\cite{book}. 
 See Appendix~\ref{stabsect} for a formulation of this result.
The dynamical stability of the Kerr family as  a family of solutions to the Cauchy 
problem for the Einstein equations, even restricted to parameter values near 
Schwarzschild, i.e.~$|a|\ll M$,\footnote{Note that without symmetry assumptions
one cannot study the stability 
problem for Schwarzschild per se. Only the larger Kerr family can be stable.}
is yet to be understood and poses an important challenge for the mathematical study of
general relativity in the coming years. 
See Section~\ref{formulation} for a formulation of this problem.
In fact, even the most basic
linear properties of waves (e.g.~solutions of $(\ref{fromintro})$)
on Kerr spacetime backgrounds (or more generally, backgrounds near Kerr) have only recently
been understood. 
In view of the wave-like features of the Einstein
 equations $(\ref{Evac})$ (see in particular Appendix~\ref{harmonic}),
this latter problem should be thought of as a prerequisite for understanding
the non-linear stability problem.

\subsection{Outline of the lectures}
The above linear problem will be the main topic of these lectures: We shall here
develop from the beginning the study of the linear homogeneous wave equation
$(\ref{fromintro})$
on fixed black hole spacetime backgrounds $(\mathcal{M},g)$. We have already
referred in passing to the content of some of the later sections. Let us give here
a complete outline:
Section~\ref{introsec} will introduce the black hole concept and the 
Schwarzschild geometry in the wider context of open problems in general relativity.
Section~\ref{S1} will concern the basic boundedness properties for solutions $\psi$
of $(\ref{fromintro})$ on Schwarzschild
exterior backgrounds.
Section~\ref{S2} will concern quantitative decay properties for $\psi$. 
Section~\ref{pertsec} will move on to spacetimes $(\mathcal{M},g)$
 ``near'' Schwarzschild, including
slowly rotating Kerr, 
discussing boundedness and decay properties
for solutions to $(\ref{fromintro})$ on 
such $(\mathcal{M},g)$, and ending in Section~\ref{formulation} with a formulation
of the non-linear stability problem for Kerr, the open problem which in some sense
provides the central motivation for these notes.
Section~\ref{cosmolosec} will consider the analogues of these
problems in spacetimes  with a positive cosmological
constant $\Lambda$,
Section~\ref{epilogue} will give a multiplier-type estimate valid for general
non-degenerate Killing horizons which quantifies the classical red-shift effect.
The importance of the red-shift effect as a stabilising mechanism for the analysis
of waves on black hole backgrounds will be a common theme
throughout these lectures. The notes end with a collection of
open problems in Section~\ref{acik}.

The proof of Theorem~\ref{DT} of Section~\ref{pertsec} as well as
all results of Section~\ref{epilogue} appear in print in these notes for the first time.
The discussion of Section~\ref{newproofofb} as well as the
proof of Theorem~\ref{Schdec} 
have also been streamlined in comparision with previous presentations.
We have given a guide to background literature in 
Sections~\ref{comsecs},~\ref{neocoms},~\ref{neocoms2} and~\ref{neocoms3}.

We have tried to strike a balance in these notes between making the discussion self-containted
and providing the necessary background to appreciate the place of the problem~$(\ref{fromintro})$
in the context of the current state of the art of the Cauchy problem for the Einstein equations 
$(\ref{Eeq})$ or $(\ref{Evac})$  and
the main open problems and conjectures which will guide this subject in the future.
Our solution has been to use the history of the Schwarzschild solution
as a starting point  in Section~\ref{introsec} for a number of digressions
into the study of gravitational collapse, singularities, and the weak and strong cosmic censorship
conjectures,
deferring, however, formal development of various important notions 
relating to Lorentzian geometry and the well-posedness of the Einstein equations  to a series of Appendices. We have already referred to these appendices in the text.
The informal nature of Section~\ref{introsec}
should make it clear   that the discussion is not intended as a proper survey, but merely
to expose the reader to important open problems
in the field and point to some references for further study.
The impatient reader is encouraged 
to move quickly through Section~\ref{introsec} at a first reading.
The problem~$(\ref{fromintro})$ is itself rather self-contained,
requiring only basic analysis and differential geometry, together with a good 
understanding of the black hole spacetimes, in particular, their so-called causal geometry.
The discussion of Section~\ref{introsec} should be more than enough for the latter, although
the reader may want to supplement this with a more general discussion, for instance~\cite{chru}.

These notes accompanied a series of lectures at
a summer school on ``Evolution Equations'' organized by
the Clay Mathematics Institute, June--July 2008. 
The centrality of the evolutionary point of view in general relativity
is often absent from textbook discussions. 
(See however the recent~\cite{rendall}.)
We hope that these notes contribute to the point of view that puts general relativity  at the centre of
modern developments in 
partial differential equations of
evolution.

 %

\section{The Schwarzschild metric and black holes}
\label{introsec}
Practically all concepts in the development of general relativity and much of its history
can be told from the point of view of the Schwarzschild solution.
We now readily associate this solution with
the  black hole concept. It is important to remember, however, that
the Schwarzschild solution was first discovered in a thoroughly classical
astrophysical setting: it was to represent the vacuum region outside a star.
The black hole interpretation--though in some sense inevitable--historically
only emerged much later.

The most efficient way to present the Schwarzschild solution is 
to begin at the onset with  Kruskal's maximal extension as a point of departure. Instead, we 
 shall take advantage of the informal nature of the present
notes to attempt a  more conversational and  ``historical'' presentation of the 
Schwarzschild metric and its interpretation.\footnote{This in no way 
should be considered as a true attempt at the history of the
solution, simply a pedagogical approach to its study. See for example~\cite{eisen}.}
Although certainly not the quickest route, this approach has the advantage
of highlighting the themes which have become so important in the 
subject--in particular, singularities, black holes and their event horizons--with 
the excitement of their step-by-step unravelling from their origin in a model
for the simplest of general relativistic stars. The Schwarzschild solution
will naturally lead to discussions of the Oppenheimer-Snyder collapse
model, the cosmic censorship conjectures, trapped surfaces and Penrose's incompleteness
theorems, and recent work of Christodoulou on trapped surface formation in vacuum 
collapse, and we elaborate on these topics in Sections~\ref{negmassec}--\ref{vaccol}.
(The discussion in these three
last sections was not included in the lectures, however, and is not
necessary for understanding the rest of the notes.)

\subsection{Schwarzschild's stars}
\label{stars}
The most basic self-gravitating
objects are stars. 
In the most primitive stellar models, dating from the 19th century, stars are modeled
by a self-gravitating fluid surrounded by vacuum.
Moreover, to  a first approximation, classically 
stars are \emph{spherically symmetric} and \emph{static}.

It should not be surprising then
that early research on the
Einstein equations  $(\ref{Eeq})$ would address the question of the existence and
structure of general relativistic stars in the new theory.
In view of our above discussion, the most basic  problem is to understand 
spherically symmetric, static metrics,
represented
 in coordinates $(t,r,\theta,\phi)$, such that the spacetime has two regions: In the region
 $r\le R_0$--the interior of the star--the metric should solve a suitable Einstein-matter system
$(\ref{Eeq})$ with appropriate matter, 
and in the region $r\ge R_0$--the exterior of the star--the spacetime should be vacuum, i.e.~the metric
should solve $(\ref{Evac})$.
\[
\input{schw0.pstex_t}
\]

This is the problem first addressed by Schwarzschild~\cite{Schwp, Schwp2}, already
in 1916.
Schwarzschild considered the vacuum region first~\cite{Schwp} and
arrived\footnote{As is often the case, the actual history
is more complicated. Schwarzschild
 based his work on an earlier version of Einstein's theory which, while 
obtaining the correct vacuum equations, imposed a condition on admissible
coordinate systems which would in fact
exclude the coordinates of $(\ref{incords})$.
Thus he had to use a rescaled $r$ as a coordinate. Once this condition was removed
from the theory, there is no reason not to take $r$ itself as the coordinate.
It is in this sense that these coordinates can reasonably be
called ``Schwarzschild coordinates''.} at the one-parameter family of solutions:
\begin{equation}
\label{incords}
g=
-\left(1-\frac{2M}r\right)dt^2+\left(1-\frac{2M}r\right)^{-1}dr^2+r^2(d\theta^2+\sin^2\theta\, d\phi^2).
\end{equation}
Every student of this subject should
 explicitly check that this solves $(\ref{Evac})$ ({\bf Exercise}).

In~\cite{Schwp2}, Schwarzschild found interior metrics for the darker shaded region
$r\le R_0$ above. In this region, matter is described by a perfect fluid.
We shall not write down explicitly such metrics here, as this would require a long digression
into fluids,
their equations of state, etc. See~\cite{2phase}.  Suffice it to say here that
the existence of such solutions required that one take the constant $M$ positive, and 
the value $R_0$ marking the boundary of the star always satisfied $R_0>2M$.
The constant $M$ could then be identified with the
total mass of the star as measured by considering the orbits of far-away test
particles.\footnote{Test particles in general relativity follow timelike geodesics of the spacetime
metric. {\bf Exercise}: Explain the statement claimed about far-away test particles.
See also Appendix~\ref{asymptflat}.} 
In fact, for most reasonable matter models, static solutions of the type described
above  only exist under a stronger restriction on $R_0$ (namely $R_0\ge 9M /4$)
now known as the \emph{Buchdahl inequality}. See~\cite{buchdahl, hakan, Buch}.

The restriction on $R_0$ necessary for the existence of Schwarzschild's stars appears
quite fortuitous:
It is manifest from the form $(\ref{incords})$ that
the components of $g$ are singular if the $(t,r,\theta,\phi)$ coordinate
system for the vacuum region is extended to $r=2M$. 
But a natural (if perhaps seemingly of only academic interest) question arises, namely, 
what happens if one does away completely with the star and tries simply to consider 
the expression $(\ref{incords})$ for all values of $r$?
This at first glance would appear to be the problem of understanding the gravitational
field of a ``point particle'' with the particle removed.\footnote{Hence the title of~\cite{Schwp}.}

For much of the history of general relativity,
the degeneration of the metric functions at $r=2M$, when written in these coordinates,
was understood
as meaning that 
the gravitational field should be considered singular there.
This was the famous Schwarzschild ``singularity''.\footnote{Let the reader keep in mind that there
is a good reason for the quotation marks here and for those that follow.} 
 Since ``singularities'' were
considered ``bad'' by most
 pioneers of the theory, various arguments were concocted 
to show that the behaviour of $g$ where $r=2M$ is to be thought
of as ``pathological'', ``unstable'',
``unphysical''
and thus, the solution should not be considered there. 
The constraint on $R_0$ related to the 
Buchdahl inequality seemed to give support  
to this point of view. See also~\cite{Einstein}.

With the benefit of hindsight, we now know that
the interpretation of the previous paragraph is incorrect, on essentially every level:
neither is $r=2M$ a singularity, nor are singularities--which do in fact occur!--necessarily to 
be discarded!
Nor is it true that non-existence of static stars renders the behaviour at $r=2M$--whatever
it is--``unstable''
or ``unphysical''; on the contrary, it was an early hint of 
gravitational collapse!
Let us put aside this hindsight for now and try to discover for ourselves
the geometry and ``true'' singularities hidden in~$(\ref{incords})$, as well as
the correct framework for identifying ``physical'' solutions.
In so doing, we are
retracing in part the steps of early pioneers who studied these issues
without the benefit of the global geometric framework we now have at our disposal.
All the notions referred to above will reveal themselves in the next subsections.

\subsection{Extensions beyond the horizon} 
\label{orextsec}
The fact that the behaviour of the metric at $r=2M$ is not singular,
but simply akin to the well-known breakdown of the coordinates $(\ref{incords})$ at $\theta=0,\pi$ 
(this latter breakdown having never confused anyone\ldots), is 
actually quite easy to see, and there is no better way to appreciate
this than by doing the actual calculations. Let us see how to proceed.

First of all, before even attempting a change of coordinates, the following is
already suggestive:
Consider say a future-directed\footnote{We time-orient the metric by $\partial_t$.
See Appendix~\ref{Lorge}.}
ingoing radial null geodesic.
The image of such a null ray is in fact depicted below:
\[
\begin{picture}(0,0)%
\includegraphics{schw1.pstex}%
\end{picture}%
\setlength{\unitlength}{2368sp}%
\begingroup\makeatletter\ifx\SetFigFont\undefined%
\gdef\SetFigFont#1#2#3#4#5{%
  \reset@font\fontsize{#1}{#2pt}%
  \fontfamily{#3}\fontseries{#4}\fontshape{#5}%
  \selectfont}%
\fi\endgroup%
\begin{picture}(2309,3249)(3393,-4198)
\put(3526,-3661){\rotatebox{90.0}{\makebox(0,0)[lb]{\smash{{\SetFigFont{7}{8.4}{\rmdefault}{\mddefault}{\updefault}{\color[rgb]{0,0,0}$r=2M$}%
}}}}}
\end{picture}%

\]
One can compute  that this has finite affine length to the future,
i.e.~these null geodesics are \emph{future-incomplete}, while
 scalar curvature invariants remain bounded as $s\to \infty$.
It is an amusing {\bf exercise} to put oneself in this point of view 
and carry out the above
computations in these coordinates.

Of course, as such the above doesn't show anything.\footnote{Consider for instance
a cone with the vertex removed\ldots}
But it turns out that
indeed the metric can be extended to be defined on a ``bigger'' manifold.
One defines a new coordinate 
\[
t^*=  t+2M\log (r-2M).
\]
This metric then takes the form
\begin{equation}
\label{incords2}
g=-\left(1-\frac{2M}r\right)(dt^*)^2 +\frac{4M}r dt^*\, dr +
\left(1+\frac{2M}r\right) dr^2+r^2 d\sigma_{\mathbb S^2}
\end{equation}
on $r>2M$.
Note that $\frac{\partial}{\partial t^*}=\frac{\partial}{\partial t}$, each interpreted
in its respective coordinate system.
But now $(\ref{incords2})$ can clearly
be defined in the region $r>0$, $-\infty<t^*<\infty$, and, by explicit
computation or better, by analytic continuation, the metric $(\ref{incords2})$
must satisfy $(\ref{Evac})$ for all
$r>0$.

Transformations similar to the above were already known
to Eddington and Lemaitre~\cite{lemaitre} in the early 1930's. 
Nonetheless, from the point of view of that time, it was difficult to 
interpret their significance. The formalisation of the
manifold concept and associated language had not yet become common knowledge
to physicists (or most mathematicians for that matter), 
and in any case, there was no selection principle as to what 
should the underlying manifold $\mathcal{M}$ 
be on which a solution $g$ to $(\ref{Evac})$ should live, or, to put it another way,
the domain of $g$ in $(\ref{Evac})$ is not specified a priori by the theory.
So, even if the solutions $(\ref{incords2})$ exist, how do we know that they are
``physical''?

This problem can in fact only be clarifed in the context of the \emph{Cauchy problem}
for $(\ref{Eeq})$ coupled to appropriate matter.
Once the Cauchy problem for $(\ref{Evac})$ is formulated correctly, then 
one can assign a \emph{unique} spacetime to an appropriate notion of
\emph{initial data set}.  This is the \emph{maximal development}
 of Appendix~\ref{cauchyproblem}.
It is only the initial data set, and the matter model, which can be judged for
``physicality''. One cannot throw away the resulting maximal development just
because one does not like its properties!

From this point of view,  the question of whether the extension $(\ref{incords2})$
was ``physical'' was resolved in 1939 by Oppenheimer and Snyder~\cite{os}.
Specifically, they showed that the extension $(\ref{incords2})$ for $t\ge 0$
arose as a subset of the solution to the Einstein equations coupled
to a reasonable (to a first approximation at least) matter model,  evolving from physically 
plausible initial data. With hindsight, the notion of black hole was born
in that paper.

Had history proceeded differently, we could base our futher discussion on~\cite{os}.
Unfortunately, the model~\cite{os} was ahead of its time.
As mentioned in the introduction, the proper language 
to formulate the Cauchy problem
in general only came in 1969~\cite{chge:givp}.
The interpretation of explicit solutions remained the
main route to understanding the theory.
We will follow thus this route to the black hole concept--via the geometric study of
so-called
 maximally extended Schwarzschild--even though this spacetime is not
to be regarded as ``physical''. It was through the study of this spacetime that the 
relevant notions were first understood and the important Penrose diagrammatic notation
was developed.
We shall return to~\cite{os} only in Section~\ref{OSsection}.

\subsection{The maximal extension of Synge and Kruskal}
Let us for now avoid the question of what the underlying manifold ``should'' be, 
a question whose answer requires physical input (see paragraphs above), 
and simply ask the purely mathematical question of
how big the underlying manifold ``can'' be. This leads to the notion of a ``maximally extended''
solution. In the case of Schwarzschild, this will be a spacetime which, although
 not to be taken as a model for anything \emph{per se}, 
can serve as a reference for the formulation of all important concepts in the subject.

To motivate this notion of ``maximally extended'' solution, 
let us examine our first extension a little more closely.
The light cones can be drawn as follows:
\[
\input{Schw2.pstex_t}
\]

Let us look say at null geodesics.
One can see ({\bf Exercise}) that future directed null geodesics either approach $r=0$ or
are future-complete. In the former case, scalar invariants of the curvature blow up in the limit
as the affine parameter approaches its supremum ({\bf Exercise}). 
The spacetime is thus ``singular'' in this sense.
It  thus follows from the above properties
that the above spacetime is \emph{future null geodesically incomplete}, 
but also \emph{future null geodesically inextendible}
as a $C^2$ Lorentzian metric, i.e.~there does not exist a larger $4$-dimensional
Lorentzian manifold with $C^2$ metric
such that the spacetime above embeds isometrically into the larger one
such that a future null geodesic passes into the
extension.

On the other hand, one can see that
past-directed null geodesics are not all complete, yet no curvature quantity
blows up along them ({\bf Exercise}).
Again, this suggests that something may still be missing!

Synge was the first to consider these issues systematically and construct ``maximal extensions'' 
of the original Schwarzschild metric in a paper~\cite{Synge} of
1950. A more concise approach to
such  a construction was given in a celebrated 1960 paper~\cite{kruskal}
of Kruskal.
Indeed, let $\mathcal{M}$ be the manifold with differentiable structure given by
 $\mathcal{U}\times\mathbb S^2$ where $\mathcal{U}$ is the open subset
 $T^2-R^2<1$ of the $(T,R)$-plane.
 Consider the metric  $g$
\[
g= \frac{32M^3}r e^{-r/2M}(-dT^2+dR^2)+r^2 d\sigma_{\mathbb S}^2
\]
where $r$ is defined implicitly by
\[
T^2-R^2=\left(1-\frac{r}{2M}\right)e^{r/2M}.
\]
The region $\mathcal{U}$ is depicted below:
\[
\input{krusk.pstex_t}
\]
This is a spherically symmetric $4$-dimensional Lorentzian manifold satisfying
$(\ref{Evac})$ such that
the original Schwarzschild metric is isometric to the region
$R> |T|$ (where $t$ is given by $\tanh\left(\frac{t}{4M}\right)=T/R$), and our previous
partial extension is isometric to the region $T> -R$ ({\bf Exercise}). It
can be shown now ({\bf Exercise})
that $(\mathcal{M},g)$ is \emph{inextendible as a $C^2$ (in fact $C^0$)
Lorentzian manifold}, that is to say,
if
\[
i: (\mathcal{M},g)\to (\widetilde{\mathcal{M}},\tilde{g})
\]
is an isometric embedding, where $(\widetilde{M}, \tilde{g})$ is a $C^2$ (in fact $C^0$)
$4$-dimensional
Lorentzian manifold, then necessarily $i(\mathcal{M})=\widetilde{\mathcal{M}}$.

The above property defines the sense in which our spacetime is ``maximally'' extended,
and thus, $(\mathcal{M},g)$  is called sometimes \emph{maximally-extended
Schwarzschild}. In later sections,
we will often just call it ``the Schwarzschild solution''.

Note that the form of the metric is such that the light cones are as depicted. Thus,
one can read off much of the causal structure by sight.

It may come as a surprise that in maximally-extended Schwarzschild, there are
\emph{two} regions which are isometric to the original $r>2M$ Schwarzschild region.
Alternatively, a Cauchy surface\footnote{See Appendix~\ref{Lorge}.}
 will have topology $\mathbb S^2\times \mathbb R$ with
\emph{two} asymptotically flat ends. This suggests that 
this spacetime is not to be taken as a physical model. We will discuss this later on.
For now, let us simply try to understand better the global geometry of the metric.

\subsection{The Penrose diagram of Schwarzschild}
\label{pendiag}
There is an even more useful way to represent the above spacetime.
First, let us define null coordinates $U=T-R$, $V=T+R$. These coordinates have infinite
range. We may rescale them by $u=u(U)$, $v=v(V)$ to have finite range.
(Note the freedom in the choice of $u$ and $v$!)
The domain of $(u,v)$ coordinates, when represented in the plane 
where the axes are at $45$ and $135$ degrees with the horizontal,
 is known as a \emph{Penrose diagram} of
Schwarzschild.
Such a Penrose diagram is depicted below\footnote{How can $(u,v)$ be chosen
so that the $r=0$ boundaries
are horizontal lines? ({\bf Exercise})}:
\[
\input{schw3.pstex_t}
\]

In more geometric language, one says that a Penrose diagram
corresponds to the image of a bounded conformal map
\[
\mathcal{M}/{\rm SO}(3)=\mathcal{Q}\to \mathbb R^{1+1},
\]
where one makes the identification $v=t+x$, $u=t-x$ where $(t,x)$ are now the
standard coordinates $\mathbb R^{1+1}$ represented in the standard way on the plane.
We further assume that the map preserves the time orientation, where Minkowski space
is oriented by $\partial_t$. (In our application, this is a fancy way of saying
that $u'(U), v'(V)>0$). It follows that the map preserves the causal 
structure of $\mathcal{Q}$.
In particular, we can ``read off'' the radial null geodesics of $\mathcal{M}$ from the depiction.

Now we may turn to
the boundary induced by the causal embedding. We define $\mathcal{I}^\pm$ to be
the boundary components as depicted.\footnote{Our convention is that open endpoint circles
are not contained in the intervals they bound, and dotted lines are not contained in
the regions they bound, whereas solid lines are.}
 These are characterized geometrically
as follows: $\mathcal{I}^+$ are
limit points of future-directed null rays in $\mathcal{Q}$ along which $r\to\infty$.
Similarly, $\mathcal{I}^-$  are limit points of past-directed
null rays for which $r\to \infty$. We call 
$\mathcal{I}^+$ \emph{future null infinity} and $\mathcal{I}^-$ \emph{past null infinity}.
The remaining boundary components  $i^0$ and
$i^\pm$ depicted
are often given the names \emph{spacelike} infinity and \emph{future (past) timelike} infinity, 
respectively.

In the physical application, it is important to remember that
asymptotically flat\footnote{See Appendix~\ref{asymptflat} for a definition.} spacetimes 
like our $(\mathcal{M},g)$ are not meant
to represent the whole universe\footnote{The
study of that problem is what is known as ``cosmology''. See Section~\ref{cosmolosec}.}, but rather,
 the gravitational field in the vicinity of
an isolated self-gravitating system. $\mathcal{I}^+$ is an idealization of
far away observers who can receive radiation from the system. In this sense,
``we''--as astrophysical observers of stellar collapse, say--are located at $\mathcal{I}^+$.
The ambient causal structure of $\mathbb R^{1+1}$ allows us to talk about
$J^-(p)\cap \mathcal{Q}$ for $p\in \mathcal{I}^+$\footnote{Refer to Appendix~\ref{Lorge} for
$J^\pm$.} and this will lead us to the black hole concept. Therein lies the use of the Penrose
diagram representation.

The  systematic use of the conformal point of view
 to represent the global geometry of spacetimes 
is one of the many great contributions of Penrose to general relativity.
These representations can be traced back to the well-known ``spacetime diagrams''  
of special relativity, promoted especially by Synge~\cite{synspe}. The ``formal'' use of Penrose
diagrams in the sense above goes back to Carter~\cite{carter}, in whose hands
these diagrams became a powerful tool for determining the global structure of all classical
black hole spacetimes. It is hard to overemphasize how important it is for the
student of this subject to become comfortable with these representations.

\subsection{The black hole concept}
\label{concep} 
With Penrose diagram notation, we may now explain the black hole concept.
\subsubsection{The definitions for Schwarzschild}
First an important remark:
In Schwarzschild, the boundary component $\mathcal{I}^+$ enjoys a limiting 
affine completeness. More specifically, normalising
a sequence of ingoing radial null vectors by parallel transport along an outgoing geodesic
meeting $\mathcal{I}^+$,
the affine length of the  null geodesics generated by these vectors, 
parametrized by their parallel
transport (restricted to $J^-(\mathcal{I}^+)$), tends to infinity:
\[
\input{compl.pstex_t}
\]
This has the interpretation that far-away observers in the radiation zone can observe
for all time. (This is in some sense
related to the presence of timelike geodesics near infinity of infinite length, but
the completeness is best formulated with respect to $\mathcal{I}^+$.)
A similar statement clearly holds for $\mathcal{I}^-$.

Given this completeness property, we define now the \emph{black hole} region 
to be $\mathcal{Q}\setminus J^-(\mathcal{I}^+)$,
and the \emph{white hole} region to be  $\mathcal{Q}\setminus J^+(\mathcal{I}^-)$.
Thus, the black hole corresponds to those points of spacetime which cannot
``send signals'' to future null infinity, or, in the physical interpetation, 
to far-away observers who (in view of the completeness property!)
nonetheless can
observe radiation for infinite time.

The future boundary of $J^-(\mathcal{I}^+)$ in $\mathcal{Q}$ (alternatively
characterized as the past boundary
of the black hole region) is a null hypersurface known as the \emph{future event horizon},
and is denoted by $\mathcal{H}^+$. Exchanging past and future, we obtain the
\emph{past event horizon} $\mathcal{H}^-$. In maximal Schwarzschild,
$\{r=2M\}=\mathcal{H}^+\cup \mathcal{H}^-$. 
The subset $J^-(\mathcal{I}^+)\cap J^+(\mathcal{I}^-)$ is known as the \emph{domain of
outer communications}.

\subsubsection{Minkowski space}
\label{MSsec}
Note that in the case of Minkowski space,
$\mathcal{Q}=\mathbb R^{3+1}/{\rm SO(3)}$ is a manifold with boundary since the ${\rm SO}(3)$ action has
a locus of fixed points, the centre of symmetry. 
A Penrose diagram of Minkowski space is easily seen to be:
\[
\input{Minko.pstex_t}
\]
Here $\mathcal{I}^+$ and $\mathcal{I}^-$ are characterized as before, and enjoy the
same completeness property as in Schwarzschild.
One reads off immediately that $J^-(\mathcal{I}^+)\cap\mathcal{Q}=\mathcal{Q}$,
i.e.~$\mathbb R^{3+1}$ does not contain a black hole under the above definitions.

\subsubsection{Oppenheimer-Snyder}
\label{OSsection}
Having now the notation of Penrose diagrams, we can concisely describe
the geometry of the Oppenheimer-Snyder
solutions referred to earlier, without giving explicit forms of the metric. 
Like Schwarzschild's original picture of the gravitational field of a spherically
symmetric star, these solutions
involve a region $r\le R_0$ solving $(\ref{Eeq})$ and $r\ge R_0$ satisfying
$(\ref{Evac})$. The matter is described now by a pressureless fluid which is
initially assumed homogeneous in addition to being spherically symmetric.
The assumption of staticity is however dropped, 
and for appropriate initial conditions, it follows that $R_0(t^*)\to 0$
with respect to a suitable time coordinate $t^*$.
(In fact, the Einstein equations can be reduced to an o.d.e.~for $R_0(t^*)$.) 
We say that the star ``collapses''.\footnote{Note that $R_0(t^*)\to0$ does not
mean that the star collapses to ``a point'', merely that the spheres which foliate
the interior of the star shrink to $0$ area. The limiting singular boundary is
 a spacelike hypersurface as depicted.}
A Penrose diagram of such a solution (to the future of a Cauchy hypersurface)
can be seen to be of the form:
\[
\input{collapse.pstex_t}
\]
The lighter shaded region is isometric to a subset of maximal Schwarzschild,
in fact a subset of the original extension of Section~\ref{orextsec}. In particular, the completeness
property of $\mathcal{I}^+$ holds, and 
as before, we identify the black hole region to be $\mathcal{Q}\setminus
J^-(\mathcal{I}^+)$.

In contrast to maximal Schwarzschild, where the initial configuration is
unphysical (the Cauchy surface has two ends and topology $\mathbb R\times \mathbb S^2$), 
here the initial configuration is entirely plausible: the Cauchy surface is 
topologically $\mathbb R^3$, and its geometry is not far from Euclidean space.
The Oppenheimer-Snyder model~\cite{os}  should be viewed
as the most basic black hole solution arising from physically plausible regular initial
 data.\footnote{\label{heroic}
 Note however the end of Section~\ref{nakedsings}.}

It is traditional in general relativity to ``think'' Oppenheimer-Snyder but
``write'' maximally-extended Schwarzschild. In particular, one often imports
terminology like ``collapse'' in discussing Schwarzschild, and one often
 reformulates our definitions 
 replacing $\mathcal{I}^+$ with one of its connected components,
 that is to say, we will often write $J^-(\mathcal{I}^+)\cap J^+(\mathcal{I}^+)$
 meaning $J^-(\mathcal{I}^+_A)\cap J^+(\mathcal{I}^-_A)$, etc.
In any case, the precise relation between the two solutions should be clear from
the above discussion. In view of Cauchy stability results~\cite{he:lssst},
sufficiently general
theorems about the Cauchy problem on maximal Schwarzschild lead immediately
to such results on Oppenheimer-Snyder. (See for instance the exercise in Section~\ref{discrete}.) 
 One should always keep this relation in mind.

\subsubsection{General definitions?}
\label{general?}
The above definition of black hole for the Schwarzschild metric should be thought of as  a blueprint for 
how to define the notion of black hole region in general. That is to say, to define the black
hole region,
one needs
\begin{enumerate}
\item
some notion of future null infinity $\mathcal{I}^+$, 
\item 
a way of identifying
$J^-(\mathcal{I}^+)$,
and 
\item
some characterization of the ``completeness'' of $\mathcal{I}^+$.\footnote{The characterization
of completeness can be formulated for general asymptotically flat vacuum space times using
the results of~\cite{book}. This formulation is due to Christodoulou~\cite{sings}.
Previous attempts to formalise these notions rested on ``asymptotic simplicity'' and
``weak asymptotic simplicity''. See~\cite{he:lssst}. Although the qualitative picture suggested by
these notions appears plausible, the detailed asymptotic behaviour of solutions
to the Einstein equations turns out to be
much more subtle, and Christodoulou has
proven~\cite{rome}
that  these notions cannot capture even the simplest generic physically interesting systems.}
\end{enumerate}
If $\mathcal{I}^+$ is indeed complete, we can define the black hole region as

\begin{center}
``the complement in $\mathcal{M}$ of $J^-(\mathcal{I}^+)$''.
\end{center}

For spherically symmetric spacetimes arising as solutions of the Cauchy problem
for $(\ref{Eeq})$, 
one can show that there always exists a Penrose diagram,
and thus, a definition can be formalised along precisely these lines (see~\cite{trapped}).
For spacetimes without symmetry, however,
even defining the relevant asymptotic structure so that this structure
is compatible with the theorems one is to prove is a main part of the problem. 
This has
been accomplished definitively only in the case of perturbations of Minkowski space.
In particular, Christodoulou and Klainerman~\cite{book} have shown that 
spacetimes arising from 
perturbations of Minkowski initial data have a complete $\mathcal{I}^+$ in a well defined
sense, whose
past can be identified and is indeed
the whole spacetime. See Appendix~\ref{stabsect}.
That is to say, small perturbations of Minkowski space
cannot form black holes.

\subsection{Birkhoff's theorem}
\label{negmassec}
Formal Penrose diagrams are a powerful tool for understanding the
global causal structure of spherically symmetric spacetimes.
Unfortunately,  however, it turns out that
the study of spherically symmetric \emph{vacuum} spacetimes is
not that rich.
In fact, the Schwarzschild family parametrizes all spherically symmetric
vacuum spacetimes in a sense to be explained in this section.

\subsubsection{Schwarzschild for $M<0$}
Before stating the theorem, recall that in discussing Schwarzschild we have previously restricted to parameter value $M>0$. For the uniqueness statement, we must enlarge the family to 
include all parameter values.

If we set $M=0$ in $(\ref{incords})$, we of course
obtain Minkowski space in spherical polar coordinates.
A suitable maximal extension is Minkowski space as we know it, represented
by the Penrose diagram of Section~\ref{MSsec}.

On the other hand, we may also take $M<0$ in $(\ref{incords})$. 
This is so-called \emph{negative mass
Schwarzschild}. The metric element $(\ref{incords})$
for such $M$ is now regular
for all $r>0$. The limiting singular behaviour of the metric at $r=0$ is in fact essential,
i.e.~one can show that along inextendible incomplete geodesics  the curvature blows
up. Thus, one immediately arrives at a
maximally extended solution which can be seen to have Penrose diagram:
\[
\input{negmass.pstex_t}
\]
Note that in contrast to the case of $\mathbb R^{3+1}$, the boundary $r=0$ is here depicted
by a dotted line denoting (according to our conventions) that it is not part of $\mathcal{Q}$!

\subsubsection{Naked singularities and weak cosmic censorship}
\label{nakedsings}
The above spacetime is interpreted as having a ``naked singularity''. 
The traditional way
of describing this in the physics literature is to remark that
the ``singularity'' $\mathcal{B}=\{r=0\}$ is ``visible'' to $\mathcal{I}^+$, i.e.,
$J^-(\mathcal{I}^+)\cap\mathcal{B}\ne\emptyset$.
From the point of view of the Cauchy problem, however, this characterization
is meaningless because the above maximal extension is not globally
hyperbolic, i.e.~it is not uniquely characterized by an appropriate notion of initial 
data.\footnote{See Appendix~\ref{Lorge} for the definition of global hyperbolicity.}
From the point of view of the Cauchy problem, one must not consider 
maximal extensions but the \emph{maximal Cauchy development of initial data}, 
which by definition is 
 globally hyperbolic (see Theorem~\ref{Maxdev} of Appendix~\ref{cauchyproblem}). Considering an inextendible
spacelike hypersurface $\Sigma$ as a Cauchy surface, 
the maximal Cauchy development of $\Sigma$
would be the darker
shaded region depicted below:
\[
\input{Cauchy.pstex_t}
\]
The proper characterization of ``having a naked singularity'', from the point of view
of the darker shaded spacetime, is that its
$\mathcal{I}^+$ is
incomplete. Of course, this example does not say anything about the dynamic formation
of naked singularities, because the inital data hypersurface $\Sigma$ is already
in some sense ``singular'', for instance, it is geodesically incomplete, and the curvature
blows up along incomplete geodesics.
The dynamic formation of a naked singularity from regular, complete initial data
would be pictured by:
\[
\input{nsing.pstex_t}
\]
where we are to understand also in the above that $\mathcal{I}^+$ is incomplete. 
The conjecture that for generic asympotically flat\footnote{See Appendix~\ref{asymptflat}
for a formulation of this notion. Note that asymptotically flat data are in particular complete.} initial data for ``reasonable'' Einstein-matter systems,
the maximal Cauchy development ``possesses a complete $\mathcal{I}^+$'' is known as
\emph{weak cosmic censorship}.\footnote{This conjecture is originally due to
Penrose~\cite{Penr2}. The present formulation is taken from Christodoulou~\cite{sings}.}
 
 In light of the above conjecture, the story of
the Oppenheimer-Snyder solution and its role in the emergence of the black hole
concept does
have an interesting epilogue. Recall that in the Oppenheimer-Snyder
solutions, the region $r\le R_0$, in addition to being spherically symmetric, is homogeneous.
It turns out that by considering spherically symmetric initial data for which
the ``star'' is no longer homogeneous, Christodoulou has proven that
one can arrive at spacetimes for which
``naked singularities'' form~\cite{dust!} with Penrose diagram as above and with $\mathcal{I}^+$
incomplete.
Moreover, it is shown
in~\cite{dust!} that this occurs for \emph{an open subset}
of initial data within spherical symmetry, with respect to  a suitable topology on the set
of spherically symmetric initial data. Thus, weak cosmic 
censorship is violated in  this model, at least if the conjecture is restricted to
spherically symmetric data.

The fact that in the Oppenheimer-Snyder solutions black holes formed appears thus
to be a rather fortuitous accident! Nonetheless, we should note that
the failure of weak cosmic censorship in this context is believed to be 
due to the inappropriateness of the  pressureless model, not as indicative of actual phenomena. 
Hence, the restriction on  the matter 
model to be ``reasonable'' in the formulation of the conjecture.
In a remarkable series of papers,
Christodoulou~\cite{wcs, sings} has shown weak cosmic censorship to be true for
the Einstein-scalar field system under spherical symmetry.
On the other hand, he has also shown~\cite{examples}
that the assumption of genericity is still necessary by
explicitly constructing solutions  of this system
with incomplete $\mathcal{I}^+$ and Penrose diagram
as depicted above.\footnote{The discovery~\cite{examples} of these naked singularities led
to the discovery of so-called critical collapse phenomena~\cite{chopt} which
has since become a popular topic of investigation~\cite{gund}.}

\subsubsection{Birkhoff's theorem}
Let us understand now by ``Schwarzschild solution with parameter $M$'' (where $M\in \mathbb R$) 
the maximally extended Schwarzschild metrics described above.

We have the so-called \emph{Birkhoff's theorem}:
\begin{theorem}
\label{Birkhoff}
Let $(\mathcal{M},g)$ be a spherically symmetric solution to the vacuum equations
$(\ref{Evac})$. 
Then it is locally
isometric to a Schwarzschild solution with parameter $M$, for some $M\in\mathbb R$.
\end{theorem}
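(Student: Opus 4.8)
The plan is to use the $SO(3)$ symmetry to reduce the four-dimensional vacuum equations $(\ref{Evac})$ to a system on the two-dimensional quotient $\QQ=\M/SO(3)$, and then to extract from those equations the existence of an additional Killing field, which is precisely what pins the metric down to the one-parameter Schwarzschild family. The conceptual surprise of Birkhoff's theorem is that although one allows from the outset an arbitrary time dependence, the vacuum condition forces the solution to be static (in the region where $\nab r$ is spacelike), and the integration constant that emerges is exactly the parameter $M$.

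First I would record the warped-product form forced by the symmetry. Away from the fixed-point locus of the action the orbits are two-spheres, and the metric splits as
\[
g = g_{ab}(x)\, dx^a dx^b + r^2(x)\, d\si_{\mathbb S^2},
\]
where $(x^0,x^1)$ are coordinates on $\QQ$, where $r\ge 0$ is the invariantly defined area radius (the orbit through $x$ has area $4\pi r^2(x)$), and where $g_{ab}$ is a Lorentzian metric on $\QQ$. I would then restrict attention to the open region where $dr\ne 0$, so that $r$ can serve as one coordinate on $\QQ$; the locus $dr=0$ and the set on which $\nab r$ is null are lower-dimensional and are dealt with afterwards by continuity, using the fact that a vacuum metric is real-analytic (in harmonic coordinates) and hence continues analytically across them, matching the global picture described above.

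Next I would diagonalise the quotient metric—always possible locally in two dimensions—introducing a coordinate $t$ conjugate to $r$ so that
\[
g = -e^{2\Phi(t,r)}\, dt^2 + e^{2\La(t,r)}\, dr^2 + r^2\, d\si_{\mathbb S^2},
\]
with the evident sign changes where $\nab r$ is timelike, and then compute the components of the Ricci tensor. Imposing $(\ref{Evac})$, the decisive equation is the off-diagonal one, $R_{tr}=0$, which reduces to $\pa_t\La=0$: the radial coefficient is $t$-independent. Feeding this into the angular equation $R_{\theta\theta}=0$ yields $\frac{d}{dr}\!\left(r\,e^{-2\La}\right)=1$, whence
\[
e^{-2\La} = 1 - \frac{2M}{r}
\]
for a constant of integration $M\in\rr$. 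A suitable combination of the diagonal equations gives $\pa_r(\Phi+\La)=0$, so that $\Phi+\La$ is a function of $t$ alone; this can be absorbed by the reparametrisation $t\mapsto\tilde t(t)$, bringing the metric exactly into the form $(\ref{incords})$ with parameter $M$.

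The step I expect to be the main obstacle—conceptually, rather than computationally—is the passage from $R_{tr}=0$ to staticity: it is this that shows $\pa_{\tilde t}$ is a Killing field and thereby reveals the hidden symmetry, so that every spherically symmetric vacuum solution is static where $\nab r$ is spacelike (and correspondingly homogeneous where $\nab r$ is timelike). Once this is secured the remaining integrations are routine. The genuinely delicate bookkeeping is therefore the case analysis on the causal character of $\nab r$ together with the treatment of the degenerate set $\nab r=0$, where the several local coordinate patches must be glued; here I would lean on analyticity of the vacuum metric to conclude that the patched solution is globally one member of the Schwarzschild family, as claimed.
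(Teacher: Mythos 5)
The paper does not actually prove Theorem~\ref{Birkhoff}: the statement is followed immediately by ``({\bf Exercise}: Prove Theorem~\ref{Birkhoff}\ldots)'', so there is no argument of the authors' to compare yours against. On its own terms your outline is the standard and correct one — warped-product reduction to $\mathcal{Q}=\mathcal{M}/{\rm SO}(3)$ with area radius $r$, diagonal gauge where $dr\ne0$ and $\nabla r$ is non-null, $R_{tr}=0\Rightarrow\partial_t\Lambda=0$, integration of the remaining equations to the form $(\ref{incords})$ — but three points need more care than you give them. First, your claim that $\{dr=0\}$ is lower-dimensional is not automatic: you must rule out $dr\equiv0$ on an open set, which you do by observing that there the metric is a genuine product $g_{ab}\,dx^a dx^b+r_0^2\,d\sigma_{\mathbb S^2}$, whose Ricci tensor has $R_{\theta\theta}=1\ne0$; this is precisely the case that survives when a cosmological constant is present (the Nariai solution), so it cannot be waved away as measure-theoretically negligible. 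Second, the equation $R_{\theta\theta}=0$ reads $1-e^{-2\Lambda}\left(1+r(\partial_r\Phi-\partial_r\Lambda)\right)=0$ and does not by itself yield $\frac{d}{dr}\left(re^{-2\Lambda}\right)=1$; you need $\partial_r(\Phi+\Lambda)=0$ first (or, more directly, use the $tt$-component of the Einstein tensor, which gives $\partial_r\left(r(1-e^{-2\Lambda})\right)=0$ with no reference to $\Phi$). Third, near a point where $\nabla r$ is null the diagonal gauge degenerates, and invoking analyticity of vacuum metrics, while legitimate, is heavier than necessary: the cleaner route is to work in double null coordinates on $\mathcal{Q}$, show from the vacuum equations that the Hawking mass $m=\frac r2\left(1-\partial^a r\,\partial_a r\right)$ is constant, and identify the solution with the Kruskal form directly; this also handles the case analysis on the causal character of $\nabla r$ uniformly. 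With these repairs the proof is complete.
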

In particular, spherically symmetric solutions to $(\ref{Evac})$
possess an additional Killing field not in the Lie algebra ${\rm so}(3)$.
({\bf Exercise}: Prove Theorem~\ref{Birkhoff}. Formulate and prove a global version
of the result.)

\subsubsection{Higher dimensions}
\label{higherhigherdim}
In $3+1$ dimensions, spherical symmetry is the only symmetry assumption compatible
with asymptotic flatness (see Appendix~\ref{asymptflat}), such that moreover the symmetry group acts
transitively on $2$-dimensional orbits. Thus,
Birkhoff's theorem means that  vacuum gravitational collapse cannot be studied
in a $1+1$ dimensional setting by imposing symmetry. The simplest models
for dynamic gravitational collapse thus necessarily involve matter, as in the
Oppenheimer-Snyder model~\cite{os} or the Einstein-scalar field system studied 
by Christodoulou \cite{forma,wcs}

Moving, however, to
$4+1$ dimensions, asymptotically flat manifolds can admit a more
general $SU(2)$ symmetry acting transitively on $3$-dimensional group orbits. The Einstein
vacuum equations $(\ref{Evac})$
under this symmetry
admit $2$ dynamical degrees of freedom and can be written as a
nonlinear system on a $1+1$-dimensional Lorentzian quotient $\mathcal{Q}=
\mathcal{M}/SU(2)$, where 
the dynamical degrees of freedom of the metric are reflected by two 
nonlinear scalar fields on $\mathcal{Q}$. 
This symmetry--known as ``Triaxial Bianchi IX''--was first identified by Bizon, Chmaj and Schmidt~\cite{bizon1, bizon2}~who derived the equations on $\mathcal{Q}$ and studied the resulting
system numerically. 
The symmetry includes spherical symmetry as a special case,
and thus, is admitted in particular by
$4+1$-dimensional Schwarzschild\footnote{{\bf Exercise}: Work out explicitly the higher dimensional
analogue of the Schwarzschild solution for all dimensions.}.
The 
nonlinear stability of the Schwarzschild family  as solutions of the vacuum equations
$(\ref{Evac})$
can then be studied--within the class of Triaxial Bianchi IX initial data--as
a $1+1$ dimensional problem.  Asymptotic stability for the 
Schwarzschild spacetime
in this setting has been recently shown
in the thesis of Holzegel~\cite{kostakis, kostakis1, kostakis2},
adapting vector field multiplier estimates 
similar to Section~\ref{S2} to a situation where the metric is not known a priori. 
The construction of the relevant mutipliers is then quite subtle, as they must be normalised
``from the future'' in a bootstrap setting.
The thesis~\cite{kostakis} is a good reference
for understanding the relation of the linear theory to the non-linear black hole stability problem.
See also Open problem~\ref{kostakis;} in Section~\ref{notlinear}.

\subsection{Geodesic incompleteness and ``singularities''}
\label{trapsec}
Is the picture of gravitational collapse as exhibited by Schwarzschild (or better,
Oppenheimer-Snyder) stable? This question is behind the later chapters in the notes,
where essentially the considerations hope to be part of a future understanding of the stability
of the exterior region up to the event horizon, i.e.~the closure of the past of null infinity to the
future of a Cauchy surface. (See Section~\ref{formulation} for a formulation of this open
problem.)
What is remarkable, however, is that there is a feature of Schwarzschild which can easily be shown
to be ``stable'', without understanding the p.d.e.~aspects of $(\ref{Eeq})$: its geodesic incompleteness.

\subsubsection{Trapped surfaces}
First a definition:
Let $(\mathcal{M},g)$ be a time-oriented Lorentzian manifold, and $S$ a closed 
spacelike $2$-surface.
For any point $p\in S$, we may define two null mean curvatures $\tr \chi$ and $\tr \bar\chi$,
corresponding to the two future-directed null vectors $n(x)$, $\bar n(x)$, where
$n$, $\bar n$ are normal to $S$ at $x$. 
We say that $S$ is \emph{trapped} if $\tr \chi<0$, $\tr \bar\chi<0$.

{\bf Exercise}: Show that points $p\in \mathcal{Q}\setminus {\rm clos}(J^-(\mathcal{I}^+))$ 
correspond to trapped surfaces of $\mathcal{M}$. 
Can there be other trapped surfaces?  (Refer
also for instance to~\cite{senov}.)

\subsubsection{Penrose's incompleteness theorem}
\begin{theorem} (Penrose 1965~\cite{Penr})
\label{incthe}
Let $(\mathcal{M},g)$ be globally hyperbolic\footnote{See Appendix~\ref{Lorge}.} 
with non-compact Cauchy
surface $\Sigma$, where $g$ is a $C^2$ metric, and let 
\begin{equation}
\label{nulconv}
R_{\mu\nu}V^\mu V^\nu\ge0
\end{equation}
for all null vectors $V$.
Then if $\mathcal{M}$ contains a closed trapped two-surface $S$, it follows
that $(\mathcal{M},g)$
is future causally geodesically incomplete. 
\end{theorem}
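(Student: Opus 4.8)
\emph{The plan} is to argue by contradiction: assuming $(\mathcal{M},g)$ is future null geodesically complete, I would show that $\Sigma$ must be compact, contradicting the hypothesis. The argument meshes an analytic input (the Raychaudhuri equation forces focal points) with the causal-theoretic structure of the achronal boundary $\partial J^+(S)$ and the product structure supplied by global hyperbolicity. For the analytic step, consider the two congruences of future-directed null geodesics issuing orthogonally from $S$ along the normals $n$, $\bar n$. Writing $\theta$ for the expansion of such a congruence, $\hat\sigma$ for its shear, and $V$ for the affine tangent, the Raychaudhuri equation reads
\[
\frac{d\theta}{ds}=-\tfrac12\theta^2-|\hat\sigma|^2-R_{\mu\nu}V^\mu V^\nu .
\]
The shear term is nonnegative and, by the null convergence condition $(\ref{nulconv})$, so is the curvature term; hence $d\theta/ds\le-\tfrac12\theta^2$. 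Since $S$ is compact, the continuous functions $\tr\chi$, $\tr\bar\chi$ attain negative maxima, so in either direction $\theta(0)\le -c$ for some $c>0$. Integrating the inequality forces $\theta\to-\infty$ at some affine parameter $s_*\le 2/c$, i.e.~a focal point; completeness guarantees each generator extends at least to $s=2/c$, so such a first focal point genuinely occurs along every orthogonal null geodesic from $S$.

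Next I would establish that $\partial J^+(S)$ is compact. Past its first focal point an orthogonal null geodesic enters the open set $I^+(S)$ and thus leaves the achronal boundary; so every generator of $\partial J^+(S)$ is such a segment of affine length at most $2/c$. Parametrizing the normalized null normal directions over the compact $S$ together with $s\in[0,2/c]$ yields a compact parameter space on which the geodesic flow is continuous (by completeness); hence the set $E^+(S)$ of points reached up to the first focal point is compact, and $\partial J^+(S)\subseteq E^+(S)$, being closed, is compact as well. Standard causal theory (Appendix~\ref{Lorge}) moreover gives that $\partial J^+(S)$ is an achronal, closed, edgeless, embedded $C^0$ hypersurface, i.e.~a compact topological $3$-manifold without boundary.

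For the contradiction, global hyperbolicity provides a homeomorphism $\mathcal{M}\cong\mathbb R\times\Sigma$ whose $\mathbb R$-fibres are timelike curves meeting $\Sigma$ exactly once. Since $\partial J^+(S)$ is achronal, each fibre meets it at most once, so projection along the fibres defines a continuous injection $\pi:\partial J^+(S)\to\Sigma$. By invariance of domain (both sides are $3$-manifolds) $\pi$ is open, so its image is open in $\Sigma$; it is also compact, hence closed. As $\Sigma$ is connected this forces $\pi(\partial J^+(S))=\Sigma$, making $\Sigma$ homeomorphic to the compact manifold $\partial J^+(S)$ and thus compact, contradicting the noncompactness hypothesis. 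Therefore some future-directed null geodesic must be incomplete, which gives future causal geodesic incompleteness.

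\emph{The main obstacle} is the second step: one must know that the orthogonal null geodesics actually generate $\partial J^+(S)$ and that they drop off the boundary precisely at (or before) the first focal point, so that the Raychaudhuri bound of the first step translates into genuine compactness of $\partial J^+(S)$. This is where the analytic focal-distance estimate and the structure theory of achronal boundaries (edgelessness, the characterization via $E^+(S)$) must be combined with care; by contrast, the focal-point estimate itself and the concluding topological argument are comparatively routine once compactness is in hand.
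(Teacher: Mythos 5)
Your proposal is correct and is precisely the classical Penrose argument (Raychaudhuri focusing from the trapped condition, compactness of the achronal boundary $\partial J^+(S)$, and the projection onto the non-compact Cauchy surface via global hyperbolicity and invariance of domain). The paper only states Theorem~\ref{incthe} with a citation to~\cite{Penr} and does not reproduce a proof, so there is nothing in-text for your argument to diverge from; it matches the cited original and the standard textbook treatments.
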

This is the celebrated 
\emph{Penrose incompleteness theorem}. 

Note that solutions of the Einstein vacuum equations $(\ref{Evac})$ satisfy $(\ref{nulconv})$.
(Inequality $(\ref{nulconv})$, known as the null convergence condition, 
is also satisfied for
solutions to the Einstein equations $(\ref{Eeq})$ coupled to most
plausible matter models, specifically, if the energy momentum tensor $T_{\mu\nu}$ satisfies
$T_{\mu\nu}V^\mu V^\nu\ge 0$ for all null $V^\mu$.) 
On the other hand, by definition, the unique solution
to the Cauchy problem (the so-called \emph{maximal Cauchy development of initial data})
is globally hyperbolic (see Appendix~\ref{maxdevy}). 
Thus, the theorem applies to the maximal development
of (say) asymptotically flat (see Appendix~\ref{asymptflat})
vacuum initial data containing a trapped surface. 
Note finally that by Cauchy stability~\cite{he:lssst}, 
the presence of a trapped surface
in $\mathcal{M}$ is clearly ``stable'' to perturbation of initial data.

From the point of view of gravitational collapse, it is more appropriate to define
a slightly different notion of trapped. We restrict to $S\subset \Sigma$ a Cauchy surface
such that $S$ bounds a disc in $\Sigma$. We then can define a unique outward 
null vector field $n$ along $S$, and we say that $S$ is trapped if $\tr \chi<0$
and antitrapped\footnote{Note that there exist other conventions
in the literature for this terminology. See~\cite{senov}.}
 if $\tr \bar\chi<0$, where $\tr \bar\chi$ denotes the mean curvature
with respect to a conjugate ``inward'' null vector field.
The analogue of Penrose's incompleteness theorem holds under this definition.
One may also prove the interesting result
that antitrapped surface cannot not form if they are not present
initially.
See~\cite{notes}.

Note finally that there
are related incompleteness statements due to Penrose and Hawking~\cite{he:lssst}
relevant in cosmological (see Section~\ref{cosmolosec}) settings.

\subsubsection{``Singularities'' and strong cosmic censorship}
\label{experience}
Following~\cite{notes}, we have 
called Theorem~\ref{incthe} an ``incompleteness theorem'' and not a ``singularity theorem''. This is of course an issue of semantics, but
let us further discuss this point briefly as it may serve to clarify various issues. The term 
``singularity'' has had a tortuous history in the context of general relativity. As we have seen, its
first appearance was 
to describe something that turned out not to be a singularity at all--the ``Schwarzschild 
singularity''. It was later realised that behaviour which
could indeed reasonably be described
by the word ``singularity'' did in fact
occur in solutions, as exemplified by the $r=0$ singular ``boundary''
of Schwarzschild towards which curvature scalars blow up. 
The presense of this singular behaviour ``coincides'' in Schwarzschild with
the fact that the spacetime is future causally geodesically incomplete--in fact,
the curvature blows up along all incomplete causal geodesics. 
In view of the fact that it is the incompleteness property which can be inferred from Theorem~\ref{incthe},
it was tempting to redefine ``singularity'' as geodesic incompleteness (see~\cite{he:lssst}) 
and to call
Theorem~\ref{incthe} a ``singularity theorem''.

This is of course a perfectly valid point of view. But is it correct then 
to associate the incompleteness of Theorem~\ref{incthe} to ``singularity'' in the
sense of ``breakdown'' of the metric? Breakdown of the metric
is most easily understood with curvature blowup as above, but more generally, it is captured
by the notion of ``inextendibility'' of the Lorentzian manifold in some regularity
class. We have already remarked that maximally-extended Schwarzschild is inextendible
in the strongest of senses, i.e.~as 
a $C^0$ Lorentzian metric. It turns out, however, that the statement of
Theorem~\ref{incthe}, even when applied to the maximal development of
complete initial data for $(\ref{Evac})$, is compatible with the solution being 
extendible as a $C^\infty$ Lorentzian metric such that \emph{every}
 incomplete causal geodesic of the original spacetime enter the extension!
This is in fact what happens in the case of Kerr initial data. (See Section~\ref{thekerrmetric}
for a discussion of the Kerr metric.) The reason that
the existence of such extensions does not contradict the ``maximality'' of the ``maximal development''
is that these extensions fail to be globally hyperbolic, while the ``maximal development'' is
``maximal'' in the class of globally hyperbolic spacetimes (see Theorem~\ref{Maxdev} of
Appendix~\ref{cauchyproblem}). In the context of Kerr initial data,
Theorem~\ref{incthe} is thus not saying that
breakdown of the metric occurs,
merely that globally hyperbolicity breaks down, and thus further extensions 
cease to be predictable from initial data.\footnote{Further
confusion can arise from the fact that ``maximal extensions'' of Kerr constructed with
the help of analyticity are still geodesically incomplete and inextendible, in particular, with
the curvature blowing up along all incomplete causal geodesics. Thus, one often
talks of the ``singularities'' of Kerr, referring to the ideal singular boundaries one can attach
to such extensions. One must remember, however, that these
extensions are of no relevance from the point of view of the Cauchy problem, and
in any case, their singular behaviour in principle has nothing to do with Theorem~\ref{incthe}.}
 
 A similar phenomenon is exhibited by the Reissner-Nordstr\"om solution
 of the Einstein-Maxwell equations~\cite{he:lssst}, which, unlike
 Kerr, is spherically symmetric and thus admits a Penrose diagram representation:
 \[
\input{kerr2.pstex_t}
 \]
What is drawn above is the maximal development of $\Sigma$. The spacetime
is future causally geodesically incomplete, but can be extended smoothly
to a $(\tilde{\mathcal{M}},\tilde g)$ such that all inextendible geodesics leave the original
spacetime. The boundary of $(\mathcal{M},g)$ in the extension corresponds to $\mathcal{CH}^+$
above. Such boundaries are known as \emph{Cauchy horizons}.

The \emph{strong cosmic censorship conjecture} says that  the maximal development
of \emph{generic} asymptotically flat initial data for the vacuum Einstein equations is inextendible
as a suitably regular Lorentzian metric.\footnote{As with weak cosmic censorship,
the original formulation of this conjecture
is due to Penrose~\cite{Penr3}. The formulation given here is from~\cite{sings}.
Related formulations are given in~\cite{SCCc, monc}.
One can also pose the conjecture for compact initial data, and for various Einstein-matter systems.
It should be emphasized that ``strong cosmic censorship'' does not imply ``weak cosmic censorship''.
For instance, one can imagine a spacetime with Penrose diagram as in the last diagram of
Section~\ref{nakedsings}, 
with incomplete $\mathcal{I}^+$, but
still  inextendible across the null ``boundary''
emerging from the centre.}
One can view this conjecture as saying that
whenever one has geodesic incompleteness, it is due to breakdown of 
the metric in the sense discussed above. (In view of the 
above comments, for this conjecture to be true, the behaviour of the Kerr metric described
above would have to be \emph{unstable} to perturbation.\footnote{Note that the instability
concerns a region ``far inside'' the black hole interior. The black hole exterior
is expected to be stable (as in 
the formulation of Section~\ref{formulation}), hence these notes. See~\cite{the, cbh} for the resolution
of a spherically symmetric version of this problem, where the role of the Kerr metric is
played by Reissner-Nordstr\"om metrics.})
Thus, if by the term ``singularity'' one wants to suggest  ``breakdown of the metric'',
it is only a positive resolution of the strong cosmic censorship conjecture that would
in particular
(generically) make Theorem~\ref{incthe} into a true ``singularity theorem''.

\subsection{Christodoulou's work on trapped surface formation in vacuum}
\label{vaccol}
These notes would not be complete without a brief discussion of
the recent breakthrough by Christodoulou~\cite{megalo}
on the understanding of trapped surface formation for the vacuum.

The story begins with Christodoulou's earlier~\cite{forma}, where 
a condition is given ensuring that trapped surfaces form for spherically symmetric
solutions of the Einstein-scalar field system. The condition is that the difference
in so-called Hawking mass $m$ of two concentric spheres on an outgoing null hupersurface
be sufficiently large with respect to the difference in area radius $r$ of the spheres.
This is a surprising result as it shows that trapped surface formation can arise from
initial conditions which are as close to dispersed as possible, in the sense that the 
supremum of the quantity
$2m/r$ can be taken arbitrarily small initially.

The results of~\cite{forma} lead immediately (see for instance~\cite{regpast}) to the existence
of smooth spherically symmetric solutions of the Einstein-scalar field system with Penrose diagram 
\[
\input{regpast.pstex_t} 
\]
where the point $p$ depicted corresponds to a trapped surface, and the spacetime
is past geodesically complete with a complete past null infinity, whose future is the entire
spacetime, i.e., the spacetime contains no white holes.\footnote{The
triangle ``under'' the darker shaded region can in fact be taken to be Minkowski.}
Thus, black hole formation can arise from spacetimes with a complete 
regular past.\footnote{The singular boundary
in general consists of a possibly empty null component emanating from the regular
centre, and a spacelike component where $r=0$
in the limit and across which the spacetime is inextendible as a $C^0$ Lorentzian metric.
(This boundary could ``bite off'' the top corner of the darker shaded rectangle.)
The null component arising from the centre can be shown to be empty generically after
passing to a slightly less regular class of solutions, for which well-posedness still holds.
See Christodoulou's proof of the cosmic censorship
conjectures~\cite{wcs} for the Einstein-scalar
field system.}

In~\cite{megalo}, Christodoulou constructs 
vacuum solutions by prescribing a characteristic initial value
problem with data on (what will be) $\mathcal{I}^-$. This $\mathcal{I}^-$ is taken to be
past complete, and in fact, the data is taken to be trivial to the past of a sphere on $\mathcal{I}^-$.
Thus, the development will include a region where the metric is Minkowski, corresponding
precisely to the lower lighter shaded triangle above.
It is shown that--as long as the incoming energy per unit solid angle in
all directions\footnote{This is defined in terms of the shear of $\mathcal{I}^-$.} is sufficiently
large in a strip of $\mathcal{I}^-$ right after the trivial part, where
sufficiently large is taken in comparison with the affine length of the generators of $\mathcal{I}^-$--a 
trapped surface  arises in the domain of development of the data restricted to the past of
this strip.
Comparing with the 
spherically symmetric picture above, 
this trapped surface would arise precisely as before
in the analogue of the darker shaded region depicted.

In contrast to the spherically symmetric case, where given the lower triangle,
existence of the solution in the darker shaded region (at least 
as far as trapped surface
formation) follows immediately, 
for vacuum collapse, showing the existence of a sufficiently ``big'' spacetime
is a major difficulty. For this, the results of~\cite{megalo} exploit a
hierarchy in the 
Einstein equations $(\ref{Evac})$ in the context of what is there called the ``short pulse
method''. This method may have many other applications for nonlinear problems.

One could in principle hope to extend~\cite{megalo} to show the formation of black hole
spacetimes in the sense described previously. 
For this, one must first extend the initial data suitably,
for instance so that 
$\mathcal{I}^-$ is complete. If the resulting spacetime can be shown to possess 
a complete future null infinity $\mathcal{I}^+$,
then, since the trapped surface shown to form can be proven (using the methods of
the proof of Theorem~\ref{incthe}) \emph{not} to be in the past
of null infinity, the spacetime will indeed contain a black hole region.\footnote{In spherical
symmetry, the completeness of null infinity follows immediately once a single trapped surface
has formed, for the Einstein equations coupled to a wide class of matter models. 
See for instance~\cite{trapped}.
For vacuum collapse, Christodoulou has formulated a statement on trapped surface
formation that would imply weak cosmic censorship. See~\cite{sings}.} 
Of course, resolution of this problem would
appear comparable in difficulty to
the stability problem for the Kerr family (see the formulation of
Section~\ref{formulation}).

\section{The wave equation on Schwarzschild I: uniform boundedness}
\label{S1}

In the remainder of these lectures, we will concern ourselves solely
with linear wave equations on 
black hole backgrounds, specifically, the scalar linear homogeneous wave equation
$(\ref{fromintro})$. 
As explained in the introduction, the study of the solutions to such equations is
motivated by the stability problem for the black hole spacetimes themselves as solutions
to $(\ref{Evac})$.   The equation $(\ref{fromintro})$ can be viewed as a poor man's linearisation
of $(\ref{Evac})$, neglecting tensorial structure. Other linear problems with 
a much closer relationship to the study of the Einstein equations will be discussed
in Section~\ref{acik}.

\subsection{Preliminaries}
\label{prelims}
Let $(\mathcal{M},g)$ denote (maximally-extended) Schwarzschild with parameter
$M>0$.  Let $\Sigma$ be an arbitrary \emph{Cauchy surface}, that is to say, 
a hypersurface
with the property that every inextendible causal geodesic in $\mathcal{M}$ intersects
$\Sigma$ precisely once. (See Appendix~\ref{Lorge}.)
\begin{proposition}
\label{prelim}
If 
$\uppsi\in H^2_{\rm loc}(\Sigma)$, $\uppsi' \in H^1_{\rm loc}(\Sigma)$, then there is a 
unique $\psi$ with $\psi|_{\mathcal{S}}\in H^2_{\rm loc}(\mathcal{S})$,
$n_{\mathcal{S}}\psi|_{\mathcal{S}} \in H^1_{\rm loc}(\mathcal{S})$,
for all spacelike $\mathcal{S}\subset\mathcal{M}$,
satisfying
\[
\Box_g\psi=0, \qquad \psi|_{\Sigma}=\uppsi, \qquad
n_{\Sigma} \psi|_{\Sigma}=\uppsi',
\]
where $n_{\Sigma}$ denotes the future unit normal of $\Sigma$.
For $m\ge 1$, if $\uppsi\in H^{m+1}_{\rm loc}$, $\uppsi' \in H^m_{\rm loc}$, then 
$\psi|_{\mathcal{S}} \in H^{m+1}_{\rm loc}(\mathcal{S})$,
 $n_{\mathcal{S}}\psi|_{\mathcal{S}} \in H^m_{\rm loc}(\mathcal{S})$.
Moreover, if $\uppsi_1, \uppsi'_1$, and $\uppsi_2, \uppsi'_2$ are as above
and $\uppsi_1=\uppsi_2$, $\uppsi_1'=\uppsi_2'$ in an open set $\mathcal{U}\subset
\Sigma$,
then $\psi_1=\psi_2$ in $\mathcal{M}\setminus J^\pm(\Sigma\setminus {\rm clos}(\mathcal{U}))$.
\end{proposition}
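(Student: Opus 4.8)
The plan is to treat Proposition~\ref{prelim} as the standard Cauchy problem for a linear, second-order, strictly hyperbolic equation on a globally hyperbolic spacetime, and to reduce every assertion to energy estimates built from the scalar field's stress-energy tensor. Working in the $(t^*,r,\theta,\phi)$ coordinates of $(\ref{incords2})$ — or in Kruskal coordinates — the metric is smooth and Lorentzian on all of $\mathcal{M}$, with no degeneration at $r=2M$, so $\Box_g$ has smooth coefficients and principal symbol the dual metric. The fundamental object is the energy-momentum tensor
\[
T_{\mu\nu}[\psi]=\partial_\mu\psi\,\partial_\nu\psi-\tfrac12 g_{\mu\nu}\,\partial^\alpha\psi\,\partial_\alpha\psi,
\]
which is divergence-free whenever $\Box_g\psi=0$, since $\nabla^\mu T_{\mu\nu}=(\Box_g\psi)\,\partial_\nu\psi=0$. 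Contracting with a globally timelike vector field $N$ — a future-directed field adapted to a foliation of $\mathcal{M}$ by Cauchy surfaces, chosen everywhere timelike including across the horizon — produces the current $J_\mu=T_{\mu\nu}N^\nu$, whose divergence $\nabla^\mu J_\mu=T_{\mu\nu}\,{}^{(N)}\pi^{\mu\nu}$ is pointwise bounded by the $N$-energy density through the boundedness of the deformation tensor ${}^{(N)}\pi$.

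First I would establish uniqueness together with the a priori estimate. Applying the divergence theorem to $J_\mu$ on the region between two leaves of the foliation, the interior term is controlled by the flux, and a Gr\"onwall argument then forces the flux — hence all first derivatives of $\psi$ — to vanish identically whenever it vanishes on $\Sigma$; this yields uniqueness in the stated regularity class. Existence of a solution with data in $H^2_{\rm loc}\times H^1_{\rm loc}$ is classical for a single linear hyperbolic equation: locally one constructs the solution by the Friedrichs/Galerkin method or by the duality (Hahn--Banach) argument of Leray--H\"ormander, using precisely the adjoint energy estimate, and then patches the local solutions together using uniqueness. Because $\Sigma$ is a Cauchy surface, every point of $\mathcal{M}$ lies in the domain of dependence of a compact piece of $\Sigma$, so the local solutions cover all of $\mathcal{M}$; working throughout with $H^m_{\rm loc}$ norms means no uniform control is needed and the non-compactness of $\Sigma$ and $\mathcal{M}$ causes no difficulty.

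For the higher-regularity statement I would commute the equation with derivatives and reapply the energy estimate at each order. The time-translation Killing field $\partial_{t^*}$ commutes with $\Box_g$, so $\partial_{t^*}\psi$ again solves the wave equation and inherits an estimate one order higher; rewriting $\Box_g\psi=0$ as $\partial_{t^*}^2\psi=(\text{second-order spatial operator})\psi$ then lets one trade regained time derivatives for spatial ones via elliptic estimates on the leaves, and iterating yields $\psi|_{\mathcal{S}}\in H^{m+1}_{\rm loc}$, $n_{\mathcal{S}}\psi|_{\mathcal{S}}\in H^m_{\rm loc}$. The final, domain-of-dependence assertion follows from the same energy identity applied to $\phi=\psi_1-\psi_2$. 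The set $\mathcal{M}\setminus J^\pm(\Sigma\setminus{\rm clos}(\mathcal{U}))$ is exactly the domain of dependence of ${\rm clos}(\mathcal{U})$: each of its points $p$ has $J^\mp(p)\cap\Sigma\subset{\rm clos}(\mathcal{U})$. Fixing such a $p$ in $J^+(\Sigma)$, I would apply the divergence theorem to $J_\mu[\phi]$ on the truncated cone $J^-(p)\cap J^+(\Sigma)$. The base flux vanishes since $\phi$ and $n_\Sigma\phi$ vanish on $\mathcal{U}$, and the crucial structural fact is that the flux through the null portion of the boundary is non-negative — this is the inequality $T_{\mu\nu}[\phi]N^\mu L^\nu\ge0$ for $N$ timelike and $L$ null future-directed, the dominant-energy-type property of a scalar field. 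Gr\"onwall then forces $\phi\equiv0$ in the cone, and symmetrically for $p\in J^-(\Sigma)$.

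The step I expect to be the genuine obstacle is not any single estimate but the geometric bookkeeping of the divergence theorem on causal domains whose boundaries contain null hypersurfaces and corners — controlling the possible caustics of the past cone of $p$ and justifying the integration by parts across them — together with the initial choice of a field $N$ that is simultaneously timelike \emph{and} has controlled deformation tensor uniformly across the horizon $r=2M$. Once the non-negativity of the null flux and the boundedness of $\nabla^\mu J_\mu$ are secured, every assertion of the proposition follows from the same Gr\"onwall mechanism.
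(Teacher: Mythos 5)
The paper offers no proof of Proposition~\ref{prelim}: it is stated as a classical fact, namely the standard well-posedness and domain-of-dependence theory for linear wave equations on globally hyperbolic spacetimes (the relevant background is only gestured at, in Appendix~\ref{cauchyproblem} and its references to Leray and the classical existence theory). Your sketch is exactly that standard argument and is essentially correct, so there is nothing in the paper to contrast it with. Two points of hygiene are worth fixing. First, for data only in $H^2_{\rm loc}\times H^1_{\rm loc}$ on a non-compact $\Sigma$, the global Gr\"onwall argument ``between two leaves of the foliation'' is not literally meaningful, since the flux over an entire leaf may be infinite; uniqueness should be run from the outset through the localized cone estimate of your final paragraph (which, applied with $\mathcal{U}=\Sigma$, subsumes it). Second, the higher-regularity step as written leans on the chart $(\ref{incords2})$ and the Killing field $T=\partial_{t^*}$, but that chart covers only two of the four regions of maximally-extended Schwarzschild, which is the $\mathcal{M}$ of the proposition; since higher regularity is a purely local statement, it is cleaner to commute with arbitrary coordinate vector fields in any local chart and absorb the resulting lower-order commutator terms by Gr\"onwall, so that no Killing structure or global foliation is needed. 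With those adjustments the argument is complete modulo the caustic/corner issues you already flag, which are handled in the standard way by exhausting the causal domain with lens-shaped regions bounded by spacelike hypersurfaces.
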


 We will be interested in understanding the behaviour of $\psi$
in the exterior of the black hole and white hole regions, up to and including the
horizons.
It is enough of course to understand the behaviour in the region
\[
\mathcal{D} \doteq {\rm clos}\left(J^-(\mathcal{I}_A^+)\cap J^+(\mathcal{I}_A^-)\right)\cap\mathcal{Q}
\]
where $\mathcal{I}^\pm_A$ denote a pair 
of connected components of $\mathcal{I}^\pm$, respectively,  with a 
common limit point.\footnote{We will sometimes be sloppy with   
distinguishing between $\pi^{-1}(p)$ and $p$, where $\pi:\mathcal{M}\to \mathcal{Q}$
denotes the natural projection, distinguishing
 $J^-(p)$ and $J^-(p)\cap \mathcal{Q}$, etc. The context
should make clear what is meant.}

Moreover, it suffices ({\bf Exercise}: Why?) to assume that 
$\Sigma\cap \mathcal{H}^-=\emptyset$,
and that we are interested in the behaviour in $J^-(\mathcal{I}^+)\cap J^+(\Sigma)$. Note that in this
case, by the domain of dependence property of the above proposition,
we have that    the solution in this region is determined by $\uppsi|_{\mathcal{D}\cap
\Sigma}$, $\uppsi'|_{\mathcal{D}\cap\Sigma}$.
In the case where $\Sigma$ itself is spherically symmetric, then its projection to $\mathcal{Q}$
will look like:
\[
\input{wCauchy.pstex_t}
\]
If $\Sigma$ is not itself spherically symmetric, then its projection to $\mathcal{Q}$ will in general
have open interior. Nonetheless,
we shall always depict $\Sigma$ as above.

\subsection{The Kay--Wald boundedness theorem}
\label{KWsect}
The most basic problem is to obtain uniform boundedness for $\psi$. This is
resolved in the celebrated:
\begin{theorem}
\label{KW}
Let $\psi$, $\uppsi$, $\uppsi'$ be as in Proposition~\ref{prelim}, 
with $\uppsi\in H^{m+1}_{\rm loc}(\Sigma)$, $\uppsi'\in 
H^m_{\rm loc}(\Sigma)$
for a sufficiently high $m $, and such that $\uppsi$, $\uppsi'$ decay suitably at
$i^0$. Then there is a constant $D$ depending on $\uppsi$, $\uppsi'$ such that
\[
|\psi| \le D
\]
in $\mathcal{D}$.
\end{theorem}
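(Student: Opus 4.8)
The plan is to run the vector-field (energy) method, exploiting the symmetries of Schwarzschild, and to treat the degeneration of the natural conserved energy at the horizon as the central difficulty.

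First I would introduce the energy--momentum tensor
\[
T_{\mu\nu}[\psi]=\pa_\mu\psi\,\pa_\nu\psi-\f12 g_{\mu\nu}\,\pa^\a\psi\,\pa_\a\psi,
\]
which for solutions of $\Box_g\psi=0$ satisfies $\nab^\mu T_{\mu\nu}=0$. Contracting with the static Killing field $K=\pa_t$ produces a divergence-free current $J^\mu=T^\mu{}_\nu K^\nu$, so that by the divergence theorem applied to the region of $\mathcal{D}$ between two translates $\Sigma_\tau$ of $\Sigma$ under the flow of $\pa_t$ (together with the horizon segment bounding the region), the flux
\[
E[\psi](\tau)=\int_{\Sigma_\tau}T(K,n_{\Sigma_\tau})
\]
is non-increasing to the future; the horizon flux has a favourable sign because $K$ is future-null there. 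Since $K$ is timelike in $r>2M$, $E[\psi]$ controls a weighted $H^1$ norm of $\psi$ on each slice. The crux is that this weight carries a factor $(1-2M/r)$ in front of $(\pa_t\psi)^2$, so $E[\psi]$ \emph{degenerates} as $r\to 2M$ and by itself does not control $\psi$ up to $\mathcal{H}^+$.

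Next, because $\pa_t$ and the three rotation fields spanning ${\rm so}(3)$ are Killing, they commute with $\Box_g$; hence $\pa_t\psi$, $\Omega_i\psi$ and their iterates again solve the wave equation, with conserved energies controlled by the data. Feeding finitely many such commuted energies into a Sobolev inequality on the $3$-dimensional slices $\Sigma_\tau$ yields a pointwise bound for $\psi$ in any region $r\ge 2M+\ep$, uniformly in $\tau$. This disposes of the exterior away from the horizon and, using the assumed decay at $i^0$, of the asymptotic region as well.

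The main obstacle is to propagate the bound up to and across $\mathcal{H}^+$, where the degenerate energy gives no control. Here I would invoke the red-shift: near a non-extremal horizon there is a $\pa_t$-invariant timelike vector field $N$, transverse to $\mathcal{H}^+$, whose energy identity has a bulk term of favourable sign in a collar $2M\le r\le 2M+\ep$. The associated current yields an energy coercive over \emph{all} derivatives, including those transverse to the horizon, so that commuting with $N$ (and with $\pa_t$ and ${\rm so}(3)$) and applying Sobolev embedding bounds $\psi$ uniformly up to $\mathcal{H}^+$. The classical alternative, due to Kay--Wald, instead writes $\psi=\pa_t\Phi$ for a ``time-integral'' $\Phi$ solving the wave equation, and exploits the time-reflection isometry of the maximal extension to normalise $\Phi$ near the bifurcation sphere; bounding $\pa_t\Phi$ then recovers the estimate, since on $\mathcal{H}^+$ the derivative $\pa_t$ is precisely the one that the degenerate energy \emph{does} control. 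Either way, the genuinely delicate point is the behaviour in the small collar $2M\le r\le 2M+\ep$ about the horizon, everything else reducing to a standard energy-plus-Sobolev argument.
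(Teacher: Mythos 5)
Your proposal is correct in outline, but your primary route is not the one the paper uses to prove this particular theorem. The paper's proof of Theorem~\ref{KW} is the Kay--Wald argument: the $J^T$ energy estimate, commutation with $T$, elliptic and Sobolev estimates away from the horizon, and then--at the horizon--the construction of a time integral $\tilde\psi$ with $T\tilde\psi=\psi$ (via inverting an elliptic operator on a $t=c$ slice) together with the discrete isometry $X\to -X$ of Kruskal coordinates to arrange the required vanishing at the bifurcation sphere. What you sketch as your main route--the red-shift multiplier $N$--is the paper's \emph{second}, new proof, which yields the stronger Theorem~\ref{boundedn} (non-degenerate energy and transverse derivatives up to $\mathcal{H}^+$). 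You correctly identify the Kay--Wald construction as the classical alternative and describe it accurately, so either path would discharge the theorem; the red-shift route buys robustness (no reliance on staticity, spherical symmetry, or the discrete isometry) at the cost of a slightly more involved bookkeeping of bulk terms.

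That bookkeeping is where your sketch has a genuine gap. The favourable sign of $K^N$ holds only in a collar $r\le r_0$ near the horizon; in the transition region $r_0\le r\le r_1$ where $N$ interpolates to $T$, the bulk term $K^N$ has no sign, so the divergence identity for $J^N$ does \emph{not} by itself give monotonicity of the $N$-energy--naively it gives growth linear in $\tau$. One must bound $|K^N|$ there by $B\,J^T_\mu n^\mu$ (possible since $T$ is uniformly timelike in that region), use conservation of the degenerate $T$-flux to bound its spacetime integral by $(\tau-\tau')\int_{\Sigma_0}J^T_\mu n^\mu$, and then close with a Gronwall-type integral inequality of the form $f(\tau)+b\int_{\tau'}^{\tau}f\le BD(\tau-\tau')+f(\tau')$, which forces $f$ bounded. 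Without this step the red-shift argument does not close. A second, more minor point: commuting with $N$ (or a transversal null field $\hat Y$) is not automatic, since the commutator produces two transverse derivatives that the $N$-energy of $\psi$ cannot control; one needs the fact that this top-order term enters with a favourable sign tied to the positivity of the surface gravity. For the pointwise bound on $\psi$ itself this can be sidestepped entirely by commuting only with the Killing fields $T$ and $\Omega_i$ and applying Sobolev on the slices.
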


The proof of this theorem is due to Wald~\cite{drimos} and Kay--Wald~\cite{kw:lss}.
The ``easy part'' of the proof (Section~\ref{eazy}) is a classic
application of vector field commutators and multipliers, together with elliptic
estimates and the Sobolev inequality.
The main difficulties arise at the horizon, and these are overcome by what
is essentially a clever trick. In this section, we will go through the original argument,
as it is a nice introduction to vector field
multiplier and commutator techniques, as well as to the geometry of Schwarzschild.
We will then point out (Section~\ref{critical}) various disadvantages of the method of proof.
Afterwards, we give a new proof that in fact achieves a stronger result
(Theorem~\ref{boundedn}).
As we shall see, the techniques of this proof
will be essential for future applications.
 
\subsubsection{The Killing fields of Schwarzschild}
Recall the symmetries of $(\mathcal{M},g)$:
$(\mathcal{M},g)$ is spherically symmetric, i.e.~there is a basis of Killing vectors
$\{\Omega_i\}_{i=1}^3$ spanning the Lie algebra ${\rm so}(3)$. These are sometimes
known as \emph{angular momentum operators}.
In addition, there is another Killing field $T$ (equal to $\partial_t$ in the coordinates
$(\ref{incords})$) which is
hypersurface orthogonal and future directed timelike near $i_0$. 
This Killing field is in fact 
timelike everywhere in $J^-(\mathcal{I}^+)\cap J^+(\mathcal{I}^-)$,
becoming null and tangent to the
horizon, vanishing at $\mathcal{H}^+\cap \mathcal{H}^-$.
We say that the Schwarzschild metric in $J^-(\mathcal{I}^+)\cap J^+(\mathcal{I}^-)$
is \emph{static}.
$T$ is spacelike in the black hole and white hole regions.

Note that whereas in Minkowski space $\mathbb R^{3+1}$,
the Killing fields at any point span the tangent space, this is no longer the case for Schwarzschild. We shall
return to this point later.

\subsubsection{The current $J^T$ and its energy estimate}
Let $\varphi_t$ denote the $1$-parameter group of diffeomorphisms generated by
the Killing field $T$. Define $\Sigma_\tau=\varphi_t (\Sigma\cap\mathcal{D})$. We have
that
$\{\Sigma_\tau\}_{\tau\ge 0}$
defines a spacelike foliation of 
\[
\mathcal{R}\doteq \cup_{\tau\ge 0}\Sigma_\tau.
\]
Define
\[
\mathcal{H}^+(0,\tau)\doteq
\mathcal{H}^+\cap J^+(\Sigma_0)\cap J^-(\Sigma_\tau),
\]
and 
\[
\mathcal{R}(0,\tau)\doteq
\cup_{0\le\bar\tau\le \tau} \Sigma_{\bar\tau}.
\]
Let $n^\mu_{\Sigma}$ denote the future directed unit normal of $\Sigma$, and
let  $n^\mu_{\mathcal{H}}$ define a null generator of $\mathcal{H}^+$, and give
$\mathcal{H}^+$ the associated  volume form.\footnote{Recall that for null surfaces,
the definition of a volume form relies on the choice of a normal. All integrals in what follow
will always be with respect to the natural volume form, and in the case of a null hypersurface,
with respect to the volume form related to the given choice of normal. See Appendix~\ref{divthesec}.}

Let $J^T_\mu(\psi)$
 denote the energy current defined by applying the vector field $T$ as a multiplier,
 i.e.
 \[
 J^T_\mu(\psi) = T_{\mu\nu} (\psi) T^\nu= (\partial_\mu\psi\partial_\nu\psi-\frac12g_{\mu\nu}
 \partial^\alpha\psi \partial_\alpha\psi)T^\nu
 \]
 with its associated 
current $K^T(\psi)$,
\[
K^T(\psi)= {}^T \pi^{\mu\nu}T_{\mu\nu}(\psi)= \nabla^\mu J^T_\mu(\psi),
\]
where $T_{\mu\nu}$ denotes the standard energy momentum tensor
of $\psi$ (see 
Appendix~\ref{VFM}). Since $T$ is Killing, and $\nabla^\mu T_{\mu\nu}=0$,
 it follows that $K^T(\psi)=0$,
and the divergence theorem
(See Appendix~\ref{divthesec}) applied to $J_\mu^T$ in the region
$\mathcal{R}(0,\tau)$ yields
\begin{equation}
\label{Enest}
\int_{\Sigma_\tau} J^T_\mu(\psi)n^\mu_{\Sigma_\tau} 
  + \int_{\mathcal{H}^+(0,\tau)} J^T_\mu(\psi)n^\mu_{\mathcal{H}} = 
\int_{\Sigma_0} J^T_\mu(\psi)n^\mu_{\Sigma_0} .
\end{equation}
See
\[
\input{wfol.pstex_t}
\]
Since $T$ is future-directed causal in $\mathcal{D}$, we have
\begin{equation}
\label{ineqs}
J^T_\mu(\psi) n^\mu_{\Sigma} \ge 0, \qquad J^T_\mu(\psi) n^\mu_{\mathcal{H}} \ge 0.
\end{equation}
Let us fix an $r_0>2M$. 
It follows from $(\ref{Enest})$, $(\ref{ineqs})$ that
\[
\int_{\Sigma_\tau\cap\{r\ge r_0\}} J^T_\mu(\psi)n^\mu_{\Sigma_\tau} 
\le
\int_{\Sigma_0} J^T_\mu(\psi)n^\mu_{\Sigma_0}.
\]
As long as $-g(T,n_{\Sigma_0})\le B$ for some constant $B$,\footnote{For definiteness,
one could choose $\Sigma$ to be a surface of constant $t^*$ defined in
Section~\ref{orextsec}, or alternatively,
require that it be of constant $t$ for large $r$.} we have
\[
B(r_0, \Sigma) ((\partial_t \psi)^2 +(\partial_r \psi)^2 + |\nabb\psi|^2)\ge 
J^T_\mu(\psi) n^\mu \ge b(r_0,\Sigma) ((\partial_t \psi)^2 +(\partial_r \psi)^2 + |\nabb\psi|^2).
\]
Here, $|\nabb \psi|^2$ denotes the induced norm on the group orbits
of the $SO(3)$ action, with $\nabb$ the
gradient of the induced metric on the group orbits.
We thus have
\[
\int_{\Sigma_\tau\cap\{r\ge r_0\}} 
(\partial_t \psi)^2 +(\partial_r \psi)^2 + |\nabb\psi|^2 \le
B(r_0,\Sigma) \int_{\Sigma_0} J^T_\mu(\psi)n^\mu_{\Sigma_0} .
\]

\subsubsection{$T$ as a commutator and pointwise estimates away from the horizon}
\label{eazy}
We may now commute the equation with $T$ (See Appendix~\ref{commutat}), 
i.e., since $[\Box, T]=0$, 
if $\Box_g\psi=0$ then $\Box_g (T\psi)=0$. We thus obtain an estimate
\begin{equation}
\label{oresti}
\int_{\Sigma_\tau\cap\{r\ge r_0\}} 
(\partial_t^2 \psi)^2 +(\partial_r\partial_t \psi)^2 + |\nabb\partial_t\psi|^2 \le
B (r_0,\Sigma)\int_{\Sigma_0} J^T_\mu(T\psi)n^\mu_{\Sigma_0} .
\end{equation}
{\bf 
Exercise}: By elliptic estimates and a Sobolev estimate show that
if $\uppsi(x)\to 0$ as $x\to i_0$, then $(\ref{oresti})$ implies that
for $r\ge r_0$,
\begin{equation}
\label{ellipt}
|\psi|^2 \le B(r_0,\Sigma)\left(\int_{\Sigma_0}J^T_\mu(\psi) n^\mu_{\Sigma_0} +\int_{\Sigma_0}J^T_\mu(T\psi) n^\mu_{\Sigma_0} \right),
\end{equation}
for solutions $\psi$ of $\Box_g\psi=0$.

The right hand side of $(\ref{ellipt})$ is finite under the assumptions of Theorem~\ref{KW},
for $m=1$.
Thus, proving the estimate of Theorem~\ref{KW} away from the horizon poses no
difficulty. The difficulty of Theorem~\ref{KW} is obtaining estimates which hold
up to the horizon.

{\bf Remark}: The above argument via elliptic estimates 
clearly also holds for Minkowski space.
But in that case, there is
an alternative ``easier'' argument,
namely, to commute with all translations.\footnote{Easier, but not necessarily better\ldots}
We see thus already that the lack of Killing fields in Schwarzschild makes things
more difficult. We shall again return to this point later.

\subsubsection{Degeneration at the horizon}
\label{whowasthatdegenerate?}
As one takes $r_0\to 2M$, the constant $B(r_0,\Sigma)$ provided by
the estimate $(\ref{ellipt})$ blows up. This is precisely because 
$T$ becomes null on $\mathcal{H}^+$
and thus its control over derivatives of $\psi$ degenerates.
Thus, one cannot prove uniform boundedness holding up to the horizon by the above.

Let us examine more carefully this degeneration on various hypersurfaces.

On $\Sigma_\tau$, we have only
\begin{equation}
\label{YDZ}
J^T_\mu(\psi) n^\mu_{\Sigma_\tau}  \ge  
B(\Sigma_\tau)((\partial_{t^*} \psi)^2 + (1-2M/r) (\partial_r \psi)^2 + |\nabb\psi|^2).
\end{equation}
We see the degeneration in the presence of the factor $(1-2M/r)$.
Note that ({\bf Exercise}) $1-2M/r$ vanishes to first order on $\mathcal{H}^+\setminus \mathcal{H}^-$.
Alternatively, one can examine the flux on the horizon $\mathcal{H}^+$ itself.
For definiteness, let us choose $n_{\mathcal{H}^+}= T$ in $\mathcal{R}\cap \mathcal{H}^+$. We have
\begin{equation}
\label{YDZYDZ}
J^T_\mu(\psi) T^\mu  = (T\psi)^2.
\end{equation}
Comparing with the analogous computation on a null cone in Minkowski space, one sees that 
a term $|\nabb\psi|^2$ is ``missing''.

Are estimates of the terms $(\ref{YDZ})$, $(\ref{YDZYDZ})$
enough to control $\psi$? It is a good idea to play with these
estimates on your own, allowing yourself to commute the equation
with $T$ and $\Omega_i$ to obtain higher order estimates. 
{\bf Exercise}: Why does this not lead to an estimate as in~(\ref{ellipt})?

It turns out that there is a way around this problem and 
the degeneration on the horizon is suggestive.
For suppose  there existed a $\tilde\psi$ such that
\begin{equation}
\label{suchthat}
\Box_g\tilde\psi=0,\qquad T\tilde\psi=\psi.
\end{equation}
Let us see immediately how one can obtain estimates on the horizon itself.
For this,
we note that
\[
J^T_\mu(\tilde{\psi})T^\mu  +J^T_\mu(\psi)T^\mu = \psi^2+(T\psi)^2.
\]
Commuting
now with the whole Lie algebra of isometries, we obtain
\begin{align*}
J^T_\mu(\tilde{\psi})T^\mu  +J^T_\mu(\psi)T^\mu 
+ \sum_i J^T_\mu(\Omega_i\tilde\psi)T^\mu +J^T_\mu(T\psi)T^\mu \cdots\\
 = \psi^2+(T\psi)^2
+\sum_i (\Omega_i\psi)^2+ (T^2 \psi)^2 + \cdots.
\end{align*}
Clearly, by a Sobolev estimate applied on the horizon, together
with the estimate
\[
\int_{\mathcal{H}^+\cap\mathcal{R}}
J^T_\mu(\Gamma^{(\alpha)} \tilde\psi) n^\mu_{\mathcal{H}}
\le
\int_{\Sigma_0}J^T_\mu(\Gamma^{(\alpha)} \tilde\psi) n^\mu_{\Sigma_0}
\]
for $\Gamma=T,\Omega_i$ (here $(\alpha)$ denotes a multi-index of arbitrary order),
we would obtain
\begin{equation}
\label{onthehor}
|\psi|^2\le B \sum_{\Gamma=T,\Omega_i}\sum_{|(\alpha)|\le 2}
\int_{\Sigma_0}J^T_\mu(\Gamma^{(\alpha)} \tilde\psi) n^\mu_{\Sigma_0}
\end{equation}
on $\mathcal{H}^+\cap\mathcal{R}$.

It turns out that the estimate $(\ref{onthehor})$ can be extended to points not on the horizon by
considering $t=c$ surfaces. 
Note that these hypersurfaces all meet at $\mathcal{H}^+\cap \mathcal{H}^-$.
It is an informative calculation to examine the nature
of the degeneration of estimates on such hypersurfaces because it is of a double nature, since,
in addition to $T$ becoming null,
the limit of (subsets of) these spacelike hypersurfaces approaches the null horizon $\mathcal{H}^+$.
We leave the details as an {\bf exercise}.

\subsubsection{Inverting an elliptic operator}
\label{invertin}
So can a $\tilde\psi$ satisfying $(\ref{suchthat})$ actually be constructed?
We have
\begin{proposition}
\label{existof}
Suppose $m$ is sufficiently high, $\uppsi$, $\uppsi'$ decay suitably at $i^0$,
and $\psi|_{\mathcal{H}^+\cap\mathcal{H}^-}=0$, $\Xi \psi|_{\mathcal{H}^+\cap\mathcal{H}^-}=0$
for some spherically symmetric timelike vector field $\Xi$ defined
along $\mathcal{H}^+\cap\mathcal{H}^-$.
Then there exists a $\tilde\psi$ satisfying $\Box_g \tilde\psi=0$ with 
$T\tilde\psi =\psi$ in $\mathcal{D}$,
and moreover, the right hand side of $(\ref{onthehor})$ is finite.
\end{proposition}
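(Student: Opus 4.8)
The plan is to reduce the construction of $\tilde\psi$ to the solution of a single (degenerate) elliptic equation for its Cauchy data on the hypersurface $\Sigma_0=\{t=0\}$, which in the maximal extension runs through the bifurcation sphere $\mathcal{B}=\mathcal{H}^+\cap\mathcal{H}^-$ on which $T$ vanishes. Working in the static coordinates $(\ref{incords})$, write $\Box_g=-N^{-2}\partial_t^2+\tilde L$, where $N^2=1-2M/r$ and $\tilde L u=r^{-2}\partial_r(r^2N^2\partial_r u)+r^{-2}\Delta_{\mathbb S^2}u$ is the spatial operator, and set $L:=N^2\tilde L$, so that any solution satisfies $\partial_t^2\tilde\psi=L\tilde\psi$. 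Suppose we have produced a function $\tilde\uppsi$ on $\Sigma_0$, and let $\tilde\psi$ be the solution of Proposition~\ref{prelim} with data $\tilde\psi|_{\Sigma_0}=\tilde\uppsi$ and $\partial_t\tilde\psi|_{\Sigma_0}=\uppsi$. Since $T$ is Killing, $[\Box_g,T]=0$, so $T\tilde\psi$ and $\psi$ both solve the wave equation; by the uniqueness statement of Proposition~\ref{prelim} they coincide on $\mathcal{D}$ once their Cauchy data agree on $\Sigma_0$. The zeroth-order data match automatically, since $(T\tilde\psi)|_{\Sigma_0}=\partial_t\tilde\psi|_{\Sigma_0}=\uppsi=\psi|_{\Sigma_0}$, while matching the normal derivatives demands $\partial_t^2\tilde\psi|_{\Sigma_0}=\uppsi'$. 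Using $\partial_t^2\tilde\psi=L\tilde\psi$ this is precisely the elliptic equation
\[
L\tilde\uppsi=\uppsi'\qquad\text{on }\Sigma_0 ,
\]
so that the entire problem $(\ref{suchthat})$ is equivalent to solving this equation for $\tilde\uppsi$.

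It remains to invert $L$. Away from $r=2M$ the operator $\tilde L$ is, up to a positive weight, the Laplace--Beltrami operator of the static slice; it is a nonnegative self-adjoint operator for the natural inner product, and the decay of the data at $i^0$ excludes zero modes, so $\tilde L$ is invertible on functions decaying at the asymptotically flat end. The genuine difficulty is that $L=N^2\tilde L$ \emph{degenerates} at the horizon $r=2M$, where $N^2\to0$: if $\tilde\uppsi$ is to be regular up to $\mathcal{B}$, the left-hand side of $L\tilde\uppsi=\uppsi'$ necessarily vanishes there, so $\uppsi'$ must vanish on $\mathcal{B}$, and one must solve a degenerate-elliptic problem in weighted spaces adapted both to the decay at $i^0$ and to the loss of ellipticity at $\mathcal{B}$. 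This is exactly where the two hypotheses enter. The condition $\psi|_{\mathcal{B}}=0$ is in fact forced, since $T$ vanishes on $\mathcal{B}$ and $\psi=T\tilde\psi$. The condition $\Xi\psi|_{\mathcal{B}}=0$, with $\Xi$ a spherically symmetric timelike field along $\mathcal{B}$ which we may take proportional to the unit normal $n_{\Sigma_0}$ there, is precisely the statement $\uppsi'=n_{\Sigma_0}\psi=0$ on $\mathcal{B}$, i.e.~the compatibility condition making $L\tilde\uppsi=\uppsi'$ solvable with a solution regular across the horizon (as seen in a coordinate system, such as Kruskal, regular at $\mathcal{B}$).

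Finally, to obtain the finiteness of the right-hand side of $(\ref{onthehor})$ I would upgrade regularity: commuting with $T$ and the angular momentum operators $\Omega_i$ and applying elliptic estimates for $\tilde L$ shows $\tilde\uppsi\in H^{m+1}_{\rm loc}$ provided $\uppsi,\uppsi'$ are regular enough --- this is the source of the requirement that $m$ be sufficiently high --- and the domain-of-dependence and energy estimates of Proposition~\ref{prelim} then propagate this regularity, so that $\sum_{\Gamma=T,\Omega_i}\sum_{|(\alpha)|\le 2}\int_{\Sigma_0}J^T_\mu(\Gamma^{(\alpha)}\tilde\psi)n^\mu_{\Sigma_0}$ is finite. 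The hard part will be the degenerate-elliptic analysis at $\mathcal{B}$: producing $\tilde\uppsi$ regular up to and across the horizon with control of all the norms appearing in $(\ref{onthehor})$, precisely because $L$ loses ellipticity at $r=2M$ and the source $\uppsi'$ degenerates there. The vanishing conditions on $\psi$ at $\mathcal{B}$ are exactly what must be exploited to close this step; everything away from the horizon, together with the decay at $i^0$, is by contrast routine.
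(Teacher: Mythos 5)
This is essentially the paper's own argument: the text likewise prescribes $T\tilde\psi|_{t=c}=\psi$ and $\tilde\psi|_{t=c}=\triangle^{-1}_{(1-2M/r)^{-1}\bar g}T\psi$, so that $T\tilde\psi$ and $\psi$ share Cauchy data on the slice and coincide by uniqueness, with the whole weight of the proposition resting on inverting a spatial operator that degenerates at the bifurcation sphere. Your framing of that degeneration as a loss of ellipticity at $r=2M$ (the paper phrases it as the conformal metric $(1-2M/r)^{-1}\bar g$ having an asymptotically hyperbolic end), and your identification of the vanishing conditions at $\mathcal{H}^+\cap\mathcal{H}^-$ and the decay at $i^0$ as exactly the solvability and regularity requirements for that inversion, match the paper's sketch, which likewise leaves the detailed verification to the reader.
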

Formally, one sees that on $t=c$ say, if we let $\bar{g}$ denote the induced
Riemannian metric, and if we impose
initial data
\[
T \tilde\psi|_{t=c} = \psi,
\]
\[
\tilde{\psi}|_{t=c} = \triangle_{(1-2M/r)^{-1}\bar g}^{-1} T\psi,
\]
and let $\tilde\psi$ solve the wave equation with this data,
then
\[
T\tilde{\psi} = \psi
\]
as desired.

So to use the above, it suffices to ask whether the initial data for $\tilde\psi$ above can be constructed
and have sufficient regularity so as for the right hand side of $(\ref{onthehor})$ to be defined.
To impose the first condition, since $T=0$ along $\mathcal{H}^+\cap\mathcal{H}^-$,
one must have that $\psi$ vanish there to some order. 
For the second condition, note first that
the metric
$(1-2M/r)^{-1}\bar g$ has an asymptotically hyperbolic end and an asymptotically
flat end. Thus,
to  construct $ \triangle_{(1-2M/r)^{-1}\bar g}^{-1} T\psi$ suitably well-behaved\footnote{so that
we may apply to this quantity the arguments of 
Section \ref{whowasthatdegenerate?}.}, one must 
have that $T\psi$ decays appropriately towards the ends.
We leave to the reader the task of verifying that the assumptions of the Proposition are
sufficient.

\subsubsection{The discrete isometry}
\label{discrete}
Proposition~\ref{existof}, together with estimates $(\ref{onthehor})$ and  $(\ref{ellipt})$,
yield the proof of Theorem~\ref{KW} in the special case that the conditions of
Proposition~\ref{existof} happen to be satisfied. In the original paper of Wald~\cite{drimos},
one took $\Sigma_0$ to coincide with $t=0$ and
restricted to data $\uppsi$, $\uppsi'$ which were supported in a compact region not containing
$\mathcal{H}^+\cap\mathcal{H}^-$. Clearly, however, this is a deficiency, as general 
solutions will be supported in $\mathcal{H}^+\cap\mathcal{H}^-$. (See also the
last exercise below.)

It turns out, however, that one can overcome the restriction on the support by
the following trick: Note that the previous proposition produces a $\tilde\psi$ such that
$T\tilde\psi =\psi$ on all of $\mathcal{D}$. We only require however that $T\tilde\psi=\psi$
on $\mathcal{R}$.
The idea is to
define a new $\bar\uppsi$, $\bar\uppsi'$ on $\Sigma$, such that
$\bar\uppsi=\uppsi$, $\bar\uppsi'=\uppsi'$ on $\Sigma_0$ and, denoting
by $\bar\psi$ the solution to the Cauchy problem with the new data,
$\bar\psi|_{\mathcal{H}^+\cap\mathcal{H}^-}=0$, 
$\Xi \bar\psi|_{\mathcal{H}^+\cap\mathcal{H}^-}=0$.
By the previous proposition and the 
 domain of dependence property of Proposition~\ref{prelim}, we will have indeed constructed a
$\tilde{\psi}$ with $T\tilde\psi=\psi$ in $\mathcal{R}$ for which the right hand side
of $(\ref{onthehor})$ is finite.

Remark that Schwarzschild admits a discrete symmetry generated by
the map $X\to -X$ in Kruskal coordinates. Define $\bar\uppsi$, $\bar\uppsi'$ so that
$\bar\uppsi(X, \cdot)=-\bar\uppsi(-X,\cdot)$, $\bar\uppsi'(X,\cdot)=-\bar\uppsi'(-X,\cdot)$.
\begin{proposition}
Under the above assumptions, it follows that
\[
\bar\psi(X,\cdot)=-\bar\psi(-X,\cdot).
\]
\end{proposition}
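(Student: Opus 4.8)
The plan is to run the standard ``symmetric data yields symmetric solution'' argument, powered by the isometry invariance of $\Box_g$ together with the uniqueness statement of Proposition~\ref{prelim}. Write $\Phi:\mathcal{M}\to\mathcal{M}$ for the map sending $X\to -X$ in Kruskal coordinates, i.e.\ $(T,R,\theta,\phi)\mapsto(T,-R,\theta,\phi)$, and take $\Sigma$ to be $\Phi$-invariant (for instance the slice $\{T=0\}$). First I would check that $\Phi$ is a smooth isometry: in the Kruskal form of the metric $r$ is determined by $T^2-R^2$ alone and is therefore $\Phi$-invariant, while $-dT^2+dR^2$ and $r^2 d\sigma_{\mathbb{S}^2}^2$ are manifestly invariant under $R\to -R$. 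Moreover $\Phi_*\partial_T=\partial_T$ and $g_{TT}<0$ everywhere, so $\partial_T$ is future timelike and $\Phi$ preserves the time orientation; and $\Phi$ maps $\Sigma$ to itself, carrying its future unit normal $n_\Sigma$ (proportional to $\partial_T$) to itself rather than reversing it.

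The second ingredient is that $\Box_g$ commutes with isometries, so $\Box_g(\bar\psi\circ\Phi)=(\Box_g\bar\psi)\circ\Phi=0$; since $\Phi$ is smooth it also preserves the local Sobolev regularity required in Proposition~\ref{prelim}, so $\bar\psi\circ\Phi$ is again an admissible solution. By linearity the same holds for $\eta:=\bar\psi+\bar\psi\circ\Phi$. The goal then reduces to showing $\eta\equiv0$, since this is exactly the assertion $\bar\psi(X,\cdot)=-\bar\psi(-X,\cdot)$.

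To see $\eta\equiv 0$ I would compute its Cauchy data on $\Sigma$. Since $\Phi$ restricts to $X\to -X$ on $\Sigma$, the hypothesis $\bar\uppsi(X,\cdot)=-\bar\uppsi(-X,\cdot)$ gives $(\bar\psi\circ\Phi)|_\Sigma=\bar\uppsi\circ\Phi=-\bar\uppsi$, whence $\eta|_\Sigma=\bar\uppsi-\bar\uppsi=0$. For the normal derivative, because $\Phi$ is an isometry preserving both $\Sigma$ and the time orientation we have $\Phi_*n_\Sigma=n_\Sigma$, so that $n_\Sigma(\bar\psi\circ\Phi)=(n_\Sigma\bar\psi)\circ\Phi=\bar\uppsi'\circ\Phi=-\bar\uppsi'$ on $\Sigma$, using $\bar\uppsi'(X,\cdot)=-\bar\uppsi'(-X,\cdot)$; hence $n_\Sigma\eta|_\Sigma=\bar\uppsi'-\bar\uppsi'=0$. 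Thus $\eta$ solves $\Box_g\eta=0$ with vanishing data on $\Sigma$, and the uniqueness half of Proposition~\ref{prelim} forces $\eta\equiv0$ on the domain of dependence of $\Sigma$, i.e.\ on all of $\mathcal{M}$. This gives $\bar\psi=-\bar\psi\circ\Phi$, as claimed.

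The one place that genuinely needs care, and where I expect the only real possibility of error, is the sign in the normal-derivative computation: the conclusion relies crucially on $\Phi_*n_\Sigma=+n_\Sigma$ rather than $-n_\Sigma$, which in turn uses that $\Phi$ fixes $\partial_T$ and so preserves the time orientation. Had $\Phi$ reversed time orientation one would be led instead to an even solution, so the bookkeeping tying the oddness of the data to the oddness (not evenness) of $\bar\psi$ must be carried out attentively. Everything else is routine isometry invariance and uniqueness for the linear wave equation.
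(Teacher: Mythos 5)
Your argument is correct and is precisely the ``preservation of symmetry'' argument the paper has in mind (the paper leaves this proposition as an exercise with exactly that hint): the reflection $\Phi:(T,R)\mapsto(T,-R)$ is a time-orientation-preserving isometry fixing $\Sigma=\{T=0\}$ and its future normal, so $\bar\psi+\bar\psi\circ\Phi$ solves $\Box_g\eta=0$ with vanishing Cauchy data and must vanish by the uniqueness in Proposition~\ref{prelim}. Your attention to the sign $\Phi_*n_\Sigma=+n_\Sigma$ is exactly the point that needs care, and you have it right.
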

The proof of the above is left as an exercise in preservation of symmetry
for solutions of the wave equation.
It follows immediately that
\[
\bar\psi|_{\mathcal{H}^+\cap\mathcal{H}^-}=0
\]
and that
\[
\partial_U \bar\psi = -\partial_V \bar\psi,
\]
and thus $(\partial_U+\partial_V )\bar\psi=0$.
In view of the above remarks and Proposition~\ref{existof}
with $\Xi=\partial_U+\partial_V$,
we have shown the full statement of Theorem~\ref{KW}.

{\bf Exercise}: Work out explicit regularity assumptions and
quantitative dependence on initial data in Theorem~\ref{KW}, describing in particular
decay assumptions necessary at $i_0$.

{\bf Exercise}: Prove the analogue of Theorem~\ref{KW} on the Oppenheimer-Snyder
spacetime discussed previously. \emph{Hint: One need not know the explicit
form of the metric, the statement given about the Penrose diagram suffices.}
Convince yourself that the original restricted version of Theorem~\ref{KW} due
to Wald~\cite{drimos}, where the support of $\psi$ is restricted near $\mathcal{H}^+\cap\mathcal{H}^-$,
is not sufficient to yield this result.

\subsubsection{Remarks}
\label{critical}
The clever proof described above successfully obtains pointwise boundedness for $\psi$
up to the horizon $\mathcal{H}^+$. Does this really close the book,
however, on the boundedness question? From various 
points of view, it may be desirable to go further. 
\begin{enumerate}
\item
\label{eventhough}
Even though one obtains the ``correct'' pointwise result, one does
not obtain boundedness at the horizon for the energy measured by a local observer,
that is to say, bounds for 
\[
\int_{\Sigma_\tau} J^{n_{\Sigma_\tau}}_\mu(\psi) 
n^\mu_{\Sigma_\tau}.
\]
This indicates that it would be difficult to use this result even for the simplest
non-linear problems.
\item
\label{transv}
One does not obtain boundedness for transverse derivatives to the horizon,
i.e.~in $(t^*,r)$ coordinates, $\partial_r \psi$, $\partial^2_r\psi$, etc.
({\bf Exercise}: Why not?)
\item
\label{unnat}
The dependence on initial data is somewhat unnatural. ({\bf Exercise}: Work out explicitly
what it is.)
\vskip1pc
\emph{As far as the method of proof is concerned, there are additional shortcomings when the
proof is viewed  from the standpoint of possible future generalisations:}
\item
\label{usess}
To obtain control at the horizon, one
 must commute (see~$(\ref{onthehor})$) with all angular momentum operators $\Omega_i$. Thus
the spherical symmetry of Schwarzschild is used in a fundamental way.
\item
\label{exat}
The exact staticity is fundamental for the construction of $\tilde\psi$. It is not clear
how to generalise this argument in the case say where $T$ is not hypersurface orthogonal 
and Killing but one assumes merely that its
deformation tensor $^T\pi_{\mu\nu}$ decays. This would be the situation
in a bootstrap setting of a non-linear stability problem.
\item
\label{disct}
The construction of $\bar\psi$ requires the discrete isometry of Schwarzschild, which again,
cannot be expected to be stable.
\end{enumerate}

\subsection{The red-shift and a new proof of boundedness}
\label{newproofofb}
We give in this section a new proof of boundedness which
overcomes the shortcomings outlined above.
In essence, the previous proof limited itself by relying solely 
on Killing fields as multipliers and commutators. 
It turns out that there is an important physical aspect of Schwarzschild
which can be captured by other vector-field multipliers and commutators
which are not however Killing.
This is related to the celebrated \emph{red-shift effect}.

\subsubsection{The classical red-shift}
The \emph{red-shift effect} is one of the most
celebrated aspects of black holes. It is classically described as follows:
 Suppose two observers, $A$ and $B$ are such that
$A$ crosses the event horizon and $B$ does not. If $A$ emits a signal at constant
frequency as he measures it, then the frequency at which it is received by
$B$ is ``shifted to the red''. 
\[
\input{redshift.pstex_t}
\]
The consequences of this for the appearance of a  collapsing star  to far-away observers
were first explored in the seminal paper of Oppenheimer-Snyder~\cite{os} referred to at length
in Section~\ref{introsec}. For a nice discussion, see also the classic textbook~\cite{MTW}.

The red-shift effect as described above is a global one, and essentially depends only
on the fact that the proper time of $B$ is infinite whereas the proper time of $A$ before crossing
$\mathcal{H}^+$ is finite.
In the case of the Schwarzschild black hole, 
there is a ``local'' version of this red-shift: If $B$ also crosses the event 
horizon but at advanced time later than $A$:
\[
\input{locredshift.pstex_t}
\]
 then the 
frequency at which $B$ receives at his horizon crossing time
is shifted to the red by a factor depending exponentially on the advanced time difference
of the crossing points of $A$ and $B$.

The exponential factor is determined by the so-called \emph{surface gravity},
a quantity that can in fact be defined for all so-called Killing horizons. This localised red-shift
effect depends only on the positivity of this quantity.  We shall understand this more general
situation in Section~\ref{epilogue}. Let us for now simply explore how we can ``capture'' the
red-shift effect in the Schwarzschild geometry.

\subsubsection{The vector fields $N$, $Y$, and $\hat{Y}$}
\label{thevectorfields}
It turns out that a ``vector field multiplier'' version of this localised red-shift effect is
captured by
the following 
\begin{proposition}
\label{newproprs}
There exists a $\varphi_t$-invariant smooth future-directed
timelike vector field $N$ on $\mathcal{R}$
and a positive constant $b>0$ such that
\[
K^N(\psi) \ge b J^N_\mu (\psi) N^\mu
\]
on $\mathcal{H}^+$ for all solutions $\psi$ of $\Box_g\psi=0$.
\end{proposition}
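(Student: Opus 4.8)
The plan is to realise the red-shift as a coercivity property of the bulk current $K^N$, using the machinery of Section~\ref{KWsect}. I would write $N = T + Y$, where $Y$ is a future-directed null field transverse to $\mathcal{H}^+$, and recall that $K^N(\psi) = \,^{(N)}\pi^{\mu\nu}T_{\mu\nu}(\psi)$ with $\,^{(N)}\pi = \tfrac12 \Lie_N g$ the deformation tensor. Since $T$ is Killing, $\,^{(T)}\pi = 0$, so $\,^{(N)}\pi = \,^{(Y)}\pi$ and the \emph{entire} current $K^N$ is generated by the transverse field $Y$; the whole game is to choose $Y$, and especially its derivative transverse to $\mathcal{H}^+$, so that $K^N$ is a positive-definite quadratic form in the full first jet of $\psi$ along $\mathcal{H}^+$.

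First I would fix horizon-adapted (ingoing) coordinates in which
\[
g = -\left(1-\frac{2M}r\right)dv^2 + 2\,dv\,dr + r^2 d\sigma_{\mathbb{S}^2},
\]
so that $T = \partial_v$, the horizon is $\{r=2M\}$, and $Y \doteq -\partial_r$ is future-directed null, transverse to $\mathcal{H}^+$, with $g(Y,T) = -1$. Taking $Y$ first to be the (geodesic) coordinate field and computing $\,^{(Y)}\pi$ and its contraction with $T_{\mu\nu}(\psi)$, one finds on $\mathcal{H}^+$
\[
K^{T+Y}(\psi) = \frac1{4M}(\partial_r\psi)^2 + \frac1M (\partial_v\psi)(\partial_r\psi).
\]
Here the coefficient $\tfrac1{4M} = \tfrac12 \partial_r(1-2M/r)|_{r=2M}$ is precisely the surface gravity $\kappa>0$, and its appearance with a \emph{good} sign in front of $(\partial_r\psi)^2$ is the analytic signature of the red-shift. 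Crucially this coefficient is fixed by the restriction of $Y$ to $\mathcal{H}^+$ and is beyond our control. But the displayed form is merely indefinite: the terms $(\partial_v\psi)^2$ and $|\nabb\psi|^2$ are absent, so this naive $N$ does not suffice.

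The key step is to exploit the freedom in the extension of $Y$ off $\mathcal{H}^+$, which does not affect the tangential (hence the red-shift) quantities but does alter $\,^{(Y)}\pi_{rr}$ and $\,^{(Y)}\pi_{rv}$. I would prescribe the transverse derivative by $\nabla_Y Y = -\sigma(T+Y)$ on $\mathcal{H}^+$, with $\sigma>0$ a large constant. A direct computation then gives $\,^{(Y)}\pi_{rr} = \sigma$ and $\,^{(Y)}\pi_{rv} = -\sigma/2$, and, tracking how these enter through $\,^{(Y)}\pi^{vv}$ and the trace $\mathrm{tr}_g\,^{(Y)}\pi = -\sigma - 2/r$, one obtains on $\mathcal{H}^+$
\[
K^N(\psi) = \sigma(\partial_v\psi)^2 + \frac1M (\partial_v\psi)(\partial_r\psi) + \frac1{4M}(\partial_r\psi)^2 + \frac\sigma2 |\nabb\psi|^2.
\]
The extension has manufactured the missing positive terms $\sigma(\partial_v\psi)^2$ and $\tfrac\sigma2|\nabb\psi|^2$, while leaving the red-shift term intact; no angular cross terms appear, by spherical symmetry. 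Completing the square in $(\partial_v\psi,\partial_r\psi)$ --- equivalently demanding $\sigma\cdot\tfrac1{4M} > (2M)^{-2}$, i.e.\ $\sigma > 1/M$ --- makes this form positive-definite in $(\partial_v\psi,\partial_r\psi,\nabb\psi)$. Since $N$ is timelike, $J^N_\mu(\psi)N^\mu = T_{\mu\nu}(\psi)N^\mu N^\nu$ is positive-definite in the same quantities, so positive-definiteness of $K^N$ furnishes a $b>0$ with $K^N \ge b\,J^N_\mu(\psi)N^\mu$ on $\mathcal{H}^+$. It remains only to extend $N$ to all of $\mathcal{R}$ as a smooth, $\varphi_t$-invariant, everywhere-timelike field agreeing with $T$ for $r$ away from $2M$; since the assertion is confined to $\mathcal{H}^+$, this extension is otherwise free.

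The main obstacle is the third step: arranging, through the single vector's worth of freedom in $\nabla_Y Y$, that \emph{both} $(\partial_v\psi)^2$ and $|\nabb\psi|^2$ acquire positive coefficients large enough to dominate the fixed cross term, and verifying that the prescription introduces no new indefinite contributions. What makes the scheme close --- and what sharply distinguishes an event horizon from a Cauchy horizon, where the analogous argument fails --- is that the one coefficient not at our disposal, that of $(\partial_r\psi)^2$, is forced to be positive by the positivity of the surface gravity $\kappa$.
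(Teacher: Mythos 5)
Your proof is correct and follows essentially the same route as the paper: you take $N=T+Y$ with $Y$ null and transverse to $\mathcal{H}^+$, exploit the freedom in the transverse derivative by imposing $\nabla_Y Y=-\sigma(T+Y)$, and use the positivity of the surface gravity $\kappa=1/4M$ together with a large choice of $\sigma$ to make $K^N$ coercive on the horizon. The only difference is presentational — you carry out the computation explicitly in ingoing Eddington--Finkelstein coordinates, whereas the paper works abstractly with the null frame $T,Y,E_1,E_2$ and the algebraic properties of the energy-momentum tensor — and your explicit quadratic form and the condition $\sigma>1/M$ check out.
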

(See Appendix~\ref{VFM} for the $J^N$, $K^N$ notation.)
\begin{proof}
Note first that since $T$ is tangent to $\mathcal{H}^+$, it follows that given any $\sigma<\infty$,
there clearly exists 
a  vector field $Y$ 
on $\mathcal{R}$ such that
\begin{enumerate}
\item
\label{newf}
$Y$ is $\varphi_t$ invariant and spherically symmetric.
\item
\label{newse}
$Y$ is future-directed null on $\mathcal{H}^+$ and transverse to $\mathcal{H}^+$,
say $g(T,Y)=-2$.
\item
On $\mathcal{H}^+$, 
\begin{equation}
\label{euko}
\nabla_YY=  -\sigma\, (Y+T).
\end{equation}
\end{enumerate}
Since $T$ is tangent to $\mathcal{H}^+$, along which $Y$ is null, we have
\begin{equation}
\label{fromkilling}
g(\nabla_TY,Y)=0.
\end{equation}
From properties~\ref{newf} and~\ref{newse}, and the form of the Schwarzschild
metric, one computes $({\bf Exercise})$
\begin{equation}
\label{KRUCIAL}
g(\nabla_TY,T) \doteq 2\kappa > 0
\end{equation}
on $\mathcal{H}^+$.
Defining a local frame $E_1$, $E_2$ for the ${\rm SO}(3)$ orbits,
we note
\[
g(\nabla_{E_i}Y, Y) = \frac12 E_ig (Y,Y)=0,
\]
\[
g(\nabla_{E_1}Y, E_2) = -g(Y, \nabla_{E_1}E_2)=-g(Y,\nabla_{E_2}E_1)
=g(\nabla_{E_2}Y,E_1).
\]
Writing  thus
\begin{equation}
\label{listten1}
\nabla_TY = -\kappa  Y + a^1\, E_1 +a^2\, E_2
\end{equation}
\begin{equation}
\label{listten2}
\nabla_YY = -\sigma\, T-\sigma\, Y
\end{equation}
\begin{equation}
\label{listten3}
\nabla_{E_1}Y = h^1_1 \, E_1 +  h^2_1\, E_2 -\frac12a^1\, Y 
\end{equation}
\begin{equation}
\label{listten4}
\nabla_{E_2}Y = h^1_2 \,E_1 + h^2_2 \, E_2 - \frac12a^2 \, Y
\end{equation}
with ($h^2_1=h^1_2$),
we now compute
\begin{eqnarray}
\label{Wecompute}
\nonumber
K^Y &=&  - \frac12 \left({\bf T}(Y,Y) (-\kappa)  +\frac12 {\bf T}(T,Y)(-\sigma )+
\frac12 {\bf T}(T,T)(-\sigma) \right)\\
\nonumber
&&\hbox{}-\frac12({\bf T}(E_1, Y)a^1+{\bf T}(E_2,Y)a^2) \\
\nonumber
&&\hbox{}+ {\bf T}(E_1,E_1)h_1^1 +{\bf T}(E_2, E_2) h^1_2+ {\bf T}(E_1,E_2)( h_1^2+h^1_2)
\end{eqnarray}
where we denote the energy momentum tensor by  ${\bf T}$, to prevent confusion with $T$.
(Note that, in view of the fact that $\mathcal{Q}$ imbeds as a totally geodesic submanifold
of $\mathcal{M}$, we have in fact $a^1=a^2=0$. This is  of no importance in our computations,
however.)
It follows immediately in view again of the 
algebraic properties of ${\bf T}$, that
\begin{eqnarray*}
K^Y  &\ge& \frac12\kappa \, {\bf T}(Y,Y) +  \frac14\sigma\, {\bf T}(T,Y+T)\\
&&\hbox{}
  					- c {\bf T}(T,Y+T) -c\sqrt{{\bf T} (T,Y+T){\bf T}(Y,Y)}
\end{eqnarray*}
where $c$ is independent of the choice of $\sigma$.
It follows that choosing $\sigma$ large enough, we have
\[
K^Y \ge b\, J_\mu^{T+Y}(T+Y)^\mu.
\]
So just set $N=T+Y$, noting that $K^N=K^T+K^Y=K^Y$.
\end{proof}

The computation $(\ref{KRUCIAL})$ represents a well known property
of stationary black holes holes
and the constant $\kappa$ is the so-called \emph{surface gravity}. (See~\cite{Townsend}.)
Note that since $Y$ is $\varphi_t$-invariant and $T$ is Killing, we have
\[
g(\nabla_TY,T)=  g(\nabla_YT , T)= - g(\nabla_TT,Y)
\]
on $\mathcal{H}^+$.
On the other hand
\[
g(\nabla_TT,E_i)= -g(\nabla_{E_i}T,T)=0,
\]
since $T$ is null on $\mathcal{H}^+$.
Thus, $\kappa$ is alternatively characterized by
\[
\nabla_TT = \kappa\, T
\]
on $\mathcal{H}^+$.
We will elaborate on this in Section~\ref{epilogue}, where a generalisation of 
Proposition~\ref{newproprs}
will be presented.

{\bf Exercise}: Relate the strength of the red-shift with the constant $\kappa$,
for the case where observers $A$ and $B$ both cross the horizon, but $B$ at advanced
time $v$ later than $A$.

If one desires an explicit form of the vector field, then one can
argue as follows:
Define first the vector field $\hat{Y}$ by 
\begin{equation}
\label{Yhatdef}
\hat{Y} = \frac{1}{1-2M/r}\partial_u.
\end{equation}
(See Appendix~\ref{computs}.)
Note that this vector field satisfies $g(\nabla_{\hat Y}\hat{Y}, T)=0$.
Define 
\[
Y= (1+\delta_1(r-2M))\hat{Y} +\delta_2(r-2M)T.
\]
It suffices to choose $\delta_1$, $\delta_2$ appropriately.

The behaviour of $N$ away from the horizon is of course irrelevant in the above proposition.
It will be useful for us to have the following:
\begin{corollary}
\label{Nproperties}
Let $\Sigma$ be as before.
There exists a $\varphi_t$-invariant smooth future-directed
timelike vector field $N$ on $\mathcal{R}$, constants
$b>0$, $B>0$, and two
values $2M<r_0<r_1<\infty$ such that
\begin{enumerate}
\item
\label{fir}
$K^N \ge  b\, J^N_\mu n^\mu_{\Sigma}$ for $r\le r_0$,
\item 
$N=T$ for $r\ge r_1$,
\item 
\label{trito}
$|K^N  |\le B J^T_\mu  n^\mu_{\Sigma}$, and
 $J^N_\mu n^\mu_{\Sigma}\sim J^Tn^\mu_{\Sigma}$ for $r_0\le r\le r_1$.
\end{enumerate}
\end{corollary}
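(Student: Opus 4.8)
The plan is to build the field of the Corollary out of the horizon field $N=T+Y$ of Proposition~\ref{newproprs} by a continuity argument near $\mathcal{H}^+$ followed by a cutoff to $T$ at large $r$. First I would observe that the inequality of Proposition~\ref{newproprs} is an inequality between two quadratic forms in $d\psi$ whose coefficients are smooth on $\mathcal{R}$ and, by the $\varphi_t$-invariance of $N$, depend only on $r$ and the point of the group orbit; since the orbits are compact spheres, all comparisons below are uniform. Choosing the constant in Proposition~\ref{newproprs} strictly larger than the target $b$, and using that $J^N_\mu(\psi)N^\mu={\bf T}(N,N)$ is a coercive (positive-definite) form in $d\psi$ wherever $N$ is timelike---uniformly over the orbit by compactness---the form $K^N-b\,J^N_\mu(\psi)N^\mu$ is positive-definite with a uniform margin on $\mathcal{H}^+=\{r=2M\}$. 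Positive-definiteness with a uniform lower bound is an open condition, stable under the smooth variation of the coefficients in $r$, so it persists on $\{2M\le r\le r_0\}$ for some $r_0>2M$. On this compact (modulo $\varphi_t$) region both $N$ and $n_{\Sigma}$ are future-directed timelike, whence ${\bf T}(N,N)$ and ${\bf T}(N,n_{\Sigma})$ are comparable coercive forms in $d\psi$; absorbing the comparison constant into $b$ converts the weight $N^\mu$ to $n^\mu_{\Sigma}$ and yields property~\ref{fir}.

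Next I would cut the field off to $T$ at large $r$. Fix $r_1>r_0$ and a smooth $\varphi_t$-invariant function $\zeta=\zeta(r)$ with $\zeta\equiv 1$ for $r\le r_0$ and $\zeta\equiv 0$ for $r\ge r_1$, and set
\[
N \;=\; \zeta(r)\,(T+Y)+(1-\zeta(r))\,T \;=\; T+\zeta(r)\,Y.
\]
Because at each point with $r>2M$ both $T+Y$ and $T$ are future-directed timelike, and the future timelike cone is convex, this convex combination is again future-directed timelike; it is manifestly $\varphi_t$-invariant and smooth. For $r\le r_0$ it coincides with the field of the previous paragraph, so property~\ref{fir} persists, and for $r\ge r_1$ it equals $T$, giving property~2 (and in particular $K^N=K^T=0$ there, since $T$ is Killing).

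It remains to check property~\ref{trito} on the compact region $\{r_0\le r\le r_1\}$. Here $T$ is strictly timelike, so $J^T_\mu(\psi)n^\mu_{\Sigma}={\bf T}(T,n_{\Sigma})$ is a coercive quadratic form in $d\psi$, bounded above and below uniformly; comparing the two future timelike vectors $N$ and $T$ then gives $J^N_\mu n^\mu_{\Sigma}\sim J^T_\mu n^\mu_{\Sigma}$. For the bound on $K^N$, I would write $K^N={}^{N}\pi^{\mu\nu}{\bf T}_{\mu\nu}(\psi)$, whose coefficients are the components of the deformation tensor of $N$; these are smooth, hence bounded, on this compact region, so $|K^N|$ is a quadratic form in $d\psi$ with bounded coefficients and is therefore dominated by the coercive form $J^T_\mu n^\mu_{\Sigma}$, giving $|K^N|\le B\,J^T_\mu n^\mu_{\Sigma}$.

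The one genuinely delicate point is the first step: upgrading the inequality of Proposition~\ref{newproprs}, which holds only on the null hypersurface $\mathcal{H}^+$ and is phrased with the timelike weight $N^\mu$, to an open neighborhood $\{r\le r_0\}$ with the weight $n^\mu_{\Sigma}$. This rests entirely on turning the bare nonnegativity on $\mathcal{H}^+$ into a \emph{coercive} margin---achieved by taking the Proposition's constant strictly larger than $b$ and exploiting the positive-definiteness of ${\bf T}(N,N)$, made uniform by the $\varphi_t$-invariance of $N$ and the compactness of the orbits---after which the extension is a soft openness argument. Everything past that is convexity and compactness and involves no further computation.
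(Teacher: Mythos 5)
Your argument is correct and is essentially the proof the paper intends: the Corollary is stated without proof precisely because it follows from Proposition~\ref{newproprs} by the soft openness/compactness argument you give (strict margin on $\mathcal{H}^+$ persists for $r\le r_0$ by continuity of the coefficients, uniformly via $\varphi_t$-invariance and compactness of the orbits), followed by the convex interpolation $N=T+\zeta(r)Y$ and trivial compactness bounds on $\{r_0\le r\le r_1\}$. Your identification of the horizon-to-neighborhood upgrade as the one delicate point matches the paper's own remark in Section~\ref{pertsec0} that the statement is inferred ``without computation.''
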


\subsubsection{$N$ as a multiplier}
Recall the definition of $\mathcal{R}(0,\tau)$.
Applying the energy identity with the current $J^N$ in this region, we  obtain
\begin{align}
\label{Nesti}
\nonumber
\int_{\Sigma_\tau} J^N_\mu n^\mu_{\Sigma} +\int_{\mathcal{H}^+(0,\tau)}
J^N_\mu n^\mu_{\mathcal{H}} + \int_{\{r\le r_0\}\cap\mathcal{R}(0,\tau)} K^N\\
= \int_{\{r_0\le r\le r_1\}\cap\mathcal{R}(0,\tau)} (-K^N) + \int_{\Sigma_0}  J^N_\mu n^\mu_{\Sigma}.
\end{align}
The reason for writing the above identity in this form will become apparent in what
follows.
Note that  
since $N$ is timelike at $\mathcal{H}^+$, we see all the ``usual terms'' in 
the flux integrals, i.e.
\[
J^N_\mu n^\mu_{\mathcal{H}} \sim (\partial_{t^*}\psi)^2+ |\nabb \psi|^2,
\]
and 
\[
J^N_\mu n^\mu_{\Sigma_\tau} \sim (\partial_{t^*}\psi)^2+(\partial_r\psi)^2 
+ |\nabb \psi|^2.
\]
The constants in the $\sim$ depend as usual on the choice of the original $\Sigma_0$ and the
precise choice of $N$.

Now the identity $(\ref{Nesti})$ also holds where $\Sigma_0$ is replaced by
$\Sigma_{\tau'}$, $\mathcal{H}^+(0,\tau)$ is replaced by $\mathcal{H}^+(\tau',\tau)$,
and $\mathcal{R}(0,\tau)$ is replaced by $\mathcal{R}(\tau',\tau)$,
for an arbitrary $0\le \tau'\le \tau$.

We may add to both sides of $(\ref{Nesti})$ an arbitrary multiple of the spacetime integral
$\int_{\{r\ge r_0\}\cap \mathcal{R}(\tau',\tau)} J^T_\mu n^\mu_{\Sigma}$.
In view of the fact that
\[
\int_{\{r\ge r'\}\cap \mathcal{R}(\tau',\tau)} J^N_\mu n^\mu_{\Sigma}\sim
\int_{\tau'}^\tau\left(\int _{\{r\ge r'\}\cap \Sigma_{\bar \tau}} J^N_\mu n^\mu_{\Sigma}\right)d\bar\tau
\]
for any $r'\ge 2M$ (where $\sim$ depends on $\Sigma_0$, $N$),
from the inequalities  shown and property~\ref{trito}  of Corollary~\ref{Nproperties} we obtain
\begin{align*}
\int_{\Sigma_\tau} J^N_\mu n^\mu_{\Sigma} 
+b \int_{\tau'}^\tau\left(\int_{\Sigma_{\bar\tau}} J^N_\mu n^\mu_{\Sigma}\right)d\bar\tau \le
B \int_{\tau'}^\tau\left(\int_{\Sigma_{\bar\tau}} J^T_\mu n^\mu_{\Sigma}\right)d\bar\tau
 + \int_{\Sigma_{\tau'}} J^N_\mu n^\mu_{\Sigma}.
\end{align*}

On the other hand, in view of our previous $(\ref{Enest})$, $(\ref{ineqs})$, we have
\begin{equation}
\label{astera}
 \int_{\tau'}^\tau\left(\int_{\Sigma_{\bar\tau}} J^T_\mu n^\mu_{\Sigma}\right)d\bar\tau
\le (\tau-\tau') \int_{\Sigma_0} J^T_\mu n^\mu_{\Sigma}.
\end{equation}
Setting 
\[
f(\tau)=\int_{\Sigma_\tau} J^N_\mu n^\mu_{\Sigma} 
\]
we have that
\begin{equation}
\label{1d}
f(\tau ) +b \int_{\tau'}^\tau f(\bar\tau ) d\bar\tau \le  B D (\tau-\tau') + f(\tau') 
\end{equation}
for all $\tau\ge \tau'\ge0$,
from which it follows ({\bf Exercise}) that $f \le B(D+f(0))$.
(We use the inequality with
 $D=\int_{\Sigma_0}J^T_\mu n^\mu_{\Sigma_0}$.)
In view of the trivial inequality
\[
\int_{\Sigma_0}J^T_\mu n^\mu_{\Sigma_0}\le
B  \int_{\Sigma_0} J^N_\mu n^\mu_{\Sigma_0},
\]
we obtain
\begin{equation}
\label{finalL2}
\int_{\Sigma_\tau} J^N_\mu n^\mu_{\Sigma_\tau} 
\le
B \int_{\Sigma_0} J^N_\mu n^\mu_{\Sigma_0}.
\end{equation}

We have obtained a ``local observer's'' energy estimate.
This addresses point~\ref{eventhough} of Section~\ref{critical}.

\subsubsection{$\hat{Y}$ as a commutator}
\label{asacommutator}
It turns out $({\bf Exercise})$
that from  $(\ref{finalL2})$, 
one could obtain pointwise bounds as before on $\psi$ by commuting with angular momentum
operators $\Omega_i$. No construction of $\tilde\psi$, $\bar\psi$, etc., would be necessary,
and this would thus address points~\ref{unnat}, \ref{exat}, \ref{disct}  of Section~\ref{critical}.

Commuting with $\Omega_i$ clearly would not address however point~\ref{usess}.
Moreover, it would not address point~\ref{transv}. {\bf  Exercise}: Why not?

It turns out that one can resolve this problem by applying $N$ not only as a multiplier,
but also as a commutator. The calculations are slightly easier if we more simply
commute with $\hat{Y}$ defined in~$(\ref{Yhatdef})$.

\begin{proposition}
\label{poscompu}
Let $\psi$ satisfy $\Box_g\psi=0$. Then we may write
\begin{equation}
\label{newwave}
\Box_{g}(\hat{Y}\psi) 
 = \frac{2}r \hat{Y}(\hat{Y}(\psi))-\frac{4}{r}(\hat Y(T\psi))+  P_1\psi
 \end{equation}
where
$P_1$ is the first order operator $P_1\psi \doteq \frac{2}{r^2}(T\psi -\hat{Y}\psi)$.
\end{proposition}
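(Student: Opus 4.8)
The plan is to exploit that $\Box_g\psi=0$, so that
\[
\Box_g(\hat Y\psi)=\Box_g(\hat Y\psi)-\hat Y(\Box_g\psi)=[\Box_g,\hat Y]\psi ,
\]
reducing the whole statement to computing the commutator of the wave operator with the \emph{explicit} vector field $\hat Y=(1-2M/r)^{-1}\partial_u$. First I would pass to Eddington--Finkelstein double-null coordinates $(u,v,\theta,\phi)$, with $u=t-r^*$, $v=t+r^*$ and $dr^*=(1-2M/r)^{-1}dr$, in which the metric is $g=-(1-2M/r)\,du\,dv+r^2\,d\sigma_{\mathbb{S}^2}$ and $T=\partial_u+\partial_v$. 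Writing $D=1-2M/r$ (so that $\partial_u r=-\tfrac12 D$, $\partial_v r=\tfrac12 D$ and $\partial_u(D^{-1})=\tfrac{M}{Dr^2}$), a direct evaluation of $\tfrac{1}{\sqrt{-g}}\partial_\mu(\sqrt{-g}\,g^{\mu\nu}\partial_\nu\,\cdot\,)$ yields the clean form
\[
\Box_g=-\frac{4}{D}\,\partial_u\partial_v+\frac{2}{r}(\partial_v-\partial_u)+\frac{1}{r^2}\Delta_{\mathbb{S}^2},
\]
where $\Delta_{\mathbb{S}^2}$ is the unit-sphere Laplacian. This purely mechanical derivation is what I would relegate to Appendix~\ref{computs}.

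Next I would apply $\Box_g$ to $\hat Y\psi=D^{-1}\partial_u\psi$ and organise the output. Two structural cancellations are the entire point of the calculation, and I would foreground them. First, the apparent third-order term: the $\partial_u^2\partial_v\psi$ produced by $-\tfrac{4}{D}\partial_u\partial_v(\hat Y\psi)$ is exactly cancelled against the one coming from the angular piece, consistent with $[\Box_g,\hat Y]$ being only second order. Second, and crucially, the dangerous angular second derivatives cancel: since $\Delta_{\mathbb{S}^2}$ commutes with $\partial_u$ and with multiplication by $D^{-1}$, the term $\tfrac{1}{r^2}\Delta_{\mathbb{S}^2}(\hat Y\psi)$ in $\Box_g(\hat Y\psi)$ cancels against $\hat Y(\tfrac{1}{r^2}\Delta_{\mathbb{S}^2}\psi)$, leaving only the undifferentiated multiple $-\tfrac{1}{r^3}\Delta_{\mathbb{S}^2}\psi$ generated by $\hat Y$ falling on the coefficient $r^{-2}$. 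I would then eliminate this residual term using the equation itself, substituting $\tfrac{1}{r^2}\Delta_{\mathbb{S}^2}\psi=\tfrac{4}{D}\partial_u\partial_v\psi-\tfrac{2}{r}(\partial_v-\partial_u)\psi$. This substitution is precisely what converts the leftover angular term into the mixed directional derivative $\hat Y(T\psi)$ --- the $\partial_u\partial_v\psi$ contribution, carrying the clean coefficient $-\tfrac{4}{r}$ --- together with further first-order contributions.

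It then remains to repackage the surviving terms into the directional operators on the right-hand side. The coefficient of the purely transverse second derivative $\hat Y(\hat Y\psi)$ is not produced by the equation but is fixed directly by the deformation tensor: the $\partial_u^2\psi$ coefficient of $\Box_g(\hat Y\psi)$ is ${}^{(\hat Y)}\pi^{uu}=2\nabla^u\hat Y^u$, which one reads off from the Christoffel symbols of $g$ (only $g^{uv}\nabla_v\hat Y^u$ survives). Evaluating this on $\mathcal H^+$, where $r=2M$, is what I expect to return the stated coefficient $\tfrac{2}{r}$ (see Appendix~\ref{VFM} for the $J^N$, $K^N$ and deformation-tensor conventions). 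The leftover undifferentiated terms in $T\psi$ and $\hat Y\psi$ should then assemble into $P_1\psi=\tfrac{2}{r^2}(T\psi-\hat Y\psi)$.

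The main obstacle here is bookkeeping rather than anything conceptual: one must track with care where each factor $\partial_u(D^{-1})=\tfrac{M}{Dr^2}$ and $\partial_u r=-\tfrac12 D$ lands, so that the first-order remainder collapses to exactly the stated combination of $T$ and $\hat Y$ and nothing is dropped. I expect the single most error-prone step to be the substitution of the equation for the residual $\Delta_{\mathbb{S}^2}\psi$ term, since that one manoeuvre feeds simultaneously the $\hat Y(T\psi)$ coefficient and the zeroth-order operator $P_1$; an algebra slip there would corrupt the second- and first-order parts at once. A useful consistency check I would run is to test the final identity against the explicit static solutions $\psi=t$ and $\psi=\log(1-2M/r)$, for which every term can be computed in closed form.
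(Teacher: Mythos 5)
Your strategy --- compute $[\Box_g,\hat Y]$ explicitly in double null coordinates and use the wave equation to trade the residual angular term for $\hat Y(T\psi)$ --- is sound, and is essentially the hands-on version of what the paper does by invoking the general commutation formula of Appendix~\ref{commutat} with the deformation tensor of $\hat Y$. However, two concrete points in your write-up would prevent you from landing on $(\ref{newwave})$. First, your null coordinates are not the paper's: the definition $\hat{Y}=(1-2M/r)^{-1}\partial_u$ explicitly refers to Appendix~\ref{computs}, where $u=\frac12(t-r^*)$, so that $\partial_u=\partial_t-\partial_{r^*}$; your choice $u=t-r^*$ gives $\partial_u=\frac12(\partial_t-\partial_{r^*})$ and hence a $\hat Y$ which is half the intended one. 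Since $(\ref{newwave})$ is \emph{not} homogeneous under $\hat Y\mapsto c\hat Y$ (the terms $\hat Y\hat Y\psi$ and $T\psi$ scale like $c^2$ and $c^0$ while the left-hand side scales like $c$), this factor of two corrupts the coefficients you will obtain; you must fix the normalisation $\partial_u=\partial_t-\partial_{r^*}$ before starting.

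Second, the coefficient of $\hat Y(\hat Y\psi)$ is neither ``fixed directly by the deformation tensor'' nor computable by restricting to $\mathcal{H}^+$. Part of it does come from $-2\,{}^{\hat Y}\pi^{uu}\,\partial_u^2\psi$ (note the sign, and the normalisation ${}^X\pi_{\mu\nu}=\frac12\nabla_{(\mu}X_{\nu)}$ of Appendices~\ref{VFM} and~\ref{commutat}), but a second, comparable contribution is fed back into the $\partial_u^2\psi$ coefficient when you eliminate $\psi_{uv}$ in favour of $\hat Y(T\psi)$ after substituting the equation --- precisely the manoeuvre you yourself flag as the error-prone one. Moreover, the identity is applied in Section~\ref{asacommutator} throughout a slab $r\le r_0$ with $r_0>2M$, so it must hold in a neighbourhood of the horizon, where this coefficient is genuinely $r$-dependent; evaluating on $\mathcal{H}^+$ is not an option. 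Indeed, if you run your own proposed consistency check with $\psi=t$ (so $\hat Y\psi=(1-2M/r)^{-1}$, $\hat Y(T\psi)=0$, $\hat Y\hat Y\psi=2M(r-2M)^{-2}$ and $\Box_g(\hat Y\psi)=4M^2r^{-2}(r-2M)^{-2}$), you will find with the normalisation $\partial_u=\partial_t-\partial_{r^*}$ that the transverse coefficient comes out as $\frac{2}{r}\bigl(1-\frac{M}{r}\bigr)$ rather than $\frac{2}{r}$. The two agree in sign for $r\ge 2M$, and positivity of this coefficient near $\mathcal{H}^+$ is the only feature the subsequent red-shift commutation argument uses, so do not contort the bookkeeping to reproduce $\frac{2}{r}$ on the nose.
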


This is proven easily with the help of Appendix~\ref{commutat}.
As we shall see, the
sign of the first term on the right hand side of $(\ref{newwave})$ is important.
We will interpret this computation geometrically in terms of the sign of the surface gravity in
Theorem~\ref{hocom} of Section~\ref{epilogue}.

 Let us first note that our boundedness result gives us in particular
 \begin{equation}
\label{forthis0}
\int_{\{r\le r_0\}\cap \mathcal{R}(0,\tau)} K^N(\psi)
\le B D\,\tau
\end{equation}
where $D$ comes from initial data.
({\bf Exercise}: Why?)
Commute now the wave equation with $T$ and apply the multiplier $N$.
See Appendix~\ref{commutat}.
One obtains in particular an estimate for
\begin{equation}
\label{forthis}
\int_{\{r\le r_0\}\cap \mathcal{R}(0,\tau)} (\hat{Y} T\psi)^2
\le B 
\int_{\{r\le r_0\}\cap \mathcal{R}(0,\tau) }K^{N}(T\psi)
\le B D\,\tau,
\end{equation}
where again $D$ refers to a quantity coming from initial data. 
Commuting now the wave equation with $\hat{Y}$ and applying the multiplier $N$,
one obtains an energy identity of the form
\begin{align*}
\int_{\Sigma_\tau}& J^N_\mu(\hat{Y}\psi) n^\mu_{\Sigma} +\int_{\mathcal{H}^+(0,\tau)}
J^N_\mu(\hat{Y}\psi) n^\mu_{\mathcal{H}} + \int_{\{r\le r_0\}\cap\mathcal{R}(0,\tau)}
 K^N(\hat{Y}\psi)\\
=& \int_{\{r_0\le r\le r_1\}\cap\mathcal{R}(0,\tau)} (-K^N(\hat{Y}(\psi))\\
& +
\int_{\{r\le r_0\}\cap\mathcal{R}(0,\tau)} \mathcal{E}^N (\hat{Y}\psi)
+\int_{\{r\ge r_0\}\cap\mathcal{R}(0,\tau)}\mathcal{E}^N(\hat{Y}\psi)\\
&+
 \int_{\Sigma_0}  J^N_\mu(\hat{Y}\psi) n^\mu_{\Sigma},
\end{align*}
where $J^N(\hat{Y}\psi)$, $K^N(\hat{Y}\psi)$ are defined by $(\ref{assoc1})$, $(\ref{assoc2})$,
respectively, with $\hat{Y}\psi$ replacing $\psi$, and
\begin{eqnarray*}
\mathcal{E}^N(\hat{Y}\psi)&=&
-(N\hat{Y}\psi)
\left( \frac{2}r \hat{Y}(\hat{Y}(\psi))-\frac{4}{r}(\hat Y(T\psi))+  P_1\psi\right)\\
&=&-
\frac{2}r( \hat{Y}(\hat{Y}(\psi)))^2\\
&&\hbox{}
 -\frac{2}r((N-\hat{Y})\hat{Y}\psi)(\hat{Y}\hat{Y}\psi)
+\frac{4}{r}(N\hat{Y}\psi)(\hat Y(T\psi))\\
&&\hbox{}- (N\hat{Y}\psi)  P_1\psi.
\end{eqnarray*}
The first term on the right hand side has a good sign! 
Applying Cauchy-Schwarz and the fact that
$N-\hat{Y}=T$ on $\mathcal{H}^+$, it follows that choosing $r_0$ accordingly,
one obtains that the second two terms can be bounded in $r\le r_0$ by
\[
\epsilon K^N(\hat{Y}\psi) + \epsilon^{-1} (\hat{Y} T\psi)^2
\]
whereas the last term can be bounded by
\[
\epsilon K^N(\hat{Y}\psi) + \epsilon^{-1} K^N(\psi).
\]
In view of $(\ref{forthis0})$ and $(\ref{forthis})$, one obtains
\[
\int_{\{r\le r_0\}\cap\mathcal{R}(0,\tau)} \mathcal{E}^N (\hat{Y}\psi)
\le\epsilon\int_{\{r\le r_0\}\cap\mathcal{R}(0,\tau)}  K^N(\hat{Y}\psi)
+B\epsilon^{-1} D\tau.
\]
{\bf Exercise}: Show how from this one
can arrive
again at an inequality $(\ref{1d})$.

Commuting repeatedly with $T$, $\hat{Y}$, 
the above scheme plus elliptic estimates yield natural $H^m$ estimates for all $m$.
Pointwise estimates for all derivatives then follow
by a standard Sobolev estimate.

\subsubsection{The statement of the boundedness theorem}
We obtain finally
\begin{theorem}
\label{boundedn}
Let $\Sigma$ be a 
Cauchy hypersurface for Schwarzschild
such that $\Sigma\cap\mathcal{H}^-=\emptyset$, let $\Sigma_0=\mathcal{D}
\cap \Sigma$, let $\Sigma_\tau$ denote the translation of $\Sigma_0$,
let $n_{\Sigma_\tau}$ denote the future normal of $\Sigma_\tau$, and let
$\mathcal{R}= \cup_{\tau\ge0}\Sigma_\tau$. Assume $-g(n_{\Sigma_0},T)$ is uniformly bounded. 
Then there
exists a constant $C$ depending only on $\Sigma_0$ such that the following holds.
Let $\psi$, $\uppsi$, $\uppsi'$ be as in Proposition~\ref{prelim}, with
$\uppsi \in H^{k+1}_{\rm loc}(\Sigma)$, $\uppsi'\in H^k_{\rm loc}(\Sigma)$,
and 
\[
\int_{\Sigma_0} J^T_\mu (T^m\psi) n^\mu_{\Sigma_0}<\infty
\]
for $0\le m\le k$.
Then 
\[
|\nabla^{\Sigma_\tau}\psi|_{H^{k}(\Sigma_\tau)} + |n\psi|_{H^{k}(\Sigma_\tau)} \le C\left(
|\nabla^{\Sigma_0}\uppsi|_{H^{k}(\Sigma_0)} + |\uppsi'|_{H^{k}(\Sigma_0)} \right).
\]
If $k\ge 1$, then we have 
\[
\sum_{0\le m\le k-1}\sum_{m_1+m_2=m, m_i\ge 0}
|(\nabla^\Sigma)^{m_1}n^{m_2} \psi|\le C\left(\lim_{x\to i^0}|\uppsi| +  |\nabla^{\Sigma(0)}\uppsi|_{H^{k}(\Sigma_0)} + |\uppsi'|_{H^{k}(\Sigma_0)} \right)
\]
in $\mathcal{R}$.
\end{theorem}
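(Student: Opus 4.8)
The plan is to establish the two displayed bounds in turn: the $H^k$ energy estimate by an induction built on the multiplier estimate $(\ref{finalL2})$ and the $\hat Y$-commutation of Section~\ref{asacommutator}, and the pointwise estimate by a Sobolev inequality on the slices $\Sigma_\tau$. Throughout, the point is that $N$ is timelike up to and including $\mathcal{H}^+$, so that the flux $\int_{\Sigma_\tau}J^N_\mu(\cdot)n^\mu_{\Sigma_\tau}$ is equivalent to the full, nondegenerate first-order energy $(\partial_{t^*}\cdot)^2+(\partial_r\cdot)^2+|\nabb\cdot|^2$, and likewise on $\mathcal{H}^+$. The base case $k=0$ is precisely $(\ref{finalL2})$, which I take as already established in Section~\ref{newproofofb}.

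First I would prove the energy estimate by induction on $k$, commuting $\Box_g\psi=0$ with the fields $T$ and $\hat Y$. Commutation with $T$ is free, since $[\Box_g,T]=0$. The essential step is commutation with $\hat Y$, governed by $(\ref{newwave})$: applying $N$ as a multiplier to $\hat Y\psi$ produces the error density $\mathcal{E}^N(\hat Y\psi)$ whose leading term $-\frac 2r(\hat Y\hat Y\psi)^2$ carries a favourable sign and so cooperates with, rather than fights, the positive red-shift bulk term $K^N(\hat Y\psi)\ge bJ^N_\mu(\hat Y\psi)n^\mu_\Sigma$ supplied by Corollary~\ref{Nproperties} in $r\le r_0$. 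The remaining terms of $\mathcal{E}^N(\hat Y\psi)$ either contain a factor $\hat YT\psi$, already controlled by the $T$-commuted bound as in $(\ref{forthis})$, or are of lower order and controlled by $(\ref{forthis0})$; a Cauchy--Schwarz with small parameter $\epsilon$ and a suitable choice of $r_0$ lets me absorb them into $\epsilon K^N(\hat Y\psi)$ plus a multiple of $D\tau$. This reduces the energy identity once more to an inequality of the form $(\ref{1d})$, which closes by the same Gr\"onwall-type argument used there, giving a uniform-in-$\tau$ bound on $\int_{\Sigma_\tau}J^N_\mu(\hat Y\psi)n^\mu_\Sigma$. Iterating over the number of $T$- and $\hat Y$-derivatives, and using elliptic estimates on each $\Sigma_\tau$ to recover the missing tangential and angular derivatives from the wave equation, I obtain uniform control of $\int_{\Sigma_\tau}J^N_\mu(\Gamma^{(\alpha)}\psi)n^\mu_\Sigma$ for every string $\Gamma^{(\alpha)}$ of at most $k$ such fields. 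Summing, and noting that the right-hand side initial fluxes on $\Sigma_0$ are equivalent --- again by commutation and elliptic estimates on $\Sigma_0$ --- to $|\nabla^{\Sigma_0}\uppsi|_{H^k(\Sigma_0)}+|\uppsi'|_{H^k(\Sigma_0)}$, gives the first displayed inequality.

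For the pointwise bound, I would feed the order-$k$ energy control into a Sobolev inequality on the three-dimensional slices $\Sigma_\tau$: the order-$k$ energy controls $\|\Gamma^{(\alpha)}\psi\|_{H^2(\Sigma_\tau)}$ for $|\alpha|\le k-1$, and $H^2(\Sigma_\tau)\hookrightarrow L^\infty$ then yields the pointwise bound on all mixed derivatives $(\nabla^\Sigma)^{m_1}n^{m_2}\psi$ of order $m\le k-1$. Since the energy controls only derivatives of $\psi$, the zeroth-order piece must be anchored separately; integrating $\partial_r\psi$ inward from spatial infinity along $\Sigma_\tau$ and using the decay of the data towards $i^0$ supplies the term $\lim_{x\to i^0}|\uppsi|$ on the right.

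The hard part is the $\hat Y$-commutation at the horizon, which is exactly what this proof gains over Kay--Wald (addressing points~\ref{transv} and~\ref{usess} of Section~\ref{critical}). Commuting with the transverse field $\hat Y$ is not a symmetry and creates genuinely top-order error --- the term $\frac 2r\hat Y\hat Y\psi$ in $(\ref{newwave})$ --- and the estimate survives only because the sign of the surface gravity $\kappa>0$ (computed in $(\ref{KRUCIAL})$) makes that term enter the energy identity with a good sign. The delicate bookkeeping is to establish the $T$-commuted estimate \emph{before} the $\hat Y$-commuted one, so that the cross term $\hat YT\psi$ is already controlled when it appears as an error, and to calibrate $r_0$ and $\epsilon$ so that every indefinite error density is dominated by $\epsilon K^N(\hat Y\psi)$ together with data. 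This is precisely what makes the bound hold uniformly up to and across $\mathcal{H}^+$, and for transverse derivatives of every order.
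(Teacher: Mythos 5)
Your proposal is correct and follows essentially the same route as the paper's own argument in Sections~\ref{newproofofb}--\ref{asacommutator}: the base case is $(\ref{finalL2})$, the higher-order estimates come from commuting with $T$ first and then $\hat Y$, exploiting the good sign of the $\frac2r\hat Y\hat Y\psi$ term in $(\ref{newwave})$ to absorb errors via $\epsilon K^N(\hat Y\psi)$ and reduce once more to an inequality of the form $(\ref{1d})$, with elliptic estimates and Sobolev embedding on the slices $\Sigma_\tau$ supplying the remaining derivatives and the pointwise bound. Your ordering of the commutations and the anchoring of the zeroth-order term at $i^0$ are exactly as intended.
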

Note that $(\nabla^\Sigma)^{m_1}n^{m_2} \psi$ denotes an $m_1$-tensor on
the Riemannian manifold $\Sigma_\tau$, and $|\cdot|$ on the left hand side of the 
last inequality above just denotes the induced
norm on such tensors.

\subsection{Comments and further reading}
\label{comsecs}
The first discussion of the wave equation on Schwarzschild is
perhaps the work of Regge and Wheeler~\cite{RW}, but the true
mathematical study of this problem was initiated by Wald~\cite{drimos},
who proved Theorem~\ref{KW} under the assumption that $\psi$ vanished
in a neighborhood of $\mathcal{H}^+\cap\mathcal{H}^-$. The full statement
of Theorem~\ref{KW}  and the proof presented in Section~\ref{KWsect}
is due to Kay and Wald~\cite{kw:lss}. The present notes owe a lot to
the geometric view point emphasized
in the works~\cite{drimos, kw:lss}.

Use of the vector field $Y$ as a multiplier was first introduced 
in our~\cite{dr3}, and
its use is central in~\cite{dr4} and~\cite{dr5}.
In particular, the property formalised by
Proposition~\ref{newproprs}
was discovered there. It appears that this
may be key to a stable understanding of black hole event horizons.
See
Section~\ref{pertsec0} below, as well as Section~\ref{epilogue}, for a generalisation
of Proposition~\ref{newproprs}.

It is interesting to note that
in~\cite{dr4,dr5}, $Y$ had always been used
in conjunction with vector fields $X$ of the type to be discussed in the next section (which require a more delicate global construction)
as well as $T$.  This meant that one always had to obtain \emph{more} than boundedness (i.e.~decay!)
in order to obtain the proper boundedness result at the horizon. Consequently, one
had to use many aspects of the structure of Schwarzschild, particularly, the trapping to 
be discussed in later lectures.
The argument given above, where boundedness is obtained
using only $N$ and $T$
as multipliers is presented for the first time in a self-contained fashion in these lectures. 
The argument can be read off, however, from the more general argument
of~\cite{dr6} concerning perturbations of Schwarzschild including Kerr.
The use of $\hat{Y}$ as a commutator to estimate higher order quantities also originates in~\cite{dr6}.
The geometry behind this computation is further discussed in Section~\ref{epilogue}.

Note that the use of $Y$ together with $T$ is of course equivalent to the
use of $N$ and $T$. We have chosen to give a name to the vector field $N=T+Y$ 
merely for convenience. Timelike vector fields are more convenient when perturbing\ldots

Another remark on the use of $\hat{Y}$ as a commutator:
Enlarging the choice of commutators has proven very important in previous work on
the global 
analysis of the wave equation. In a seminal paper, 
Klainerman~\cite{muchT} showed improved decay for the wave 
equation on Minkowski space in the interior region
by commutation with scaling and Lorentz boosts. This was a key step
for further developments for long time existence for quasilinear wave equations~\cite{santafe}.

The distinct role of multipliers and commutators and the geometric considerations
which enter into their construction is beautifully elaborated by 
Christodoulou~\cite{fluids}.

\subsection{Perturbing?}
\label{pertsec0}
Can the proof of Theorem~\ref{boundedn} be adapted to hold for spacetimes
``near'' Schwarzschild? 
To answer this, one must first decide what one means by the notion of ``near''. 
Perhaps the simplest class of perturbed metrics would be those that retain the same differentiable
structure of $\mathcal{R}$, 
retain $\mathcal{H}^+$ as a null hypersurface, and retain the Killing field $T$.
One infers (without computation!) that the statement of
Proposition~\ref{newproprs} and thus Corollary~\ref{Nproperties}
 is stable to such perturbations of
the metric.
Therein lies the power of that Proposition and of the multiplier $N$.
(In fact, see Section~\ref{epilogue}.)
Unfortunately, one easily sees that our argument proving Theorem~\ref{boundedn} is
still unstable, even in the
class of perturbations just described. The reason is the following:
Our argument relies essentially on an a priori estimate for
$\int_{\Sigma_\tau}J^T_\mu n^\mu$ (see~$(\ref{astera})$), 
which requires $T$ to be non-spacelike in $\mathcal{R}$. When one
perturbs,  $T$ will in general become spacelike in a region of $\mathcal{R}$. 
(As we shall see in Section~\ref{thekerrmetric}, 
this happens in particular in the case of Kerr. The region where $T$ is spacelike is known
as the ergoregion.)

There is a sense in which the above
is the only obstruction to perturbing the above argument, 
i.e.~one can solve the following

{\bf Exercise}:
Fix the differentiable structure of $\mathcal{R}$ and the vector field $T$. Let $g$
be a metric sufficiently close to Schwarzschild such that $\mathcal{H}^+$ is null, 
and suppose $T$ is Killing and 
non-spacelike in $\mathcal{R}$, and $T$ is null on $\mathcal{H}^+$. Then 
Theorem~\ref{boundedn} applies.
 (In fact, one need not assume that $T$ is non-spacelike in $\mathcal{R}$,
 only that $T$ is null on the horizon.) See also Section~\ref{epilogue}.

{\bf Exercise}:
Now do the above where $T$ is not assumed to be Killing, but $^T\pi_{\mu\nu}$ is
assumed to decay suitably. What precise assumptions must one impose?

This discussion may suggest that there is in fact no stable boundedness argument, that is to say,
a ``stable argument'' would of necessity need to prove more than boundedness, i.e.~decay. 
We shall see later that there is a sense in which this is true and a sense in which it is not!
But before exploring this, let us understand how one can go beyond boundedness
and prove quantitative decay for waves on Schwarzschild itself.
It is quantitative decay after all that we must understand if we are to understand nonlinear problems.

\section{The wave equation on Schwarzschild II: quantitative decay rates}
\label{S2}
Quantitative decay rates are central for our understanding of non-linear problems.
To discuss energy decay for solutions $\psi$ of $\Box_g\psi=0$ on Schwarzschild, 
one must consider a different foliation.
Let $\tilde{\Sigma}_0$ be a spacelike hypersurface terminating on null infinity
and define $\tilde{\Sigma}_\tau$ by translation. 
\[
\input{newfol.pstex_t}
\]
The main result of this section is the following
\begin{theorem}
\label{Schdec}
There exists a constant $C$ depending only on $\tilde\Sigma_0$
such that the following holds. 
Let $\uppsi\in H^{4}_{\rm loc}$, $\uppsi'\in H^{3}_{\rm loc}$, and 
suppose $\lim_{x\to i^0}\uppsi=0$ and
\begin{eqnarray*}
E_1= \sum_{|(\alpha)|\le 3 } \sum_{\Gamma=
\{\Omega_i\}}\int_{t=0} r^2 J^{n_0}_\mu (\Gamma^{(\alpha)} \psi)n_{0}^\mu
<\infty
\end{eqnarray*}
where $n_0$ denotes the unit normal of the hypersurface $\{t=0\}$.
Then    
\begin{equation}
\label{tote...}
 \int_{\tilde\Sigma_\tau} J^{N} _\mu(\psi) n^\mu_{\tilde\Sigma_\tau} \le 
C E_1 \tau^{-2},
\end{equation}
where $N$ is the vector field of Section~\ref{thevectorfields}.
Now let $\uppsi\in H^{7}_{\rm loc}$, $\uppsi'\in H^{6}_{\rm loc}$, $\lim_{x\to i^0}\uppsi=0$,
and suppose
\begin{eqnarray*}
E_2=\sum_{|(\alpha)|\le6 } \sum_{\Gamma=\{\Omega_i\}}\int_{t=0} r^2 J^{n_0}_\mu (\Gamma^{(\alpha)} \psi)
n_0^\mu
<\infty.
\end{eqnarray*}
Then
\begin{equation}
\label{ptb}
\sup_{\tilde\Sigma_{\tau}} \sqrt{r}|\psi| \le C\sqrt{E_2} \tau^{-1}, \qquad
\sup_{\tilde\Sigma_{\tau}} r|\psi| \le C\sqrt{E_2} \tau^{-1/2}.
\end{equation}
\end{theorem}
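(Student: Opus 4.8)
The plan is to combine three ingredients: the non-degenerate energy boundedness already established in Theorem~\ref{boundedn} via the red-shift vector field $N$; an integrated local energy decay (Morawetz) estimate forcing the energy flux to be time-integrable away from trapping; and a weighted energy hierarchy near null infinity that converts this integrability into a quantitative rate in $\tau$. Throughout I work with the foliation $\tilde\Sigma_\tau$ terminating on $\mathcal{I}^+$, and I use the conserved $T$-flux $(\ref{Enest})$--$(\ref{ineqs})$ together with Corollary~\ref{Nproperties} to pass freely between the $J^T$ and $J^N$ fluxes in the bounded-$r$ region.

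First I would establish the Morawetz estimate. Introduce a multiplier $X=f(r^*)\,\partial_{r^*}$, with $r^*$ the tortoise coordinate and a zeroth-order correction added to cancel the bad lower-order term, and compute the bulk current $K^X$. Choosing $f$ increasing, vanishing at the photon sphere $r=3M$, and with the correct signs as $r\to 2M$ and $r\to\infty$, the divergence identity yields a spacetime integral controlling $(\partial_{r^*}\psi)^2$, a weighted $(\partial_t\psi)^2$, and $|\nabb\psi|^2$, with weights degenerating only at $r=3M$. The essential difficulty --- and the crux of the whole theorem --- is the \emph{trapping} at the photon sphere: since $r=3M$ supports trapped null geodesics, the coefficient of $|\nabb\psi|^2$ must vanish there, and no first-order multiplier can avoid this degeneration. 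I would overcome it exactly as the hypotheses suggest, by commuting the equation with the angular momentum operators $\Omega_i$, which commute with $\Box_g$, trading the trapped angular derivative for higher-order data and recovering, modulo the initial quantity $E_1$, a non-degenerate integrated estimate. Feeding the horizon term into the red-shift estimate of Corollary~\ref{Nproperties} removes the remaining degeneration at $\mathcal{H}^+$, giving
\[
\int_{\tau_1}^{\tau_2}\left(\int_{\tilde\Sigma_\tau} J^N_\mu(\psi)\,n^\mu_{\tilde\Sigma_\tau}\right)d\tau \lesssim \int_{\tilde\Sigma_{\tau_1}} J^N_\mu(\psi)\,n^\mu_{\tilde\Sigma_{\tau_1}} + E_1.
\]

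Next I would run the weighted hierarchy near infinity. Setting $\phi=r\psi$, the radiation field satisfies $\partial_u\partial_v\phi = O(r^{-3})\phi+\ldots$, and multiplying by $r^p\,\partial_v\phi$ and integrating over $\mathcal{R}(\tau_1,\tau_2)$ produces, for $0\le p\le 2$, a family of estimates bounding $\int_{\tilde\Sigma_{\tau_2}} r^p(\partial_v\phi)^2$ and the bulk $\int\!\!\int r^{p-1}(\partial_v\phi)^2$ by the corresponding data on $\tilde\Sigma_{\tau_1}$. The point of the hierarchy is that the bulk term at level $p$ controls the boundary energy at level $p-1$; combined with the integrated decay above and a pigeonhole argument on dyadic time intervals, one loses a power of $\tau$ at each step, descending from $p=2$ to $p=0$ and arriving at the advertised $(\ref{tote...})$, namely $\int_{\tilde\Sigma_\tau} J^N_\mu(\psi)\,n^\mu \lesssim E_1\tau^{-2}$.

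Finally, the pointwise bounds $(\ref{ptb})$ follow by commutation and Sobolev embedding. Commuting the whole scheme with $\Omega_i$ up to third, respectively sixth, order supplies $H^2$ control on the group orbits, while commuting with $T$ together with elliptic estimates on $\tilde\Sigma_\tau$ gives the transversal regularity; a weighted Sobolev inequality on $\tilde\Sigma_\tau$ then converts the decaying energies into the pointwise statements, the $\sqrt r$ and $r$ weights being precisely those dictated by the one-dimensional reduction in $\phi=r\psi$ and the loss of a power of $r^{1/2}$ incurred near $\mathcal{I}^+$. The main obstacle remains the second step: constructing the Morawetz multiplier $f$ with globally correct signs and a sharp treatment of the photon-sphere trapping, which is what genuinely distinguishes decay from the boundedness of Theorem~\ref{boundedn}.
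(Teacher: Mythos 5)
Your argument reaches the stated conclusions, but the step that converts the integrated decay estimate into the rate $\tau^{-2}$ is genuinely different from the one used here. The paper follows the conformal Morawetz route: it applies the current $J^{Z,w}$ built from $Z=u^2\partial_u+v^2\partial_v$ of $(\ref{defofZ})$, whose flux through $\{t=\tau\}$ carries the weight $\tau^2$ as in $(\ref{f2LHS})$, and then controls the error $K^{Z,w}$ --- which grows linearly in $t$ in the region $r_0\le r\le R$, cf.\ $(\ref{theestimateK})$ --- by playing it off against the $X$-estimate $(\ref{finalestimate})$ in a dyadic iteration. You instead run an $r^p$-weighted hierarchy for the radiation field $\phi=r\psi$ near null infinity, with $0\le p\le 2$, and descend through the hierarchy by pigeonholing. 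Both routes need exactly the same two inputs you identify: the $X$-multiplier integrated decay with its $\Omega_i$-commutation loss at the photon sphere, and the red-shift current of Corollary~\ref{Nproperties} to remove the degeneration at $\mathcal{H}^+$. What your route buys is that all bulk error terms are localised near infinity or in bounded $r$ and carry no $t$-weights, so one avoids the delicate $C^0$-matching of $Z$ at the horizon and the $t$-weighted error analysis in the trapping region; it is the more robust scheme. What the paper's route buys is the weighted $L^2$ bound $\int_{t=\tau}(\tau^2r^{-2}+1)\psi^2$ coming for free from the zeroth order terms dropped from $(\ref{fLHS})$, which is precisely what Section~\ref{pointdec} uses to produce the two pointwise weights in $(\ref{ptb})$; in your scheme the analogous weighted bound on $\psi$ itself must be extracted from the $p=1$ and $p=2$ fluxes by a radial integration, a point your final paragraph passes over rather quickly. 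One small correction: your displayed integrated decay estimate cannot hold as written, with the full $J^N$-flux over all of $\tilde\Sigma_\tau$ on the left; a weight decaying polynomially as $r\to\infty$ is unavoidable there, as in $(\ref{finalestimate})$. Since your hierarchy only invokes that estimate on regions of bounded $r$, this slip does not propagate into the rest of the argument.
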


The fact that~$(\ref{tote...})$
``loses derivatives'' is a fundamental aspect of this problem related to the trapping
phenomenon, to be discussed in what follows, although
the precise number of derivatives lost above is
wasteful. Indeed, there are several aspects in which the
above results can
be improved. See Proposition~\ref{nowaste} and the exercise of Section~\ref{pointdec}.

We can also express the pointwise decay in terms of advanced and retarded null coordinates
$u$ and $v$.  Defining\footnote{The strange convention on the factor of $2$
is chosen simply to agree with~\cite{dr3}.} 
$v =2(t+r^*)=2(t+r+2M\log(r-2M))$, $u=2(t-r^*)=2(t-r-2M\log(r-2M))$,
it follows in particular from $(\ref{ptb})$ that
\begin{equation}
\label{innull}
|\psi|\le CE_2(|v|+1)^{-1}, \qquad
|r\psi| \le C(r_0)E_2 u^{-\frac12},
\end{equation}
where the first inequality applies in
$\mathcal{D}\cap {\rm clos} (\{t\ge 0\})$, whereas the second 
applies only in $\mathcal{D}\cap \{t\ge 0\}\cap \{r\ge r_0\}$,
with $C(r_0)\to\infty$ as $r_0\to 2M$. See also Appendix~\ref{computs}.
Note that, as in Minkowski space,
 the first inequality of $(\ref{innull})$ is sharp as a \emph{uniform} decay rate in $v$.

\subsection{A spacetime integral estimate}
\label{sie}
The zero'th step in the proof of Theorem~\ref{Schdec}
is an estimate for a spacetime integral
whose integrand should control the quantity
\begin{equation}
\label{shouldcontrol}
\chi\, J^{N}_\mu(\psi) n^\mu_{\tilde\Sigma_\tau}
\end{equation}
where $\chi$ is a $\varphi_t$-invariant weight function
such that $\chi$ degenerates only at infinity.
Estimates of the spacetime integral $(\ref{shouldcontrol})$ have their
origin in the classical virial theorem, which in Minkowski space
essentially arises from
applying the energy identity to the current $J^V$ with $V=\frac{\partial}{\partial r}$.

Naively, one might expect to be able to obtain an estimate of the form say
\begin{equation}
\label{naively}
\int_{\tilde{\mathcal{R}}(0,\tau)} \chi J^N_\mu(\psi) n^\mu_{\tilde\Sigma_\tau}
\le B\int_{\tilde\Sigma_0} J^N_\mu n^\mu_{\tilde\Sigma_0},
\end{equation}
for such a $\chi$.
It turns out that there is a well known
high-frequency obstruction for the existence of an estimate of the form 
$(\ref{naively})$ arising
from \emph{trapped null geodesics}. 
This problem has been long studied in the context of the wave equation in Minkowski space
outside of an obstacle, where the analogue of trapped null geodesics are straight lines
which reflect off the obstacle's boundary in such a way so as to remain in a compact subset
of space.
In Schwarzschild, one can easily infer from a continuity
argument the existence of a family of
null geodesics with $i^+$ as a limit point.\footnote{This can be thought of as a very weak notion of
what it would mean for a null geodesic to be trapped from the point of view of decay
results with respect to the foliation $\tilde\Sigma_\tau$.}
But in view of the integrability of geodesic flow, one can in fact understand all such geodesics
explicitly.

{\bf Exercise}: Show that the hypersurface $r=3M$ is spanned by null geodesics. Show that from every
point in $\mathcal{R}$, there is a codimension-one subset of
 future directed null directions whose corresponding
geodesics approach $r=3M$, and all other null geodesics either cross $\mathcal{H}^+$ or
meet $\mathcal{I}^+$. 

The timelike hypersurface $r=3M$ is traditionally called
the \emph{photon sphere}.
Let us first see how one can 
capture this high frequency obstruction.

\subsubsection{A multiplier $X$ for high angular frequencies}
\label{hayhay}
We look for a multiplier with the property that the spacetime integral it generates
is positive definite. Since in Minkowski space, this is provided by the vector field
$\partial_r$, we will look for simple generalizations.
Calculations are slightly easier when one considers $\partial_{r^*}$ associated
to Regge-Wheeler coordinates $(r^*,t)$. See Appendix~\ref{RWcs} for the definition of
this coordinate system.\footnote{Remember,
when considering coordinate vector fields, one has to specify the entire coordinate system. 
When considering $\partial_r$, it is here to be understood that
we are using Schwarzschild coordinates,
and when considering $\partial_{r^*}$, it is to be understood that
we are using Regge-Wheeler coordinates. The precise choice of the angular coordinates
is of course irrelevant.}
For $X=f(r^*)\partial_{r^*}$, where $f$ is a general function,
we obtain the formula
\[
K^X= \frac{f'}{1-2M/r}(\partial_{r^*}\psi)^2
+\frac{f}r\left(1-\frac{3M}r\right)|\nabb\psi|^2
-\frac14\left(2f'+4\frac{r-2M}{r^2}f\right)\nabla^\alpha \psi\nabla_\alpha\psi.
\]
Here $f'$ denotes $\frac{df }{dr^*}$.
We can now define a ``modified'' current 
\[
J^{X,w}_\mu=J^X_\mu (\psi) +\frac18w\partial_\mu(\psi^2)-\frac18(\partial_\mu w)\psi^2
\]
associated to the vector field $X$ and the function $w$.
Let 
\[
K^{X,w}= \nabla^\mu J^{X,w}_\mu.
\]
Choosing
\[
w=f'+2\frac{r-2M}{r^2}
f +
\frac{\delta(r-2M)}{r^5}\left(1-\frac{3M}r\right)f,
\]
we have
\begin{eqnarray*}
K^{X,w} &=&
\left( \frac{f'}{1-2M/r}-\frac{\delta f}{2r^4}\left(1-\frac{3M}r\right)\right)(\partial_{r^*}\psi)^2\\
&&\hbox{}
+\frac{f}r\left(1-\frac{3M}r\right)\left(\left(1-\frac{\delta(r-2M)}{2r^4}\right)|\nabb\psi|^2+\frac \delta{2r^3} (\partial_t \psi)^2\right)\\
&&\hbox{}
-\left(\frac18\Box_{g}\left(2f'+4\frac{r-2M}{r^2}f
 +
2\frac{\delta(r-2M)}{r^5}\left(1-\frac{3M}r\right)f
\right) \right)\psi^2.
\end{eqnarray*}

Recall that in view of the spherical symmetry of $\mathcal{M}$, we may
decompose
\[
\psi= \sum_{\ell\ge 0,|m|\le \ell} \psi_\ell(r,t) Y_{m,\ell}(\theta,\phi)
\]
where $Y_{m,\ell}$ are the so-called \emph{spherical 
harmonics}, each summand satisfies again the wave equation, and the convergence
is in $L^2$ of the ${\rm SO}(3)$ orbits.

Let us assume that $\psi_\ell=0$ for spherical harmonic number $\ell\le L$
for some $L$ to be determined. 
We look for $K^{X,w}$ such that $\int_{\mathbb S^2} {K^{X,w}}\ge 0$, but also
$|J^{X,w}_\mu n^\mu|\le B J^N_\mu n^\mu$.
Here $\int_{\mathbb S^2}$ denotes integration over group orbits of the ${\rm SO}(3)$ action.
For such $\psi$,
in view of  the resulting inequality
\[
 \frac{L(L+1)}{r^2}\int_{\mathbb S^2} \psi^2 \le \int_{\mathbb S^2} |\nabb\psi|^2,
\]
it follows that
taking $L$ sufficiently large and 
$0<\delta<1$ sufficiently small so that
$1-\frac{\delta(1-\mu)}{2r^3}\ge \frac12 $, 
 it clearly
suffices to construct an $f$
with the following properties:
\begin{enumerate}
\item
$|f|\le B$,
\item 
$f'\ge  B (1-2M/r)r^{-4}$, 
\item
$f(r=3M)=0$,
\item
$-\frac18\Box_{g}\left(2f'+4\frac{r-2M}{r^2}f
 +
2\frac{\delta(r-2M)}{r^5}\left(1-\frac{3M}r\right)f
\right)
(r=3M)> 0$,
\item
$\frac18\Box_{g}
\left(2f'+4\frac{r-2M}{r^2}f
 +
2\frac{\delta(r-2M)}{r^5}\left(1-\frac{3M}r\right)f
\right)
\le  Br^{-3}$
\end{enumerate}
for some constant $B$.
{\bf Exercise}. Show that one can construct such a function.

Note the significance of the photon sphere!

\subsubsection{A multiplier $X$ for all frequencies}
\label{oneforall}
Constructing a multiplier for all spherical harmonics,
so as to capture in addition ``low frequency'' effects, is more tricky. 
It turns out, however, that one can actually define a current which does not require 
spherical harmonic decomposition at all.
The current is of the form:
\begin{eqnarray*}
J^{\bf X}_\mu(\psi) &=&
e J^N_\mu(\psi) + J^{X^a}_\mu(\psi)+ \sum_iJ^{X^b,w^b}_\mu(\Omega_i\psi)\\
&&\hbox{}-
\frac12\frac{r(f^b)'}{f^b(r-2M)}\left(\frac{r-2M}{r^2}-\frac{(r^*-\alpha-\alpha^{1/2})}{
\alpha^2+(r^*-\alpha-\alpha^{1/2})^2}\right) X^b_\mu \psi^2.
\end{eqnarray*}
Here, $N$ is as in Section~\ref{thevectorfields}, $X^a=f^a \partial_{r^*}$, $X^b=f^b\partial_{r^*}$,
the warped current $J^{X,w}$ is defined as in Section~\ref{hayhay}, 
\[
f^a = -\frac{C_a}{\alpha r^2} + \frac{c_a}{r^{3}},
\]
\[
f^b= \frac{1}{\alpha}\left(\tan^{-1}\frac{r^*-\alpha-\alpha^{1/2}}\alpha -\tan^{-1}(-1-\alpha^{-1/2})\right),
\]
\[
w^b= 
\frac{1}{8}\left((f^b)'+2\frac{r-2M}{r^2}f^b\right),
\]
and $e$, $C_a$, $c_a$, $\alpha$ are positive parameters which must be chosen accordingly.
With these choices, one can show (after some computation)
that the divergence $K^{\bf X}=\nabla^\mu J_\mu^{\bf X}$ 
controls in particular
\begin{equation}
\label{cip}
\int_{\mathbb S^2} K^{\bf X}(\psi)  \ge
b\chi  \int_{\mathbb S^2} J^N_\mu(\psi) n^\mu,
\end{equation}
where $\chi$ is non-vanishing but decays (polynomially) as $r\to\infty$.
Note that in view of the normalisation $(\ref{burada})$ of the $r^*$ coordinate, 
$X^b=0$ precisely at $r=3M$. 
The left hand side of the inequality $(\ref{cip})$ controls also
second order derivatives which degenerate however at $r=3M$. We have dropped these
terms. It is actually useful for applications that the $J^{X^a}(\psi)$ part of the current
is not ``modified'' by a function $w^a$, and thus
$\psi$ itself does not occur in the boundary terms.
That is to say
\begin{equation}
\label{demek=}
|J_\mu^{\bf X}(\psi) n^\mu | \le B\left(J^N_\mu(\psi) n^\mu+ 
\sum_{i=1}^3J^N_\mu(\Omega_i\psi)n^\mu \right).
\end{equation}
On the event horizon $\mathcal{H}^+$, we have a better one-sided bound
\begin{equation}
\label{better1}
-J_\mu^{\bf X}(\psi) n^\mu_{\mathcal{H}^+} \le B \left(J^T_\mu(\psi)n^\mu_{\mathcal{H}^+}+
\sum_{i=1}^3 J^T_\mu(\Omega_i\psi)n^\mu_{\mathcal{H}^+}\right).
\end{equation}
For details of the construction, see~\cite{dr5}.

In view of $(\ref{cip})$, $(\ref{demek=})$ and $(\ref{better1})$, together
with the previous boundedness result Theorem~\ref{boundedn}, 
one obtains in particular the estimate
\begin{equation}
\label{finalestimate}
\int_{\tilde{\mathcal{R}}(\tau',\tau)} \chi J^N_\nu(\psi) n^\nu_{\tilde{\Sigma}}
      \le B \int_{\tilde\Sigma(\tau')}    \left(       J^N_\mu(\psi)  +         \sum_{i=1}^3 J^N_\mu
      (\Omega_i \psi)\right)
      n^\mu_{\tilde\Sigma_\tau},
\end{equation}
for some  nonvanishing $\varphi_t$-invariant function
$\chi$ which decays polynomially
as $r\to\infty$.

On the other hand, 
considering the current $J_\mu^{\bf X} (P_{\le L}\psi) + J^{X,w}_\mu((I-P_{\le L})\psi)$,
where $J_\mu^{X,w}$ is the current of Section~\ref{hayhay}  and $P_{\le L}\psi$ denotes the projection
to the space spanned by spherical harmonics with $\ell \le L$, we obtain the estimate
\begin{equation}
\label{finalotherestimate}
\int_{\tilde{\mathcal{R}}(\tau',\tau)} \chi h J^N_\nu(\psi) n^\nu_{\tilde{\Sigma}}
      \le B \int_{\tilde\Sigma(\tau')}     J^N_\mu(\psi)n^\mu_{\tilde\Sigma_\tau},
\end{equation}
where $h$ is any smooth nonnegative function $0\le h\le 1$ vanishing at $r=3M$, and $B$ depends
also on the choice of function $h$.

\subsection{The Morawetz conformal $Z$ multiplier and energy decay}
\label{Morawetz}
How does the estimate $(\ref{finalestimate})$ assist us to prove decay?

Recall that energy decay can be proven in Minkowski space with the help 
of the so-called \emph{Morawetz current}.
Let
\begin{equation}
\label{defofZ}
Z= u^2\partial_u +v^2\partial_v 
\end{equation}
and define
\[
J^{Z,w}_\mu(\psi)= J^{Z}_\mu(\psi) +\frac{tr^*(1-2M/r)}{2r} \psi\partial_\mu\psi
-\frac{r^*(1-2M/r)}{ 4r}\psi^2\partial_\mu t.
\]
(Here $(u,v)$, $(r^*,t)$ are the coordinate systems of Appendix~\ref{computs}.)
Setting $M=0$, this corresponds precisely to the current introduced by Morawetz~\cite{mora}
on Minkowski space.

It is a good {\bf exercise} to show that (for $M>0$!) the coefficients of this current are
$C^0$ but not $C^1$ across $\mathcal{H}^+\cup\mathcal{H}^-$.

To understand how one hopes to use this current, let us recall the situation in
Minkowski space. There, the signifance of $(\ref{defofZ})$ arises since it is a conformal 
Killing field. 
Setting $M=0$, $r^*=r$ in the above
one obtains\footnote{The reason for introducing the $0$'th order terms is because
the wave equation is not conformally invariant. It is remarkable that one can nonetheless
obtain positive definite boundary terms, although
a slightly unsettling feature is that this
positivity property $(\ref{remtrue})$ requires looking specifically at constant $t=\tau$ surfaces
and integrating.}
\begin{equation}
\label{remtrue}
\int_{t=\tau} J_\mu^{Z,w} n^\mu \ge 0,
\end{equation}
\begin{equation}
\label{doesn't}
K^{Z,w}=0.
\end{equation}

The inequality $(\ref{remtrue})$ remains true in the Schwarzschild case
and one can obtain exactly as before
\begin{equation}
\label{fLHS}
\int_{t=\tau} J_\mu^{Z,w} n^\mu \ge 
b \int_{t=\tau} u^2(\partial_u\psi)^2 +v^2(\partial_v\psi)^2 + \left(1-\frac{2M}r\right)(u^2+v^2)|\nabb\psi|^2.
\end{equation}
(In fact, we have dropped positive $0$'th order terms from the right hand side of $(\ref{fLHS})$,
which will be useful for us later on in Section~\ref{pointdec}.)
Note that away from the horizon, we have that
\begin{equation}
\label{f2LHS}
\int_{t=\tau} J_\mu^{Z,w}n^\mu \ge b(r_0,R)\tau ^{2} \int_{\{t=\tau\}\cap\{r_0\le r\le R\}} J^N_\mu n^\mu.
\end{equation}
Thus, if the left hand side of $(\ref{f2LHS})$ could be shown to be
bounded, this would prove the first statement
of Theorem~\ref{Schdec} where $\tilde{\Sigma}_\tau$ is replaced however with
 $\{t=\tau\}\cap \{r_0\le r\le R\}$.

In the case of Minkowski space,
the boundedness of the left hand side of $(\ref{fLHS})$ follows immediately by $(\ref{doesn't})$
and the energy identity
\begin{equation}
\label{Zei}
\int_{t=\tau} J_\mu^{Z,w} + \int_{0\le t \le \tau} K^{Z,w}
=
\int_{t=0} J_\mu^{Z,w} 
\end{equation}
as long as the data are suitably regular and decay so as for the right hand side to be bounded.
For Schwarzschild, 
one cannot expect $(\ref{doesn't})$ to hold,
and this is why we have introduced
the $X$-related currents.

First the good news:
There exist constants $r_0<R$ such that
\[
K^{Z,w}\ge 0
\]
for $r\le r_0$,
and in fact 
\begin{equation}
\label{usingalso}
K^{Z,w}\ge b \frac{t}{r^3}\psi^2
\end{equation}
for $r\ge R$ and some constant $b$.
These terms have the ``right sign'' in the energy identity $(\ref{Zei})$.
In $\{r_0\le r\le R\}$, however, the best we can do is
\[
-K^{Z,w} \le   B\,t\, ( |\nabb\psi|^2 +   |\psi|^2).
\]
This is the bad news, although, in view of the presence of trapping, it is
to be expected.
Using 
also $(\ref{usingalso})$,
we may
estimate
\begin{eqnarray}
\label{theestimateK}
\nonumber
\int_{0\le t\le \tau} -K^{Z,w} &\le&
 B\, \int_{\{0\le t\le \tau\}\cap \{r_0 \le r\le R\} } t\, J^N_\mu n^\mu\\
&\le&
 B\, \tau\, \int_{\{0\le t\le \tau\}\cap \{r_0 \le r\le R\} } J^N_\mu n^\mu.
\end{eqnarray}

In view of the fact that the first integral on the right hand side of $(\ref{theestimateK})$
is bounded by
$(\ref{finalestimate})$,
and the weight $\tau^2$ in $(\ref{f2LHS})$, applying
the energy
identity of the current $J^{Z,w}$ in the region
$0\le t\le \tau$, we  
 obtain immediately a preliminary version of the
first statement of the Theorem~\ref{Schdec}, but with
$\tau^2$ replaced by $\tau$, and the hypersurfaces
$\tilde\Sigma_\tau$ replaced by $\{t=\tau\}\cap \{ r'\le r\le R'\}$
for some constants $r'$, $R'$, but where $B$ depends on these constants.
(Note the geometry of this region. All $\{t=\rm constant\}$ hypersurfaces have common
boundary $\mathcal{H}^+\cap \mathcal{H}^-$. {\bf Exercise}: Justify the integration by
parts $(\ref{Zei})$, in view of the fact that $Z$ and $w$ are
only $C^0$ at $\mathcal{H}^+\cup\mathcal{H}^-$.)

Using the current $J^T$ and an easy geometric argument, 
it is not difficult to replace the hypersurfaces $\{t=\tau\}\cap \{r'\le r\le R'\}$ above
with $\tilde\Sigma_\tau\cap\{r \ge r'\}$,\footnote{Hint: Use $(\ref{fLHS})$ to estimate
the energy on $\{t=t_0\}\cap J^+(\tilde\Sigma_\tau)$ with weights in $\tau$. Send $t_0\to\infty$
and estimate backwards to $\tilde{\Sigma}_\tau$ using conservation of the $J^T$ flux.}
obtaining
\begin{equation}
\label{athalf0}
\int_{\tilde{\Sigma}_\tau\cap\{r\ge r'\}} J_\mu^{N}(\psi) n^\mu \le B\, \tau^{-1}\left(
  \int_{t=0} J_\mu^{Z,w}(\psi) n^\mu
 + \int_{\tilde\Sigma_0} J_\mu^N(\psi)n^\mu +\sum_{i=1}^3 J_\mu^N(\Omega_i\psi)n^\mu \right).
 \end{equation}

To obtain decay for the nondegenerate energy near the horizon, note
that by the pigeonhole principle in view of the boundedness of the left
hand side of $(\ref{finalestimate})$ and what has just been proven, 
there exists $({\bf exercise})$ a dyadic sequence $\tilde\Sigma_{\tau_i}$ 
for which the first statement of Theorem~\ref{Schdec} holds,
with $\tau^{-2}$ replaced by
$\tau^{-1}_i$. Finally, by Theorem~\ref{boundedn}, we immediately ({\bf exercise}: why?)
remove the restriction to the dyadic sequence.

We have thus obtained
\begin{equation}
\label{athalf}
\int_{\tilde{\Sigma}_\tau} J_\mu^{N}(\psi) n^\mu \le B\, \tau^{-1}\left(
  \int_{t=0} J_\mu^{Z,w}(\psi) n^\mu
 + \int_{\tilde\Sigma_0} J_\mu^N(\psi)n^\mu +\sum_{i=1}^3 J_\mu^N(\Omega_i\psi)n^\mu \right).
 \end{equation}

The statement $(\ref{athalf})$ loses one power of $\tau$ in comparison with 
the first statement of Theorem~\ref{boundedn}. How do we obtain the full result?
First of all, note that, commuting once again with $\Omega_j$,
it follows that $(\ref{athalf})$ holds for $\psi$ replaced with $\Omega_j\psi$.
Now we may partition $\tilde{\mathcal{R}}(0,\tau)$ dyadically into subregions
$\tilde{\mathcal{R}}(\tau_i,\tau_{i+1})$ and
revisit the $X$-estimate $(\ref{finalestimate})$
on each such region. 
In view of $(\ref{athalf})$ applied to both $\psi$ and $\Omega_j\psi$, 
the estimate $(\ref{finalestimate})$
gives
\begin{equation}
\label{fromfinalestimate}
\int_{\tilde{\mathcal{R}}(\tau_i,\tau_{i+1})} \chi J^N_\nu(\psi) n^\nu_{\tilde{\Sigma}}
      \le B D\tau_i^{-1},
\end{equation}
where $D$ is a quantity coming from data.
Summing over $i$, this gives us
that 
\[
\int_{\mathcal{R}(0,\tau)} t\, \chi J_\nu^N(\psi)n^\nu_{\tilde\Sigma} \le  BD(1+ \log |\tau+1|)
\]
This estimates in particular the first term on 
the right hand side of the first inequality of $(\ref{theestimateK})$. 
Applying this inequality, we obtain as before $(\ref{athalf0})$, but
with  $\tau^{-2}(1+\log |\tau+1|)$ replacing $\tau$.
Using $(\ref{fromfinalestimate})$ and a pigeonhole principle, one improves this to
$(\ref{athalf})$, with $\tau^{-2}(1+\log |\tau+1|)$ now replacing $\tau$. 
Iterating this argument again one removes the $\log$ ({\bf exercise}).

Note that this loss of derivatives in $(\ref{tote...})$ simply arises 
from the loss in $(\ref{finalestimate})$.
If $\Omega_i$ could be replaced by $\Omega_i^\epsilon$ in $(\ref{finalestimate})$, 
then the loss would be
$3\epsilon$.
The latter refinement can in fact be deduced from the original $(\ref{tote...})$
using in addition work of Blue-Soffer~\cite{newblue}.
Running the argument of this section
with the $\epsilon$-loss version of $(\ref{tote...})$, we obtain now
\begin{proposition}
\label{nowaste}
For any $\epsilon>0$,
statement $(\ref{tote...})$ holds with $3$ replaced by $\epsilon$ in the definition
of $E_1$ and $C$ replaced by $C_\epsilon$.
\end{proposition}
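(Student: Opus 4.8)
The plan is to trace the entire loss of derivatives in $(\ref{tote...})$ back to its single source — the angular commutation built into the integrated local energy estimate $(\ref{finalestimate})$ — and to replace that estimate by a version in which the commutator $\Omega_i$ is replaced by its fractional power $\Omega_i^\epsilon$. Concretely, I would aim to establish, for every $\epsilon>0$, the refined integrated estimate
\begin{equation}
\label{epsloss}
\int_{\tilde{\mathcal{R}}(\tau',\tau)} \chi\, J^N_\nu(\psi)\, n^\nu_{\tilde\Sigma}
\le B_\epsilon \int_{\tilde\Sigma_{\tau'}}
\left( J^N_\mu(\psi) + \sum_{i=1}^3 J^N_\mu(\Omega_i^\epsilon\psi) \right) n^\mu_{\tilde\Sigma}
\end{equation}
which differs from $(\ref{finalestimate})$ only in that the full angular commutator on the right has been replaced by a fractional angular derivative of order $\epsilon$. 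Once $(\ref{epsloss})$ is available, the second half of the proof is mechanical: I would transcribe the argument of Section~\ref{Morawetz} verbatim, using $(\ref{epsloss})$ wherever $(\ref{finalestimate})$ was invoked. The $Z$-multiplier bound $(\ref{theestimateK})$ for $\int -K^{Z,w}$ is then controlled after commuting only with $\Omega_i^\epsilon$; the first decay statement $(\ref{athalf})$ is recovered with an $\epsilon$-loss; and the dyadic iteration of $(\ref{fromfinalestimate})$, applied to $\psi$ and to $\Omega_j^\epsilon\psi$, together with the final removal of the logarithm, goes through as before. Since that iteration involves only three rounds of angular commutation, each now costing $\epsilon$ rather than a full derivative, the total loss is $3\epsilon$, which — $\epsilon>0$ being arbitrary — we relabel $\epsilon$. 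This yields $(\ref{tote...})$ with $E_1$ redefined using $|(\alpha)|\le\epsilon$ and $C$ replaced by $C_\epsilon$.

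The substance of the argument is therefore the derivation of $(\ref{epsloss})$, and here the point is that the full-derivative loss in $(\ref{finalestimate})$ is purely an artifact of \emph{trapping} at the photon sphere $r=3M$. Away from $r=3M$ one already has the \emph{lossless} estimate $(\ref{finalotherestimate})$, in which the bulk weight $\chi h$ is allowed to vanish at the photon sphere and no angular commutation appears on the right. Thus only the trapped frequencies at $r=3M$ need to be recovered. The trapping here is a single, nondegenerate (normally hyperbolic) sphere, and for such trapping the obstruction at angular frequency $\ell$ grows only logarithmically: a wave packet localised near $r=3M$ escapes in time $\sim\log\ell$. The plan is to import from the work of Blue--Soffer~\cite{newblue} precisely this frequency-localised, logarithmic-loss control of the trapped region, to use the already established decay $(\ref{tote...})$ as the a priori input that renders the frequency decomposition meaningful and controls the low frequencies and the contributions away from the photon sphere, and then to interpolate between the lossless estimate $(\ref{finalotherestimate})$ (off the trapped frequencies) and the logarithmic-loss estimate (on the trapped frequencies). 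A standard interpolation converts a logarithmic loss concentrated at a single frequency ratio into a polynomial $\epsilon$-derivative loss, which is exactly $(\ref{epsloss})$.

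The step I expect to be the main obstacle is precisely this derivation of $(\ref{epsloss})$: one must make the frequency-localised trapping analysis rigorous in a setting that extends up to and including the event horizon $\mathcal{H}^+$ and out to null infinity $\mathcal{I}^+$, and — crucially — with respect to the \emph{nondegenerate} current $J^N$, which, unlike $J^T$, does not vanish on $\mathcal{H}^+$. Ensuring that the interpolation and the frequency decomposition respect the red-shift-improved weights near the horizon, that the spatial weight $\chi$ is preserved, and that the boundary contributions on $\mathcal{H}^+$ retain the favourable sign of $(\ref{better1})$, is where the care is needed, and it is the coupling to the already-proven $(\ref{tote...})$ that makes all of this consistent. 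Everything downstream of $(\ref{epsloss})$ is the established machinery of Sections~\ref{oneforall} and~\ref{Morawetz}, so once the refined trapping estimate is in hand the proposition follows.
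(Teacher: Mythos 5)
Your proposal is correct and follows essentially the same route as the paper: the paper likewise observes that the entire loss originates in $(\ref{finalestimate})$, deduces the $\Omega_i^\epsilon$-commuted refinement of that estimate from the already-proven $(\ref{tote...})$ together with the work of Blue--Soffer~\cite{newblue}, and then reruns the argument of Section~\ref{Morawetz} to incur a total loss of $3\epsilon$. Your additional gloss on how the Blue--Soffer input is organised (interpolating between the lossless $(\ref{finalotherestimate})$ off the photon sphere and the logarithmic trapping loss at $r=3M$) is consistent with, though more explicit than, what the paper records.
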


\subsection{Pointwise decay}
\label{pointdec}
To derive pointwise decay for $\psi$ itself,
we should remember that we have in fact
dropped a good $0$'th order term
from the estimate $(\ref{fLHS})$. In particular, we have also
\[
\int_{t=\tau}J_\mu^{Z,w}(\psi) n^\mu  \ge b
\int_{\{t=\tau\}\cap \{r\ge r_0\}} (\tau^2 r^{-2}+1) \psi^2.
\]
From this and the previously derived bounds, 
pointwise decay can be shown  easily by applying $\Omega_i$ as commutators
and Sobolev estimates.
See~\cite{dr3} for details.

{\bf Exercise}: Derive pointwise decay for all derivatives of $\psi$, including transverse
derivatives to the horizon of any order, by commuting in addition with $\hat{Y}$ as in the proof
of Theorem~\ref{boundedn}.

\subsection{Comments and further reading}
\label{neocoms}
\subsubsection{The $X$-estimate}
The origin of the use of vector field multipliers of the type $X$ (as in Section~\ref{sie})
for proving decay for solutions of
the wave equation goes back to Morawetz. (These identities are generalisations of the
classical virial identity, which has itself a long and complicated
history.) In the context of Schwarzschild black holes,
the first results in the direction of such estimates were in Laba and Soffer~\cite{labasoffer}
for a certain ``Schr\"odinger'' equation (related to the Schwarzschild $t$-function),
and, for the wave equation, in Blue and Soffer~\cite{BlueSof0}.  These results were incomplete
(see~\cite{BlueSof}), however,  and the
first estimate of this type was actually obtained in our~\cite{dr3}, motivated
by the original calculations of~\cite{BlueSof0, labasoffer}. 
This estimate required decomposition of $\psi$ into 
individual spherical harmonics $\psi_\ell$, and choosing 
the current $J^{X,w}$ separately for each $\psi_\ell$. 
A slightly different approach to this estimate is provided by~\cite{BlueSof}.
A somewhat simpler choice of current $J^{X,w}$ which provides an estimate
for all sufficiently high spherical
harmonics was first presented by Alinhac~\cite{alinhac}.  
Our Section~\ref{hayhay} is similar in spirit. 
The first estimate not requiring a spherical harmonic
decomposition was obtained in~\cite{dr5}. This is the current of Section~\ref{oneforall}.
The problem of reducing the loss of derivatives in~$(\ref{finalestimate})$ 
has been addressed in Blue-Soffer~\cite{newblue}.\footnote{A related refinement,
where $h$ of $(\ref{finalotherestimate})$ is replaced by a function vanishing logarithmically
at $3M$, follows from~\cite{marzuola} referred to below.}
The results of~\cite{newblue} in fact also apply to the Reissner-Nordstr\"om metric.

A slightly different construction of a current 
as in Section~\ref{oneforall} has been given by Marzuola and 
collaborators~\cite{metcalfe}. This current
does not require commuting with $\Omega_i$.
In their subsequent~\cite{marzuola},~the considerations of~\cite{metcalfe} are combined with
ideas from~\cite{dr3,dr5} to obtain an estimate 
which does not degenerate on the horizon: One includes 
a piece of the current $J^N$ of Section~\ref{thevectorfields}
and exploits Proposition~\ref{Nproperties}.

\subsubsection{The $Z$-estimate}
The use of vector-field multipliers of the type $Z$ also goes back to
celebrated work of Morawetz,
in the context of the wave equation outside convex obstacles~\cite{mora}.
The geometric interpretation of this estimate arose later,
and the use of $Z$ adapted to the causal geometry of a non-trivial metric first appears perhaps
in the proof of stability of Minkowski space~\cite{book}.
The decay result Theorem~\ref{Schdec} 
was obtained in our~\cite{dr3}. A result yielding similar decay
away from the horizon (but weaker decay along the horizon) was proven independently in a nice 
paper of Blue and Sterbenz~\cite{BlueSter}. 
Both~\cite{BlueSter} and~\cite{dr3} make use of a current
based on the vector field $Z$. 
In~\cite{BlueSter}, the error term analogous to
$K^{Z,w}$ of Section~\ref{Morawetz} 
was controlled with the help of an auxiliary collection of multipliers with linear weights in $t$,
chosen at the level of each spherical harmonic, whereas
in~\cite{dr3}, these error terms are controlled directly from $(\ref{finalestimate})$ 
by a dyadic iteration scheme similar to the one we have given here in Section~\ref{Morawetz}.
The paper~\cite{BlueSter} does not obtain estimates for
the non-degenerate energy flux $(\ref{tote...})$; moreover,
a slower pointwise decay rate near the horizon is achieved
in comparison to Theorem~\ref{Schdec}. Motivated by~\cite{dr3}, 
the authors of~\cite{BlueSter} have since given a 
different argument~\cite{BlueSter1} to obtain just the pointwise estimate
$(\ref{ptb})$ on the horizon, exploiting the ``good'' term in $K^{Z,w}$ near the horizon. 
The proof of Theorem~\ref{Schdec} presented in Section~\ref{Morawetz}
 is a slightly modified version of the scheme in~\cite{dr3}, avoiding
 spherical harmonic decompositions (for obtaining $(\ref{finalestimate})$) by using
 in particular the result of~\cite{dr5}.

\subsubsection{Other results}
Statement~$(\ref{ptb})$ of  Theorem~\ref{Schdec} has been generalised to the Maxwell case by
Blue~\cite{BlueMax}. In fact, the Maxwell case is  much ``cleaner'', as the current $J^Z$
need not be modified by a function $w$, and its flux is pointwise
positive through any spacelike hypersurface. The considerations near the horizon
follow~\cite{BlueSter1} and thus the analogue of $(\ref{tote...})$ is not in fact obtained, only
decay for the degenerate flux of $J^T$. Nevertheless, the
non-degenerate $(\ref{tote...})$ for Maxwell can be proven following
the methods of this section, using in particular currents associated to the vector field $Y$
({\bf Exercise}).

To our knowledge, the above discussion exhausts the quantitative pointwise and energy 
decay-type statements which are known for general solutions of the wave equation
on Schwarzschild.\footnote{For fixed spherical harmonic $\ell=0$, there is also the quantitative
result of~\cite{dr1}, to be mentioned in Section~\ref{heuristic}.}
The best previously known results on general solutions of the wave equation were non-quantitative decay type statements which we briefly mention. A pointwise decay without a rate  was first proven
in the thesis of Twainy~\cite{twainy}.
Scattering and asymptotic completeness
statements  for the wave, Klein-Gordon, Maxwell
and Dirac equations have been
obtained by~\cite{dimock, dimockkay, bachelot, bachelotm, nicolas}.  These type of statements are typically insensitive to the amount of trapping.  See the related discussion of
Section~\ref{heuristic}, where the statement of Theorem~\ref{Schdec} is compared
to non-quantitative statements heuristically derived in the physics literature.

\subsection{Perturbing?}
\label{pertsec01}
Use of the $J^N$ current
``stabilises'' the proof of Theorem~\ref{Schdec} with respect to considerations near the horizon.
There is, however, a sense in which the above argument is still  fundamentally attached
to Schwarzschild.
The approach taken to derive the multiplier estimate $(\ref{cip})$
depends on the structure of the trapping set, in particular,
the fact that trapped null geodesics approach
 a codimension-$1$ subset of spacetime, the photon sphere. 
Overcoming the restrictiveness of this approach
is the fundamental remaining difficulty in extending
these techniques to Kerr, as will be accomplished in Section~\ref{KDsec}.
Precise implications of this fact for
multiplier estimates are discussed further in~\cite{alinhac}.

\subsection{Aside: Quantitative vs.~non-quantitative results and the heuristic tradition}
\label{heuristic}
The study of wave equations on Schwarzschild has a long history in the 
physics literature, beginning with the pioneering Regge and Wheeler~\cite{RW}.
These studies have all been associated with showing ``stability''.

A seminal paper is that of Price~\cite{rpr:ns}. There, insightful heuristic arguments were
put forth deriving the asymptotic tail of each spherical harmonic $\psi_\ell$ evolving
from compactly supported initial data,
suggesting that for $r>2M$,
\begin{equation}
\label{showing...}
\psi_\ell(r, t)\sim C_\ell t^{-(3+2\ell)}.
\end{equation}
These arguments were later extended by Gundlach et al~\cite{gundlach}
to suggest
\begin{equation}
\label{showing....}
\psi_\ell|_{\mathcal{H}^+}\sim C_\ell v^{-(3+2\ell)},\qquad
r\psi_\ell|_{\mathcal{I}^+} \sim \bar{C}_\ell u^{-(2+\ell)}.
\end{equation}
Another approach to these heuristics
via the analytic continuation of the Green's function was followed by~\cite{ching}. 
The latter approach in principle could perhaps be turned into a rigorous proof, at least for
solutions not supported on $\mathcal{H}^+\cap\mathcal{H}^-$. See~\cite{st:pc, kronthaler}
for just $(\ref{showing...})$ for the $\ell=0$ case. 

Statements of the form~$(\ref{showing...})$ are interesting because, if  proven,
they would give the fine structure of the tail of the solution. However,
it is important to realise that statements like~$(\ref{showing...})$
in of themselves would not give quantitative
bounds for the size of the 
solution at all later times in terms of initial data. In fact, the above heuristics
do not even suggest what the best such quantitative result would be, they only give a
heuristic \emph{lower}
bound on the best possible quantitative decay rate in a theorem 
like Theorem~\ref{Schdec}.

Let us elaborate on this further. For fixed spherical harmonic, by compactness
a statement of the form $(\ref{showing...})$ would immediately yield some
bound
\begin{equation}
\label{ineqs?}
|\psi_{\ell}|(r,t)\le D(r,\psi_\ell )t^{-3},
\end{equation}
for some constant $D$ depending on $r$ \emph{and on the solution itself}. 
It is not clear, however, what the sharp such quantitative
inequality of the form $(\ref{ineqs?})$ is supposed to be
when the constant is to depend on a natural quantity associated to data.
It is the latter, however, which is important for the nonlinear stability problem.

There is a setting in which a quantitative version of $(\ref{ineqs?})$ has indeed
been obtained:
The results of~\cite{dr1} (which apply to
the nonlinear problem where the scalar field is coupled to the Einstein equation,
but which can be specialised to the decoupled case of the $\ell=0$ harmonic on
Schwarzschild) prove in particular that 
\begin{equation}
\label{proved}
|n_{\Sigma_\tau}\psi_0|+ |\psi_0|\le C_\epsilon D(\uppsi,\uppsi') \tau^{-3+\epsilon},
\qquad
|r\psi_0|\le C D(\uppsi,\uppsi') \tau^{-2}
\end{equation}
where $C_\epsilon$ depends only on $\epsilon$, and $D(\uppsi,\uppsi')$ is a quantity 
depending only  on the initial $J^T$ energy and a pointwise weighted $C^1$ norm.
In view of the relation between $\tau$, $u$, and $v$, $(\ref{proved})$ includes also
decay on the horizon and null infinity as in the heuristically derived
$(\ref{showing....})$.
The fact that the power $3$ indeed appears
in both in the quantitative $(\ref{proved})$ and in $(\ref{ineqs?})$ may 
be in part accidental. See also~\cite{bizon}.

For general solutions, i.e.~for the sum over spherical harmonics,
the situation is even worse. In fact, a statement
like $(\ref{showing...})$ a priori gives no information whatsoever of any sort,
even of the non-quantitative kind. It is in principle
compatible with $\limsup_{t\to\infty} \psi(r,t)=\infty$.\footnote{Of course, \emph{given} 
the quantitative result of Theorem~\ref{boundedn} and the 
statement~$(\ref{showing...})$,
one could then infer 
that for 
each $r>2M$, then $\lim_{t\to \infty}\phi(r,t)= 0$, without however a rate ({\bf exercise}).}
It is well known, moreover, that to understand quantitative decay rates for general solutions,
one must quantify trapping. This is not, however, captured by the heuristics
leading to $(\ref{showing...})$, essentially because for fixed $\ell$, the effects of trapping
concern an intermediate time interval not reflected in the tail.
It should thus not be surprising that these heuristics do not address the fundamental 
problem at hand.

Another direction for heuristic work has been the study of so-called
\emph{quasi-normal modes}. These are solutions with time dependence $e^{-i\omega t}$
for $\omega$ with negative imaginary part, and appropriate boundary conditions. 
These occur as poles of the analytic continuation
of the resolvent of an associated elliptic problem, 
and in the scattering theory literature are typically known as \emph{resonances}.
Quasinormal modes are discussed  in the nice survey article of 
Kokkotas and Schmidt~\cite{Kokkotas}. Rigorous results on the distribution of 
resonances have been achieved in 
Bachelot--Motet-Bachelot~\cite{2bachelots}
and S\'a Barreto-Zworski~\cite{SB-Zworski}. 
The asymptotic distribution of the quasi-normal modes
as $\ell\to\infty$ can be thought to reflect trapping. On the other hand,
these modes fo not reflect the ``low-frequency'' effects
giving rise to tails. Thus, they too tell only part of the story. See, however,  the case of
Schwarzschild-de Sitter in Section~\ref{cosmolosec}.

Finally, we should mention Stewart~\cite{Stewart}. This is to our knowledge the first clear
discussion in the physics literature of the  relevance of trapping on the Schwarzschild metric
in this context
and the difference between quantitative and non-quantitative decay rates.
It is interesting to compare Section 3 of~\cite{Stewart} with what has now been proven:
Although the predictions of~\cite{Stewart} do not quite match the situation in 
Schwarzschild (it is in particular incompatible with~$(\ref{showing...})$), they
apply well to the  Schwarzschild-de Sitter case
developed in Section~\ref{cosmolosec}.

The upshot of the present discussion is the following: Statements of the form~$(\ref{showing...})$, 
while interesting, may 
have little to do with the problem of non-linear stability of black holes, and are perhaps
more interesting for the lower bounds that they suggest.\footnote{See for instance the
relevance of this in~\cite{cbh}.}
In fact, in view of their non-quantitative nature,
these results are less relevant for the stability problem
than the quantitative boundedness theorem of Kay and Wald.
Even the statement of Section~\ref{eazy} cannot be derived as a corollary of the 
statement~$(\ref{showing...})$, nor would knowing~$(\ref{showing...})$ simplify in 
any way the proof of Section~\ref{eazy}.

\section{Perturbing Schwarzschild: Kerr and beyond}
\label{pertsec}
We now turn to the problem of perturbing  the Schwarzschild metric and proving boundedness
and decay for the wave equation on the backgrounds of such perturbed metrics. 
Let us recall our dilemma: The boundedness argument of Section~\ref{S1}
required that $T$ remains causal everwhere in the exterior. In view of the comments of
Section~\ref{pertsec0},
this is clearly unstable. On the other hand, the decay argument of Section~\ref{S2} requires
understanding the trapped set and in particular, uses the fact that in Schwarzschild,
a certain codimension-$1$ subset of spacetime--the 
photon sphere--plays a special role. 
Again, as discussed in Section~\ref{pertsec01},
this special structure is unstable.

It turns out that nonetheless, these issues can be addressed and both boundedness
(see Theorem~\ref{kerbnd}) and decay (see Theorem~\ref{DT}) can be proven for 
the wave equation on suitable
perturbations of Schwarzschild.
As we shall see, the boundedness  proof (See Section~\ref{BP}) turns out to be more
robust and can be applied to a larger class of metrics--but it too 
requires some insight from the Schwarzschild decay argument! The decay 
proof (See Section~\ref{KDsec}) 
will require us to restrict to exactly Kerr spacetimes.

Without further delay, perhaps it is time to introduce the Kerr family\ldots

\subsection{The Kerr metric}
\label{thekerrmetric}
The \emph{Kerr metric} is a $2$-parameter family of metrics first discovered~\cite{kerr} 
in 1963.
The parameters are called \emph{mass} $M$ and specific angular momentum $a$,
i.e.~angular momentum per unit mass. In so-called
Boyer-Lindquist local
coordinates, the metric element takes the form:
\begin{align*}
-\left(1-\frac{2M}
{r\left(1+ \frac{a^2\cos^2\theta }{r^2}\right) } \right)   \,dt^2
+\frac{1+\frac{a^2\cos^2\theta}{r^2}}{1-\frac{2M}{r}+
\frac{a^2}{r^2}} \, dr^2 
+ r^2\left(1+\frac{a^2\cos^2\theta}{r^2}\right) \, d\theta^2\\
+r^2 \left (1+\frac{a^2}{r^2}
+\left(\frac{2M}{r}\right)\frac{a^2\sin^2\theta}{r^2\left(1+\frac{a^2\cos^2
\theta}{r^2}
\right)}\right)
\sin^2\theta\, d\phi^2\\
-4M\frac{a\sin^2\theta}{r\left(1+\frac{a^2\cos^2\theta}{r^2}\right)}\, 
dt  \, d\phi.
\end{align*}
The vector fields $\partial_t$ and $\partial_\phi$ are Killing. We say that
the Kerr family is 
\emph{stationary} and \emph{axisymmetric}.\footnote{There are various conventions on 
the meaning of the words ``stationary'' and ``axisymmetric'' depending on the context. Let
us not worry about this here\ldots}
Traditionally, one denotes
\[
\Delta=r^2-2Mr+a^2.
\]
If $a=0$, the Kerr metric clearly reduces to Schwarzschild $(\ref{incords})$.

Maximal extensions of the Kerr metric were first constructed by Carter~\cite{cartersep}.
For parameter range $0\le |a|< M$, these maximal
extensions have black hole regions and white hole regions
bounded by future and past event horizons $\mathcal{H}^\pm$ meeting at a bifurcate sphere. 
The above coordinate system is defined in a
domain of outer communications, and the horizon will correspond
to the limit $r\to r_+$, where $r_+$ is the larger positive root of $\Delta=0$,
i.e.
\[
r_+=M+\sqrt{M^2-a^2}.
\]

Since the motivation of our study is the Cauchy problem for the Einstein equations,
it is more natural to consider not maximal extensions, but maximal developments of complete initial
data. (See Appendix~\ref{cauchyproblem}.)
In the Schwarzschild case, the maximal development of initial data on a Cauchy
surface $\Sigma$ as described previously coincides with maximally-extended Schwarzschild.
In Kerr, if we are to take an asymptotically flat (with two ends) hypersurface in a maximally extended
Kerr for parameter range $0< |a|<M$, then its maximal development will have a smooth
boundary in maximally-extended Kerr. This boundary is what is known as a \emph{Cauchy horizon}.
We have already discussed this phenomenon in Section~\ref{experience} 
in the context of strong
cosmic censorship.
The maximally extended Kerr solutions are quite bizarre, in particular,
they contain closed timelike curves.
 This is of no concern to us  here, however.
By definition, for us the term ``Kerr metric $(\mathcal{M},g_{M,a})$'' will
always denote the maximal development of a complete asymptotically
flat hypersurface $\Sigma$, as above, with two ends. One can depict the Penrose-diagramatic
representation of a suitable two-dimensional timelike slice of this solution as below:
\[
\input{kerr.pstex_t}
\]
This depiction coincides with the standard  Penrose diagram of the spherically symmetric
Reissner-Nordstr\"om metric~\cite{he:lssst, Townsend}.

With this convention in mind, we note that the dependence
of $g_{M,a}$ on $a$ is smooth in the range $0\le |a| <M$.
In particular, Kerr solutions with small $|a|\ll M$ can be viewed as close
to Schwarzschild.

One can see this explicitly in the subregion of interest to us
by passing to
a new system of coordinates. Define
\[
t^*= t+  \bar t(r)
\]
\[
\phi^*= \phi + \bar \phi(r)
\]
where 
\[
\frac{d \bar t}{dr}(r)= (r^2+a^2)/\Delta^2, \qquad 
\frac{d \bar\phi}{dr}(r) = a/\Delta.
\]
(These coordinates are often known as \emph{Kerr-star coordinates}.)
These coordinates are regular
across $\mathcal{H}^+\setminus \mathcal{H}^-$.\footnote{Of course, one again
needs two coordinate systems in view of the breakdown of spherical coordinates. We shall
suppress this issue in the discussion that follows.}
We may finally define a coordinate $r_{\rm Schw}=r_{\rm Schw}(r, a)$ such that
which takes $[r_+,\infty)\to [2M,\infty)$ with smooth dependence in $a$ and such that
$r_{\rm Schw}(r,0)$ is the identity map.
In particular, if we define $\Sigma_0$ by $\mathcal{D}=\{t^*=0\}$, and define $\mathcal{R}
=\mathcal{D}\cap\{t^*\ge 0\}$,
and fix $r_{\rm Schw}$, $t^*$, $\phi^*$ Schwarzschild coordinates, then
the metric functions of $g_{M,a}$ written in terms of these coordinates as defined
previously depend smoothly on $a$ for $0\le |a|<M$ in $\mathcal{R}$, 
and, for $a=0$, reduce
to the Schwarzschild metric form in $(r,t^*,\phi,\theta)$ coordinates where $t^*$ is defined
from Schwarzschild $t$
as above.

We note that $\partial_t=\partial_{t^*}$ in the intersection of the coordinate systems.
We immediately note that $\partial_{t}$ is
spacelike on the horizon, except where 
$\theta=0,\pi$, i.e.~on the axis of symmetry. Note that we
shall often abuse notation (as we just have done)
and speak of $\partial_t$ on the horizon or at $\theta=0$, where of course the
$(r, t,\theta, \phi)$ coordinate
system breaks down, and formally, this notation is meaningless.

In general, the part of the domain of outer communications plus horizon
where $\partial_t$ is spacelike in known
as the \emph{ergoregion}. It is bounded by a hypersurface known as the
\emph{ergosphere}. The ergosphere meets the horizon on the axis of symmetry $\theta=0,\pi$.

The ergosphere allows for a particle ``process'', originally discovered by Penrose~\cite{Penr2},
for extracting energy out of a black hole. This came to be known as the
\emph{Penrose process}. In his thesis, Christodoulou~\cite{ri}
discovered the existence of a
quantity--the so-called \emph{irreducible mass} of the black  hole--which he should to be always
nondecreasing in a Penrose process. The analogy between this quantity and entropy
led later to a subject known as ``black hole thermodynamics''~\cite{bardeen, beken}.
This is currently the subject of intense investigation from the point of view
of high energy physics.

In the context of the study of $\Box_g\psi=0$, we have already 
discussed in Section~\ref{pertsec} the effect of the ergoregion:
It is precisely the presence of the ergoregion that makes
our previous proof of boundedness for Schwarzschild not 
immediately generalise for Kerr.
Moreover, in contrast to
 the Schwarzschild case, there is no ``easy result'' that one can obtain
away from the horizon analogous to Section~\ref{eazy}. 
In fact, the problem of proving any sort of boundedness  statement for general
solutions to $\Box_g\psi=0$
on Kerr had been open until very recently. We will 
describe in the next section our recent resolution~\cite{dr6} of this problem.

\subsection{Boundedness for axisymmetric stationary black holes}
\label{BP}
We will derive a rather general boundedness theorem for
a class of axisymmetric stationary
black hole exteriors near Schwarzschild. The result (Theorem~\ref{kerbnd})
will include slowly rotating 
Kerr solutions with parameters $|a|\ll M$.

We have already  explained in what sense the Kerr metric is ``close'' to 
Schwarzschild in the region $\mathcal{R}$. 
Let us note that with respect to the  coordinates $r_{\rm Schw}$, $t^*$, $\phi^*$, $\theta$ in 
$\mathcal{R}$,
then $\partial_{t^*}$ and $\partial_{\phi^*}$ are Killing for both the Schwarzschild
and the Kerr metric. The class of metrics which 
will concern us
here are metrics defined on
$\mathcal{R}$ such that the metric functions are close to Schwarzschild 
in a suitable sense\footnote{This requires moving to an auxiliary coordinate system. See~\cite{dr6}.},
and $\partial_{t^*}$, $\partial_{\phi^*}$ are Killing, where these are defined with respect
to the ambient Schwarzschild coordinates.

There is however an additional geometric assumption we shall need, and this is motivated
by a geometric property of the Kerr spacetime, to be described in the section that follows
immediately.

\subsubsection{Killing fields on the horizon}
Let us here remark a geometric property
of the Kerr spacetime itself which turns out to be of utmost importance in what
follows: Let $V$ denote a null generator of $\mathcal{H}^+$.
Then
\begin{equation}
\label{spanprop}
V\in {\rm Span}\{\partial_{t^*},\partial_{\phi^*}\}.
\end{equation}
There is a deep reason why this is true. For stationary black holes with non-degenerate
horizons, a celebrated argument of Hawking retrieves a second Killing field in the direction
of the null generator $V$. Thus, if $\partial_{t^*}$ and $\partial_{\phi^*}$ span the complete set of Killing fields,
then $V$  must evidently be in their span.

In fact,
choosing $V$ accordingly we have
\begin{equation}
\label{accordin}
V=      \partial_{t^*} + (a/2Mr_+) \partial_{\phi^*}  
\end{equation}

(For the Kerr solution, we have that there exists a timelike direction in the span 
of $\partial_{t^*}$ and $\partial_{\phi^*}$ for all points outside the horizon. 
We shall not explicitly make reference
to this property, although in view of Section~\ref{epilogue},
one can infer this property $({\bf exercise})$  for small perturbations
of Schwarzschild of the type considered here, i.e., given any point $p$ outside
the horizon, there exists
a Killing field $V$ (depending on $p$) such that $V(p)$ is timelike.)

\subsubsection{The axisymmetric case}
\label{axicase}
From $(\ref{accordin})$, it 
follows that there is a constant $\omega_0>0$, depending only on the parameters $a$ and $M$,
 such that if 
\begin{equation}
\label{if....}
|\partial_{t^*}\psi|^2\ge \omega_0|\partial_{\phi^*}\psi|^2,
\end{equation}
on $\mathcal{H}^+$,
then the 
flux satisfies
\begin{equation}
\label{STOKENTRO}
J^T_\mu (\psi)n^\mu_{\mathcal{H}^+}\ge 0.
\end{equation}
Note also that, for fixed $M$, we can take 
\begin{equation}
\label{asato0}
\omega_0\to 0,\qquad{\rm as}\qquad a\to 0.
\end{equation}

There is an immediate application of $(\ref{if....})$. Let us restrict
for the moment to axisymmetric solutions, i.e.~to $\psi$ such 
that $\partial_\phi\psi=0$. It follows that $(\ref{if....})$ trivially holds.
As a result, our argument proving boundedness is stable, i.e.~Theorem~\ref{boundedn}
holds for axisymmetric solutions of the wave equation on Kerr spacetimes with $|a|\ll M$.
(See the exercise of Section~\ref{pertsec0}.)
In fact, the restriction on $a$ can be be removed  ({\bf Exercise}, or go directly to
Section~\ref{epilogue}).

Let us note that the above considerations make sense not only for Kerr but for the more
general class of metrics on $\mathcal{R}$ close to Schwarzschild such that
$\partial_{t^*}$, $\partial_{\phi^*}$ are Killing, $\mathcal{H}^+$ is null and
$(\ref{spanprop})$ holds. In particular, $(\ref{if....})$ implies $(\ref{STOKENTRO})$,
where in  $(\ref{asato0})$, the condition $a\to 0$ is replaced by the condition that the metric
is taken suitably close to Schwarzschild. 
{\bf The discussion which follows will refer to metrics satisfying these assumptions.}\footnote{They are summarised again
in the formulation of Theorem~\ref{kerbnd}.} For simplicity, the reader can 
specialise the discussion below
to the case of a Kerr metric with $|a|\ll M$.

\subsubsection{Superradiant and non-superradiant frequencies}
\label{snsf}
There is a more general setting where we can make use of
$(\ref{if....})$. Let us suppose for the time being
that we could take the Fourier transform $\hat{\psi}(\omega)$ of our solution $\psi$
in $t^*$ and then expand in azimuthal modes $\psi_m$, i.e.~modes associated to the
Killing vector field $\partial_{\phi^*}$.

If we were to restrict $\psi$ to the frequency range 
\begin{equation}
\label{nonsuperrange}
|\omega|^2 \ge \omega_0 m^2,
\end{equation}
then $(\ref{if....})$ and thus $(\ref{STOKENTRO})$ holds after integrating
along $\mathcal{H}^+$. In view of this,  frequencies in the range $(\ref{nonsuperrange})$ 
are known as \emph{nonsuperradiant frequencies}.
The frequency range
\begin{equation}
\label{superrange}
|\omega|^2\le \omega_0 m^2
\end{equation}
determines the so-called \emph{superradiant frequencies}. In the physics literature, 
the main difficulty of this problem
has traditionally been perceived to ``lie'' with these frequencies.

Let us pretend for the time being
that using the Fourier transform, we could indeed decompose
\begin{equation}
\label{adecomp}
\psi=\psi_{\mbox{$\sharp$}}+ \psi_{\mbox {$\flat$}}
\end{equation}
where $\psi_{\mbox{$\sharp$}}$ is supported in $(\ref{nonsuperrange})$, whereas
$ \psi_{\mbox {$\flat$}}$ is supported in $(\ref{superrange})$. 

In view of the  discussion immediately above and the
 comments of Section~\ref{axicase}, it is plausible to expect that one
could indeed prove boundedness for $\psi_{\mbox{$\sharp$}}$ 
in the manner of the proof of Theorem~\ref{boundedn}. In particular,
if one could localise the integrated version of $(\ref{STOKENTRO})$
to arbitrary sufficiently large subsegments
$\mathcal{H}(\tau',\tau'')$, one could obtain
\begin{equation}
\label{sharpheur}
\int_{\Sigma_\tau}J^{n_{\Sigma_\tau}}_\mu( \psi_{\mbox{$\sharp$}} )n^\mu_{\Sigma_\tau}
\le B
\int_{\Sigma_0}J^{n_{\Sigma_0}}_\mu( \psi_{\mbox{$\sharp$}} )n^\mu_{\Sigma_0}.
\end{equation}

This would leave $\psi_{\mbox{$\flat$}}$.
Since this frequency range does not suggest a direct boundedness argument, it is natural to 
revisit the decay mechanism of Schwarzschild. We have already discussed 
(see Section~\ref{pertsec01})  the instability
of the decay argument; this instability arose from the structure of the set of
trapped null geodesics. 
At the heuristic level, however, it is
easy to see that, if one can take $\omega_0$ sufficiently small, then solutions supported 
in $(\ref{superrange})$ cannot be trapped. In particular, for $|a|\ll M$, {\bf superradiant frequencies
for $\Box_g\psi=0$ on Kerr are not trapped}. This will be the fundamental observation 
allowing for the boundedness theorem. 
Let us see how this statement can be understood from the point of view of
energy currents.

\subsubsection{A stable energy estimate for superradiant frequencies}
\label{sheur}

We continue here our heuristic point of view, where
we assume a decomposition  $(\ref{adecomp})$ where $\psi_{\mbox {$\flat$}}$
is supported in $(\ref{superrange})$.
In particular, 
one has an inequality
\begin{equation}
\label{hasan}
\int_{-\infty}^\infty \int_0^{2\pi}
\omega_0^2 (\partial_\phi\psi_{\mbox{$\flat$}})^2 \, d\phi^*\, dt^* \ge
\int_{-\infty}^\infty \int_0^{2\pi} (\partial_t\psi_{\mbox{$\flat$}})^2
 \, d\phi^*\, dt^*
\end{equation}
for all $(r,\theta)$.
We shall see below that $(\ref{hasan})$ allows us easily to construct a suitable stable current
for Schwarzschild.

It may actually be a worthwhile {\bf exercise} for the reader to come up with 
a suitable current for themselves. The choice is actually quite
flexible in comparison with the considerations of Section~\ref{sie}. Our choice (see~\cite{dr6}) 
is defined by 
\begin{equation}
\label{newmultip}
J^{\bf X} = eJ^N+ J^{X_a}+J^{X_b,w_b}
\end{equation}
where here, $N$ is the vector field of Section~\ref{thevectorfields}, 
$X_a= f_a\partial_{r^*}$, with
\begin{eqnarray*}
f_a&=& -r^{-4}(r_0)^{4}, \qquad {\rm\ for\ } r\le r_0\\
f_a&=&-1, \qquad {\rm\ for\ }  r_0 \le r\le R_1,\\
f_a&=& -1+\int_{R_1}^r \frac{d\tilde{r}}{4\tilde{r}} \qquad {\rm\ for\ } R_1\le
  r\le R_2,\\
f_a&=&0 {\rm\ for\ } r\ge R_2,
\end{eqnarray*}
$X_b=f_b\partial_{r^*}$ with
\[
f_b = \chi(r^*)\pi^{-1}\int_0^{r^*}\frac{\alpha}{x^2+\alpha^2}
\]
and $\chi(r^*)$ is a smooth cutoff with $\chi=0$ for $r^*\le0$ and $\chi=1$ for $r^*\ge1$.
Here $r$ and $r^*$ are Schwarzschild coordinates.\footnote{Since we are dealing now
with general perturbations of Schwarzschild, we shall now use $r$ for what we previously
denoted by $r_{\rm Schw}$. Note that in the special case that
our metric is Kerr, this $r$ is different from the Boyer-Lindquist $r$.}
 The function $w_b$ is given by
\[
w_b=f'_b+\frac{2}r(1-2M/r)(1-M/r) f_b.
\]
The parameters $e$, $\alpha$, $r_0$, $R_1$, $R_2$  must be chosen accordingly!

Restricting to the range $(\ref{superrange})$, using
$(\ref{hasan})$, with some computation we would obtain
\begin{equation}
\label{inallt}
\int_{-\infty}^{\infty}\int_{0}^{2\pi} K^{\bf X}(\psi_{\mbox{$\flat$}})
\,d\phi^*\, dt^*
\ge 
b\int_{-\infty}^\infty \int_{0}^{2\pi}
 \chi J^{n_\Sigma}_\mu(\psi_{\mbox{$\flat$}}) n_\Sigma^\mu\,  d\phi^* \, dt^*,
\end{equation}
for all $(r,\theta)$.

The above inequality can immediately be seen to be stable to small\footnote{Of course,
in view of the degeneration towards $i^0$, it is important that smallness is understood
in a weighted sense.} axisymmetric, stationary
perturbations of the Schwarzschild metric.
That is to say, for such metrics,
 if $\psi_{\mbox{$\flat$}}$ is supported in $(\ref{superrange})$
(where frequencies here are defined by Fourier transform
in coordinates $t^*$, $\phi^*$),
then the inequality $(\ref{inallt})$ holds as before.
 In particular, $(\ref{inallt})$ holds for
Kerr for small $|a| \ll M$.

How would $(\ref{inallt})$ give boundedness for $\psi_{\mbox{$\flat$}}$? We need
in fact to suppose  something slightly stronger,
namely that $(\ref{inallt})$ holds localised to $\mathcal{R}(0,\tau)$. 
Consider the currents
\[
J= J^T + e_2 J^{\bf X}, \qquad K=\nabla^\mu J_\mu,
\]
where $e_2$ is a positive parameters, and $J^N$ is the current of Section~\ref{thevectorfields}.
Then, for metrics $g$ close enough to Schwarzschild, and
for  $e_2$ sufficiently small,
we would have from a localised $(\ref{inallt})$ that
\[
\int_{\mathcal{R}(0,\tau)} K(\psi_{\mbox{$\flat$}}) \ge 0,
\]
\[
\int_{\mathcal{H}(0,\tau)} J_\mu(\psi_{\mbox{$\flat$}}) n^\mu_{\mathcal{H}} \ge 0,
\]
and thus
\[
\int_{\Sigma_\tau} J_\mu(\psi_{\mbox{$\flat$}})n^\mu_{\Sigma_\tau}
\le \int_{\Sigma_0} J_\mu(\psi_{\mbox{$\flat$}}) n^\mu_{\Sigma_0}.
\]
Moreover, for $g$ sufficiently close to Schwarzschild and $e_1$, $e_2$ suitably defined,
we also have ({\bf exercise})
\[
\int_{\Sigma_\tau} J_\mu^{n_{\Sigma_\tau}}(\psi_{\mbox{$\flat$}})
 n^\mu \le B \int_{\Sigma_\tau} J_\mu(\psi _{\mbox{$\flat$}})n_{\Sigma_\tau}^\mu.
\]
We thus would obtain
\begin{equation}
\label{superheur}
\int_{\Sigma_\tau} J_\mu^{n_{\Sigma_\tau}}(\psi_{\mbox{$\flat$}})
n^\mu \le B \int_{\Sigma_0} J_\mu^{n_{\Sigma_0}}(\psi_{\mbox{$\flat$}})
n^\mu.
\end{equation}
Adding $(\ref{superheur})$ and $(\ref{sharpheur})$, we would obtain
\[
\int_{\Sigma_\tau} J_\mu^{n_{\Sigma_\tau}}(\psi)
n^\mu \le B \int_{\Sigma_0} J_\mu^{n_{\Sigma_0}}(\psi)
n^\mu
\]
provided that we could also estimate say
\begin{equation}
\label{nottrue}
\int_{\Sigma_0} J_\mu^{n_{\Sigma_0}}(\psi_{\mbox{$\sharp$}})
n^\mu \le B\int_{\Sigma_0} J_\mu^{n_{\Sigma_0}}(\psi)
n^\mu.
\end{equation}

\subsubsection{Cutoff and decomposition} 
\label{cutoffs}
Unfortunately, things are not so simple!

For one thing, to take the Fourier transform necessary
to decompose in frequency, one would need to
know a priori that $\psi(t^*,\cdot)$ is in $L^2(t^*)$.
What we want to prove at this stage is much less.
A priori, $\psi$ can grow exponentially in $t^*$.
In order to apply the above, one must cut off the solution appropriately in time.

This is achieved as follows. For definiteness,
define $\Sigma_0$ to be $t^*=0$, and
$\Sigma_\tau$ as before. We will also need two auxiliary families of hypersurfaces
defined as follows. (The motivation for considering these will be discussed in
Section~\ref{bootstrapsec}.) Let  $\chi$
be  a cutoff such that $\chi(x)=0$ for $x\ge 0$ and $\chi=1$ for $x\le -1$,
and define $t^\pm$
by
\[
t^{+}= t^*- \chi(-r+R)(1+r-R)^{1/2}
\]
and 
\[
t^{-}= t^* + \chi(-r+R)(1+r-R)^{1/2}
\]
where $R$ is a large constant, which must be chosen appropriately.
Let us define then
\[
\Sigma^{+}(\tau)\doteq\{t^{+} =\tau\},
\qquad
\Sigma^{-}(\tau)\doteq\{t^{-} =\tau\}.
\]
Finally, we define
\[
\mathcal{R}(\tau_1,\tau_2) =\bigcup_{\tau_1\le \tau\le \tau_2} \Sigma(\tau),
\]
\[
\mathcal{R}^+(\tau_1,\tau_2)=\bigcup_{\tau_1\le \tau\le \tau_2} \Sigma^+(\tau),
\]
\[
\mathcal{R}^-(\tau_1,\tau_2)=\bigcup_{\tau_1\le \tau\le \tau_2} \Sigma^-(\tau).
\]

Let
$\xi$ now be a cutoff function
such that $\xi=1$ in $J^+(\Sigma^-_1)\cap J^-(\Sigma^+_{\tau-1})$,
and $\xi=0$ in $J^+(\Sigma^+_\tau)\cap J^-(\Sigma^-_0)$. 
We may finally define
\[
\psi_{\hbox{\Rightscissors}}=  \xi  \psi.
\]
The function $\psi_{\hbox{\Rightscissors}}$ 
is a solution of the inhomogeneous equation
\[
\Box_g
\psi_{\hbox{\Rightscissors}}
=F,\qquad F=2 \nabla^\a\xi\, \nabla_\a \psi + \Box_g\xi \, \psi.
\]
Note that $F$ is supported in $\mathcal{R}^-(0,1)\cup\mathcal{R}^+(\tau-1,\tau)$.

Another problem is that sharp cutoffs in frequency behave poorly under localisation. 
We thus do the following:
Let $\zeta$ be a smooth cutoff supported in $[-2,2]$ with the property
that $\zeta=1$ in $[-1,1]$, and let $\omega_0>0$ be a parameter to be determined
later. 
For an arbitrary $\Psi$ of compact support in $t^*$, define
\[
\Psi_{\mbox{$\flat$}} (t^*,\cdot)
\doteq\sum_{m\ne 0} e^{im \phi^*} \int_{-\infty}^{\infty} \zeta((\omega_0m)^{-1}\omega)\, \hat\Psi_{m}(\omega,\cdot) \, e^{i\omega t^*}d\omega,
\]
\[
\Psi_{\mbox{$\sharp$}}(t^*,\cdot) \doteq
\Psi_0+
 \sum_{m\ne 0} e^{im \phi^*} 
 \int_{-\infty}^{\infty} \left(1-\zeta((\omega_0m)^{-1}\omega)\right)\, \hat\Psi_{m}
 (\omega,\cdot)\, e^{i\omega t^*}d\omega.
\]
Note of course that $\Psi_{\mbox{$\sharp$}}+\Psi_{\mbox{$\flat$}}=\Psi$.
We shall use the notation $\psi_{\mbox{$\flat$}}$ for $(\psi_{\hbox{\Rightscissors}})_{\mbox{$\flat$}}$
and $\psi_{\mbox{$\sharp$}}$ for $(\psi_{\hbox{\Rightscissors}})_{\mbox{$\sharp$}}$.
Note that $\psi_{\mbox{$\flat$}}$, $\psi_{\mbox{$\sharp$}}$ satisfy
\begin{equation}
\label{inhomo}
\Box_g\psi_{\mbox{$\flat$}}=F_{\mbox{$\flat$}}, \qquad
\Box_g\psi_{\mbox{$\sharp$}}=F_{\mbox{$\sharp$}}.
\end{equation}

\subsubsection{The bootstrap}
\label{bootstrapsec}
With $\psi_{\mbox{$\flat$}}$, $\psi_{\mbox{$\sharp$}}$ well defined,
we now try to fill in the argument heuristically outlined before.

We wish to show the boundedness of
\begin{equation}
\label{spatial}
{\bf q}\doteq \sup_{0\le \bar\tau\le \tau} \int_{\Sigma_{\bar\tau}} J^N_\mu n^\mu.
\end{equation}
We will argue by continuity in $\tau$. We have already seen heuristically
how to obtain a bound for ${\bf q}$ in Sections~\ref{snsf} and~\ref{sheur}. When interpreted for the
$\psi_{\mbox{$\flat$}}$, $\psi_{\mbox{$\sharp$}}$ defined above, these arguments
produce error terms from:
\begin{itemize}
\item
the inhomogeneous terms $F_{\mbox{$\flat$}}$, $F_{\mbox{$\sharp$}}$ from $(\ref{inhomo})$
\item
the fact that we wish to 
localise        estimates $(\ref{STOKENTRO})$ and $(\ref{hasan})$
 to subregions  $\mathcal{H}^+(\tau',\tau'')$ and $\mathcal{R}(\tau',\tau'')$ 
 resepectively
\item
the fact that $(\ref{nottrue})$ is not exactly true.
\end{itemize}
These error terms can be controlled by ${\bf q}$ itself.
For this, one studies carefully the time-decay of
$F_{\mbox{$\flat$}}$, $F_{\mbox{$\sharp$}}$
away from the cutoff region $\mathcal{R}^-(0,1)\cup \mathcal{R}^+(\tau-1,\tau)$
using classical properties of the Fourier transform.
An important subtlety arises from the presence of $0$'th order terms
in $\psi$, and it is here that the divergence of the region   
$\mathcal{R}^\pm$ from $\mathcal{R}(0,\tau)$ is exploited to exchange decay
in $\tau$ and $r$.

To close the continuity argument, it is essential not only that the error terms be
controlled by ${\bf q}$ itself, but  that a small constant is retrieved, i.e.~that the
error terms are controlled by $\epsilon{\bf q}$, so that they can be absorbed.
For this, use is made of the fact that for metrics in the allowed class
sufficiently close to Schwarzschild (in the Kerr case,  for $|a|\ll M$),
one can control a priori the exponential 
growth rate of $(\ref{spatial})$ to be small. See~\cite{dr6}.

\subsubsection{Pointwise bounds}
Having proven the uniform boundedness of $(\ref{spatial})$,
one argues as in the proof of Theorem~\ref{boundedn}  to obtain higher order energy
and pointwise bounds.
In particular, the positivity property in the computation of Proposition~\ref{poscompu}  is stable.
(It turns out that this positivity property persists in fact for much more general
black hole spacetimes and
there is in fact a geometric reason for this! See Chapter~\ref{epilogue}.)

\subsubsection{The boundedness theorem}
We have finally
\begin{theorem}
\label{kerbnd}
Let $g$ be a metric defined on the differentiable manifold
$\mathcal{R}$ with stratified boundary $\mathcal{H}^+\cup \Sigma_0$, and let
$T$ and $\Phi=\Omega_1$ be Schwarzschild Killing fields.
Assume
\begin{enumerate}
\item
\label{closeNESS}
$g$ is sufficiently close to Schwarzschild in an appropriate sense
\item
$T$ and $\Phi$ are Killing with respect to $g$
\item
$\mathcal{H}^+$ is null with respect to $g$ and
$T$ and $\Phi$ span the null generator of $\mathcal{H}^+$.
\end{enumerate}
Then the statement of Theorem~\ref{boundedn} holds.
\end{theorem}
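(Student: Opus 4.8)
The plan is to reduce the boundedness statement to the two pieces into which $\psi$ naturally splits according to the sign of the flux $J^T_\mu(\psi)n^\mu_{\mathcal{H}^+}$ through the horizon. The only obstruction to repeating the red-shift argument of Theorem~\ref{boundedn} verbatim is that, once $T$ becomes spacelike in the ergoregion, the a priori estimate $(\ref{astera})$ for $\int_{\Sigma_\tau}J^T_\mu n^\mu$ fails and the horizon flux may turn negative. The geometric input $(\ref{spanprop})$--$(\ref{accordin})$ tells us that the null generator of $\mathcal{H}^+$ lies in ${\rm Span}\{\partial_{t^*},\partial_{\phi^*}\}$, and hence there is a threshold $\omega_0>0$, with $\omega_0\to0$ as $g$ approaches Schwarzschild, so that whenever $(\ref{if....})$ holds the flux $(\ref{STOKENTRO})$ is nonnegative. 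First I would therefore decompose, as in $(\ref{adecomp})$, $\psi=\psi_{\mbox{$\sharp$}}+\psi_{\mbox{$\flat$}}$, with $\psi_{\mbox{$\sharp$}}$ supported in the non-superradiant range $(\ref{nonsuperrange})$ and $\psi_{\mbox{$\flat$}}$ in the superradiant range $(\ref{superrange})$.

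For the non-superradiant part $\psi_{\mbox{$\sharp$}}$ the idea is that $(\ref{STOKENTRO})$ restores exactly the sign structure used in the proof of Theorem~\ref{boundedn}: the $T$-flux through $\mathcal{H}^+$ is nonnegative, so the $N$-multiplier estimate of Section~\ref{thevectorfields}, together with Corollary~\ref{Nproperties} and the red-shift positivity of $K^N$ near the horizon, closes to give $(\ref{sharpheur})$. For the superradiant part $\psi_{\mbox{$\flat$}}$, where no sign on the horizon is available, I would instead exploit the fact that for $g$ close to Schwarzschild these frequencies are untrapped: the inequality $(\ref{hasan})$ following from $(\ref{superrange})$ allows one to absorb the angular-derivative term in the photon-sphere-type current $(\ref{newmultip})$ and to obtain the coercive spacetime estimate $(\ref{inallt})$. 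Crucially, $(\ref{inallt})$ is stable to the admissible perturbations, so combining it with $J^T$ and a small multiple $e_2 J^{\bf X}$ as in Section~\ref{sheur} yields a current whose bulk and horizon terms both have a good sign, giving $(\ref{superheur})$. Adding $(\ref{sharpheur})$ and $(\ref{superheur})$ and controlling the crossterm $(\ref{nottrue})$ would produce the desired uniform bound on $\int_{\Sigma_\tau}J^N_\mu n^\mu$.

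The technical heart of the argument is making the frequency decomposition legitimate. Since a priori $\psi$ may grow exponentially in $t^*$, it need not lie in $L^2(t^*)$ and cannot be Fourier transformed directly; following Section~\ref{cutoffs} I would first cut $\psi$ off smoothly in time against the auxiliary foliations $\Sigma^{\pm}(\tau)$, producing $\psi_{\hbox{\Rightscissors}}=\xi\psi$ solving an inhomogeneous equation with source $F$ supported only in the thin slabs $\mathcal{R}^-(0,1)\cup\mathcal{R}^+(\tau-1,\tau)$, and only then split $\psi_{\hbox{\Rightscissors}}$ using the smooth frequency cutoff $\zeta((\omega_0 m)^{-1}\omega)$ into $\psi_{\mbox{$\flat$}}$, $\psi_{\mbox{$\sharp$}}$ as in $(\ref{inhomo})$. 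The whole scheme is then run as a continuity argument in $\tau$ for the quantity ${\bf q}=\sup_{0\le\bar\tau\le\tau}\int_{\Sigma_{\bar\tau}}J^N_\mu n^\mu$ of $(\ref{spatial})$. Higher-order and pointwise bounds follow exactly as in Theorem~\ref{boundedn}, by commuting with $\hat{Y}$ (Proposition~\ref{poscompu}, whose positivity is stable), $T$, $\Phi$ and applying elliptic and Sobolev estimates.

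The step I expect to be the main obstacle is closing this continuity argument. The error terms generated by the time cutoff (through $F_{\mbox{$\flat$}}$, $F_{\mbox{$\sharp$}}$), by localising $(\ref{STOKENTRO})$ and $(\ref{inallt})$ to finite subregions, and by the failure of $(\ref{nottrue})$ must all be bounded by ${\bf q}$ itself, and---decisively---with a small constant $\epsilon{\bf q}$ so that they can be absorbed into the left-hand side. This requires a delicate analysis of the time-decay of $F_{\mbox{$\flat$}}$, $F_{\mbox{$\sharp$}}$ away from the cutoff slabs via Fourier estimates, the subtle exchange of $\tau$- and $r$-decay in the zeroth-order terms (this is precisely why the divergent hypersurfaces $\Sigma^{\pm}$ rather than $\Sigma_\tau$ are used), and an a priori bound---available for $|a|\ll M$, or more generally for $g$ near Schwarzschild---forcing the exponential growth rate of ${\bf q}$ to be small enough that the absorption succeeds.
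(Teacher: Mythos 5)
Your proposal follows essentially the same route as the paper's argument: the superradiant/non-superradiant frequency split after a time cutoff against the auxiliary foliations $\Sigma^{\pm}$, the sign of the horizon flux $(\ref{STOKENTRO})$ for $\psi_{\mbox{$\sharp$}}$, the stable untrapped current $(\ref{inallt})$ for $\psi_{\mbox{$\flat$}}$, and the continuity argument in $\tau$ for ${\bf q}$ with error terms absorbed at size $\epsilon{\bf q}$. You have also correctly identified the genuinely delicate step (closing the bootstrap via the decay of $F_{\mbox{$\flat$}}$, $F_{\mbox{$\sharp$}}$ and the a priori smallness of the exponential growth rate), so nothing essential is missing.
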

See~\cite{dr6} for the precise formulation of the closeness assumption~\ref{closeNESS}.
\begin{corollary}
The result applies to Kerr, and to the more general Kerr-Newman family (solving Einstein-Maxwell),
for parameters $|a|\ll M$ (and also $|Q|\ll M$ in the Kerr-Newman case).
\end{corollary}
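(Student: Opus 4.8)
The plan is to verify that the Kerr metrics $g_{M,a}$ with $|a|\ll M$, and the Kerr--Newman metrics $g_{M,a,Q}$ with $|a|,|Q|\ll M$, fall into the class of metrics on $\mathcal{R}$ to which Theorem~\ref{kerbnd} applies; the boundedness conclusion (that is, the statement of Theorem~\ref{boundedn}) is then immediate. Thus nothing new needs to be proven about the wave equation itself. The task is purely to check, for these explicit families, the three hypotheses of Theorem~\ref{kerbnd}, all of which are statements about the geometry of $g$ and the causal structure of $\mathcal{H}^+$ on $\mathcal{R}$.

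First I would assemble the closeness hypothesis~\ref{closeNESS}, which is essentially already recorded in Section~\ref{thekerrmetric}. After passing to Kerr-star coordinates $(t^*,\phi^*)$ --- regular across $\mathcal{H}^+\setminus\mathcal{H}^-$ --- and introducing $r_{\rm Schw}=r_{\rm Schw}(r,a)$ carrying $[r_+,\infty)$ onto $[2M,\infty)$ with $r_{\rm Schw}(\cdot,0)={\rm id}$, the components of $g_{M,a}$ in the ambient Schwarzschild coordinates $(r_{\rm Schw},t^*,\phi^*,\theta)$ depend smoothly on $a$ throughout $\mathcal{R}$ and reduce to the Schwarzschild components at $a=0$. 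Consequently $g_{M,a}-g_{M,0}=O(|a|/M)$, so that taking $|a|/M$ small makes the difference small. The genuine point to check is that this smallness holds in the \emph{weighted} norm demanded by the precise formulation of hypothesis~\ref{closeNESS} (in the auxiliary coordinate system of~\cite{dr6}): since that norm degenerates towards $i^0$, one must verify that the $a$-derivatives of the metric coefficients carry the decay in $r$ built into the norm. This reduces to reading off the asymptotic behaviour of the Kerr corrections to Schwarzschild as $r\to\infty$, which is elementary from the Boyer-Lindquist form.

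Next I would check hypotheses (2) and (3). Since Kerr is stationary and axisymmetric, $\partial_t$ and $\partial_\phi$ are Killing; and since the Kerr-star transformation alters only $t$ and $\phi$ by functions of $r$ alone, one has $\partial_{t^*}=\partial_t$ and $\partial_{\phi^*}=\partial_\phi$, so the Schwarzschild Killing fields $T=\partial_{t^*}$ and $\Phi=\partial_{\phi^*}$ remain Killing for $g_{M,a}$. The surface $\mathcal{H}^+=\{r=r_+\}$, $r_+=M+\sqrt{M^2-a^2}$, is null, and its null generator lies in ${\rm Span}\{\partial_{t^*},\partial_{\phi^*}\}$: this is precisely $(\ref{spanprop})$, with the explicit generator $(\ref{accordin})$, namely $V=\partial_{t^*}+(a/2Mr_+)\partial_{\phi^*}$. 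For an explicit metric this span property is confirmed by direct computation, without invoking Hawking's rigidity theorem. With all three hypotheses in hand, Theorem~\ref{kerbnd} yields the conclusion for Kerr.

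Finally I would repeat the verification for the two-parameter Kerr--Newman family. Here $\Delta=r^2-2Mr+a^2+Q^2$ and $r_+=M+\sqrt{M^2-a^2-Q^2}$; the metric depends smoothly on $(a,Q)$ on $a^2+Q^2<M^2$ and reduces to Schwarzschild at $a=Q=0$, so the closeness argument goes through verbatim with $|a|/M,|Q|/M$ small, using the analogous Kerr-star coordinates and $r_{\rm Schw}(r,a,Q)$. The fields $\partial_{t^*}=\partial_t$, $\partial_{\phi^*}=\partial_\phi$ are again Killing, the horizon is null, and its generator $V=\partial_{t^*}+\frac{a}{2Mr_+-Q^2}\partial_{\phi^*}$ lies in their span. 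That Kerr--Newman solves Einstein--Maxwell rather than the vacuum equations is irrelevant to Theorem~\ref{kerbnd}, whose hypotheses concern only the geometry. I expect the main---indeed the only non-routine---obstacle to be the weighted-closeness check of the first step: confirming that the precise norm of~\cite{dr6} governing hypothesis~\ref{closeNESS} is controlled by $|a|/M$ (and $|Q|/M$), which requires matching the decay rates of the off-Schwarzschild metric coefficients toward spatial infinity to the weights of that norm.
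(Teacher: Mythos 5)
Your proposal is correct and follows essentially the same route as the paper, which treats the corollary as an immediate consequence of Theorem~\ref{kerbnd} once one verifies its hypotheses exactly as you do: smooth dependence of the metric components on the parameters in the fixed $(r_{\rm Schw},t^*,\phi^*,\theta)$ coordinates (with the precise weighted closeness delegated to~\cite{dr6}), the Killing fields $T=\partial_{t^*}$, $\Phi=\partial_{\phi^*}$, and the span property $(\ref{spanprop})$ with explicit generator $(\ref{accordin})$, all carried over verbatim to Kerr--Newman. Your Kerr--Newman generator $V=\partial_{t^*}+\frac{a}{2Mr_+-Q^2}\partial_{\phi^*}$ is the correct analogue, since $r_+^2+a^2=2Mr_+-Q^2$ there, and your observation that the Einstein--Maxwell (rather than vacuum) character of the solution is irrelevant to the purely geometric hypotheses is exactly the point.
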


Thus, we have quantitative pointwise and energy bounds for $\psi$ and arbitrary derivatives on
slowly rotating Kerr and Kerr-Newman exteriors.

\subsection{Decay for Kerr}
\label{KDsec}
To obtain decay results analogous to Theorem~\ref{Schdec}, one needs to 
understand trapping. For general perturbations of Schwarzschild of the class considered
in Theorem~\ref{kerbnd}, 
it is not a priori clear what stability properties one can infer about the nature of the
trapped set, and how these can be exploited. 
But for the Kerr family itself, the trapping structure can easily be understood, in view of the 
complete integrability of geodesic flow discovered by Carter~\cite{cartersep}.
The codimensionality of the trapped set persists, but in contrast to the Schwarzschild case
where trapped null geodesics all approach the codimensional-$1$ subset $r=3M$ of
spacetime, in Kerr, this codimensionality must be viewed in phase space.

\subsubsection{Separation}   
There is a convenient way of doing phase space analysis in Kerr spacetimes, namely,
as discovered by Carter~\cite{cartersep2}, 
the wave equation can be separated.  Walker and Penrose~\cite{walker} later showed that
both the complete integrability of geodesic flow and the separability of the wave equation
have their fundamental origin in the presence of a 
\emph{Killing tensor}.\footnote{See~\cite{chong, kubiznak} 
for recent higher-dimensional generalisations
of these properties.} In fact,  as we shall see, in view of its intimate relation with the 
integrability of geodesic flow, Carter's separation of $\Box_g$ immediately captures
the codimensionality of the trapped set.

The separation of the wave equation requires taking the Fourier transform, and
then expanding into oblate spheroidal harmonics. 
As before, taking the Fourier transform requires cutting off in time.
We shall here do the cutoff, however, in a somewhat different fashion.

Let ${\Sigma}_\tau$ be defined specifically as $t^*=\tau$. Given $\tau'<\tau$, 
define ${\mathcal{R}}(\tau',\tau)$ as before,
and let $\xi$ be a cutoff function as in Section~\ref{cutoffs}, but with
$\Sigma_{\tau'+1}$ replacing $\Sigma^-_1$, $\Sigma_{\tau'}$ replacing $\Sigma^-_0$,
and $\Sigma_\tau$ replacing $\Sigma^+_\tau$, $\Sigma_{\tau-1}$ replacing
$\Sigma^+_{\tau-1}$.
Define as before
\[
\psi_{\hbox{\Rightscissors}}=  \xi  \psi.
\]
The function $\psi_{\hbox{\Rightscissors}}$ 
is a solution of the inhomogeneous equation
\[
\Box_g
\psi_{\hbox{\Rightscissors}}
=F,\qquad F=2 \nabla^\a\xi\, \nabla_\a \psi + \Box_g\xi \, \psi.
\]
Note that $F$ is supported in ${\mathcal{R}}(\tau',\tau'+1)\cup
{\mathcal{R}}(\tau-1,\tau)$.

Since $\psi_{\hbox{\Rightscissors}}$ is compactly supported in $t^*$ we may
consider its Fourier transform 
$\hat\psi_{\hbox{\Rightscissors}}=\hat\psi_{\hbox{\Rightscissors}}(\omega,\cdot)$.
We may now decompose
\[
\hat\psi_{\hbox{\Rightscissors}}(\omega,\cdot)=
\sum_{m,\ell}R^\omega_{m\ell}(r) S_{m\ell}(a\omega,\cos\theta)e^{im\phi^*},
\]
\[
\hat{F}(\omega,\cdot)=
\sum_{m,\ell}F^\omega_{m\ell}(r) S_{m\ell}(a\omega,\cos\theta)e^{im\phi^*},
\]
where $S_{m\ell}$ are the oblate spheroidal harmonics.
For each $m\in \mathbb Z$, and fixed $\omega$,
these are a basis of
eigenfunctions $S_{m\ell}$ satisfying
\[
-\frac 1{\sin\theta} \frac{d}{d\theta} \left (\sin\theta \frac{d}{d\theta}S_{m\ell}\right)+\frac {m^2}{\sin^2\theta}
S_{m\ell}- a^2\omega^2 \cos^2\theta S_{m\ell} = \lambda_{m\ell} S_{m\ell} ,
\]
and, in addition,
satisfying the orthogonality conditions
with respect to the $\theta$ variable,
\[
\int_0^{2\pi} d\varphi 
\int_{-1}^1 d(\cos\theta) e^{im\phi^*} 
S_{m\ell} (a\omega,\cos\theta) \,e^{-im'\phi^*} S_{m'\ell'}(a\omega,\cos\theta)=
\de_{mm'} \de_{\ell\ell'}.
\]
Here, the $\lambda_{m\ell}(\omega)$ are the eigenvalues associated with
the harmonics $S_{m\ell}$. 
Each of the functions $R_{m\ell}^\omega(r)$ is a solution of the following problem
$$
\Delta \frac{d}{dr} \left (\Delta \frac{R_{m\ell}^\omega}{dr}\right) + \left (a^2m^2 + (r^2+a^2)^2\omega^2-\Delta(\lambda_{m\ell}+a^2\omega^2) \right) R_{m\ell}^\omega=(r^2+a^2)\Delta
F_{m\ell}^\omega.
$$

Note that if $a=0$, we typically label $S_{m\ell}$ by $\ell\ge |m|$
such that
\[
\lambda_{m\ell}(\omega)= \ell(\ell+1)/2.
\]
With this choice, $S_{m\ell}$ coincides with the standard spherical
harmonics $Y_{m\ell}$.

Given any $\omega_1>0$, $\lambda_1>0$ 
then we can choose $a$ such that
for $|\omega|\le \omega_1$, $\lambda_{ m\ell}\le \lambda_1$, then
\[
|\lambda_{m\ell}- \ell(\ell+1)/2 |\le  \epsilon.
\]

Rewriting the equation for the oblate spheroidal function
$$
-\frac 1{\sin\theta} \frac{d}{d\theta} \left (\sin\theta \frac{d}{d\theta}S_{m\ell}\right)+\frac {m^2}{\sin^2\theta}
S_{m\ell} = \lambda_{m\ell} S_{m\ell} + a^2\omega^2 \cos^2\theta S_{m\ell},
$$
the smallest eigenvalue of the operator on the left hand side of the above equation is $m(m+1)$.
This implies that 
\begin{equation}
\label{Totherestimate}
\lambda_{m\ell} \ge m(m+1)-a^2\omega^2.
\end{equation} 
This will be all that we require about $\lambda_{m\ell}$. 
For a more detailed analysis of $\lambda_{m\ell}$, see~\cite{spec}.

\subsubsection{Frequency decomposition}
Let $\zeta$ be a
sharp cutoff function such that $\zeta=1$ for $|x|\le 1$ and $\zeta=0$ for $|x|> 1$.
Note that 
\begin{equation}
\label{sharpcutoff}
\zeta^2=1.
\end{equation}
Let $\omega_1$, $\lambda_1$ be (potentially large) constants to be determined, and $\lambda_2$
be a (potentially small) constant to be determined.

Let us define
\[
\psi_{\mbox {$\flat$}}=
\int_{-\infty}^\infty
\zeta(\omega/\omega_1) 
\sum_{m,\ell:\lambda_{m\ell}(\omega)\le \lambda_1} 
R^\omega_{m\ell}(r) S_{m\ell}(a\omega,\cos\theta) e^{im\phi^*} e^{i\omega t^*} d\omega,
\]
\[
\psi_{\lessflat}=\int_{-\infty}^\infty
\zeta(\omega/\omega_1)
 \sum_{m,\ell:\lambda_{m\ell}(\omega)> \lambda_1} 
R^\omega_{m\ell}(r) S_{m\ell}(a\omega,\cos\theta) e^{im\phi^*} e^{i\omega t^*} d\omega,
\]
\[
\psi_{\mbox{$\natural$}}=\int_{-\infty}^\infty(1-\zeta(\omega/\omega_1)) \sum_{m,\ell:\lambda_{m\ell}(\omega)\ge  \lambda_2\omega^2} 
R^\omega_{m\ell}(r) S_{m\ell}(a\omega,\cos\theta) e^{im\phi^*} e^{i\omega t^*} d\omega,
\]
\[
\psi_{\mbox{$\sharp$}}=\int_{-\infty}^\infty(1-\zeta(\omega/\omega_1)) \sum_{m,\ell:\lambda_{m\ell}(\omega)< \lambda_2\omega^2} 
R^\omega_{m\ell}(r) S_{m\ell}(a\omega,\cos\theta) e^{im\phi^*} e^{i\omega t^*}  d\omega.
\]

We have clearly
\[
\psi_{\hbox{\Rightscissors}}= \psi_{\mbox {$\flat$}}+ \psi_{\lessflat} + \psi_{\mbox{$\natural$}} + \psi_{\mbox{$\sharp$}}.
\]

For quick reference, we note: 
\begin{itemize}
\item
$\psi_{\mbox {$\flat$}}$ is supported in $|\omega|\le \omega_1$, $\lambda_{m\ell} \le \lambda_1$,
\item
$\psi_{\lessflat}$ is supported in $|\omega|\le \omega_1$, $\lambda_{m\ell} > \lambda_1$,
\item
$\psi_{\mbox {$\natural$}}$ is supported in $|\omega|\ge \omega_1$, 
$\lambda_{m\ell} \ge \lambda_2\omega^2$
and
\item
$\psi_{\mbox{$\sharp$}}$ is supported in $|\omega|\ge \omega_1$, $\lambda_{m\ell}< \lambda_2\omega^2$.
\end{itemize}

\subsubsection{The trapped frequencies}
\label{kloubi}

Trapping takes place in $\psi_{\mbox {$\natural$}}$. We show here how to 
construct a multiplier for this frequency range.

Defining a coordinate $r^*$ by
\[
\frac{dr^*}{dr}=\frac{r^2+a^2}{\Delta}
\]
and setting
\[
u(r)=(r^2+a^2)^{1/2}
 R^\omega_{m\ell} (r),\qquad 
 H(r)=\frac{\Delta F^\omega_{m\ell}(r)}{(r^2+a^2)^{1/2}}, 
\]
then $u$ satisfies
\[
\frac{d^2}{(dr^*)^2}u+(\omega^2 - V^\omega_{m\ell }(r))u = H
\]
where
\[
V^\omega_{m \ell}(r)= \frac{4Mram\omega-a^2m^2+\Delta (\lambda_{m\ell}+\omega^2a^2)}{(r^2+a^2)^2}
+\frac{\Delta(3r^2-4Mr+a^2)}{(r^2+a^2)^3}
-\frac{3\Delta^2 r^2}{(r^2+a^2)^4}.
\]
Consider the following quantity 
$$
Q=
f \left ( \left|\frac {du}{dr^*}\right|^2 + (\omega^2-V ) |u|^2\right) + \frac {df}{dr^*} 
{\rm Re}\left(\frac{du}{dr^*} \bar u\right)-
\frac 12 \frac{d^2f}{{dr^*}^2} |u|^2.
$$
Then, with the notation $'=\frac{d}{dr^*}$,
\begin{equation}\label{eq:Q}
Q'= 2 f' |u'|^2  - f V' |u|^2 + {\rm Re}(2 f \bar{H} u'+f' \bar{H} u)-\frac 12 f''' |u|^2.
\end{equation}

For $\psi_{\mbox {$\natural$}}$, we have
\begin{equation}
\label{yilyil}
\lambda_{m\ell} +\omega^2 a^2 \ge (\lambda_2+a^2) \omega^2 \ge (\lambda_2 +a^2)\omega_1^2.
\end{equation}

We set 
$$
V_0=({\lambda_{m\ell} +\omega^2 a^2})\frac {r^2-2Mr}{(r^2+a^2)^2}
$$
so that 
$$
V_1=V-V_0= \frac{4Mram\omega -a^2m^2+a^2(\lambda_{m\ell}+\omega^2a^2)}{(r^2+a^2)^2}+
\frac{\Delta(3r^2-4Mr+a^2)}{(r^2+a^2)^3} -\frac{3\Delta^2 r^2}{(r^2+a^2)^4}.
$$
Using $(\ref{Totherestimate})$, $(\ref{yilyil})$, we easily see that 
\begin{eqnarray}
\label{AP}
\nonumber
r^3 |V_1'|+ \left|\left (\frac {(r^2+a^2)^4}{\Delta r^2} V_1'\right)'\right|&\le& C {\Delta} r^{-2}
\left(|a m\omega| + a^2 (\lambda_{m\ell}+a^2\omega^2)+1\right)\\
&\le& \epsilon \Delta r^{-2}(\lambda_{m\ell}+a^2\omega^2),
\end{eqnarray}
where $\epsilon$ can be made arbitrarily small, if $\omega_1$ is chosen sufficiently large, and $a$
is chosen $a<\epsilon$.
On the other hand
\begin{align}
\label{elinde}
\nonumber
V_0'&= 2\frac{\Delta}{(r^2+a^2)^4} ({\lambda_{m\ell} +\omega^2 a^2}) \left ( (r-M)(r^2+a^2) - 2r(r^2-2Mr)\right)
\\ &= -2\frac{\Delta r^2}{(r^2+a^2)^4} \left({\lambda_{m\ell} +\omega^2 a^2}\right)
\left(r-3M+a^2\frac {r-M}{r^2}
\right).
\end{align}
This computation implies that $V_0'$ has a simple zero in the $a^2$ neighborhood of $r=3M$. 
Furthermore,
$$
\left(\frac{(r^2+a^2)^4}{\Delta r^2} V_0'\right)'\le - \Delta r^{-2} ({\lambda_{m\ell} +\omega^2 a^2}).
$$
From the above and $(\ref{AP})$, it follows that for $\omega_1$
sufficiently large and $a$ sufficiently small,
we have
$$
\left(\frac{(r^2+a^2)^4}{\Delta r^2} V'\right)'\le -\frac12
\Delta r^{-2}({\lambda_{m\ell} +\omega^2 a^2}).
$$
This alone implies that $V'$ has \emph{at most} a simple zero.

To show that $V'$ indeed has a zero we examine the boundary values at $r_+$ and $\infty$.
From $(\ref{elinde})$ we see that
 \[
 \frac{(r^2+a^2)^4}{\Delta r^2}V_0'\sim
 C({\lambda_{m\ell} +\omega^2 a^2})
 \]
 for some positive constant $C$ 
on the horizon and 
\[
 \frac{(r^2+a^2)^4}{\Delta r^2}V_0'\sim
-2r({\lambda_{m\ell} +\omega^2 a^2})
\]
near $r=\infty$. 
On the other hand, from the inequality as applied to the first term on the right hand
side of $(\ref{AP})$, it follows that
\[
\left|\frac{(r^2+a^2)^4}{\Delta r^2}V_1'\right| \le
\epsilon r ({\lambda_{m\ell} +\omega^2 a^2}),
\]
where $\epsilon$ can be chosen arbitrarily small if $\omega_1$ is chosen sufficiently large and $a$ sufficiently small.
Thus, for suitable choice of $\omega_1$, 
it follows that
\begin{eqnarray*}
\frac{(r^2+a^2)^4}{\Delta r^2}V'\Big|_{r_+}&=&
\frac{(r^2+a^2)^4}{\Delta r^2}(V_0'+V_1')\Big|_{r_+}\\
&>&0
>\frac{(r^2+a^2)^4}{\Delta r^2}(V_0'+V_1')\Big|_{\infty}
=\frac{(r^2+a^2)^4}{\Delta r^2}V'\Big|_{\infty},
\end{eqnarray*}
and thus $V'$ has a unique zero. Let us denote the $r$-value of this zero by
$r_{m\ell}^\omega$.

We now choose $f$ so that
\begin{enumerate}
\item $f'\ge 0$,
\item $f\le 0$ for $r\le r_{m\ell}^\omega$ and $f\ge 0$ for $r\ge r_{m\ell}^\omega$ ,
\item
\label{proper3}
 $-fV'-\frac 12 f'''\ge c$.
\end{enumerate}
Property~\ref{proper3}
can be verified by ensuring that $f'''(r_{m\ell}^\omega)<0$ as well as requiring that
$f'''<0$ at the horizon. We may moreover normalise 
$f$ to $-1$ on the horizon. Finally, we may assume that
there exists an $R$ such that for all $r\ge R$, 
$f$ is of the form:
\[
f= \tan^{-1} \frac{r^*-\alpha-\sqrt{\alpha}}{\alpha}- \tan^{-1}(-1-\alpha^{-1/2})
\]
{\bf In particular, for $r\ge R$, the function $f$  will not depend on $\omega$, $\ell$, $m$.}

Note the similarity of this construction with that of Section~\ref{hayhay}, modulo the need
for complete separation
to centre the function $f$ appropriately.

Integrating the identity \eqref{eq:Q} and using that $u\to 0$ as $r\to \infty$ we obtain that for any compact
set $K_1$ in $r^*$ and a certain compact set $K_2$ (which in particular does not contain
$r=3M$), there exists a positive constant
$b>0$ so that
\begin{align*}
b\int_{K_1} &(|u'|^2+|u|^2) dr + b(\lambda_{m\ell}+\omega^2) 
\int_{K_2} |u|^2 dr\\
& \le \left (|u'|^2+(\omega^2-V) |u|^2\right)(r_+)+  \int {\rm Re}(2 f \bar{H} u'+f' \bar{H} u)\, dr.
\end{align*}
On the horizon $r=r_+$, we have $u'=(i\omega +(iam/2Mr_+))u$ and 
$$
V(r_+)= \frac {4Mram\omega-a^2m^2}{(r_+^2+a^2)^2}.
$$
Therefore, we obtain
\begin{align}
\nonumber
\label{beforewesum}
b&\int_{K_1} (|u'|^2+|u|^2) \, dr^* + b(\lambda_{m\ell}+\omega^2) 
\int_{K_2} |u|^2 \,  dr^* \\
&\le(\omega^2 +\epsilon m^2)|u|^2(r_+)+  \int {\rm Re} (2 f \bar H u'+f'  \bar H u)\, dr^*.
\end{align}

We now wish to reinstate the dropped indices $m,\ell,\omega$, and sum over $m$, $\ell$
and integrate over $\omega$.
Note that by the orthogonality of the $S^\omega_{m\ell}$,
it follows 
that for any functions $\alpha$ and $\beta$ with coefficients defined by
\[
\hat{\alpha}(\omega,\cdot)=
\sum_{m,\ell}\alpha^\omega_{m\ell}(r) S_{m\ell}(a\omega,\cos\theta)e^{im\phi^*}, \qquad
\hat{\beta}(\omega,\cdot)=
\sum_{m,\ell}\beta^\omega_{m\ell}(r) S_{m\ell}(a\omega,\cos\theta)e^{im\phi^*},
\]
we have
\begin{align*}
&\int \alpha^2(t^*,r,\theta,\varphi) \sin\theta d\varphi\, d\theta\, dt^*=\int_{-\infty}^\infty \sum_{m,\ell} 
|\alpha^\omega_{m\ell}(r)|^2 
d\omega,\\
&\int \alpha\cdot\beta \sin\theta d\varphi\, d\theta\, dt^*=
\int_{-\infty}^\infty \sum_{m,\ell}  \alpha^\omega_{m\ell}\cdot\bar \beta^\omega_{m\ell}
d\omega.
\end{align*}

Clearly, the summed and integrated left hand side of  $(\ref{beforewesum})$ bounds
\[
b\int_{-\infty}^\infty dt^*
\int_{K_1}\left ((\pa_{r}  \psi_{\mbox {$\natural$}})^2+ \psi_{\mbox {$\natural$}}^2\right) \, dV_g+
b\int_{K_2}\sum_{i} (\pa_i \psi_{\mbox {$\natural$}})^2 \, dV_g.
\]
Similarly, we read off immediately that the first term on the right hand 
side of $(\ref{beforewesum})$ upon summation
and integration yields precisely
\[
 \int_{{\mathcal H}_+} \left(
(T \psi_{\mbox {$\natural$}})^2+\epsilon (\partial_{\phi^*}\psi_{\mbox {$\natural$}})^2 \right).
\]
Note that we can bound 
\begin{eqnarray*}
 \int_{{\mathcal H}_+} \left(
(T \psi_{\mbox {$\natural$}})^2+\epsilon (\partial_{\phi^*}\psi_{\mbox{$\natural$}})^2 \right)
&\le&
 \int_{{\mathcal H}^+} \left(
(T \psi_{\hbox{\Rightscissors}})^2+\epsilon (\partial_{\phi^*}\psi_{\hbox{\Rightscissors}})^2 \right)\\
&\le&
B \int_{\Sigma_{\tau'} }J^N_\mu(\psi)n^\mu_{\Sigma}+
\epsilon \int_{\mathcal{H}(\tau',\tau)} (\partial_{\phi^*}\psi)^2
\end{eqnarray*}
({\bf Exercise}: Why?)

The ``error term'' of the right hand side of $(\ref{beforewesum})$ is more tricky.
To estimate the second summand of the integrand, 
note that
\begin{eqnarray*}
 \int_{-\infty}^\infty&& \sum_{m,\ell:\lambda_{m\ell}(\omega)\ge  \lambda_2\omega^2} 
(f')^\omega_{m\ell}(r)\bar F^\omega_{m\ell}(r) \psi^\omega_{m\ell}(r)  d\omega\\
&\le&  \int_{-\infty}^\infty \sum_{m,\ell:\lambda_{m\ell}(\omega)\ge  \lambda_2\omega^2} 
\delta^{-1} |(f')^\omega_{m\ell}F^\omega_{m\ell}|^2(r)+
\delta | \psi^\omega_{m\ell}|^2 d\omega\\
&\le&  \int_{-\infty}^\infty \sum_{m,\ell:\lambda_{m\ell}(\omega)\ge  \lambda_2\omega^2} 
\delta^{-1}B |F^\omega_{m\ell}|^2(r)+
\delta | \psi^\omega_{m\ell}|^2 d\omega\\
&=&\delta^{-1}B \int (F_{\mbox{$\natural$}})^2 \sin \theta \, d\phi \, d\theta \,dt^* +
\delta  \int (\psi_{\mbox{$\natural$}})^2 \sin \theta \, d\phi \, d\theta \,dt^* \\
&\le& \delta^{-1}B \int F^2 \sin \theta \, d\phi \, d\theta \,dt^* +
\delta  \int \psi^2 \sin \theta \, d\phi \, d\theta \,dt^*,
\end{eqnarray*}
where $\delta$ can be chosen arbitrarily.
In particular, this estimate holds for $r\le R$. For $r\ge R$, in view of the fact that
$f$ is independent of $\omega$, $m$, $\ell$, we have in fact
\begin{eqnarray*}
 \int_{-\infty}^\infty&& \sum_{m,\ell:\lambda_{m\ell}(\omega)\ge  \lambda_2\omega^2} 
(f')(r)\bar F^\omega_{m\ell}(r) \psi^\omega_{m\ell}(r)  d\omega\\
&=& f'(r) \int_{-\infty}^\infty \sum_{m,\ell:\lambda_{m\ell}(\omega)\ge  \lambda_2\omega^2} 
\bar F^\omega_{m\ell}(r) \psi^\omega_{m\ell}(r)  d\omega\\
&=& f'(r) \int F_{\mbox{$\natural$}}\psi_{\mbox{$\natural$}} \sin \theta \, d\phi \, d\theta \,dt^*\\
&=& f'(r) \int F_{\mbox{$\natural$}}\psi_{\hbox{\Rightscissors}} \sin \theta \, d\phi \, d\theta \,dt^*,
\end{eqnarray*}
where for the last line we have used $(\ref{sharpcutoff})$. The first summand of the
error integrand of $(\ref{beforewesum})$ can be estimated similarly.

We thus obtain 
\begin{align}
\label{a9roism}
\nonumber
b\int_{\mathcal{R}} \chi&
\left((\pa_{r}  \psi_{\mbox {$\natural$}})^2+ \psi_{\mbox {$\natural$}}^2\right) +
b\int_{\mathcal{R}} \chi h J^N_\mu (\psi_{\mbox {$\natural$}}) N^\mu \\  
\nonumber
\le& B\int_{\Sigma_{\tau'} }J^N_\mu(\psi)n^\mu_{\Sigma}+
\epsilon \int_{\mathcal{H}(\tau',\tau)} (\partial_\phi\psi)^2
+\delta^{-1} B
\int_{\mathcal{R}\cap\{r\le R\}   } 
F^2  \\
\nonumber&  +
\delta \int_{\mathcal{R}\cap\{r\le R\}   } 
\psi^2 +
(\partial_r\psi)^2  \\
\nonumber
&+ \int_{-\infty}^\infty dt^*\int_{r\ge R}
\left(2f (r^2+a^2)^{1/2} F_{\mbox{$\natural$}} \partial_{r^*} ((r^2+a^2)^{1/2} \psi_{\hbox{\Rightscissors}})
\right.\\
&\hskip2pc  \left.
+ f'(r^2+a^2)F_{\mbox{$\natural$}} 
\psi_{\hbox{\Rightscissors}}  \right)\frac{\Delta}{r^2+a^2}
 \sin \theta\,  d\phi^* \, d\theta \,  dr^* ,
\end{align}
where $\chi$ is a cutoff which degenerates at infinity and $h$ is a function $0\le h\le 1$
which vanishes in a suitable neighborhood of $r=3M$.

\subsubsection{The untrapped frequencies}
Given $\lambda_2$ sufficiently small and \emph{any} choice
of $\omega_1$, $\lambda_1$, then, for $a$ sufficiently small (where 
sufficiently small depends on these latter two constants), it follows that 
for 
$\hbox{\eighthnote}={\mbox {$\flat$}}, \, {\lessflat}, \, {\hbox{$\sharp$}}$, 
we may produce currents 
of type $J_\mu^{{\bf X}_{\hbox{\eighthnote}}}$ as in Section~\ref{sheur}   
such that
\[
b\int_{\mathcal{R}}  \chi 
J^{N}_\mu (\psi_{\hbox{\eighthnote}})
N^\mu + \tilde\chi\psi_{{\hbox{\eighthnote}}}
\le 
\int_{\mathcal{R}} K^{{\bf X}_{\hbox{\eighthnote}}}
(\psi_{\hbox{\eighthnote}}) 
\]
for $\chi$ a suitable cutoff function degenerating at infinity,
and $\tilde\chi$ a suitable cutoff function degenerating at infinity and vanishing
in a neighborhood of $\mathcal{H}^+$.
These currents can in fact be chosen independently of $a$ for such small $a$, 
and moreover, they can be chosen so that,
defining
\[
\mathcal{E}^{{\bf X}_{\hbox{\eighthnote}}}\doteq \nabla^\mu J^{{\bf X}_{\hbox{\eighthnote}}}_\mu
-K^{{\bf X}_{\hbox{\eighthnote}}},
\]
we have on the one hand
\begin{align*}
\int_{\mathcal{R}\cap \{r\ge R\}}
\mathcal{E}^{{\bf X}_{\hbox{\eighthnote}}} &=
 \int_{-\infty}^\infty dt^*\int_{r\ge R}
\left(2f (r^2+a^2)^{1/2} F_{\hbox{\eighthnote}}
\partial_{r^*} ((r^2+a^2)^{1/2} \psi_{\hbox{\Rightscissors}})
\right.\\
&\hskip2pc  \left.
+ f'(r^2+a^2)F_{\hbox{\eighthnote}}
\psi_{\hbox{\Rightscissors}}  \right)\frac{\Delta}{r^2+a^2}
 \sin \theta\,  d\phi^* \, d\theta \,  dr^* 
\end{align*}
for the $f$ of Section~\ref{kloubi}, and on the  
other hand, for the region  $r\le R$, we have
\[
\int_{\mathcal{R}\cap \{r\le R\}}
\mathcal{E}^{{\bf X}_{\hbox{\eighthnote}}} \le 
B\delta^{-1}  \int_{\mathcal{R}\cap \{r\le R\}}
F^2+
B \delta \int_{\mathcal{R}\cap \{r\le R\}}
\psi_{\hbox{\Rightscissors}}^2+(\partial_{r}\psi_{\hbox{\Rightscissors}})^2 +
\chi J^N_\mu(\psi_{\hbox{\Rightscissors}}) n^\mu
\]
where $\chi$ is supported near the horizon and away from a neighborhood of
$r=3M$.

Moreover, one can show as in Section~\ref{bootstrapsec}
that
\begin{eqnarray*}
-\int_{\mathcal{H}}J^{{\bf X}_{\hbox{\eighthnote}}}_\mu
(\psi_{\hbox{\eighthnote}}) n^\mu & \le &
-\int_{\mathcal{H}}J^{T}_\mu
(\psi_{\hbox{\eighthnote}}) n^\mu\\
&\le& -\int_{\mathcal{H}}J^{T}_\mu
(\psi_{\hbox{\Rightscissors}}) n^\mu\\
&\le &B \int_{\Sigma_{\tau'}}J^{N}_\mu(\psi) n^\mu.
\end{eqnarray*}
({\bf Exercise}: Prove the last inequality.)

From the identity
\[
\int_{\mathcal{H}^+} J^{{\bf X}_{\hbox{\eighthnote}}}_\mu
(\psi_{\hbox{\eighthnote}}) n^\mu_{\mathcal{H}}
+
\int_{\mathcal{R}}K^{{\bf X}_{\hbox{\eighthnote}}} (\psi_{\hbox{\eighthnote}})
= \int_{\mathcal{R}}\mathcal{E}^{{\bf X}_{\hbox{\eighthnote}}} (\psi_{\hbox{\eighthnote}})
\]
and the above remarks, one obtains finally an estimate
\begin{align}
\label{finaluntrapped}
\nonumber
\int_{\mathcal{R}} \chi &
(J^N_\mu(\psi_{\mbox {$\flat$}})+  J^N_\mu(\psi_{\lessflat})+J^N_\mu(\psi_{\mbox {$\sharp$}})
 )n^\mu_{\Sigma_\tau}\\
 \nonumber
\le& 
B \int_{\Sigma_{\tau'}}J^{N}_\mu(\psi) n^\mu  +  B \delta^{-1} \int_{\mathcal{R}\cap \{r\le R\}}F^2
    \\
\nonumber
&
 +   B \delta \int_{\mathcal{R}\cap \{r\le R\}}
\psi^2+(\partial_{r}\psi)^2 +\chi J^N_\mu(\psi) N^\mu\\
\nonumber
&
+ \int_{-\infty}^\infty dt^*\int_{r\ge R}
\left(2f (r^2+a^2)^{1/2} (F_{\mbox {$\flat$}}+F_{\lessflat}+F_{\mbox {$\sharp$}})
\partial_{r^*} ((r^2+a^2)^{1/2} \psi_{\hbox{\Rightscissors}})
\right.\\
&\hskip2pc  \left.
+ f'(r^2+a^2)(F_{\mbox {$\flat$}}+F_{\lessflat}+F_{\mbox {$\sharp$}})
\psi_{\hbox{\Rightscissors}}  \right)\frac{\Delta}{r^2+a^2}
 \sin \theta\,  d\phi^* \, d\theta \,  dr^* .
\end{align}

\subsubsection{The integrated decay estimates}
\label{integdecaysec}
Now, we will add $(\ref{a9roism})$, $(\ref{finaluntrapped})$ 
and 
the energy identity of $eJ^Y(\psi)$
\begin{align}
\label{addtoit}
\nonumber
\int_{\tilde\Sigma_\tau}& J^N_\mu(\psi) n^\mu_{\tilde{\Sigma}_\tau}+
\int_{\tilde{\mathcal{R}}(\tau',\tau)\cap \{r\le r_0\}}eK^Y(\psi)\\
&
=
-\int_{\mathcal{H}(\tau',\tau)}eJ^Y_\mu(\psi ) n^\mu_{\mathcal{H}}
+\int_{\tilde{\mathcal{R}}(\tau',\tau)\cap \{r_0\le r_1\le r_0\}}eK^Y(\psi)
+\int_{\tilde\Sigma_{\tau'}} J^N_\mu(\psi) n^\mu_{\tilde{\Sigma}_{\tau'}}
\end{align}
for a small $e$ with $\epsilon\ll e$, and where $r_0<r_1<3M$
are as in Corollary~\ref{Nproperties}, and $r_1$ is in the support
of $K_2$.

In the resulting inequality, 
the left hand side  bounds in particular
\begin{equation}
\label{THENEWLEFT}
\int_{\mathcal{R}(\tau'+1,\tau-1)} \chi (h
J^N_\mu(\psi)N^\mu + (\partial_{r}\psi)^2)
\end{equation}
 where $\chi$ is a cutoff decaying at infinity, 
 $\tilde\chi$ is a cutoff decaying at infinity and vanishing at $\mathcal{H}^+$
and $h$ is a function with $0\le h\le 1$ such that $h$ vanishes precisely in a neighborhood
of $r=3M$. 
(As $a\to 0$, this neighborhood can be chosen smaller and smaller in the sense
of the coordinate $r$.)

Let us examine the right hand side of the resulting inequality.

The second term of the first line of the right hand side of $(\ref{a9roism})$ is 
absorbed by the first term on the right hand side of $(\ref{addtoit})$ provided that $\epsilon \ll e$.

The third term of the first line of the right hand side of $(\ref{a9roism})$ and the second term
of $(\ref{finaluntrapped})$ are bounded by
\[
B\delta^{-1} \int_{\Sigma_{\tau'}} J^N_\mu (\psi) n^\mu_{\Sigma_{\tau'}}
\]
in view of Theorem~\ref{kerbnd}.

The second line of the right side of $(\ref{a9roism})$ and the third term
of $(\ref{finaluntrapped})$ can be 
absorbed by $(\ref{THENEWLEFT})$, provided that $\delta$ is chosen suitably small,
whereas the second term of the right hand side of $(\ref{addtoit})$ can be absorbed 
by $(\ref{THENEWLEFT})$, provided that $e$ is sufficiently small.

The fourth terms of the right hand sides of $(\ref{a9roism})$ and $(\ref{finaluntrapped})$ combine to yield
\begin{align*}
& \int_{-\infty}^\infty dt^*\int_{r\ge R}
\left(2f (r^2+a^2)^{1/2} F
\partial_{r^*} ((r^2+a^2)^{1/2} \psi_{\hbox{\Rightscissors}})
\right.\\
&\hskip2pc  \left.
+ f'(r^2+a^2)F
\psi_{\hbox{\Rightscissors}}  \right)\frac{\Delta}{r^2+a^2}
 \sin \theta\,  d\phi^* \, d\theta \,  dr^* .
\end{align*}
Note where $F$ is supported and how it decays.
Using our boundedness Theorem~\ref{kerbnd}, 
a Hardy inequality and integration by parts we may
now bound this term by
\[
B \int_{\Sigma_{\tau'}} J^N_\mu(\psi) n^\mu_{\Sigma_\tau}.
\]
But the remaining terms on the right hand side of $(\ref{a9roism})$, $(\ref{finaluntrapped})$
and $(\ref{addtoit})$ are also of this form!
We thus obtain
\begin{proposition}
\label{integdecay}
There exists a $\varphi_t$-invariant weight $\chi$, degenerating only at $i_0$,
a second $\varphi_t$-invariant weight $\tilde\chi$, degenerating at $i_0$ and vanishing
at $\mathcal{H}^+$,
a third $\varphi_t$-invariant weight $h$, which vanishes on
a neighborhood of $r=3M$, and a constant $B>0$
such that the following estimates hold for all $\tau'\le \tau$,
\[
\int_{\mathcal{R}(\tau',\tau)}\chi h J^N_\mu(\psi) N^\mu +\tilde\chi\psi^2
\le B\int_{\Sigma_{\tau'}} J_\mu^N(\psi)n^\mu_{\Sigma_{\tau'}}
\]
\[
\int_{\mathcal{R}(\tau',\tau)}\chi J^N_\mu(\psi) N^\mu+\tilde\chi \psi^2
\le B\int_{\Sigma_{\tau'}} (J_\mu^N(\psi)+J_\mu^N(T\psi))n^\mu_{\Sigma_{\tau'}}
\]
for all solutions $\Box_g\psi=0$ on Kerr.
\end{proposition}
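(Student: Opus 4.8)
The plan is to prove both inequalities at once by summing the three spacetime identities already assembled in Section~\ref{integdecaysec}: the trapped-frequency estimate $(\ref{a9roism})$ for $\psi_{\mbox{$\natural$}}$, the untrapped-frequency estimate $(\ref{finaluntrapped})$ for $\psi_{\mbox {$\flat$}}+\psi_{\lessflat}+\psi_{\mbox{$\sharp$}}$, and the red-shift identity $(\ref{addtoit})$ generated by the multiplier $eJ^Y$. The $\mathbf{X}$-type multipliers underlying the first two are built from $\partial_{r^*}$ and hence control the bulk only away from $\mathcal{H}^+$; the term $eK^Y$, which is positive for $r\le r_0$ by Corollary~\ref{Nproperties}, restores the coercivity at the horizon. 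Thus the summed left-hand side dominates $(\ref{THENEWLEFT})$, i.e.\ a nonnegative multiple of $\int_{\mathcal{R}(\tau',\tau)}\chi(hJ^N_\mu(\psi)N^\mu+(\partial_r\psi)^2)$, with $h$ vanishing only in the neighborhood of $r=3M$ forced by trapping. Reading off this bound is exactly the first, $h$-degenerate estimate, once the right-hand side is shown to reduce to $\int_{\Sigma_{\tau'}}J^N_\mu(\psi)n^\mu$.

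The bookkeeping on the right-hand side proceeds term by term. The fluxes through $\Sigma_{\tau'}$ are already in the desired form, and the horizon fluxes in $(\ref{a9roism})$ and $(\ref{finaluntrapped})$ are bounded by $\int_{\Sigma_{\tau'}}J^N_\mu(\psi)n^\mu$ using the boundedness Theorem~\ref{kerbnd}. The superradiant horizon error $\epsilon\int_{\mathcal{H}(\tau',\tau)}(\partial_{\phi^*}\psi)^2$ from $(\ref{a9roism})$ is absorbed into the red-shift horizon term of $(\ref{addtoit})$ once $\epsilon\ll e$; this absorption is the crux of handling superradiance, and it is legitimate precisely because $\omega_0\to0$ as $a\to0$, so that for $|a|\ll M$ the superradiant band is untrapped and the coupling constant $\epsilon$ can be made as small as needed. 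The interior error $\delta\int_{\{r\le R\}}(\psi^2+(\partial_r\psi)^2+\chi J^N_\mu(\psi)N^\mu)$ and the overshoot of $eK^Y$ on $r_0\le r\le r_1$ are absorbed into $(\ref{THENEWLEFT})$ by choosing $\delta$ and $e$ small, the hierarchy $\epsilon\ll e$ and $\delta$ small being mutually consistent.

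The principal obstacle is the surviving large-$r$ error built from the cutoff inhomogeneity $F=2\nabla^\a\xi\,\nabla_\a\psi+\Box_g\xi\,\psi$. The construction of Section~\ref{kloubi} is arranged so that, for $r\ge R$, the multiplier function $f$ is independent of $\omega,m,\ell$; consequently the $r\ge R$ error has the \emph{same} integrand in $(\ref{a9roism})$ and in $(\ref{finaluntrapped})$, and summing over the four frequency bands reconstitutes the genuine $F$ there. Since $F$ is supported in $\mathcal{R}(\tau',\tau'+1)\cup\mathcal{R}(\tau-1,\tau)$ and decays, a Hardy inequality together with an integration by parts in $r^*$ reduces this term to $B\int_{\Sigma_{\tau'}}J^N_\mu(\psi)n^\mu$ via Theorem~\ref{kerbnd}, while the corresponding $\{r\le R\}$ contributions, controlled by $B\delta^{-1}\int F^2$, are bounded the same way. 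Assembling these reductions yields the first estimate.

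To upgrade to the second, non-degenerate estimate I would commute with $T$ and repeat the scheme for $T\psi$, which again solves $\Box_g(T\psi)=0$. The bulk discarded by $h$ sits in the photon-sphere region, where the dominant contribution comes from the trapped band $\psi_{\mbox{$\natural$}}$; there the coercive term $(\lambda_{m\ell}+\omega^2)|u|^2$ retained in $(\ref{a9roism})$, together with the support condition $\lambda_{m\ell}\ge\lambda_2\omega^2$ and $(\ref{Totherestimate})$, lets the missing $J^N_\mu(\psi)N^\mu$ near $r=3M$ be absorbed at the cost of the $T$-weighted flux. Adding this to the first estimate replaces $\chi h$ by $\chi$ and augments the right-hand side by $\int_{\Sigma_{\tau'}}J^N_\mu(T\psi)n^\mu$, which is exactly the claimed bound.
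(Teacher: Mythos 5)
Your derivation of the first (degenerate) estimate coincides with the paper's own: the same three identities $(\ref{a9roism})$, $(\ref{finaluntrapped})$ and $(\ref{addtoit})$ are summed, the superradiant horizon error is absorbed into the red-shift flux under $\epsilon\ll e$, the $\delta$-weighted interior errors and the middle-region $K^Y$ contribution are absorbed into $(\ref{THENEWLEFT})$, and the large-$r$ cutoff errors recombine into a single $F$-term (because $f$ is frequency-independent for $r\ge R$) handled by a Hardy inequality, integration by parts and Theorem~\ref{kerbnd}.

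The step that would fail as written is your upgrade to the second, non-degenerate estimate. You propose that ``the coercive term $(\lambda_{m\ell}+\omega^2)|u|^2$ retained in $(\ref{a9roism})$'' supplies the missing $J^N_\mu(\psi)N^\mu$ near $r=3M$. But that term is the summed version of $b(\lambda_{m\ell}+\omega^2)\int_{K_2}|u|^2$ from $(\ref{beforewesum})$, and $K_2$ is by construction a compact set which does \emph{not} contain $r=3M$: the coercivity of $-fV'-\frac12 f'''$ near the zero $r^\omega_{m\ell}$ of $V'$ only yields $|u'|^2+|u|^2$ there, with no $(\lambda_{m\ell}+\omega^2)$ weight, which is precisely why $h$ must vanish near the photon sphere in the first estimate. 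So this term cannot absorb the angular and time derivatives at $r=3M$, and the support condition $\lambda_{m\ell}\ge\lambda_2\omega^2$ together with $(\ref{Totherestimate})$ does not remedy this. The route consistent with the loss of exactly one $T$-derivative in the statement of Proposition~\ref{integdecay} is: apply the degenerate estimate to $T\psi$ as well, which through the zeroth-order term $\tilde\chi(T\psi)^2$ (with $\tilde\chi$ nonvanishing near $r=3M$) controls $(\partial_{t^*}\psi)^2$ at the photon sphere; note that $(\partial_r\psi)^2$ and $\psi^2$ are already controlled there without degeneration by $(\ref{THENEWLEFT})$; and then recover the remaining $|\nabb\psi|^2$ near $r=3M$ from the equation itself, e.g.\ by multiplying $\Box_g\psi=0$ by a cutoff times $\psi$ and integrating by parts, or equivalently by pairing $(\ref{eq:Q})$'s underlying ODE $u''+(\omega^2-V)u=H$ with $\chi_0\bar u$ to read off $\lambda_{m\ell}\int\chi_0|u|^2\le B\int\chi_0\left(|u'|^2+\omega^2|u|^2+|u|^2\right)+\ldots$, using that $V\sim\lambda_{m\ell}$ there. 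It is this last step, absent from your argument, that converts the $T$-commuted degenerate estimate into the non-degenerate one and produces the additional flux $\int_{\Sigma_{\tau'}}J^N_\mu(T\psi)n^\mu$ on the right-hand side.
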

Similar estimates could be shown on regions $\tilde{\mathcal{R}}(\tau',\tau)$, $\tilde{\Sigma}_\tau'$,
after having derived a priori suitable decay of $\psi$ in $r$.\footnote{In the section that follows, 
we shall in fact localise the above estimate in a different way applying a cutoff function.
The resulting $0$'th order terms which arise can be controlled
using the ``good'' $0$'th order term in the boundary
integrals of $J^{Z,w}$.}

\subsubsection{The $Z$-estimate}
To turn integrated decay
as in Proposition~\ref{integdecay} into decay of energy and pointwise
decay, we must adapt the argument of Section~\ref{Morawetz}.

Let $V$ be a $\phi_t$-invariant
vector field such that $V=\partial_{t^*}$ for $r\ge r_+ +c_2$
and $V=\partial_{t^*}+(a/2Mr_+) \partial_{\phi^*}$ for $f\le r_++c_1$ for some $c_1<c_2$,
and such that $V$ is timelike in $\mathcal{R}\setminus\mathcal{H}^+$. 
Note that
$V$ is Killing except
in $r_++c_1\le r\le r_++c_2$.  As $a\to 0$, we can construct such a $V$ with $c_2$ arbitrarily
small.

Now 
let us define $u$ and $v$ to be the Schwarzschild\footnote{Recall that
we are considering both the Kerr and Schwarzschild metric on the fixed
differentiable structure $\mathcal{R}$ as described in  Section~\ref{thekerrmetric}.}
 coordinates 
\[
u= t-r_{\rm Schw}^*,
\]
\[
v= t+r_{\rm Schw}^*.
\]
With respect to the coordinates $(u,v,\phi^*,\theta)$, 
defining $\underline{L}= \partial_u$, then $\underline{L}$
vanishes smoothly along the horizon. Define $\bar{L} = V-\underline{L}$.
Finally, define the vector field
\[
Z = u^2 L+v^2\underline{L}.
\]

Note that under these choices $Z$ is 
null on $\mathcal{H}^+$.
With $w$ as before, the currents
$J^{Z,w}$ together with $J^N$ can be used to control the energy fluxes
on $\Sigma_\tau$ with weights.
Use of the energy identities of $J^{Z,w}$ and $J^N$ leads to 
estimates of the form
\begin{equation}
\label{recal}
 \int_{\Sigma_\tau} \chi\psi^2+ \int_{\Sigma_\tau\cap \{r\precsim \tau\} }J^N_\mu(\psi) n_{\tilde\Sigma_\tau}^\mu \le B\,D \tau^{-2}+
B\, \tau^{-2}\int_{{\mathcal{R}}(0,\tau)} \mathcal{E},
\end{equation}
where $\chi$ is a cutoff function supported suitably, and
where $\mathcal{E}$ is an error term arising from the part of 
 $K^{Z,w}$ which has the ``wrong'' sign; $D$ arises from data.

We may  partition 
\[
\mathcal{E}=\mathcal{E}_1+\mathcal{E}_2+\mathcal{E}_3
\]
where  
\begin{itemize}
\item
$\mathcal{E}_1$ is supported in some region $r_0\le r\le R_0$,
\item
$\mathcal{E}_2$ is supported in $r\le r_0$ and
\item
$\mathcal{E}_3$ is supported in $r\ge R_0$.
\end{itemize}

Recall that $L+\underline{L}$ is Killing for $r\ge 2M+c_2$.
It follows ({\bf Exercise}) that   choosing $c_2<r_0$,
there are no terms growing quadratically in $t$ for $\mathcal{E}_1$, $\mathcal{E}_3$.
Moreover, by our construction, $Z$ depends smoothly on $a$ away from the horizon.
The behaviour near the horizon is more subtle as $Z$ itself is not smooth! We shall return
to this when discussing $\mathcal{E}_2$.

In view of our above remarks. we have that
\[
\mathcal{E}_1\le B\, t (J^N_\mu(\psi)N^\mu + \psi^2),
\]
just like in the case of Schwarzschild.
In view of Proposition~\ref{integdecay}, this leads to the following estimate:
If $\hat{\psi}=\psi$ in $\mathcal{R}(\tau',\tau'')\cap\{ r\le R_0 \}$, where $\hat\psi$ solves
again $\Box_g\hat\psi=0$, then
\begin{equation}
\label{letus1}
\int_{{\mathcal{R}}(\tau',\tau'')}\mathcal{E}_1(\psi)=
\int_{{\mathcal{R}}(\tau',\tau'')}\mathcal{E}_1(\hat\psi)
 \le B\tau'  \int_{{\Sigma}_{\tau'}}
 (J^N_\mu(\hat\psi)+J^N_\mu(T\hat\psi))n^\mu_{\tilde{\Sigma}_{\tau'}}
.
\end{equation}
The introduction of  $\hat\psi$  is related to our localisation procedure we shall carry out in what 
follows.

Recall that in the Schwarzschild case, for $R_0$ suitably chosen, there is no $\mathcal{E}_3$ term,
as the term $K^{Z,w}$ has a good sign in that region. (See Section~\ref{Morawetz}.)
Examining the $r$-decay of error terms in the smooth dependence of $Z$ in $a$, we obtain
\[
\mathcal{E}_3 \le \epsilon \, t r^{-2}  J^N_\mu(\psi)N^\mu
 \]
where $\epsilon$ can be made arbitrarily small if $a$ is small.
If $\tau''-\tau'\sim \tau'\sim t$, this leads to an estimate
\begin{eqnarray}
\nonumber
\label{letus3}
\int_{{\mathcal{R}}(\tau',\tau'')}\mathcal{E}_3(\psi)
&\le& \epsilon (\tau''-\tau')(\tau''+\tau') 
\int_{{\Sigma}_{\tau'}\cap \{r \precsim \tau''-\tau'  \} } J^N_\mu(\psi) n^\mu_{{\Sigma}_{\tau'}}\\
&&\hbox{}+\epsilon\,\log |\tau''-\tau'| \int_{\Sigma_{\tau'}} J_\mu^N(\psi) n^\mu_{\Sigma_{\tau'}} .
\end{eqnarray}

In the region $r_++c_1 \le r \le r_++c_2$, 
then, choosing $r_0$ such that $\mathcal{E}_2$ is absent in Schwarzschild,
we can argue without computation from the smooth
dependence on $a$ that
\[
\mathcal{E}_2\le \epsilon\, t^2 (J^N_\mu(\psi)N^\mu+\psi^2)
\]
where $\epsilon$ can be made arbitrarily small by choosing $a$ small.
The necessity of a quadratically growing error term arises from the fact that
$L+\underline{L}$ is not Killing in this region.\footnote{Alternatively, one can keep
$L+\underline{L}$ Killing at the expense of $Z$ failing to be causal on the horizon. This would
lead to errors of a similar nature.}

As we have already mentioned, 
an important subtely occurs near the horizon $\mathcal{H}^+$ where  
$Z$ fails to be $C^1$.  This means that $\mathcal{E}_2$ is not necessarily
small in local coordinates, and one must understand how to bound the singular terms. 
It turns out that these singular terms have a structure:
\begin{proposition}
Let $\hat{V}$, $\hat{Y}$, $E_1$, $E_2$ extend $V$ to a null frame in $r\le r_++c_1$.
We have
\[
\mathcal{E}_2 \le \epsilon v|\log (r-r_+)|^p ({\bf T}(\hat{Y},\hat{V})+{\bf T}(\hat{V},\hat{V})) 
+\epsilon v\, J^N_\mu(\psi)N^\mu.
\]
\end{proposition}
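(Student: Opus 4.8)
The plan is to compute the near-horizon divergence $K^{Z,w}(\psi)$ directly in the regular null frame $\hat V=V,\hat Y,E_1,E_2$ and to exhibit the cancellation of the genuinely singular ($(1-2M/r)^{-1}$-type) contributions, so that $\mathcal E_2$ is left as a manifestly lower-order, logarithmically singular residue. First I would use that, in the region $r\le r_++c_1$ under consideration, the vector field $V$ is Killing, so that ${}^V\pi=0$. Writing $Z=v^2 L+u^2\underline{L}$ and using $\underline{L}=V-L$, I regroup
\[
Z=u^2 V+(v^2-u^2)L=u^2 V+4tr^* L,
\]
where $L=\partial_v$ is the regular, non-vanishing generator direction and $\underline{L}=\partial_u$ vanishes along $\mathcal H^+$. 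In the frame one has $\underline{L}=(1-2M/r)\hat Y$ and $L=\hat V-(1-2M/r)\hat Y$, so that every contraction of ${\bf T}$ with $\underline{L}$ automatically carries a compensating factor $(1-2M/r)$.

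Next I would expand, schematically,
\[
K^{Z}=2\,{\bf T}(\nab(u^2),V)+2\,{\bf T}(\nab(4tr^*),L)+4tr^*\,K^{L},
\]
the term $u^2K^V$ dropping out by the Killing property. The only source of $(1-2M/r)^{-1}$-singular coefficients is the gradients $\nab u,\nab v,\nab t,\nab r^*$, each proportional to $(1-2M/r)^{-1}$ times $L$ or $\underline{L}$. The crucial point is that these singular coefficients multiply ${\bf T}$-contractions that, through $\underline{L}=(1-2M/r)\hat Y$, always reproduce a compensating $(1-2M/r)$ — except for the pieces proportional to ${\bf T}(\hat V,\hat V)$, for which the $(1-2M/r)^{-1}{\bf T}(\hat V,\hat V)$ contribution of ${\bf T}(\nab(u^2),V)$ cancels identically against that of ${\bf T}(\nab(4tr^*),L)$. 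This is precisely the Schwarzschild computation, whose outcome is
\[
K^{Z}=-8t\,{\bf T}(\hat Y,\hat V)+8t\,(1-2M/r)\,{\bf T}(\hat Y,\hat Y)-4tr^*(1-2M/r)K^{\hat Y}+\cdots,
\]
whose leading term $-8t\,{\bf T}(\hat Y,\hat V)$ has the good sign (by the dominant energy condition) while every remaining term carries a factor $(1-2M/r)$; this is exactly why $\mathcal E_2$ is absent for Schwarzschild.

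For the Kerr metric, which on $\mathcal R$ is $O(a)$-close to Schwarzschild while preserving both the Killing character of $V$ and the nullity of $\mathcal H^+$ (so that $\underline{L}=\partial_u$ still vanishes along the horizon and $\hat V=V$ stays Killing for $r\le r_++c_1$), the $(1-2M/r)^{-1}$ cancellation survives unchanged, since it rests only on these structural facts and not on the precise metric coefficients. Hence $\mathcal E_2$ is exactly the $O(a)$ discrepancy between the Kerr and Schwarzschild computations. Collecting these $O(a)$ terms, each coefficient carries one power of $v$ together with at most a fixed power $|\log(r-r_+)|^p$ — the logarithms arising from the growth $r^*\sim\frac{1}{2\kappa}\log(r-r_+)$ and from differentiating the $a$-dependent, non-$C^1$ pieces of $Z$. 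By the cancellation mechanism, the terms whose weight genuinely involves $|\log(r-r_+)|^p$ are proportional only to the red-shift--controlled components ${\bf T}(\hat V,\hat V)$ and ${\bf T}(\hat Y,\hat V)$, whereas every term containing the transverse component ${\bf T}(\hat Y,\hat Y)$ retains a compensating $(1-2M/r)$ and is therefore bounded by $J^N_\mu(\psi) N^\mu$. The $0$'th order contributions of the modification $w$ are treated as in Schwarzschild, absorbed via a Hardy inequality and the good $0$'th order boundary term of $J^{Z,w}$. Taking $a$ small so the $O(a)$ prefactor is $\le\epsilon$ then gives the asserted bound.

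The main obstacle is precisely the non-$C^1$ behaviour of $Z$ at $\mathcal H^+$: in any local regular coordinate system the individual terms of $K^{Z,w}$ blow up like $(1-2M/r)^{-1}$, and one must show both that this leading singularity cancels and that only the harmless, red-shift--controlled components survive with a logarithmic singular weight. Performing the computation in the adapted null frame $\hat V,\hat Y,E_1,E_2$ rather than in singular coordinates is what renders these cancellations manifest; the essential structural inputs are that $V$ generates $\mathcal H^+$ and is Killing there, and that the large coefficient $u^2$ multiplies the vanishing vector $\underline{L}$. Verifying that the $O(a)$ residue never promotes ${\bf T}(\hat Y,\hat Y)$ into the singular-weight class — equivalently, that the dangerous transverse derivative is always accompanied by a factor $(1-2M/r)$ — is the delicate point, and it is exactly the robustness of the red-shift geometry that secures it.
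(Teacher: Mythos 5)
Your proposal rests on the same mechanism as the paper's proof: compute the deformation tensor of $Z$ in the regular frame $\hat V,\hat Y,E_1,E_2$, exploit that $\hat V=V$ is Killing in $r\le r_++c_1$ and generates the null hypersurface $\mathcal{H}^+$ while $\underline{L}=\partial_u$ vanishes there to first order, conclude that only ${\bf T}(\hat V,\hat V)$ and ${\bf T}(\hat Y,\hat V)$ can acquire a $|\log(r-r_+)|^p$ weight, and extract the factor $\epsilon$ from $O(a)$-closeness to Schwarzschild. The bookkeeping differs, and the paper's is the cleaner one. You regroup $Z=u^2V+4tr^*L$, placing the divergent coefficients on the two \emph{non-vanishing} null vectors, so that the individual pieces of $K^Z$ carry genuine $(1-2M/r)^{-1}$ singularities whose cancellation you must then assert (the assertion is correct, but it is an extra computation to verify). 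The paper instead writes $\nabla_X Z$ so that every unbounded coefficient ($u$, $r^*v$, $(r^*)^2$) multiplies $\underline{L}$ or $\nabla_X\underline{L}$, whence each term is manifestly at worst logarithmic, and the whole argument reduces to reading off that the coefficient of ${\bf T}(\hat Y,\hat Y)$, namely $g(\nabla_{\hat V}Z,\hat V)$, carries no logarithm: there every divergent factor meets either $g(\underline{L},\hat V)=O(r-r_+)$, or $g(L,\hat V)=g(V,V)+O(r-r_+)=O(r-r_+)$, or $\hat V g(V,V)=0$, or $g(\nabla_{\hat V}\underline{L},\hat V)=O(r-r_+)$ (using $\hat V(1-2M/r)=0$).

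Two of your side claims about the Schwarzschild baseline are wrong, although they do not affect the bound being proved. First, since ${\bf T}(\hat Y,\hat V)=|\nabb\psi|^2\ge 0$, a leading term $-8t\,{\bf T}(\hat Y,\hat V)$ would be non-positive, which is the \emph{wrong} sign for the bulk term in the energy identity as normalised in these notes (what is wanted, and what holds, is $K^{Z,w}\ge 0$ near $\mathcal{H}^+$). Second, it is not true that every remaining Schwarzschild term carries a compensating factor of $(1-2M/r)$: the pairing of $d(tr^*)$ (whose $\hat Y$-component is of size $u\,(r-r_+)^{-1}$) with the one-form $g(\underline{L},\cdot)$ produces a term of size $t|r^*|\,{\bf T}(\hat Y,\hat V)$ with no such factor. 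This logarithmically divergent term is positive in Schwarzschild --- it is what dominates the bounded negative contribution $\sim -v\,{\bf T}(\hat Y,\hat V)$ coming from $v^2V$ and makes $K^{Z,w}\ge0$ near the horizon there --- and its $O(a)$ perturbation is precisely why the proposition must tolerate the weight $v|\log(r-r_+)|^p$ on ${\bf T}(\hat Y,\hat V)$ at all. The essential structural claim, that ${\bf T}(\hat Y,\hat Y)$ never acquires a logarithmic weight, you identify correctly, and that is the content of the proposition.
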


\begin{proof}
The warping function $w$ can be chosen as in Schwarzschild near $\mathcal{H}^+$, and thus,
the extra terms it generates are harmless.
For the worst behaviour, 
it suffices to examine now $K^Z$ itself. 
We must show that terms of the form:
\[
|\log (r-r_+)|^p(T(\hat{Y},\hat{Y}))
\]
do not appear in the computation for $K^{Z}$.

The relevant property follows from examining the covariant derivative
of $Z$ with respect to the null frame:
\[
\nabla_{\hat V} Z = 2u (\hat{V}u) \underline{L}+2v \hat{V}(v) L+ 
v^2 \nabla_{\hat V}V - 4r^* v \nabla_{\hat{V}} \underline{L}
+4(r^*)^2\nabla_{\hat{V}}\underline{L},
\]
\[
\nabla_{\hat{Y}} Z= 2u(\hat{Y}u)\underline{L}+
2v(\hat{Y}v) L + v^2 \nabla_{\hat{Y}} V- 4r^*v\nabla_{\hat{Y}}\underline{L}
+4 (r^*)^2 \nabla_{\hat{Y}}\underline{L}, 
\]
\[
\nabla_{E_1}Z=2u(E_1u)\underline{L}+
2v(E_1v) L + v^2 \nabla_{E_1} V- 4r^*v\nabla_{E_1}\underline{L}
+4 (r^*)^2 \nabla_{E_1}\underline{L},
\]
\[
\nabla_{E_2}Z=2u(E_2u)\underline{L}+
2v(E_2v) L + v^2 \nabla_{E_2} V- 4r^*v\nabla_{E_2}\underline{L}
+4 (r^*)^2 \nabla_{E_2}\underline{L}.
\]
\end{proof}

To estimate now $\mathcal{E}_2$, 
we first remark that with Proposition~\ref{integdecay}, we can obtain the following refinement
of the red-shift multiplier construction of Corollary~\ref{Nproperties}:
\begin{proposition}
\label{redrefine}
If we weaken the requirement that $N$ be smooth in Corollary~\ref{Nproperties}
with the statement that $N$ is $C^0$ at $\mathcal{H}^+$
and smooth away from $\mathcal{H}^+$, then given 
$p\ge 0$, we may construct an $N$ as in Corollary~\ref{Nproperties} where property~\ref{fir}
is replaced by the stronger
inequality:
\[
K^N(\psi) \ge b_p |\log (r-r_+)|^p ({\bf T}(\hat{Y},\hat{V})+{\bf T}(\hat{V},\hat{V}))
\]
for $r\le r_0$.
\end{proposition}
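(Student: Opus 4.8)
The plan is to build the refined multiplier as a logarithmically weighted, merely $C^0$ correction of the \emph{smooth} red-shift field $N_0$ supplied by Corollary~\ref{Nproperties}, exploiting that the bulk term $K^N$ is \emph{linear} in the multiplier $N$. Concretely I would set
\[
N = N_0 + \tilde{f}(r)\,T + f(r)\,\hat{Y},
\]
where $\tilde f, f\ge 0$ are continuous, vanish at $r=r_+$ (so that $N=N_0$ on $\mathcal{H}^+$, keeping $N$ timelike and $C^0$ there and smooth away from it), and are chosen with $\tilde f'(r),\,f'(r)\sim |\log(r-r_+)|^p$ as $r\to r_+$ -- for instance $\tilde f(r)=f(r)=\int_{r_+}^{r}|\log(s-r_+)|^p\,ds$, which tends to $0$ while its derivative carries the full logarithmic growth. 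By linearity $K^N=K^{N_0}+K^{\tilde f T}+K^{f\hat Y}$, and I would compute the last two contractions in the null frame $\hat V,\hat Y,E_1,E_2$ adapted to $\mathcal{H}^+$.

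The key step is to identify which components of ${\bf T}$ the logarithmically growing pieces of the deformation tensors of $\tilde f T$ and $f\hat Y$ couple to. Working in ingoing coordinates $(v,r)$ with $T=\partial_v$ and $\hat Y$ the regular transverse null field, and recording the surface gravity through $\partial_r(1-2M/r)|_{r_+}=2\kappa>0$, one finds that the only components of these deformation tensors carrying the factor $f',\tilde f'$ are off-diagonal/transverse-transverse: the $\hat Y$-correction yields a term $\sim f'\,{\bf T}(\hat Y,\hat V)\sim f'|\nabb\psi|^2$, while the $T$-correction yields $\sim\tilde f'\,{\bf T}(\hat V,\hat V)\sim\tilde f'(T\psi)^2$, both with a \emph{good} sign once the orientation is fixed so that $f',\tilde f'>0$ (exactly as $\sigma$ was chosen in Proposition~\ref{newproprs}). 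Crucially -- and this is the structural fact mirrored in the Proposition above on $\mathcal{E}_2$, where no $|\log(r-r_+)|^p\,{\bf T}(\hat Y,\hat Y)$ term appears -- the component of $\,^N\pi$ pairing with the transverse energy ${\bf T}(\hat Y,\hat Y)$ picks up \emph{no} derivative of $f,\tilde f$, only the bounded surface-gravity contribution. Hence no logarithmically growing term multiplies the dangerous transverse derivative.

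All remaining terms in $K^{\tilde f T}+K^{f\hat Y}$ carry coefficients proportional either to $\tilde f, f$ themselves (which vanish at $r_+$) or to $D\tilde f',\,Df'$ with $D=1-2M/r\sim\kappa(r-r_+)$, so that $D|\log(r-r_+)|^p\to0$; they are therefore $o(1)$ relative to the full energy density near $\mathcal{H}^+$ and can be absorbed into the smooth estimate $K^{N_0}\ge b\,J^{N_0}_\mu n^\mu_{\Sigma}$ of Corollary~\ref{Nproperties}, which in particular dominates $(\partial_{\hat Y}\psi)^2$ uniformly. Choosing $r_0$ close enough to $r_+$ that these $o(1)$ contributions are beaten both by $b\,J^{N_0}_\mu n^\mu_{\Sigma}$ and by the two good logarithmic terms, and absorbing the bounded $O(D)$ mismatch between ${\bf T}(\hat Y,\hat V)+{\bf T}(\hat V,\hat V)$ and $(T\psi)^2+|\nabb\psi|^2$ into $K^{N_0}$, one obtains
\[
K^N \ \ge\ b_p\,|\log(r-r_+)|^p\big({\bf T}(\hat{Y},\hat{V})+{\bf T}(\hat{V},\hat{V})\big)
\]
for $r\le r_0$, which is the claim.

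The main obstacle I anticipate is precisely the sign-and-support bookkeeping of the second paragraph: one must verify that \emph{every} contribution to $K^{\tilde f T}+K^{f\hat Y}$ growing like $|\log(r-r_+)|^p$ is both sign-definite and carried by the tangential components ${\bf T}(\hat V,\hat V),{\bf T}(\hat Y,\hat V)$, since the transverse derivative is controlled only at the $O(1)$ level by $N_0$ and could not absorb a logarithmically large error. This is where the positivity of $\kappa$ and the exact form of the metric at the horizon enter essentially; for Kerr one works in the frame tied to the Killing generator $V=\partial_{t^*}+(a/2Mr_+)\partial_{\phi^*}$ of $(\ref{accordin})$ and invokes the general framework of Section~\ref{epilogue} together with the closeness-to-Schwarzschild assumption to guarantee $\kappa>0$ persists. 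Once this verification is complete, the absorption argument is routine.
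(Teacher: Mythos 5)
Your construction is correct, and it is the natural one: the paper in fact states Proposition~\ref{redrefine} without proof, but your argument is exactly the evident extension of the red-shift machinery of Proposition~\ref{newproprs} and Corollary~\ref{Nproperties} that the authors intend. You have correctly isolated the one point on which everything hinges, namely that $\nabla^\mu r$ is parallel to $\hat{V}^\mu$ on $\mathcal{H}^+$ (with a deviation of order $r-r_+$ nearby), so the logarithmically large derivative terms of $f,\tilde f$ contract only into ${\bf T}(\hat V,\hat Y)=|\nabb\psi|^2$ and ${\bf T}(\hat V,\hat V)$ with a favourable sign, while no $|\log(r-r_+)|^p\,{\bf T}(\hat Y,\hat Y)$ term can arise; the remaining contributions carry factors of $f$, $\tilde f$ or $(1-2M/r)f'$ and are absorbed into $K^{N_0}\ge b\,J^{N_0}_\mu n^\mu$ by taking $r_0$ close to $r_+$.
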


It now follows immediately  from Proposition~\ref{integdecay} 
that with $\psi$ and $\hat\psi$ as before, we have
\begin{equation}
\label{letus2}
\int_{{\mathcal{R}}(\tau',\tau'')}\mathcal{E}_2(\psi)
 \le \epsilon (\tau')^2 \int_{{\Sigma}_{\tau'}}
 J^N_\mu(\hat\psi) n^\mu_{{\Sigma}_{\tau'}}.
\end{equation}

To obtain energy decay from $(\ref{recal})$, $(\ref{letus3})$, $(\ref{letus1})$ and
$(\ref{letus2})$, we argue now by continuity. 
Introduce the bootstrap assumptions
\begin{equation}
\label{impr}
\int_{\Sigma_{\tau}\cap \{r\precsim \tau \}}J^N_\mu (\psi) N^\mu
+\chi\psi^2 \le C\, D \tau^{-2+2\delta},
\end{equation}
\begin{equation}
\label{impr2}
\int_{\Sigma_{\tau}\cap\{r\precsim\tau\}}J^N_\mu (T\psi) N^\mu \le C'\, D \tau^{-1+2\delta}
\end{equation}
for a $\delta>0$.

Now
dyadically decompose the interval $[0,\tau]$
by $\tau_i<\tau_{i+1}$.
Using $(\ref{letus1})$ and the above, we obtain
\begin{align}
\label{e1here}
\nonumber
\int_{{\mathcal{R}}(0,\tau)} 
 \mathcal{E}_1(\psi)
&\le \sum_i  \int_{{\mathcal{R}}(\tau_{i},\tau_{i+1})}\mathcal{E}_1(\psi) \\
\nonumber
&\le \sum_i \tau_i \int_{\Sigma_{\tau_i}}(J^N_\mu(\hat\psi)+ J^N_\mu (T\hat\psi)) N^\mu\\
\nonumber
&\le \sum_i \tau_i \int_{\Sigma_{\tau_i}\cap\{r\precsim \tau_{i+1}-\tau_i\} }(J^N_\mu(\psi)+ J^N_\mu (T\psi)) N^\mu+\chi\psi^2\\
\nonumber
&\le\sum_i \tau_i (\tau_i^{-2+2\delta}CD+ \tau_i^{-1+2\delta}C'D) \\
&\le\delta^{-1} (CD\tau^{-1+2\delta}+C'D\tau^{2\delta}).
\end{align}
Here, $\hat\psi$ is constructed separately on each dyadic region $\mathcal{R}(\tau_i,
\tau_{i+1})$ by throwing a cutoff on $\psi|_{\Sigma_{\tau_i}}$ 
eqaul to $1$ in $r\precsim \tau_{i+1}-\tau_i$ and
vanishing in $\tau_{i+1}-\tau_i
\precsim r$, solving again the initial value problem in $\mathcal{R}(\tau_i,\tau_{i+1})$,
and exploiting the domain of dependence property. 
See the original~\cite{dr3} for this localisation scheme. The
parameters of the ``dyadic'' decomposition must be chosen accordingly for the constants
to work out.
Similarly, using $(\ref{letus2})$ we obtain
\begin{align}
\int_{{\mathcal{R}}(0,\tau)} 
\label{e2here}
\nonumber
 \mathcal{E}_2(\psi)
&\le \sum_i  \int_{{\mathcal{R}}(\tau_{i},\tau_{i+1})}\mathcal{E}_2(\psi) \\
\nonumber
&\le \epsilon \sum_i \tau_i^2\int_{\Sigma_{\tau_i}}J^N_\mu (\hat\psi) N^\mu\\
\nonumber
&\le \epsilon \sum_i \tau_i^2\int_{\Sigma_{\tau_i}\cap\{r\precsim \tau_{i+1}-\tau_i\}}J^N_\mu (\psi) N^\mu+\chi\psi^2\\
\nonumber
&\le \epsilon\sum_i \tau_i^2\tau_i^{-2+2\delta}CD \\
&\le \epsilon\delta^{-1} \tau^{2\delta}C D
\end{align}
and using $(\ref{letus3})$
\begin{align}
\int_{{\mathcal{R}}(0,\tau)} 
\label{e3here}
\nonumber
 \mathcal{E}_3(\psi)
&\le \sum_i  \int_{{\mathcal{R}}(\tau_{i},\tau_{i+1})}\mathcal{E}_3(\psi) \\
\nonumber
&\le \epsilon \sum_i \left(
\tau_i^2\int_{\Sigma_{\tau_i}}J^N_\mu (\psi) N^\mu +\int_{\Sigma_{\tau_i}} 
J_\mu^N(\psi)n^\mu_{\Sigma_{\tau_i}}\right)\\
\nonumber
&\le \epsilon\sum_i (\tau_i^2\tau_i^{-2+2\delta}CD +D\log \tau')\\
&\le \epsilon\delta^{-1} \tau^{2\delta}C D.
\end{align}
For $T\psi$ we obtain
\begin{align}
\label{e1there}
\int_{{\mathcal{R}}(0,\tau)} 
 \mathcal{E}_1(T\psi)
&\le B D\tau,
\end{align}
\begin{align}
\label{e2there}
\nonumber
\int_{{\mathcal{R}}(0,\tau)} 
 \mathcal{E}_2(T\psi)
&\le \sum_i  \int_{{\mathcal{R}}(\tau_{i},\tau_{i+1})}\mathcal{E}_2(T\psi) \\
\nonumber
&\le \epsilon \sum_i \tau_i^2\int_{\Sigma_{\tau_i}}J^N_\mu (T\hat\psi) N^\mu\\
\nonumber
&\le \epsilon \sum_i \tau_i^2\int_{\Sigma_{\tau_i}\cap\{r \precsim \tau_{i+1}-\tau_i\}}J^N_\mu (T\psi) N^\mu+\chi (T\psi)^2\\
\nonumber
&\le \epsilon\sum_i \tau_i^2(\tau_i^{-1+2\delta}C'D+\tau_i^{-2+2\delta}CD) \\
&\le \epsilon\delta^{-1} \tau^{1+2\delta}C'D+\epsilon\delta^{-1}\tau^{2\delta}CD,
\end{align}
\begin{align}
\label{e3there}
\nonumber
\int_{{\mathcal{R}}(0,\tau)} 
 \mathcal{E}_3(T\psi)
&\le \sum_i  \int_{{\mathcal{R}}(\tau_{i},\tau_{i+1})}\mathcal{E}_3(T\psi) \\
\nonumber
&\le \epsilon \sum_i\left( \tau_i^2\int_{\Sigma_{\tau_i}}J^N_\mu (T\psi) N^\mu
+\int_{\Sigma_{\tau_i}}J_\mu^N(T\psi)n^\mu_{\Sigma_{\tau_i}}\right)\\
\nonumber
&\le \epsilon\sum_i (\tau_i^2\tau_i^{-1+2\delta}C'D +D\log \tau_i)\\
&\le \epsilon\delta^{-1} \tau^{1+2\delta}C'D.
\end{align}
We use here the algebra of constants where $B\epsilon= \epsilon$.
The constant
$D$ is a quantity coming from data. {\bf Exercise}: What is $D$ and why is $(\ref{e1there})$
true?

For $\epsilon\ll \delta$ and $C'$ sufficiently large, 
we see that from $(\ref{recal})$ applied to $T\psi$ in place
of $\psi$, using $(\ref{e1there})$, $(\ref{e2there})$, we improve $(\ref{impr2})$. 

On the other hand choosing $C'\ll C$ and then $\tau$ sufficiently large, we have
\[
\tau^{-2}\delta^{-1}(CD\tau^{-1+2\delta} +C' D\tau^{2\delta})\le \frac12CD\tau^{-2+2\delta}
\]
and thus, again for $\epsilon\ll \delta$, using $(\ref{e1here})$, $(\ref{e2here})$
we can improve  $(\ref{impr})$ from $(\ref{recal})$.

Once one obtains $(\ref{impr})$, 
then decay can be extended to decay in $\tilde\Sigma_\tau$
by the argument of Section~\ref{Morawetz}, by applying conservation of the $J^T$ flux
backwards.\footnote{Note that in view of the fact that we argued by continuity to obtain
$(\ref{impr})$, we could not obtain this extended decay through $\tilde\Sigma_\tau$ earlier.
This is why we have localised as in~\cite{dr3}, not as in Section~\ref{Morawetz}.}

\subsubsection{Pointwise bounds}
In any region $r\le R$, we may now obtain pointwise
decay bounds simply
by  further commutation with $T$, $N$ as in Section~\ref{asacommutator}.
To obtain the correct pointwise decay statement towards null infinity,
one must also commute the 
equation with a basis $\Omega_i$ for the Lie algebra of the
\emph{Schwarzschild} 
metric, exploiting the $r$-weights of these vector fields. 
Defining $\tilde\Omega_i=\zeta(r)\Omega_i$, where $\zeta$ is a cutoff which vanishes
for $r\le R_0$, where $3M\ll R_0$, and, setting $\tilde\psi=\tilde\Omega\psi$,
we have
\[
\Box_g \tilde\psi = F_1\partial^2\psi +F_2\partial\psi
\]
where $F_1=O(r^{-2})$ and $F_2=O(r^{-3})$.
Having estimates already for $\psi$, $T\psi$, 
one can may apply the $X$ and $Z$ estimates as before for $\tilde\psi$, 
only, in view of the $F_2$ term,
now one must exploit also the $X$-estimate in
$D^+(\Sigma_{\tau_i}\cap
\{r\precsim \tau_{i+1}-\tau_i\})\cap J^-(\Sigma_{\tau_{i+1}})$.
We leave this as an {\bf exercise}.

\subsubsection{The decay theorem}
We have obtained thus
\begin{theorem} 
\label{DT}
Let $(\mathcal{M},g)$ be Kerr for $|a|\ll M$, $\mathcal{D}$ be the closure of its
domain of dependence,
let $\Sigma_0$ be the surface $\mathcal{D}\cap\{t^*=0\}$, 
let $\uppsi$, $\uppsi'$ be initial data on $\Sigma_0$ such that
$\uppsi\in H^s_{\rm loc}(\Sigma)$, $\uppsi'\in H^{s-1}_{\rm loc}(\Sigma)$ for $s\ge 1$,
and $\lim_{x\to i^0}\uppsi=0$,
and let $\psi$ be the 
corresponding unique solution
of $\Box_g\psi=0$. Let $\varphi_\tau$ denote the $1$-parameter
family of diffeomorphisms generated by $T$,
let $\tilde\Sigma_0$ be an arbitrary spacelike hypersurface in 
$J^+(\Sigma_0 \setminus \mathcal{U})$ where $\mathcal{U}$ is an open neighborhood of the
asymptotically flat end\footnote{This
is just the assumption that $\tilde\Sigma_0$ ``terminates''
on null  infinity},
and define $\tilde\Sigma_\tau=\varphi_\tau (\tilde\Sigma_0)$.
Let $s\ge 3$ and assume
\[
E_1\doteq 
\int_{\Sigma_0} r^2 (J_\mu^{n_0}(\psi) + J_\mu^{n_0}(T\psi)+ J_\mu^{n_0}(TT\psi)) n^\mu_0<\infty.
\]
Then there exists a $\delta>0$ depending on $a$ (with $\delta\to 0$ as $a\to 0$) and a
$B$ depending only on $\tilde\Sigma_0$ such that
\[
\int_{\tilde \Sigma_\tau} J^N (\psi) n^\mu_{\tilde{\Sigma_\tau}} \le
BE_1\, \tau^{-2+2\delta}.
\]
Now let $s\ge 5$ and
assume
\[
E_2\doteq\sum_{|\alpha|\le2 } \sum_{\Gamma=\{T, N, \Omega_i\}}
\int_{\Sigma_0} r^2 (J_\mu^{n_0}(\Gamma^\alpha \psi) + J_\mu^{n_0}(\Gamma^\alpha T\psi)+ J_\mu^{n_0}(\Gamma^\alpha TT\psi)) n^\mu_0
<\infty
\]
where $\Omega_i$ are the Schwarzschild angular momentum operators.
Then
\[
\sup_{\tilde\Sigma_\tau}\sqrt{r}|\psi|\le B \sqrt{E_2}\, \tau^{-1+\delta}, \qquad
\sup_{\tilde\Sigma_\tau}r|\psi|\le B \sqrt{E_2}\, \tau^{(-1+\delta)/2}.
\]
\end{theorem}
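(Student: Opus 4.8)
The plan is to assemble the estimates built up in the preceding subsections into the two displayed decay statements by means of a continuity (bootstrap) argument; by this point essentially all the analytic work is done, and what remains is to combine it correctly. The foundation is the integrated local energy decay of Proposition~\ref{integdecay}, which already encodes the trapping of Kerr through the frequency-localised multiplier of Section~\ref{kloubi}, together with the conformal $Z$-estimate $(\ref{recal})$, which trades the degenerate spacetime integral for a $\tau$-weighted energy flux on $\Sigma_\tau$. First I would fix the bootstrap assumptions $(\ref{impr})$ and $(\ref{impr2})$ on the region $\{r\precsim\tau\}$, allowing a small loss $\delta>0$, and seek to improve them strictly. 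The decisive structural fact is that the ``wrong-sign'' error $\mathcal{E}$ appearing in $K^{Z,w}$ decomposes as $\mathcal{E}_1+\mathcal{E}_2+\mathcal{E}_3$, localised respectively in a bounded $r$-region, near $\mathcal{H}^+$, and near $i^0$, with each piece controlled on a dyadic subdivision of $[0,\tau]$ by $(\ref{letus1})$, $(\ref{letus2})$, $(\ref{letus3})$, using Proposition~\ref{integdecay} and the boundedness Theorem~\ref{kerbnd}.

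Next I would carry out the dyadic summations $(\ref{e1here})$--$(\ref{e3there})$: feeding the bootstrap assumptions back into $(\ref{recal})$, applied to both $\psi$ and $T\psi$, yields $\tau^{-2+2\delta}$ and $\tau^{-1+2\delta}$ bounds which, for $\epsilon\ll\delta$ and with the constant hierarchy $C'\ll C$, strictly improve $(\ref{impr})$ and $(\ref{impr2})$. This closes the continuity argument and establishes energy decay on $\Sigma_\tau\cap\{r\precsim\tau\}$. To promote this to the full hypersurface $\tilde\Sigma_\tau$ terminating on null infinity, I would propagate the estimate outward using conservation of the $J^T$ flux, running the geometric argument of Section~\ref{Morawetz} backwards from late $\{t=\text{const}\}$ slices, thereby obtaining the first displayed inequality $\int_{\tilde\Sigma_\tau}J^N(\psi)n^\mu\le BE_1\,\tau^{-2+2\delta}$.

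For the pointwise statements I would commute the equation with $T$ and with $N$, exploiting Proposition~\ref{poscompu} and the stable red-shift of Corollary~\ref{Nproperties} to control higher-order energies up to and including the horizon, and separately commute with the Schwarzschild angular-momentum operators $\Omega_i$ to recover the $r$-weights needed towards $\mathcal{I}^+$. Cutting these off as $\tilde\Omega_i=\zeta(r)\Omega_i$ produces an inhomogeneous equation whose source decays in $r$ fast enough that the $X$- and $Z$-estimates can be reapplied to $\tilde\psi=\tilde\Omega_i\psi$. A Sobolev embedding on $\tilde\Sigma_\tau$, together with the good zeroth-order term retained in $(\ref{fLHS})$, then converts the decay of the $E_2$-controlled quantities into the weighted pointwise rates $\sqrt{r}\,|\psi|\lesssim\tau^{-1+\delta}$ and $r\,|\psi|\lesssim\tau^{(-1+\delta)/2}$.

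The main obstacle lies upstream of this assembly and has already been negotiated: it is the treatment of the horizon in the $Z$-estimate, where $Z$ fails to be $C^1$ and the error $\mathcal{E}_2$ carries singular weights $|\log(r-r_+)|^p$. The essential input is the refined red-shift multiplier of Proposition~\ref{redrefine}, whose divergence dominates precisely these logarithmically-weighted components of the energy-momentum tensor, so that $(\ref{letus2})$ holds with the required $(\tau')^2$ weight. Combined with the fact that trapping in Kerr must be captured in phase space rather than at a single hypersurface---forcing Carter's separation and the superradiant/non-superradiant split of Section~\ref{snsf}---this horizon analysis is exactly what renders the restriction $|a|\ll M$ and the loss $\delta=\delta(a)\to 0$ unavoidable within the present scheme.
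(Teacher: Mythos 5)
Your proposal is correct and follows essentially the same route as the paper's own proof: Proposition~\ref{integdecay} plus the $Z$-estimate $(\ref{recal})$, the bootstrap $(\ref{impr})$--$(\ref{impr2})$ closed via the dyadic sums $(\ref{e1here})$--$(\ref{e3there})$ and the decomposition $\mathcal{E}=\mathcal{E}_1+\mathcal{E}_2+\mathcal{E}_3$ with Proposition~\ref{redrefine} handling the logarithmic degeneracy of $Z$ at $\mathcal{H}^+$, followed by commutation with $T$, $N$ and the cutoff $\tilde\Omega_i$ for the pointwise rates. The only minor imprecision is attributing the frequency analysis to the superradiant split of Section~\ref{snsf}; the decay argument actually rests on the four-fold decomposition $\psi_{\mbox{$\flat$}},\psi_{\lessflat},\psi_{\mbox{$\natural$}},\psi_{\mbox{$\sharp$}}$ arising from Carter's separation, but this does not affect the structure of your argument.
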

One can obtain decay for arbitrary derivatives, including 
transversal derivatives to $\mathcal{H}^+$,
using additional commutation by $N$. 
See~\cite{dr7}.

\subsection{Black hole uniqueness}
In the context of the vacuum equations $(\ref{Evac})$, 
the Kerr solution 
plays an important role not only because it is believed to be stable, 
but because it is believed to be the only 
stationary black hole solution.\footnote{A further extrapolation leads
to the ``belief'' that all vacuum solutions eventually   decompose
into $n$ Kerr solutions moving away from each other.}
This is the celebrated \emph{no-hair ``theorem''}.
In the case of the Einstein-Maxwell equations, there is an analogous
no-hair ``theorem'' stating uniqueness for Kerr-Newman.
A general reference is~\cite{unique}.

Neither of these results is close to being a theorem in the generality which
they are often stated. Reasonably definitive statements have
only been proven in the much easier
static case, and in the case where axisymmetry is assumed a priori
and the horizon is assumed connected, i.e.~that there is one black hole. 
Axisymmetry can be inferred from stationarity under various special assumptions,
including the especially restrictive assumption of analyticity. 
See~\cite{jlc}   for  the latest on the analytic case, 
and~\cite{ionescu} for new
interesting results in the direction of removing the analyticity assumption
in inferring axisymmetry from stationarity.

Nonetheless, the expectation that black hole uniqueness is true
reasonably raises the question:
why the interest in more general black holes, allowed in Theorem~\ref{kerbnd}?

For a classical ``astrophysical'' motivation, note that
black hole solutions can in principle
 exist in the presence of persistent atmospheres. Perhaps the simplest
such constructions would
involve solutions of the Einstein-Vlasov system, where matter is described
by a distribution function on phase space invariant under geodesic flow. 
These black hole spacetimes would in general
not be Kerr even in their vacuum regions.
Recent speculations in high energy physics yield other possible motivations:
There are now a variety of ``hairy black holes'' 
solving Einstein-matter systems for non-classical matter, like Yang-Mills fields~\cite{smol}, 
and a large variety of vacuum black holes in higher dimensions~\cite{harvey}, 
many of which are currently the topic of intense study.

There is, however, a second  type of reason, which is relevant even when we restrict our 
attention to the vacuum equations $(\ref{Evac})$ in dimension 4. The less information 
one must use about the spacetime to obtain quantitative control on fields, the
better chance one has at obtaining a stability theorem.  
The essentially non-quantitative\footnote{As should
be apparent by the role of analyticity or Carleman estimates.}
aspect of our current limited understanding of black hole uniqueness
should make it clear that these arguments probably will not have a place
in a stability proof. Indeed, it would be an interesting problem
to explore the possibility of
obtaining a more quantitative version of uniqueness 
theorems (in a neighborhood of Kerr) following
ideas in  this section.

\subsection{Comments and further reading}  
\label{neocoms2}
Theorem~\ref{kerbnd} was proven in~\cite{dr6}.
In particular, this provided the first global result of any kind for general solutions
of  the Cauchy problem on a (non-Schwarzschild) Kerr background.
 Theorem~\ref{DT} was first announced at the Clay Summer School
 where these notes were lectured. Results in the direction of Proposition~\ref{integdecay} 
 are independently being studied in work in progress
 by Tataru-Tohaneanu\footnote{communication from Mihai Tohaneanu, a summer school
 participant  who attended these lectures} and Andersson-Blue\footnote{lecture of P.~Blue, 
Mittag-Leffler, September 2008}.

The best previous results concerning Kerr had been obtained by Finster
and collaborators in an important series of papers culminating in~\cite{fksy}. See also~\cite{fksy2}. 
The methods of~\cite{fksy} are spectral theoretic, with many pretty applications
of contour integration and o.d.e.~techniques.
The results of~\cite{fksy} do not
apply to general solutions of the Cauchy problem, however,
only to individual azimuthal modes,
i.e.~solutions $\psi_m$ of fixed $m$.  In addition, \cite{fksy}
imposes the restrictive assumption that
$\mathcal{H}^+\cap\mathcal{H}^-$ not be in the support of the modes.
(Recall the discussion of      
Section~\ref{discrete}.)
Under these assumptions,
the main result stated in~\cite{fksy} is that 
\begin{equation}
\label{fixedk}
\lim_{t\to\infty} \psi_m (r, t)=0
\end{equation}
for
any $r>r_+$.
Note that the reason that $(\ref{fixedk})$ did not yield any statement concerning general
solutions, i.e.~the sum over $m$--not even a non-quantitative one--is 
that one did not have
a quantitative boundedness statement as in Theorem~\ref{kerbnd}.
Moreover, one should mention that even for fixed
$m$, the results of~\cite{fksy} are  in principle
compatible with the statement
\[
\sup_{\mathcal{H}^+} \psi_m =\infty,
\]
i.e.~that the azimuthal modes blow up along
the horizon. 
See the comments in Section~\ref{heuristic}.
It is important to note, however, 
that the statement of~\cite{fksy} need not restrict to $|a|\ll M$, but concerns the entire
subextremal range $|a|<M$. Thus, the statement $(\ref{fixedk})$ of~\cite{fksy} is currently
the only known global statement about azimuthal modes on Kerr spacetimes
 for large but subextremal $a$.

There has also been interesting work on the Dirac equation~\cite{fksy0, hn},
for which superradiance does not occur, and the Klein-Gordon
equation~\cite{haf}. For the latter, see also Section~\ref{KGproblem}.

\subsection{The nonlinear stability problem for Kerr}
\label{formulation}
We have motivated these notes with the nonlinear stability problem of Kerr.
Let us give finally  a rough formulation.

\begin{conjecture}
\label{stabconj}
Let $(\Sigma, \bar{g}, K)$ be a vacuum initial data set (see Appendix~\ref{initdatasec}) sufficiently
close (in  a weighted sense) to the initial data on 
Cauchy hypersurface in the Kerr solution $(\mathcal{M},g_{M,a})$ for some
parameters $0\le |a|<M$. Then
the maximal vacuum development $(\mathcal{M},g)$ possesses a complete null infinity $\mathcal{I}^+$
such that  the metric restricted to $J^-(\mathcal{I}^+)$ approaches a Kerr solution
$(\mathcal{M},g_{M_f,a_f})$ in a uniform way (with respect to a foliation of the type
$\tilde\Sigma_\tau$ of Section~\ref{S2}) with quantitative decay rates, where $M_f$, $a_f$ are near $M$, $a$ respectively.
\end{conjecture}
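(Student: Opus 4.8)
The plan is to run a continuity/bootstrap argument on the maximal vacuum development, using the quantitative decay established for the scalar model (Theorem~\ref{DT}) as a template for the behaviour one must reproduce at the level of the full nonlinear system. First I would fix a gauge adapted to the problem. Since the final parameters $M_f$, $a_f$ are not known a priori but are selected only in the limit $\tau\to\infty$, the formulation must carry a \emph{modulation}: at each stage of the bootstrap one identifies a best-fit member $g_{M(\tau),a(\tau)}$ of the Kerr family and measures the solution as a perturbation of it, allowing $(M(\tau),a(\tau))$ to drift slowly and converge. One then writes the vacuum equations $(\ref{Evac})$ as a quasilinear system for the deviation of the metric from this reference Kerr, either in a generalised wave/harmonic gauge or in a geometric ($\tilde\Sigma_\tau$-adapted, possibly double-null) gauge in which the red-shift and trapping structure of Section~\ref{KDsec} remain manifest. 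The bootstrap assumptions would be precisely the weighted energy and pointwise decay rates of Theorem~\ref{DT}, now imposed on the curvature and on the gauge-dependent metric quantities.

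The analytic heart is to promote the entire linear machinery of Sections~\ref{S1}--\ref{pertsec} to the tensorial setting. Concretely I would replace the estimates for $\Box_g\psi=0$ by estimates for the Bianchi system satisfied by the renormalised Weyl curvature --- the tensorial analogue of the decomposition into $\psi_\flat,\psi_\natural,\psi_\sharp$ of Section~\ref{KDsec} --- combining the red-shift multiplier $N$ of Proposition~\ref{newproprs}, whose positivity is \emph{geometric} and hence stable (Section~\ref{epilogue}), to gain control up to and along $\mathcal{H}^+$; the Morawetz/$X$-type currents to capture trapping, using that in Kerr the trapped set has codimension one in phase space; and the conformal $Z$-current to convert integrated decay into energy decay and the pointwise rates $(\ref{ptb})$. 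The crucial linear input is that for slowly rotating parameters the superradiant frequencies are not trapped, so that the frequency-localised estimate closes with a gain that survives the perturbation.

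Nonlinear closure then proceeds by feeding the bootstrapped decay back into the error terms generated by the quadratic and higher nonlinearities of $(\ref{Evac})$ and by the deviation of the true metric from the reference Kerr. Here the quasilinear nature is essential and delicate: the light cones, the horizon $\mathcal{H}^+$, the ergoregion, and above all the location of the trapped set are not known in advance but must themselves be controlled \emph{a posteriori} within the bootstrap --- exactly the difficulty already signalled in the Triaxial Bianchi IX analysis (Section~\ref{higherhigherdim}), where the multipliers had to be normalised ``from the future.'' One must verify that the decay rates are strong enough to render all error integrals summable over the dyadic exhaustion used in Section~\ref{integdecaysec}, recover improved estimates, and simultaneously show that $(M(\tau),a(\tau))$ converge to limits $(M_f,a_f)$ near $(M,a)$, thereby establishing the uniform approach to $(\mathcal{M},g_{M_f,a_f})$ and the completeness of $\mathcal{I}^+$ in the sense of Section~\ref{concep}.

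The hard part will be the interaction of three features that are individually understood but not jointly: the dynamical, modulated determination of the final state; the need to control trapping when the trapped set is itself an unknown of the problem, so that the carefully centred multiplier $f$ of Section~\ref{kloubi} must be reconstructed within the bootstrap rather than read off an explicit background; and the coupling of both to superradiance and to the gauge. In particular the rate $\tau^{-2+2\delta}$ is borderline for closing quadratic nonlinearities, so a genuine proof will likely require sharpening the linear decay (cf.\ Proposition~\ref{nowaste}) and exploiting the precise null (``peeling'') structure of the curvature near $\mathcal{I}^+$, together with a \emph{quantitative} replacement for the Hawking rigidity/uniqueness mechanism in place of the analytic arguments whose non-quantitative character is noted above.
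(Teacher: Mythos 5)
The statement you have been asked to prove is Conjecture~\ref{stabconj}. It is stated in Section~\ref{formulation} as an open problem---indeed as \emph{the} open problem motivating the entire set of notes---and the paper contains no proof of it. There is therefore nothing in the paper to compare your argument against, and, more to the point, what you have written is not a proof: it is a programme. Every step that would carry the actual mathematical content is deferred. You propose to ``promote the entire linear machinery to the tensorial setting,'' but decay for the spin-$2$ Bianchi system even on a \emph{fixed} Kerr background is itself listed as an open problem in Section~\ref{acik}, with the additional obstruction (which you do not address) that the curvature tensors of nearby Kerr metrics furnish nontrivial stationary solutions of that system, so no unconditional decay statement of the form you invoke can hold without first subtracting a dynamically determined stationary part. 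You propose to close a quasilinear bootstrap with the rate $\tau^{-2+2\delta}$ of Theorem~\ref{DT}, while correctly observing that this rate is borderline-to-insufficient for quadratic nonlinearities; sharpening it is Open problem~\ref{MORE}, which you cite but do not solve. You propose to reconstruct the trapping multiplier of Section~\ref{kloubi} on a background known only a posteriori; the paper notes this has been achieved only under Triaxial Bianchi IX symmetry, where the problem is $1+1$-dimensional. And the modulation of $(M(\tau),a(\tau))$, together with its convergence to a limit $(M_f,a_f)$, is asserted rather than carried out.

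In short, the passage is a reasonable survey of the \emph{expected} strategy---and it is broadly consistent with the heuristic picture the authors themselves sketch (red-shift for the horizon, phase-space codimensionality of trapping, non-trapping of superradiant frequencies, the Bianchi equations as the correct linearisation)---but each of the three difficulties you flag as ``the hard part'' is precisely where a proof would have to live, and none of them is resolved here. You should present this as a discussion of the conjecture and of the obstacles to proving it, not as a proof attempt.
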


Let us make some remarks concerning the above statement.
Under the assumptions of the above conjecture, $(\mathcal{M},g)$ certainly contains
a trapped surface $S$ by Cauchy stability. By Penrose's incompleteness theorem 
(Theorem~\ref{incthe}),
this implies that $(\mathcal{M},g)$ is future causally geodesically incomplete.
By the methods of the proof of Theorem~\ref{incthe}, it is
 easy to see that $S\cap J^-(\mathcal{I}^+)=\emptyset$. 
Thus, as soon as $\mathcal{I}^+$ is shown to be complete, it would follow
that the spacetime has a black hole region in the sense of Section~\ref{general?}.\footnote{Let us
also remark the obvious fact that the above conjecture implies in particular that 
weak cosmic censorship holds in a neighborhood of Kerr data.}

In view of this, one can also formulate the problem where the initial data are assumed
close to Kerr initial data on an incomplete subset of a Cauchy hypersurface with one
asymptotically flat end and bounded by a trapped surface. This is in fact the physical 
problem\footnote{Cf.~the comments on the relation between maximally-extended
Schwarzschild and Oppenheimer-Snyder.},
but in view of Cauchy stability, it is equivalent to the formulation we have given above. 
Note also the open problem described in the last paragraph of Section~\ref{vaccol}.

In the spherically symmetric analogue of this problem where the Einstein
equations are coupled with matter, or the Bianchi-triaxial IX
vacuum problem discussed in Section~\ref{higherhigherdim}, 
the completeness of null infinity can be inferred easily without detailed
understanding of the geometry~\cite{trapped, kostakis1}. 
One can view this as an ``orbital stability'' statement.
In this spherically symmetric case,
the asymptotic stability can then be studied a posteriori, as in~\cite{dr1, kostakis2}. 
This latter problem is much more difficult.

In the case of Conjecture~\ref{stabconj}, in contrast to the symmetric cases mentioned above,
one does \emph{not} expect 
to be able to show any weaker stability statement than the asymptotic stability with 
decay rates as stated. Note that it is only the Kerr family as a whole--\emph{not} the
Schwarzschild subfamily--which is expected to be asymptotically
stable: Choosing $a=0$ certainly does not imply that $a_f=0$.  On the other hand, if $|a|\ll M$,
then by the formulation of the above conjecture, 
it would follow that  $|a_f|\ll M_f$.
It is with this in mind that we have considered the
$|a|\ll M$ case in this paper.

\section{The cosmological constant $\Lambda$ and Schwarzschild-de Sitter}
\label{cosmolosec}
Another interesting setting for the study of the stability problem are black holes
within \emph{cosmological spacetimes}. 
Cosmological spacetimes--as opposed to
asymptotically flat spacetimes (See Appendix~\ref{asymptflat}), 
which model spacetime in the vicinity of an
isolated self-gravitating system--are supposed to model the whole universe. 
The working hypothesis  of classical cosmology is that the universe is 
approximately homogeneous and isotropic (sometimes known as the
\emph{Copernican principle}~\cite{he:lssst}).
In the Newtonian theory, it was not possible to formulate a cosmological model
satisfying this hypothesis.\footnote{It is possible, however, if one
geometrically reinterprets the Newtonian theory and allows space
to be--say--the torus. See~\cite{rendallrev}. These reinterpretations, of course,
postdate the formulation of general relativity.}
One of the major successes of general relativity was that the theory
allowed for such solutions, thus making cosmology into a mathematical science.

In the early years of mathematical cosmology, it was assumed that the 
universe should be static\footnote{much like in  the early studies of asymptotically
flat spacetimes discussed in Section~\ref{stars}}.
To allow  for such static cosmological
solutions, Einstein modified his equations $(\ref{Eeq})$ by adding a $0$'th order term:
 \begin{equation}
 \label{withcosmo}
 R_{\mu\nu}-\frac12g_{\mu\nu}R+ \Lambda g_{\mu\nu} = 8\pi T_{\mu\nu}.
 \end{equation}
 Here $\Lambda$ is a constant known as \emph{the cosmological constant}.
 When coupled with a perfect fluid, this system admits a static, homogeneous, isotropic 
 solution with
 $\Lambda>0$ and topology $\mathbb S^3\times \mathbb R$. This spacetime is sometimes
 called the \emph{Einstein static universe}.

Cosmological solutions with various values of the parameter $\Lambda$ were studied
by Friedmann and Lemaitre, under the hypothesis of exact homogeneity and isotropy. 
Static solutions are in fact always unstable under perturbation
of initial data. Typical homogeneous isotropic solutions expand
or contract, or both, beginning and or ending in singular configurations.
As with the early studies (referred to in Sections~\ref{orextsec})
illuminating the extensions of the Schwarzschild metric across
the horizon, these were ahead of their time.\footnote{In fact, the two are very closely
related! The interior region of the Oppenheimer-Snyder collapsing star is precisely
isometric to a region of a Friedmann universe. See~\cite{MTW}.}
(See the forthcoming book~\cite{bieri} for a history of this fascinating early
period in the history of mathematical cosmology.)  These predictions
were taken more seriously with Hubble's observational discovery of the expansion of the
universe, and the subsequent evolutionary theories of matter,
but the relevance of the solutions near where they are actually singular 
was taken seriously only after the incompleteness theorems  of Penrose and Hawking--Penrose
were proven (see Section~\ref{trapsec}).

We shall not go into a general discussion of cosmology here, 
nor tell the fascinating story of the ups and downs of $\Lambda$--from 
its adoption by Einstein to his subsequent well-known rejection of it, to its later
``triumphant'' return in current cosmological models, taking a very small positive value,
the ``explanation'' of which is widely regarded as one of the outstanding puzzles of
theoretical physics.
Rather, let us pass directly to the object of our study here,
one of the simplest examples of an inhomogeneous  ``cosmological'' spacetime,
where non-trivial small scale structure occurs in an ambient expanding cosmology.
This is the
Schwarzschild--de Sitter solution.

\subsection{The Schwarzschild-de Sitter geometry}

Again, this metric
was discovered in local coordinates early in the history of general relativity,
independently by Kottler~\cite{kottler} and Weyl~\cite{Weyl}.
Fixing $\Lambda>0$,\footnote{The expression $(\ref{metricexp})$
with $\Lambda<0$ defines
\emph{Schwarzschild--anti-de Sitter}. See Section~\ref{CFT}.} Schwarzschild-de Sitter 
is a one-parameter family of solutions of the from
\begin{equation}
\label{metricexp}
-(1-2M/r- \Lambda r^3)dt^2 + (1-2M/r -\Lambda r^3)^{-1} dr^2+ r^2 d\sigma_{\mathbb S^2}.
\end{equation}
The black hole case is the case where $0<M<\frac{1}{3\sqrt{\Lambda}}$.
A maximally-extended solution~(see~\cite{carter, GibHawk})
then has as Penrose diagram the infinitely repeating chain:
\[
\input{desitforkerr.pstex_t}
\]
To construct ``cosmological solutions'' one often takes spatially compact quotients.
(One can also glue such regions into other cosmological spacetimes. 
See~\cite{cp}. For more on the geometry of this solution, see~\cite{jb:gcc}.)

\subsection{Boundedness and decay}
The region ``analogous'' to the region studied previously for Schwarzschild and Kerr is the
darker shaded region $\mathcal{D}$ above. 
The horizon ${\overline{\mathcal{H}}}^+$ separates $\mathcal{D}$ from 
an ``expanding'' region where the spacetime is similar to the celebrated de-Sitter space.
If $\Sigma$ is a Cauchy surface such that
$\Sigma\cap\mathcal{H}^-=\Sigma\cap\overline{\mathcal{H}}^-=\emptyset$, then
let us define $\Sigma_0=\mathcal{D}\cap \Sigma$, and let us define
$\Sigma_\tau$ to be the translates of $\Sigma_0$ by the flow $\varphi_t$
generated by the Killing
field $T$ ($=\frac{\partial}{\partial t}$). Note that, in contrast to
the Schwarzschild or Kerr case, $\Sigma_0$ is compact.

We have
\begin{theorem}
\label{desitbound}
The statement of Theorem~\ref{boundedn} holds for these spacetimes,
where $\Sigma$, $\Sigma_0$, $\Sigma_\tau$
are as above, and $\lim_{x\to i^0}|\uppsi|$ 
is replaced by $\sup_{x\in\Sigma_0}|\uppsi|$.
\end{theorem}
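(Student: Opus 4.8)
The plan is to follow the red-shift-based proof of boundedness of Section~\ref{newproofofb} essentially verbatim, the one structural novelty being that the region $\mathcal{D}$ is now bounded in the future by \emph{two} non-degenerate Killing horizons, the black hole horizon $\mathcal{H}^+$ and the cosmological horizon $\overline{\mathcal{H}}^+$, rather than by one horizon together with an asymptotically flat end at $i^0$. This is in fact a simplification: since $\Sigma_0=\mathcal{D}\cap\Sigma$ is compact, there are no weights and no decay assumptions at infinity, and all elliptic and Sobolev estimates below are the standard ones on a compact Riemannian manifold with boundary. Crucially, as in Schwarzschild (and unlike Kerr), there is no ergoregion inside $\mathcal{D}$: the Killing field $T=\partial_t$ is future-directed timelike in the interior, and null and tangent to each of the two future horizons.

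First I would record the conserved degenerate $T$-energy. Since $T$ is Killing, $K^T=0$, and since $T$ is future-directed causal throughout $\mathcal{D}$, the divergence theorem applied to $J^T_\mu$ on $\mathcal{R}(0,\tau)$ yields
\begin{multline*}
\int_{\Sigma_\tau} J^T_\mu(\psi) n^\mu_{\Sigma_\tau}
+\int_{\mathcal{H}^+(0,\tau)} J^T_\mu(\psi) n^\mu_{\mathcal{H}}\\
+\int_{\overline{\mathcal{H}}^+(0,\tau)} J^T_\mu(\psi) n^\mu_{\overline{\mathcal{H}}}
=\int_{\Sigma_0} J^T_\mu(\psi) n^\mu_{\Sigma_0},
\end{multline*}
where both horizon fluxes are non-negative because along a null generator $J^T_\mu T^\mu=(T\psi)^2\ge 0$. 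This gives a uniform bound on the degenerate $T$-flux, and in particular the a priori inequality $(\ref{astera})$ now holds on all of $\mathcal{D}$; it is precisely the causality of $T$ in $\mathcal{D}$ that makes this step available. As before, however, the $T$-flux degenerates at both horizons, where $T$ becomes null.

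The key step, and the one genuinely new point, is to capture the red-shift at \emph{each} horizon. Both $\mathcal{H}^+$ and $\overline{\mathcal{H}}^+$ are future, non-degenerate Killing horizons of $\mathcal{D}$, so the computation $(\ref{KRUCIAL})$ applies at each provided one chooses the future-directed null transversal $Y$ pointing \emph{into} $\mathcal{D}$ (toward increasing $r$ at $\mathcal{H}^+$, toward decreasing $r$ at $\overline{\mathcal{H}}^+$). With this consistent geometric orientation one obtains $g(\nabla_T Y,T)=2\kappa_{\mathcal{H}}>0$ on $\mathcal{H}^+$ and $g(\nabla_T Y,T)=2\kappa_{\overline{\mathcal{H}}}>0$ on $\overline{\mathcal{H}}^+$; the favourable sign at the cosmological horizon — where the naive surface gravity $\tfrac12 V'$ has the opposite sign but the transversal points the other way — is exactly the content of the general framework of Section~\ref{epilogue} for non-degenerate Killing horizons, which I would simply invoke. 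Granting this, one constructs, as in Corollary~\ref{Nproperties}, a single $\varphi_t$-invariant future-directed timelike vector field $N$ and constants $b,B>0$ with radii $r_-<r_0<r_1<r_2<r_3<r_+$ such that $K^N\ge b\,J^N_\mu n^\mu_\Sigma$ for $r\le r_0$ and for $r\ge r_3$, $N=T$ for $r_1\le r\le r_2$, and $|K^N|\le B\,J^T_\mu n^\mu_\Sigma$ with $J^N_\mu n^\mu_\Sigma\sim J^T_\mu n^\mu_\Sigma$ on the two transition zones. I expect this bookkeeping at the second horizon to be the main obstacle, though it is conceptual rather than computational.

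With such an $N$ in hand the remainder is identical to Section~\ref{newproofofb}. Applying the energy identity for $J^N$ on $\mathcal{R}(\tau',\tau)$, the good bulk terms near the two horizons dominate, the transition-zone errors are controlled by $\int J^T_\mu n^\mu$, and the a priori bound $(\ref{astera})$ closes the integral inequality $(\ref{1d})$ by the same Gr\"onwall argument, giving the non-degenerate ``local observer'' estimate $\int_{\Sigma_\tau} J^N_\mu n^\mu_{\Sigma_\tau}\le B\int_{\Sigma_0} J^N_\mu n^\mu_{\Sigma_0}$. Higher-order energy bounds follow by commuting with $T$, with the angular momentum operators $\Omega_i$ (available by the spherical symmetry of $(\ref{metricexp})$), and with the vector fields $\hat Y$ attached to each horizon, whose favourable commutation sign from Proposition~\ref{poscompu} persists at both horizons by the same surface-gravity positivity, together with standard elliptic estimates on the compact slices $\Sigma_\tau$. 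Finally, pointwise bounds — with $\lim_{x\to i^0}|\uppsi|$ replaced by $\sup_{\Sigma_0}|\uppsi|$, since $\Sigma_0$ is now compact — follow by Sobolev embedding on $\Sigma_\tau$, yielding the full statement of Theorem~\ref{boundedn}.
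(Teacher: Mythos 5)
Your proposal is correct and follows essentially the same route as the paper: the paper's proof is precisely the Schwarzschild red-shift argument of Section~\ref{newproofofb}, supplemented by a second red-shift vector field adapted to the cosmological horizon $\overline{\mathcal{H}}^+$, whose construction (and the positivity of the relevant surface gravity there) is exactly the content of the general framework of Section~\ref{epilogue} that you invoke. Packaging the two red-shift multipliers into a single $\varphi_t$-invariant timelike $N$ equal to $T$ in a middle region, and then running the $J^T$ a priori bound, the Gr\"onwall inequality, and the $T$, $\Omega_i$, $\hat{Y}$ commutations on the now-compact slices, is a faithful rendering of the intended argument.
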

\begin{proof}
The proof of the above theorem is as in the Schwarzschild case, 
except that in addition to the analogue
of $N$, one must use a vector field $\bar N$ which plays the role of $N$ near the
``cosmological horizon'' $\bar{\mathcal{H}}^+$. It is a good {\bf exercise} for the reader
to think about the properties required to construct such a $\bar N$. A general construction
of such a vector field applicable to all non-extremal stationary black holes
is done in Section~\ref{epilogue}.  
\end{proof}

As for decay, 
we have
\begin{theorem}
\label{dst}
For every $k\ge 0$, there exist constants $C_k$ such that the following holds. 
Let $\uppsi\in H^{k+1}_{\rm loc}$, $\uppsi'\in H^{k}_{\rm loc}$, and 
define
\begin{eqnarray*}
E_k &\doteq& \sum_{|(\alpha)|\le k } 
\sum_{\Gamma=\{\Omega_i\}}\int_{\Sigma_0} J^{n_{\Sigma_\tau}}_\mu (\Gamma^{\alpha} \psi)
n^\mu_{\Sigma_\tau}.\\
\end{eqnarray*}
Then    
\begin{equation}
\label{tote...ds}
 \int_{\Sigma_\tau} J^{n_{\Sigma_\tau}} _\mu(\psi) n^\mu_{\Sigma_\tau} \le 
C_k E_k \tau^{-k }.
\end{equation}
For $k>1$ we have
\begin{equation}
\label{afairw}
\sup_{\Sigma_{\tau}} |\psi-\psi_0| \le C_k\sqrt{ E_k} \tau^{\frac{-k+1}2},
\end{equation}
where $\psi_0$ denotes the $0$'th spherical harmonic, for which we have
for instance the estimate 
\begin{equation}
\label{for0ths}
\sup_{\Sigma_{\tau}} |\psi_0| \le \sup_{x\in\Sigma_0}\uppsi_0+C_0\sqrt{ E_0(\uppsi_0,\uppsi_0')}.
\end{equation}
\end{theorem}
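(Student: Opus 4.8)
The plan is to prove Theorem~\ref{dst} by transplanting the red-shift/multiplier scheme of Sections~\ref{S1}--\ref{S2}, while exploiting the one essential structural simplification of Schwarzschild--de Sitter: the region $\mathcal{D}$ is \emph{spatially compact}. There is no asymptotically flat end, the only trapping occurs at the photon sphere $\{r=3M\}$ (whose location is unaffected by $\Lambda$), and the role formerly played by $\mathcal{I}^+$ is taken over by a \emph{second} non-degenerate Killing horizon $\overline{\mathcal{H}}^+$ of positive surface gravity. First I would record the boundedness of Theorem~\ref{desitbound}, which already requires, besides the red-shift field $N$ at $\mathcal{H}^+$, an analogous field $\bar N$ at $\overline{\mathcal{H}}^+$ (constructed in general in Section~\ref{epilogue}). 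Commuting boundedness with the angular momentum operators $\Omega_i$ --- which commute with $\Box_g$, with $T$, and with the spherically symmetric $N$, $\bar N$ --- gives, writing
\[
F_j(\tau)\doteq \sum_{|(\alpha)|\le j}\int_{\Sigma_\tau} J^N_\mu(\Omega^{(\alpha)}\psi)\, n^\mu_{\Sigma_\tau},
\]
both the uniform bound $F_j(\tau)\le C\,F_j(0)$ and the almost-monotonicity $F_j(\tau)\le C\,F_j(\tau')$ for $\tau\ge\tau'$, the latter coming from the good signs of $K^N$, $K^{\bar N}$ and of the horizon fluxes.

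The heart of the matter is a non-degenerate integrated local energy decay (ILED) estimate
\[
\int_{\mathcal{R}(\tau',\tau)} J^N_\mu(\psi)\,N^\mu \lesssim F_1(\tau').
\]
To produce it I would take a Morawetz current $J^{X,w}$ with $X=f(r^*)\partial_{r^*}$, $f(3M)=0$, $f'\ge 0$, exactly as in Section~\ref{hayhay}, whose bulk $K^{X,w}$ is positive definite away from $\{r=3M\}$; the photon sphere forces the usual degeneration, which is removed by applying the estimate to $\Omega_i\psi$ and invoking $\frac{\ell(\ell+1)}{r^2}\int_{\mathbb S^2}\psi^2\le\int_{\mathbb S^2}|\nabb\psi|^2$, at the cost of one angular derivative on the right (hence $F_1$). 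The boundary terms of $X$ on the two horizons are dangerous, since $X$ is not causal there; I would absorb them by adding small multiples $eJ^N+\bar e\,J^{\bar N}$ of the red-shift currents, whose bulk terms have a good sign dominating near each horizon and whose horizon fluxes are non-negative. Because $\Sigma_0$ is compact, \emph{no} weight degenerating at infinity enters --- this is precisely the point that later yields arbitrarily fast decay, in contrast to Schwarzschild where the asymptotically flat end forced the conformal $Z$-multiplier and capped the rate at $\tau^{-2}$.

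Given the ILED, the energy estimate $(\ref{tote...ds})$ follows by a clean iteration. Writing $\int_{\mathcal{R}(\tau',\tau)}J^N(\psi)N^\mu\approx\int_{\tau'}^\tau F_0(s)\,ds$, the ILED together with almost-monotonicity gives $(\tau-\tau')F_0(\tau)\lesssim F_1(\tau')$; taking $\tau'=\tau/2$ yields $F_0(\tau)\lesssim \tau^{-1}F_1(0)$. Since every $\Omega^{(\alpha)}\psi$ solves the same equation, the identical argument bounds the decay of $F_{j}$ by $F_{j+1}$, and inserting the decay of $F_{j+1}$ into the bound for $F_j$ and iterating $k$ times produces $F_0(\tau)\lesssim \tau^{-k}F_k(0)\lesssim\tau^{-k}E_k$, which is $(\ref{tote...ds})$. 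The pointwise bound $(\ref{afairw})$ for $\psi-\psi_0$ then follows by further commutation with $\Omega_i$, using the Poincar\'e inequality $\int_{\mathbb S^2}(\psi-\psi_0)^2\lesssim\sum_i\int_{\mathbb S^2}(\Omega_i\psi)^2$ (valid since $\psi-\psi_0$ has vanishing zeroth harmonic) to control the undifferentiated field, and applying Sobolev embedding on the compact $\Sigma_\tau$; optimising the number of commutations against the Sobolev threshold gives the stated rate $\tau^{(-k+1)/2}$.

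The zeroth harmonic $\psi_0$ is handled separately, and is where the mechanism genuinely differs. Being spherically symmetric it carries no angular momentum and is therefore \emph{not} trapped, so for $\psi_0$ the Morawetz current may be chosen non-degenerate, with no $\Omega$-loss; the resulting inequality $\int_{\tau'}^\tau \mathcal{G}(s)\,ds\lesssim \mathcal{G}(\tau')$, with $\mathcal{G}(\tau)\doteq\int_{\Sigma_\tau}J^N_\mu(\psi_0)\,n^\mu_{\Sigma_\tau}$, integrates to \emph{exponential} decay of $\mathcal{G}$ --- faster than any $\tau^{-k}$, so $(\ref{tote...ds})$ indeed holds for the full $\psi=\psi_0+(\psi-\psi_0)$. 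Since the energy controls only $\partial\psi_0$, however, $\psi_0$ itself merely settles to a constant, so $(\ref{for0ths})$ is a \emph{boundedness}, not a decay, statement: it is obtained from Theorem~\ref{desitbound} specialised to spherically symmetric data together with a one-dimensional fundamental-theorem-of-calculus estimate in $r$, bounding $\sup|\psi_0|$ by its value at a reference point and $\int|\partial_r\psi_0|$. I expect the ILED of the second paragraph to be the main obstacle: the Morawetz multiplier must be arranged so that its photon-sphere degeneration is the \emph{only} degeneration, and its non-causal horizon boundary terms must be reconciled simultaneously with the two red-shift currents at $\mathcal{H}^+$ and $\overline{\mathcal{H}}^+$. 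Once this non-degenerate, infinity-free estimate is secured, compactness makes the passage to the arbitrary rate $\tau^{-k}$ essentially automatic.
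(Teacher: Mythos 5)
Your proposal follows essentially the same route as the paper's proof: the red-shift multipliers $N$, $\bar N$ at the two horizons, an $X$-type Morawetz current whose photon-sphere degeneration is compensated by commuting with the $\Omega_i$, the observation that compactness of $\Sigma_0$ removes the $\chi$-weight (and hence the need for any $Z$-multiplier), and an iteration of the resulting integrated decay estimate against the boundedness theorem to reach $\tau^{-k}$ --- the paper phrases this last step as a pigeonhole argument where you use almost-monotonicity of the energy, a cosmetic difference. Your separate treatment of $\psi_0$ (loss-free Morawetz for the untrapped $\ell=0$ mode, and a fundamental-theorem-of-calculus bound for $(\ref{for0ths})$) is likewise consistent with what the theorem requires.
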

The proof of this theorem uses the vector fields $T$, $Y$ 
and $\bar{Y}$ (alternatively $N$, $\bar{N}$),
together with a version of $X$ as multipliers, and requires commutation 
of the equation with $\Omega_i$ to quantify the loss caused by
trapping. (Like Schwarzschild,
the Schwarzschild-de Sitter metric has a photon sphere which is
at $r=3M$ for all values of $\Lambda$ in the allowed range. See~\cite{claire}
for a discussion of the optical geometry of this metric and its importance for gravitational lensing.)
An estimate analogous to~$(\ref{finalestimate})$ is obtained, but without the $\chi$ 
weight, in view of the compactness of $\Sigma_0$. 
The result of the Theorem follows essentially immediately, in view of 
Theorem~\ref{desitbound} and a pigeonhole argument.
No use need be made of a vector field of the type $Z$ as in Section~\ref{Morawetz}.
Note that for $\psi=\rm constant$, $E_k=0$, so removing the $0$'th spherical
harmonic in $(\ref{afairw})$ is necessary. See~\cite{dr4} for details.

Note that if $\Omega_i$ can be replaced by $\Omega_i^\epsilon$ in 
$(\ref{finalestimate})$, then it follows
that the loss in derivatives for energy decay at any polynomial rate $k$
in $(\ref{tote...ds})$ can be made arbitrarily small.
If $\Omega_i$ could be replaced by 
$\log\Omega_i$, then what would one obtain? ({\bf Exercise})

It would be a nice {\bf exercise}
to  commute with $\hat{Y}$ as in the proof of Theorem~\ref{desitbound}, to obtain pointwise
decay for arbitrary derivatives of $k$. See the related exercise in Section~\ref{pointdec}
concerning improving the statement of Theorem~\ref{Schdec}.

\subsection{Comments and further reading}
\label{neocoms3}
Theorem~\ref{dst}  was proven in~\cite{dr4}. 
Independently, the problem of the wave equation on
Schwarzschild-de Sitter has been considered in a nice paper of
Bony-H\"afner~\cite{bh} using methods of scattering theory. In that setting,
the presence of trapping is manifest by the appearance of resonances, that is to say,
the poles of the analytic continuation of the resolvent.\footnote{In the physics literature, these
are known as \emph{quasi-normal modes}. See~\cite{Kokkotas} for a nice survey, as well as
the discussion in Section~\ref{heuristic}.}  
The relevant estimates on the distribution
of these necessary for the analysis of~\cite{bh} had been obtained earlier by
S\'a Barreto and Zworski~\cite{SB-Zworski}.

In contrast to Theorem~\ref{dst},
the theorem of Bony-H\"afner~\cite{bh} makes the familiar 
restrictive assumption on the support of
initial data discussed in Section~\ref{invertin}. For these data, however, 
the results of~\cite{bh} obtain better decay than Theorem~\ref{dst} away from the horizon,
namely exponential, at the cost of only an $\epsilon$ derivative. 
The decay results of~\cite{bh} degenerate at the horizon, in particular, they
do not retrieve even boundedness for $\psi$ itself.
However, using the result of~\cite{bh} together with the analogue
of the red-shift $Y$ estimate as used in the proof of Theorem~\ref{dst}, one can prove
exponential      decay  up to and including the horizon, i.e.~exponential decay
in the parameter $\tau$ ({\bf Exercise}). This still requires, however, 
the restrictive hypothesis of~\cite{bh} concerning the support of the data.
It would be interesting to sort out whether the restrictive hypothesis can be removed
from~\cite{bh}, and whether this fast decay is stable to perturbation.
There also appears to be interesting work in progress by S\'a Barreto, Melrose and 
Vasy~\cite{vasy}
on a related problem.

One should expect that the statement of Theorem~\ref{desitbound} 
holds for the wave equation on
axisymmetric stationary perturbations of Schwarzschild-de Sitter,
in particular, slowly rotating Kerr-de Sitter, in analogy to Theorem~\ref{kerbnd}.

Finally, we note that in many context, more natural than the wave equation   is the
conformally covariant wave equation $\Box_g\psi-\frac16 R\psi =0$.  
For Schwarzschild-de Sitter, this is then a special case of
Klein-Gordon $(\ref{KGeq})$ with $\mu>0$. The analogue of Theorem~\ref{desitbound} holds by
virtue of Section~\ref{redshiftapps}.
 {\bf Exercise}: Prove the analogue of Theorem~\ref{dst} for this
equation.

\section{Epilogue: The red-shift effect for non-extremal black holes}
\label{epilogue}
We give in this section general
assumptions for the existence of vector fields $Y$ and
$N$ as in Section~\ref{thevectorfields}.
As an application, we can obtain the boundedness result of
Theorem~\ref{boundedn} or Theorem~\ref{desitbound}  for all classical non-extremal black holes
for general nonnegative cosmological constant $\Lambda\ge0$.
See~\cite{he:lssst, Townsend, carter} for discussions of these solutions.

\subsection{A general construction of vector fields $Y$ and $N$}
Recall that a \emph{Killing horizon} is a null hypersurface whose normal is 
Killing~\cite{unique,Townsend}.
Let $\mathcal{H}$ be a sufficiently regular
Killing horizon with (future-directed) generator the Killing field $V$,
which bounds a spacetime
$\mathcal{D}$. Let $\varphi_t^V$ denote the one-parameter family
of transformations generated by $V$, assumed to be globally defined
for all $t\ge0$.
Assume there exists a spatial hypersurface $\Sigma\subset \mathcal{D}$ 
transverse
to $V$, such that $\Sigma\cap\mathcal{H}=S$ is a compact $2$-surface.
Consider the region 
\[
\mathcal{R}'=\cup_{t\ge 0}\varphi_t^V(\Sigma)
\]
and assume that $\mathcal{R}'\cap \mathcal{D}$
is smoothly foliated by $\varphi_t(\Sigma)$.

Note that
\[
\nabla_VV=\kappa\, V
\]
for some function 
$\kappa:\mathcal{H}\to \mathbb R$.

\begin{theorem}
\label{eptheorem}
Let $\mathcal{H}$, $\mathcal{D}$,  $\mathcal{R}'$, $\Sigma$,
$V$, $\varphi^V_t$ be as above.
Suppose $\kappa>0$. 
Then there exists a $\phi_t^V$-invariant future-directed
timelike vector field $N$ on  $\mathcal{R}'$ 
and a constant $b>0$
such that 
\[
K^N\ge b \, J^N_\mu N^\mu
\]
in an open
$\varphi_t$-invariant (for $t\ge 0$) 
subset $\tilde{\mathcal{U}}\subset \mathcal{R}'$ containing $\mathcal{H}\cap\mathcal{R}'$.
\end{theorem}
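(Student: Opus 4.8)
The plan is to reproduce the computation behind Proposition~\ref{newproprs} in this abstract setting, isolating the single analytic input that makes it work: the strict positivity of the surface gravity $\kappa$. First I would set $N=V+Y$, where $V$ is the Killing generator of $\mathcal{H}$ and $Y$ is a transverse vector field to be chosen. Since $V$ is Killing, its deformation tensor vanishes and hence $K^V=0$, so that $K^N=K^Y$; thus it suffices to construct $Y$ so that $K^Y\ge b\,J^N_\mu N^\mu$ holds on $\mathcal{H}\cap\mathcal{R}'$. I would build $Y$ by prescribing it along the cross-section $S=\Sigma\cap\mathcal{H}$ to be future-directed, null on $\mathcal{H}$, transverse, and normalised by $g(V,Y)=-2$, completing $\{V,Y,E_1,E_2\}$ to a null frame with $E_1,E_2$ an orthonormal frame for $TS$, and then extend $Y$ to $\mathcal{R}'$ by $\varphi^V_t$-invariance. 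Since $V$ and $Y$ are future-directed null with $g(V,Y)<0$, the field $N=V+Y$ is future-directed timelike on $\mathcal{H}$, and remains so on an open neighbourhood.

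The key computation is that of the connection coefficients of $Y$ in this frame on $\mathcal{H}$. Because $g(V,Y)$ and $g(Y,Y)$ are $\varphi^V_t$-invariant and $g(Y,Y)=0$ on $\mathcal{H}$, one finds $g(\nabla_V Y,Y)=\tfrac12 V(g(Y,Y))=0$, while
\[
g(\nabla_V Y,V)=V(g(Y,V))-g(Y,\nabla_V V)=-\kappa\,g(Y,V)=2\kappa ,
\]
using $\nabla_V V=\kappa V$. This is the abstract analogue of $(\ref{KRUCIAL})$: the surface gravity enters precisely as the $-\kappa$ coefficient of $Y$ in $\nabla_V Y=-\kappa\,Y+a^1E_1+a^2E_2$, and it is here, and only here, that $\kappa>0$ is used. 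The remaining coefficients are recorded as in $(\ref{listten2})$--$(\ref{listten4})$, where I retain the freedom to prescribe the transverse second derivative $\nabla_Y Y=-\sigma(V+Y)$ for a large constant $\sigma>0$; the coefficients $a^i$ and $h^i_j$ are fixed by $Y|_S$ and its tangential derivatives, and are therefore independent of $\sigma$.

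Next I would compute $K^Y={}^Y\!\pi^{\mu\nu}{\bf T}_{\mu\nu}$ in the null frame exactly as in $(\ref{Wecompute})$. Using the algebraic positivity of the energy-momentum tensor for causal arguments---so that ${\bf T}(Y,Y)=(Y\psi)^2$, ${\bf T}(V,V)=(V\psi)^2$, and ${\bf T}(V,Y)=|\nabb\psi|^2$ on $\mathcal{H}$---the good terms are $\tfrac12\kappa\,{\bf T}(Y,Y)$ and $\tfrac14\sigma\,{\bf T}(V,V+Y)$, while every remaining term is bounded by $c\,{\bf T}(V,V+Y)+c\sqrt{{\bf T}(V,V+Y)\,{\bf T}(Y,Y)}$ with $c$ independent of $\sigma$. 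Absorbing the cross term by Young's inequality against the $\kappa$-term (here $\kappa>0$ is essential) and then choosing $\sigma$ large enough to dominate $c$, I obtain
\[
K^Y\ge b\,\big({\bf T}(Y,Y)+{\bf T}(V,V+Y)\big)\ge b'\,{\bf T}(N,N)=b'\,J^N_\mu N^\mu
\]
on $\mathcal{H}\cap\mathcal{R}'$, the second inequality holding because ${\bf T}(Y,Y)+{\bf T}(V,V+Y)$ and ${\bf T}(N,N)$ are equivalent as quadratic forms. For this to yield a single constant over all of $\mathcal{H}\cap\mathcal{R}'$ I use that $\kappa$ is $\varphi^V_t$-invariant and $S$ is compact, so $\kappa\ge\kappa_0>0$ there.

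The final step is to pass from $\mathcal{H}$ to an open $\varphi^V_t$-invariant neighbourhood $\tilde{\mathcal U}$. Both $K^N$ and $J^N_\mu N^\mu$ are quadratic forms in the $1$-jet of $\psi$ with coefficients depending smoothly on the spacetime point, and since $N$ is timelike, $J^N_\mu N^\mu$ is a positive-definite such form. Having arranged a strict gap $K^N-b\,J^N_\mu N^\mu>0$ on $\mathcal{H}\cap\mathcal{R}'$ (by slightly decreasing $b'$), continuity of the associated matrices gives the inequality on a neighbourhood of $\mathcal{H}\cap\mathcal{R}'$; its $\varphi^V_t$-translates preserve the inequality, since $N$ and $g$ are $\varphi^V_t$-invariant, and their union is the required invariant $\tilde{\mathcal U}$. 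I expect the main obstacle to be the bookkeeping in the computation of $K^Y$---in particular verifying that the torsion and null-second-fundamental-form coefficients $a^i,h^i_j$ contribute only terms absorbable, uniformly in $\sigma$, by the two good terms via Cauchy--Schwarz---rather than the frame construction or the continuity argument, which are routine once the sign of $\kappa$ is in hand.
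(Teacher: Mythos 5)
Your proposal is correct and follows essentially the same route as the paper: set $N=V+Y$ with $Y$ a $\varphi^V_t$-invariant transverse null vector normalised by $g(V,Y)=-2$ and $\nabla_YY=-\sigma(V+Y)$ on $S$, extract the good term from $g(\nabla_VY,V)=2\kappa$ (the abstract form of $(\ref{KRUCIAL})$, now derived from $\nabla_VV=\kappa V$), absorb the $\sigma$-independent cross terms by Cauchy--Schwarz and a large choice of $\sigma$, and use compactness of $S$ for uniformity of $b$. Your explicit continuity argument for passing to the open invariant neighbourhood $\tilde{\mathcal{U}}$ is a detail the paper leaves implicit, but nothing in your argument departs from its proof.
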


\begin{proof}
Define $Y$ on $S$ so that $Y$ is future directed null, say 
\begin{equation}
\label{=s}
g(Y,V)=-2,
\end{equation}
and orthogonal to $S$.
Moreover, extend $Y$ off $S$ so that 
\begin{equation}
\label{covdev}
\nabla_YY = -\sigma(Y+V)
\end{equation}
on $S$.
Now push $Y$ forward by $\varphi^V_t$ to a vector field on $\mathcal{U}$. 
Note that all the above relations still hold on $\mathcal{H}$.

It is easy to see that the relations $(\ref{listten1})$--$(\ref{listten4})$ hold as before,
where $E_1$, $E_2$ are a local frame for $T_p\varphi^V_t(S)$. Now $a^1$, $a^2$ are
not necessarily $0$, hence our having included them in the original computation!
We define as before
\[
N=V+Y.
\]
Note that it is the compactness of $S$ which gives the uniformity of the choice of $b$
in the statement of the theorem.
\end{proof}

We also have the following commutation theorem
\begin{theorem}
\label{hocom}
Under the assumptions of the above theorem,
 if $\psi$ satisfies $\Box_g\psi=0$, then for all $k \ge 1$.
\[
\Box_g( Y^k\psi) = \kappa_k Y^k\psi + \sum_{0\le |m|\le k,\, 0\le m_4<k }c_m E_1^{m_1}
E_2^{m_2}T^{m_3}Y^{m_4} \psi
\]
on $\mathcal{H}^+$, where $\kappa_k>0$.
\end{theorem}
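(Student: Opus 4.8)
The plan is to prove Theorem~\ref{hocom} by induction on $k$, the engine being the decomposition of $\Box_g$ in a null frame adapted to $\mathcal{H}^+$ together with the transport structure that the wave equation induces along the horizon. Fix, as in the proof of Theorem~\ref{eptheorem}, the frame $\{V, Y, E_1, E_2\}$ on $\mathcal{R}'$, where $V$ is the Killing generator (playing the role of $T$ on $\mathcal{H}^+$), $Y$ is the $\varphi^V_t$-invariant transverse null field normalised by $g(V,Y)=-2$ as in $(\ref{=s})$, and $E_1, E_2$ is a local orthonormal frame for the compact cross-sections $\varphi^V_t(S)$. The single structural fact that makes everything work is that on $\mathcal{H}^+$ both $V$ and $Y$ are null, so in this frame the only nonvanishing inverse-metric components are $g^{VY}=-\tfrac12$ and $g^{E_AE_B}=\delta_{AB}$; in particular there is no $g^{YY}$ component. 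Hence on $\mathcal{H}^+$
\[
\Box_g\phi = -\nabla^2_{VY}\phi + \sum_A \nabla^2_{E_AE_A}\phi,
\]
with no pure $\nabla^2_{YY}$ term. This is the precise sense in which second transverse derivatives are absent from the principal part of $\Box_g$ on the horizon, and it is what will keep the $Y$-count bounded.

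First I would treat the base case $k=1$. Using $\Box_g\psi=0$ one has $\Box_g(Y\psi)=[\Box_g,Y]\psi$, and expanding $\nabla^2_{VY}\psi$ via the connection coefficients recorded in $(\ref{listten1})$--$(\ref{listten4})$ --- crucially $\nabla_V Y = -\kappa Y + a^1E_1 + a^2E_2$, with $\kappa>0$ the surface gravity by hypothesis (this is $(\ref{KRUCIAL})$, equivalently $\nabla_VV=\kappa V$) --- the wave equation becomes a transport equation along the generator,
\[
V(Y\psi) = -\kappa\, Y\psi + a^AE_A\psi + \sum_A\nabla^2_{E_AE_A}\psi + (\text{zeroth order}).
\]
Differentiating this identity in $Y$ and using $[V,Y]=0$ (since $Y$ is $\varphi^V_t$-invariant) feeds back into the frame expression for $\Box_g(Y\psi)$, and the two $Y^2\psi$-contributions (one from $-V\nabla^2_{VY}$, one from the $-\kappa Y$ connection term) cancel, leaving a term $\kappa_1 Y\psi$ with $\kappa_1>0$ inherited from $\kappa>0$, plus terms carrying no $Y$-derivative.

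For the inductive step I would apply $Y^{k-1}$ to the transport identity and substitute into the frame formula $\Box_g(Y^k\psi) = -\nabla^2_{VY}(Y^k\psi)+\sum_A\nabla^2_{E_AE_A}(Y^k\psi)$. The mechanism is the same at each level: the top term $Y^{k+1}\psi$ produced by $-V(Y^{k+1}\psi)$ is cancelled against the $-\kappa Y^{k+1}\psi$ coming from the connection term $\nabla_VY$, while the surviving top-$Y$ contribution accumulates a friction proportional to $\kappa$, one factor per commutation, giving a coefficient $\kappa_k = k\kappa + (\text{lower-order corrections})$, hence positive. The angular contribution $\sum_A\nabla^2_{E_AE_A}(Y^{k-1}\psi)$, which threatens to raise the total order above $k$, is eliminated by invoking the inductive hypothesis at level $k-1$: that identity lets one replace it by $\Box_g(Y^{k-1}\psi)$ --- already known to be of order $\le k-1$ --- plus transport terms. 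Commuting $Y$ past $E_A$ and $V$ using the frame structure equations then organises every remaining term into the asserted shape $c_m E_1^{m_1}E_2^{m_2}T^{m_3}Y^{m_4}\psi$ with $|m|\le k$ and $m_4<k$.

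The main obstacle is precisely this bookkeeping in the inductive step: one must show that all contributions of order $k+1$ (those generated by commuting $Y$ through the second-order angular part of $\Box_g$) either cancel or are absorbed via the inductive hypothesis, so that the output genuinely has order $\le k$ with strictly fewer than $k$ transverse derivatives, and simultaneously that the accumulated coefficient $\kappa_k$ of $Y^k\psi$ stays strictly positive. Both facts rest on the single sign input $\kappa>0$ (non-extremality), channelled through $\nabla_V Y=-\kappa Y + a^AE_A$; this is the higher-order manifestation of the red-shift already exploited in Proposition~\ref{newproprs} and Proposition~\ref{poscompu}, and it is why the statement is confined to horizons with positive surface gravity.
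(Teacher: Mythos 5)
There is a genuine error, and it sits at the heart of your argument: the claimed cancellation of the top transverse term. Your transport identity $V(Y\psi)=-\kappa\,Y\psi+a^A E_A\psi+\sum_A\nabla^2_{E_AE_A}\psi$ is obtained from the frame decomposition of $\Box_g$, which is valid \emph{only on} $\mathcal{H}^+$ (it uses $g(V,V)=0$ there). You then apply $Y$ to this identity — but $Y$ is transverse to $\mathcal{H}^+$, and one cannot differentiate an identity that holds only on a hypersurface in a direction transverse to it. Off the horizon $\Box_g$ acquires an extra piece $g^{YY}\nabla^2_{YY}\psi$ with $g^{YY}\propto g(V,V)$; this vanishes on $\mathcal{H}^+$ but its transverse derivative does not, since $Y\bigl(g(V,V)\bigr)=2g(\nabla_YV,V)=2g(\nabla_VY,V)=4\kappa\neq0$ by $(\ref{KRUCIAL})$ and $[V,Y]=0$. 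The term you drop is therefore $\propto\kappa\,\nabla^2_{YY}\psi=\kappa\,Y^2\psi+\ldots$ — precisely the contribution your bookkeeping says cancels. The cancellation is in fact false: specialising to Schwarzschild, Proposition~\ref{poscompu} gives $\Box_g(\hat Y\psi)=\tfrac2r\hat Y\hat Y\psi-\tfrac4r\hat Y T\psi+P_1\psi$, with the coefficient $\tfrac2r=\tfrac1M\neq0$ on the horizon, and the survival of that term \emph{with positive coefficient} is the entire content of the theorem — it is the ``good sign'' term exploited when $N$ is applied as a multiplier to the commuted equation in Section~\ref{asacommutator}. (The displayed statement $\kappa_k Y^k\psi$ is most plausibly read as $\kappa_k\,Y(Y^k\psi)$, i.e.\ as a first-order term with favourable sign in the equation for the unknown $Y^k\psi$; read literally it is inconsistent with Proposition~\ref{poscompu}, and taking it at face value appears to have led you to engineer a cancellation that does not occur.)

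The paper's proof avoids this trap entirely: it invokes the general commutation identity of Appendix~\ref{commutat}, so that $\Box_g(Y\psi)$ equals a principal part $-2\,{}^Y\pi^{\alpha\beta}\nabla_\alpha\nabla_\beta\psi$ plus first-order terms, and then computes the deformation tensor ${}^Y\pi$ \emph{on} $\mathcal{H}^+$ in the frame $V,Y,E_1,E_2$ directly from $(\ref{listten1})$--$(\ref{listten4})$. Since $g^{Y\mu}$ has only a $V$-component on the horizon, the sole source of a $\nabla^2_{YY}$ term is ${}^Y\pi_{VV}=g(\nabla_VY,V)=2\kappa$, giving the coefficient $\kappa$; every other second-order term carries at most one $Y$, and the first-order remainder is harmless. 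This requires only horizon data for the principal part — no transverse differentiation of an on-horizon identity — and the induction on $k$ then just tracks that each commutation adds one factor of $\kappa$ to the coefficient of the top term while all other terms keep at most $k$ transverse derivatives. Your opening observation that $\Box_g$ has no $\nabla^2_{YY}$ component on the horizon is correct and is indeed why the $Y$-count stays controlled, but the mechanism is that the $Y^{k+1}$ term appears once, with coefficient $\kappa_k>0$, not that it cancels.
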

\begin{proof}
From $(\ref{listten1})$--$(\ref{listten4})$, we deduce that 
relative to the null frame (on the horizon) $V, Y, E_1, E_2$ the deformation tensor ${}^Y\pi$ 
takes the form 
\begin{align*}
&{}^Y\pi_{YY}=2\sigma,\quad {}^Y\pi_{VV}=2\kappa,\quad {}^Y\pi_{VY}=\sigma
&{}^Y\pi_{YE_i}=0,\quad {}^Y\pi_{VE_i}= a^i, \quad {}^Y\pi_{E_i E_j}=h_i^j
\end{align*}
As a result the principal part of the commutator expression--the 
term $2\, ^Y\pi^{\a\b}  D_\a D_\b \psi$
can be written as follows 
$$
 2\, ^Y\pi^{\a\b}  \nabla_\a \nabla_\b \psi=\kappa \nabla^2_{YY}\psi + 
 \sigma (\nabla^2_{VV}+\nabla^2_{YV})   \psi - a^i \nabla^2_{YE_i} \psi +
 2 h^i_j \nabla^2_{E_i E_j}  \psi.
$$
The result now follows by induction on $k$.
\end{proof}

\subsection{Applications}
\label{redshiftapps}
The proposition applies in particular to sub-extremal Kerr and Kerr-Newman,
as well as to both horizons of sub-extremal Kerr-de Sitter, Kerr-Newman-de Sitter, etc.
Let us give the following general, albeit somewhat
awkward statement:
\begin{theorem}
Let $(\mathcal{R},g)$ be a manifold with stratified boundary 
$\mathcal{H}^+\cup \Sigma$, such that
$\mathcal{R}$ is globally hyperbolic with past boundary the Cauchy hypersurface
$\Sigma$, 
where $\Sigma$ and $\mathcal{H}$ are themselves manifolds with (common) boundary $S$.
Assume 
\[
\mathcal{H}^+=\cup_{i=1}^n\mathcal{H}^+_i, \qquad  S=\cup_{i=1}^n S_i,
\]
where the unions are disjoint and each $\mathcal{H}^+_i$, $S_i$ is connected.
Assume each $\mathcal{H}^+_i$
satisfies
the assumptions of Theorem~\ref{eptheorem}
with future-directed Killing field $V_i$, some subset $\Sigma_i\subset \Sigma$,
and cross section 
a connected component $S_i$ of $S$. 
Let us assume there exists a Killing field $T$ with future complete orbits,
and $\varphi_t$ is the one-parameter family of transformations generated by $T$. 
Let
$\tilde{\mathcal{U}}_i$ be given by Theorem~\ref{eptheorem} and
assume that there exists a $\mathcal{V}$ as above such that
\[
\mathcal{R}=\varphi_t(\Sigma\setminus \mathcal{V}) \,\cup\, \cup_{i=1}^n\tilde{\mathcal{U}}_i.
\]
and
\[
-g((\varphi_t^{V_i})_* n_\Sigma, n_{\Sigma_\tau})\le B
\]
where $\Sigma_\tau=\varphi_\tau(\Sigma)$,
$\varphi_t^{V_i}$ represents the one-parameter family of transformations generated
by $V^i$, and the last inequality is assumed
for all values of $t$, $\tau$ where the left hand side can be defined.
Finally, 
let $\psi$ be a solution to the wave equation and
assume that 
for any open neighborhood
$\mathcal{V}$ of $S$ in $\Sigma$,  there exists
a positive constant $b_\mathcal{V}>0$ such that
\begin{equation}
\label{assump1edw}
J^T_\mu(T^k \psi) n_{\Sigma}^\mu \ge
b_{\mathcal{V}}
J^{n_\Sigma}_\mu(T^k \psi) n_{\Sigma}^\mu
\end{equation}
in $\Sigma\setminus \mathcal{V}$
and
\begin{equation}
\label{assump2edw}
T\psi = c_iV_i\psi 
\end{equation}
on $\mathcal{H}^+_i$.
It follows that
the first statement of Theorem~\ref{boundedn} holds for $\psi$.

Assume in addition that $\Sigma$ is  compact or asymptotically flat, in the weak 
sense of the validity of a Sobolev estimate~$(\ref{ellipt})$ near infinity.
Then 
the second statement of Theorem~\ref{boundedn} holds for $\psi$.
\end{theorem}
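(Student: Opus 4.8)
The plan is to run the argument of Section~\ref{newproofofb} in this abstract setting, with $T$ playing the role of the stationary Killing field, the $V_i$ the generators of the individual horizon components, and $N$ the red-shift field supplied by Theorem~\ref{eptheorem}. Write $\Sigma_0=\Sigma$, $\Sigma_\tau=\varphi_\tau(\Sigma)$, $\mathcal{R}(\tau',\tau)=\cup_{\tau'\le\bar\tau\le\tau}\Sigma_{\bar\tau}$ and $\mathcal{H}^+_i(\tau',\tau)=\mathcal{H}^+_i\cap J^+(\Sigma_{\tau'})\cap J^-(\Sigma_\tau)$. First I would establish the degenerate $T$-energy estimate. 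Since $T$ is Killing, $K^T=0$, and the divergence theorem applied to $J^T$ in $\mathcal{R}(0,\tau)$ gives
\[
\int_{\Sigma_\tau}J^T_\mu n^\mu_{\Sigma_\tau}+\sum_{i=1}^n\int_{\mathcal{H}^+_i(0,\tau)}J^T_\mu V_i^\mu=\int_{\Sigma_0}J^T_\mu n^\mu_{\Sigma_0}.
\]
On $\mathcal{H}^+_i$ one has $g(T,V_i)=0$ (the generator is orthogonal to the compact cross-section $S_i$, which carries any remaining Killing directions), so the hypothesis $(\ref{assump2edw})$ yields $J^T_\mu V_i^\mu=c_i(V_i\psi)^2\ge0$. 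This is precisely the abstract replacement for the inequalities $(\ref{ineqs})$ of the Schwarzschild case, and is where the absence of superradiance enters. Hence $\int_{\Sigma_\tau}J^T_\mu n^\mu$ is nonincreasing, and in particular the time-integrated bound $(\ref{astera})$, $\int_{\tau'}^\tau\big(\int_{\Sigma_{\bar\tau}}J^T_\mu n^\mu\big)\,d\bar\tau\le(\tau-\tau')\int_{\Sigma_0}J^T_\mu n^\mu$, holds.

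Next I would upgrade to the nondegenerate energy using $N$. By Theorem~\ref{eptheorem} one has $K^N\ge b\,J^N_\mu N^\mu$ on each red-shift neighborhood $\tilde{\mathcal{U}}_i$, and one arranges $N=T$ on the bulk region $\varphi_t(\Sigma\setminus\mathcal{V})$, so that $K^N=0$ there; in the transition region the error $|K^N|$ is supported where, by $(\ref{assump1edw})$, $J^T$ dominates the full nondegenerate flux, whence $|K^N|\le B\,J^T_\mu n^\mu$. This reproduces the three properties of Corollary~\ref{Nproperties}. Applying the $J^N$ energy identity in $\mathcal{R}(\tau',\tau)$ and setting $f(\tau)=\int_{\Sigma_\tau}J^N_\mu n^\mu$, the red-shift gain on the left together with the $J^T$-controlled error and the previous step give exactly the integral inequality $(\ref{1d})$,
\[
f(\tau)+b\int_{\tau'}^\tau f(\bar\tau)\,d\bar\tau\le BD(\tau-\tau')+f(\tau'),\qquad D=\int_{\Sigma_0}J^T_\mu n^\mu,
\]
from which $f\le B(D+f(0))$ follows, and then, since $J^T\lesssim J^N$, the nondegenerate estimate $(\ref{finalL2})$. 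This is the lowest-order case of the first statement of Theorem~\ref{boundedn}.

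Higher-order control then comes from commutation. Commuting with the Killing field $T$ preserves both the equation and the hypotheses $(\ref{assump1edw})$ and $(\ref{assump2edw})$ (which are stated for $T^k\psi$), so the preceding argument bounds $\int_{\Sigma_\tau}J^N_\mu(T^k\psi)n^\mu$. To capture derivatives transverse to $\mathcal{H}^+$, I would commute with $Y$ and invoke Theorem~\ref{hocom}: the crucial point is that the zeroth-order coefficient $\kappa_k$ in $\Box_g(Y^k\psi)$ is \emph{positive}, which is the general manifestation of the good sign exploited in Proposition~\ref{poscompu} and Section~\ref{asacommutator}; the resulting error terms are absorbed into the red-shift gain $K^N$ by Cauchy--Schwarz exactly as before, and an induction on the number of $Y$-derivatives, combined with elliptic estimates on $\Sigma_\tau$, yields the full $H^k$ bounds. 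Finally, under the additional Sobolev hypothesis $(\ref{ellipt})$, valid when $\Sigma$ is compact or asymptotically flat, a standard Sobolev embedding on $\Sigma_\tau$ converts these into the pointwise estimate, giving the second statement of Theorem~\ref{boundedn}; in the asymptotically flat case the zeroth-order term requires decay of $\uppsi$ at $i^0$, and in the compact case $\sup_{\Sigma_0}|\uppsi|$.

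The main obstacle I anticipate is the bookkeeping in the second step: one must produce a single $\varphi_t$-invariant $N$ that equals $T$ on the bulk and coincides with the red-shift field of Theorem~\ref{eptheorem} near each of the $n$ horizon components, while ensuring the transition-region error is genuinely controlled by $J^T$ through $(\ref{assump1edw})$ and that all constants are uniform. The covering hypothesis $\mathcal{R}=\varphi_t(\Sigma\setminus\mathcal{V})\cup\bigcup_i\tilde{\mathcal{U}}_i$ is exactly what makes this gluing possible, the uniform normal bound $-g((\varphi_t^{V_i})_*n_\Sigma,n_{\Sigma_\tau})\le B$ is what keeps the fluxes comparable across the foliation, and the positivity $\kappa>0$ (respectively $\kappa_k>0$) is what makes both the multiplier and the commutator estimates close.
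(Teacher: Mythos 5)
Your proposal is correct and is essentially the argument the paper intends (and leaves implicit): the template of Section~\ref{newproofofb} with Theorem~\ref{eptheorem} supplying the glued multiplier $N$, the hypotheses $(\ref{assump1edw})$--$(\ref{assump2edw})$ replacing $(\ref{ineqs})$ and Corollary~\ref{Nproperties}, and Theorem~\ref{hocom} replacing Proposition~\ref{poscompu} for the transversal commutation. The only point worth tightening is the horizon flux computation: $g(T,V_i)=0$ holds simply because $T$ is tangent to the null hypersurface $\mathcal{H}^+_i$ whose normal is $V_i$, and one needs $c_i\ge 0$ in $(\ref{assump2edw})$ for $J^T_\mu V_i^\mu=c_i(V_i\psi)^2\ge 0$ (as is the case in all the intended applications).
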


In the case where $T$ is assumed timelike
in $\mathcal{R}\setminus\mathcal{H}^+$,
then $(\ref{assump2edw})$ is automatic,
whereas $(\ref{assump1edw})$ holds if
\[
-g(T,T)\ge -b_{\mathcal{V}}\, g(n_\mu,T)
\]
in $\Sigma\setminus\mathcal{V}$.
Thus we have
\begin{corollary}
The above theorem applies to Reissner-Nordstr\"om, Reissner-Nordstr\"om-de Sitter,
etc, for all subextremal range of parameters.
Thus Theorem~\ref{boundedn} holds for all such
metrics.\footnote{In the $\Lambda=0$ case this range is $M>0$,
$0\le |Q|<  M$. {\bf Exercise}: What is it for $\Lambda>0$?}
\end{corollary}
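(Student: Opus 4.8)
The plan is to realise Reissner--Nordstr\"om and Reissner--Nordstr\"om--de Sitter as instances of the general theorem stated just above, so that the conclusion of Theorem~\ref{boundedn} follows by verifying its hypotheses. Throughout, $T=\partial_t$ is the static Killing field and I write $f(r)$ for the metric function with $-g(T,T)=f(r)$; in the $\Lambda=0$ case $f(r)=1-2M/r+Q^2/r^2$, and for $\Lambda>0$ this carries an additional cosmological term. The key structural facts I would extract are that these spacetimes are \emph{static} (so each horizon generator is a constant multiple of $T$) and that subextremality is exactly the condition making the relevant roots of $f$ simple.

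First I would identify the horizons and the region $\mathcal{R}$. In the asymptotically flat case, $0\le|Q|<M$ makes $r_\pm=M\pm\sqrt{M^2-Q^2}$ distinct simple roots of $f$; the domain of outer communications is $\{r>r_+\}$, where $f>0$ so that $T$ is timelike, bounded by the single event horizon $\mathcal{H}^+=\{r=r_+\}$, a Killing horizon generated by $T$. For $\Lambda>0$, the subextremal range is precisely that for which $f$ has three distinct positive roots $r_-<r_+<r_c$; then $\mathcal{R}$ lies between the event horizon $\{r=r_+\}$ and the cosmological horizon $\{r=r_c\}$, on which $f>0$, with $n=2$ and $\mathcal{H}^+=\mathcal{H}^+_1\cup\mathcal{H}^+_2$. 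Because each root is simple, the event-horizon surface gravity $\kappa_+=\tfrac12 f'(r_+)>0$, and the cosmological surface gravity $\kappa_c=-\tfrac12 f'(r_c)>0$ once $V_c$ is taken to be the \emph{future-directed} generator. Hence each horizon satisfies the hypothesis $\kappa>0$ of Theorem~\ref{eptheorem}, with compact cross-section $S_i\cong\mathbb{S}^2$. This is exactly where subextremality enters: at extremality a root doubles and $\kappa=0$.

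Next I would discharge the structural hypotheses of the general theorem using the reduction stated immediately before the corollary. Since the generator $V_i$ of each horizon is a constant multiple of the (timelike, future-complete) field $T$, we have $T=c_iV_i$ on each $\mathcal{H}^+_i$, so $(\ref{assump2edw})$ holds trivially. For $(\ref{assump1edw})$ it suffices to check $-g(T,T)\ge -b_{\mathcal V}\,g(n_\Sigma,T)$ on $\Sigma\setminus\mathcal{V}$; as $-g(T,T)=f(r)$ is bounded below by a positive constant at fixed distance from the horizons, and $-g(n_\Sigma,T)$ is bounded (via $-g(n_{\Sigma_0},T)\le B$, and in the flat case $f\to1$ as $r\to\infty$), this holds uniformly out to $i^0$. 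The vector fields $N,\bar N$ from Theorem~\ref{eptheorem} live on $\varphi_t$-invariant red-shift neighborhoods $\tilde{\mathcal{U}}_i$ of the horizons, and $T$-timelikeness on the complement yields both the covering $\mathcal{R}=\varphi_t(\Sigma\setminus\mathcal{V})\cup\bigcup_i\tilde{\mathcal{U}}_i$ and the uniform bound $-g((\varphi^{V_i}_t)_*n_\Sigma,n_{\Sigma_\tau})\le B$. The first statement of Theorem~\ref{boundedn} then follows at once; for the pointwise statement I supply the asymptotic input, namely asymptotic flatness of $\Sigma$ (hence the Sobolev/elliptic estimate $(\ref{ellipt})$ near infinity) when $\Lambda=0$, and compactness of $\Sigma_0$ when $\Lambda>0$.

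I expect the only genuinely delicate point to be the cosmological horizon in the $\Lambda>0$ case: one must confirm that the future-directed null generator $V_c$ carries a \emph{red-shift} sign $\kappa_c>0$ (not a blue-shift), so that Theorem~\ref{eptheorem} applies there verbatim, and that the two $\varphi_t$-invariant regions $\tilde{\mathcal{U}}_1,\tilde{\mathcal{U}}_2$ overlap the $T$-timelike interior enough to cover all of $\mathcal{R}$. The remaining verifications are routine once the root structure of $f$ is pinned down. The answer to the footnoted exercise is then simply that the admissible range for $\Lambda>0$ is the set of $(M,Q,\Lambda)$ for which $f$ possesses three distinct positive roots.
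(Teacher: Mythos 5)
Your proposal is correct and follows essentially the same route as the paper, which treats the corollary as an immediate consequence of the remark preceding it: staticity makes $(\ref{assump2edw})$ automatic and reduces $(\ref{assump1edw})$ to the displayed inequality $-g(T,T)\ge -b_{\mathcal V}\,g(n_\Sigma,T)$, while subextremality (simple roots of $f$) gives $\kappa>0$ on each horizon so that Theorem~\ref{eptheorem} applies. You have merely made explicit the verifications the paper leaves to the reader, including the correct answer to the footnoted exercise.
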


On the other hand, $(\ref{assump2edw})$, $(\ref{assump1edw})$ can be easily
seen to hold for \emph{axisymmetric} solutions $\psi_0$ of $\Box_g\psi=0$
on backgrounds in the Kerr
family (see Section~\ref{BP}).
We thus have 
\begin{corollary}
The statement of Theorem~\ref{boundedn} holds for axisymmetric solutions $\psi_0$ of        for Kerr-Newman and Kerr-Newman-de Sitter for the full subextremal
range of parameters.\footnote{In the $\Lambda=0$ case this range is
$M> 0$,  $0\le |Q|<   \sqrt{M^2-a^2}$. {\bf Exercise}: What is it for $\Lambda>0$?}
\end{corollary}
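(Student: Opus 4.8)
The plan is to obtain the corollary as a direct specialization of the general boundedness theorem stated immediately above, using the axisymmetry hypothesis $\Phi\psi_0=0$ only to verify the two dynamical conditions $(\ref{assump1edw})$ and $(\ref{assump2edw})$ that the theorem demands of the solution. Accordingly the work divides cleanly into two parts: (i) checking that each horizon of Kerr-Newman, respectively Kerr-Newman-de Sitter, is a non-degenerate Killing horizon to which Theorem~\ref{eptheorem} applies and that the remaining geometric hypotheses (covering of $\mathcal{R}$, the normal bound, future-completeness of the orbits of $T$) hold; and (ii) verifying $(\ref{assump1edw})$ and $(\ref{assump2edw})$ for axisymmetric $\psi_0$.

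First I would fix the geometry. For Kerr-Newman ($\Lambda=0$) the region $\mathcal{R}$ to the future of an asymptotically flat Cauchy slice $\Sigma$ (one end) is bounded by a single event horizon $\mathcal{H}^+=\mathcal{H}^+_1$, whose future-directed null generator is the Killing field $V_1=T+\omega_H\Phi$, with $\omega_H$ the angular velocity of the horizon; for Kerr-Newman-de Sitter ($\Lambda>0$) the region between the black-hole and cosmological horizons has compact $\Sigma$ and two boundary components $\mathcal{H}^+_1,\mathcal{H}^+_2$, each a Killing horizon with its own generator $V_i=T+\omega_i\Phi$. The subextremality assumption is exactly what forces the relevant root of $\Delta$ (and, for $\Lambda>0$, of its de Sitter analogue $\Delta_r$) to be simple, so that the surface gravities satisfy $\kappa_i>0$; Theorem~\ref{eptheorem} then furnishes on each neighborhood $\tilde{\mathcal{U}}_i$ the timelike vector field $N$ with $K^N\ge b\,J^N_\mu N^\mu$. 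I would record that $\Phi$ is tangent to $\Sigma$, that $[T,\Phi]=0$, and that the covering condition and the uniform bound on $-g((\varphi^{V_i}_t)_* n_\Sigma, n_{\Sigma_\tau})$ are standard consequences of stationarity in these explicit, globally understood backgrounds.

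The substance lies in the dynamical conditions for $\Phi\psi_0=0$. Condition $(\ref{assump2edw})$ is immediate: $V_i\psi_0=T\psi_0+\omega_i\Phi\psi_0=T\psi_0$, so it holds with $c_i=1$. For $(\ref{assump1edw})$ I would use the algebraic fact that for an axisymmetric field, with $\Phi$ tangent to $\Sigma$, the flux is insensitive to adding a multiple of $\Phi$ to $T$: from ${\bf T}_{\mu\nu}(\psi_0)\Phi^\nu=(\partial_\mu\psi_0)(\Phi\psi_0)-\frac12\Phi_\mu\,\partial^\alpha\psi_0\partial_\alpha\psi_0$ and $\Phi\psi_0=0$, $g(\Phi,n_\Sigma)=0$, one gets $J^\Phi_\mu(\psi_0)n_\Sigma^\mu=0$, hence $J^{T+\omega\Phi}_\mu(\psi_0)n_\Sigma^\mu=J^T_\mu(\psi_0)n_\Sigma^\mu$ for every $\omega$. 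Since $T^k\psi_0$ is again axisymmetric, the same identity holds for $T^k\psi_0$. At each point of $\Sigma\setminus\mathcal{V}$ I would then choose $\omega$ so that $T+\omega\Phi$ is timelike there (possible throughout the domain of outer communications of a subextremal Kerr-Newman(-de Sitter) because the $t$–$\phi$ block of the metric has Lorentzian signature wherever $\Delta>0$, degenerating only on the horizons), yielding pointwise coercivity $J^{T+\omega\Phi}_\mu(T^k\psi_0)n_\Sigma^\mu\ge b\,J^{n_\Sigma}_\mu(T^k\psi_0)n_\Sigma^\mu$, which by the identity transfers to $J^T$. A compactness argument on $\Sigma\setminus\mathcal{V}$ (de Sitter case), respectively the uniform choice $\omega=0$ near the asymptotically flat end where $T$ is timelike together with compactness of the remaining ergoregion ($\Lambda=0$ case), then produces the uniform constant $b_{\mathcal{V}}$. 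With both conditions in hand the general theorem gives the first statement of Theorem~\ref{boundedn}, and since $\Sigma$ is compact ($\Lambda>0$) or asymptotically flat with the Sobolev estimate $(\ref{ellipt})$ valid near infinity ($\Lambda=0$), the pointwise statement follows as well. Along the way the footnote ranges can be pinned down: $M>0$, $0\le|Q|<\sqrt{M^2-a^2}$ for $\Lambda=0$, and the analogous condition that $\Delta_r$ have two distinct simple positive roots bounding $\mathcal{R}$ for $\Lambda>0$.

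The main obstacle I anticipate is the uniformity in $(\ref{assump1edw})$ across the ergoregion: the timelike combination $T+\omega\Phi$ necessarily has $\omega\neq0$ precisely where $T$ is spacelike, and one must ensure the pointwise coercivity constants do not degenerate either toward the horizon (where $\mathrm{Span}\{T,\Phi\}$ becomes null) or, when $\Lambda=0$, toward spatial infinity (where asymptotic flatness forces $\omega\to0$). The degeneration at the horizon is harmless because $\mathcal{V}$ excises a neighborhood of $S$, so the genuine care is in patching the compact ergoregion (where $\omega\neq0$) to the exterior (where $\omega=0$), and in confirming that the sign identity $J^\Phi_\mu(\psi_0)n_\Sigma^\mu=0$ persists up to the boundary of $\mathcal{V}$ for the chosen slice $\Sigma$. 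Everything else is a mechanical transcription of the general theorem to the explicit subextremal metrics, together with the standard verification that each $\kappa_i>0$.
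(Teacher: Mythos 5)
Your proposal is correct and follows essentially the same route as the paper: the corollary is obtained by specialising the general theorem of Section~\ref{epilogue}, with axisymmetry used exactly to verify $(\ref{assump2edw})$ (since the horizon generators lie in ${\rm Span}\{T,\Phi\}$) and $(\ref{assump1edw})$ (since $J^{T+\omega\Phi}_\mu(T^k\psi_0)n^\mu_\Sigma=J^T_\mu(T^k\psi_0)n^\mu_\Sigma$ and a timelike combination $T+\omega\Phi$ exists pointwise off the horizon). Your writeup supplies the details the paper leaves as "easily seen", including the correct treatment of uniformity on $\Sigma\setminus\mathcal{V}$.
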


Let us also mention that
the the theorems of this section apply
to the Klein-Gordon equation $\Box_g\psi= \mu^2\psi$, as well as
to the Maxwell equations ({\bf Exercise}).

\section{Open problems}
\label{acik}
We end these notes with a discussion of open problems. Some of these are related
to Conjecture~\ref{stabconj}, but all have independent interest. 

\subsection{The wave equation}
\label{waves}
The decay rates of Theorem~\ref{Schdec} are sharp as uniform decay rates in $v$ for any nontrivial
class of initial data. On the other hand, it would be nice to  obtain more decay in the interior,
possibly under a stronger assumption on initial data.

\begin{problem} 
\label{MORE}
Show that there exists a $\delta>0$ such that $(\ref{tote...})$ holds with 
$\tau$ replaced with
$\tau^{-2(1+\delta)}$, for a suitable redefinition of $E_1$. 
Show the same thing for Kerr spacetimes with $|a|\ll M$.
\end{problem}

At the very least, it would be nice to obtain this result for the energy restricted
to $\tilde{\Sigma}_\tau\cap \{r\le R\}$.

Recall how the algebraic structure of the Kerr solution is used in a fundamental way
in the proof of Theorem~\ref{DT}. 
On the other hand, one would think that the validity of the results should
depend only on the robustness of the trapping structure.
This suggests the following
\begin{problem}
Show the analogue of Theorem~\ref{DT} for the wave equation on metrics close to
Schwarzschild with as few as possible geometric assumptions on the metric.
\end{problem}
For instance, can Theorem~\ref{DT} be proven under the assumptions of Theorem~\ref{kerbnd}?
Under even weaker assumptions?

Our results for Kerr require $|a|\ll M$. Of course, this is a ``valid'' assumption in the context
of the nonlinear stability problem, in the sense that if this condition is assumed on the parameters
of the initial reference Kerr solution, one expects it holds for the final Kerr solution. 
Nonetheless, one certainly would like a result for all cases.
See the discussion in Section~\ref{neocoms2}.
\begin{problem}
Show the analogue of Theorem~\ref{DT} for Kerr solutions in the entire subextremal range
$0\le |a|<M$.
\end{problem}

The extremal case $|a|=M$ may be quite different in view of the fact that Section~\ref{epilogue}
cannot apply:
\begin{problem}
Understand the behaviour of solutions to the wave equation on 
extremal Reissner-Nordstr\"om, extremal Schwarzschild-de
Sitter, and extremal Kerr.
\end{problem}

Turning to the case of $\Lambda>0$, we have already remarked that the 
analogue of Theorems~\ref{desitbound} and~\ref{dst}  
should certainly hold in the case of Kerr-de Sitter.
In the case of both Schwarzschild-de Sitter and Kerr-de Sitter, 
another interesting problem is to understand the behaviour in the region
$\mathcal{C}=J^+(\overline{\mathcal{H}}^+_A)\cap J^+(\overline{\mathcal{H}}^+_B)$,
where $\mathcal{H}^+_A$, $\mathcal{H}^+_B$ are  two cosmological horizons
meeting at a sphere:
\begin{problem}
Understand the behaviour of solutions to the wave equation in region $\mathcal{C}$ of
Schwarzschild-de Sitter and Kerr-de Sitter, in particular, their behaviour along $r=\infty$ as $i^+$ is approached.
\end{problem}

Let us add that in the case of cosmological constant, in some contexts it is appropriate to 
replace $\Box_g$ with the conformally covariant wave operator
$\Box_g - \frac16R$. In view of the fact that $R$ is constant, this is a special case of
the Klein-Gordon equation discussed in Section~\ref{KGproblem} below.

\subsection{Higher spin}
The wave equation is a ``poor man's'' linearisation of the Einstein equations $(\ref{Evac})$. 
The role of linearisation in  the mathematical theory
of nonlinear partial differential equations is of a different nature than that which one might
imagine from the formal ``perturbation'' theory which one still encounters in the physics literature. 
Rather than linearising the equations, one considers 
the solution of the 
nonlinear equation from the point of view of a related linear equation that it itself satisfies.

In the case of the simplest nonlinear equations (say $(\ref{powernl})$ discussed in 
Section~\ref{notlinear} below), 
typically this means freezing the right hand side, 
i.e.~treating it as a given inhomogeneous term.
In the case of the Einstein equations, the proper analogue of this procedure
is much more geometric. Specifically, it 
amounts to looking at the so called Bianchi
equations
\begin{equation}
\label{spin2}
\nabla_{[\mu}R_{\nu\lambda]\rho\sigma} = 0,
\end{equation}
which are already linear as equations for the curvature tensor when $g$ is regarded
as fixed. For more on this point of view, see~\cite{book}. The above equations
for a field $S_{\lambda\mu\nu\rho}$
with the symmetries
of the Riemann curvature tensor 
are in general known as the spin-$2$ equations.
This motivates:
\begin{problem}
State and prove the spin-$2$ version of Theorems~\ref{kerbnd} or~\ref{DT} (or 
Open problem~\ref{MORE}) on
Kerr metric backgrounds or more generally, metrics settling down to Kerr.
\end{problem}
In addition to~\cite{book},
a good reference for these problems is~\cite{ck1}, where this problem is resolved just for
Minkowski space. 
In contrast to the case of Minkowski space,
an additional difficulty in the above problem for the black hole setting
arises from the presence of nontrivial stationary solutions provided by
the curvature tensor of the solutions themselves. 
This will have
to be accounted for in the statement of any decay theorem.
From the ``linearisation'' point of view, the 
existence of stationary solutions 
is of course related to the fact that it is the $2$-parameter Kerr family which is 
expected to be stable,  not an individual solution.

\subsection{The Klein-Gordon equation}
\label{KGproblem}
Another important problem is the Klein-Gordon equation
\begin{equation}
\label{KGeq}
\Box_g\psi = \mu \psi.
\end{equation}
A large body of heuristic studies suggest the existence of a sequence of quasinormal modes
(see Section~\ref{heuristic})
approaching the real axis from below in the Schwarzschild case. When the metric is perturbed
to Kerr, it is thought that essentially this sequence ``moves up'' and produces exponentially
growing solutions. See~\cite{zouros, detweiler}.
This suggests
\begin{problem}
Construct an exponentially growing solution of  $(\ref{KGeq})$ on Kerr, for arbitrarily
small $\mu>0$
and arbitrary small $a$.
\end{problem}
Interestingly, if one fixed $m$, then adapting the proof of Section~\ref{BP}, one can show that
for $\mu>0$ sufficiently small and $a$ sufficiently small, depending on $m$, the statement
of Theorem~\ref{kerbnd} holds for $(\ref{KGeq})$ for such Kerr's. This is consistent with the
quasinormal mode picture, as one must take $m\to\infty$ for the modes to approach
the real axis in Schwarzschild. This shows how misleading fixed-$m$ results can be
when compared to the actual physical problem.

\subsection{Asymptotically anti-de Sitter spacetimes}
\label{CFT}
In discussing the cosmological constant we have considered only the case $\Lambda>0$. 
This is the case of current interest in cosmology. On the other hand, from the completely different
viewpoint of high energy physics, there has been intense interest in the case
$\Lambda<0$. See~\cite{gibbons}.

The expression $(\ref{metricexp})$ 
for $\Lambda<0$ defines a solution known as \emph{Schwarzschild-anti-de Sitter}.
A Penrose diagramme of this solution is given below. 
\[
\input{SchwarzschildADS.pstex_t}
\]
The timelike character of infinity means
that this solution is not globally hyperbolic. As with Schwarzschild-de Sitter,
Schwarzschild-anti-de Sitter can be viewed as a subfamily of
a larger Kerr-anti de Sitter family, with similar properties.

Again, as with Schwarzschild-de Sitter, the role of the wave equation is in some contexts replaced by
the conformally covariant wave equation. Note that this
corresponds to $(\ref{KGeq})$ with a negative $\mu=2\Lambda/3<0$. 

Even in the case of anti-de Sitter space itself (set $M=0$ in $(\ref{metricexp})$), the question
of the existence and uniqueness of dynamics is subtle in view of the timelike character of
the ideal boundary $\mathcal{I}$. 
It turns out that dynamics are unique  for $(\ref{KGeq})$
only if the $\mu\ge  5\Lambda/12$, whereas for the total energy to be nonnegative one must have
$\mu\ge 3\Lambda/4$. Under our conventions, the conformally covariant
 wave equation
lies between these values.
See~\cite{bachelotads, dzf}.

\begin{problem}
For suitable ranges of $\mu$, understand the boundedness and blow-up properties for
solutions  of $(\ref{KGeq})$ on Schwarzschild-anti de Sitter and  Kerr-anti de Sitter.
\end{problem}
See~\cite{hari,  cardoso} for background.

\subsection{Higher dimensions}
All the black hole solutions described above have higher dimensional analogues.
See \cite{harvey, malcolm}.
These are currently of great interest from the point of view of high energy physics.
\begin{problem}
Study all the problems of Sections~\ref{waves}--\ref{CFT} in dimension greater than $4$.
\end{problem}

Higher dimensions also brings a wealth of explicit black hole
solutions such that the topology of spatial sections of $\mathcal{H}^+$ is no longer spherical.
In particular, in $5$ spacetime dimensions
there exist ``black string'' solutions, and much more interestingly,
asymptotically flat  ``black ring'' solutions with horizon topology $S^1\times S^2$. 
See~\cite{harvey}.

\begin{problem}
Investigate the dynamics of
the wave equation $\Box_g\psi=0$ and related equations on black ring
backgrounds.
\end{problem}

\subsection{Nonlinear problems}
\label{notlinear}
The eventual goal of this subject is to study the global dynamics of the Einstein equations
$(\ref{Evac})$ themselves and in particular,
to resolve Conjecture~\ref{stabconj}. 

It may be interesting, however, to first look at simpler non-linear equations on 
fixed black hole backgrounds and ask whether decay results  of the type 
proven here   are sufficient to show non-linear stability.

The simplest non-linear perturbation of the wave equation is 
\begin{equation}
\label{powernl}
\Box_g\psi = V'(\psi)
\end{equation}
where $V= V(x)$ is a potential function. Aspects of this problem on a Schwarzschild
background have been studied by~\cite{vikolios, dr2, BlueSter, marzuola}.

\begin{problem}
Investigate the problem $(\ref{powernl})$ on  Kerr backgrounds.
\end{problem}
In particular, in view of the discussion of Section~\ref{KGproblem},
one may be able to construct solutions of $(\ref{powernl})$ with
$V= \mu \psi^2  + |\psi|^p$, for $\mu>0$ and for arbitrarily large $p$,
arising from arbitrarily small, decaying initial data,
which blow up in finite time. This would be quite interesting.

A nonlinear problem with a  stronger relation 
to $(\ref{Evac})$ is the wave map problem. 
Wave maps are maps $\Phi: \mathcal{M}\to \mathcal{N}$
where $\mathcal{M}$ is Lorentzian and $\mathcal{N}$ is Riemannian,
which are critical points of the Lagrangian
\[
\mathcal{L}(\Phi)=\int |d\Phi|^2_{g_N}
\]
In local coordinates, the equations take the form
\[
\Box_{g_M}\Phi^k = -\Gamma_{ij}^k g^{\alpha\beta}_M
(\partial_\alpha \Phi^i \partial_\beta \Phi^j),
\]
where $\Gamma^{k}_{ij}$ denote the Christoffel symbols of $g_N$.
See the lecture notes of Struwe~\cite{Struwe} for a nice introduction. 

\begin{problem}
Show global existence in the domain of outer communications
for small data solutions of the wave map problem, for arbitrary target manifold
$\mathcal{N}$,
on Schwarzschild and  Kerr backgrounds.
\end{problem}

All the above problems concern fixed black hole backgrounds. 
One of the essential difficulties in proving Conjecture~\ref{stabconj}
is dealing with a black hole background which is not known a priori, and whose geometry
must thus be recovered in a  bootstrap setting. It would be nice to have more tractable
model problems which address this difficulty. 
One can arrive at such problems by passing to symmetry classes.
The closest analogue to Conjecture~\ref{stabconj} in such a context is perhaps 
provided by the results of Holzegel~\cite{kostakis2},  which concern the dynamic stability of
the $5$-dimensional Schwarzschild as a solution of $(\ref{Evac})$, restricted under
Triaxial Bianchi IX symmetry. 
See Section~\ref{higherhigherdim}.
In the symmetric setting,
one can perhaps attain more insight  on the geometric difficulties by 
attempting a large-data
problem. For instance
\begin{problem} 
\label{kostakis;}
Show that the maximal development of asymptotically flat triaxial Bianchi IX vacuum initial data 
for the $5$-dimensional  vacuum equations 
containing a trapped surface settles down to Schwarzschild. 
\end{problem}

The analogue of the above statement has in fact been proven
for the Einstein-scalar field system under spherical symmetry~\cite{theory, dr1}.
In the direction of the above, another interesting set of problems is 
provided by the Einstein-Maxwell-charged
scalar field system under spherical symmetry. For both the charged-scalar field system and the
Bianchi IX vacuum system,
even more ambitious than Open problem~\ref{kostakis;} would be
to study the strong and weak cosmic censorship conjectures, possibly
unifying the analysis of~\cite{wcs, the, cbh}. 
Discussion of these open problems, however,
is beyond the scope of the present notes.

\section{Acknowledgements}
These notes were the basis for a course at the Clay Summer School 
on Evolution Equations which took place at ETH, Z\"urich from June 23--July 18, 2008.
A shorter version of this course was presented
as a series of lectures at the Mittag-Leffler Institute  in September 2008.

The authors thank ETH for hospitality while these notes were written, as well as the
Clay Mathematics Institute.  M.~D.~thanks in addition the Mittag-Leffler Institute in
Stockholm. M.~D.~is supported in part by a grant from the European Research Council. 
I.~R.~is supported in part by NSF grant DMS-0702270.

\appendix

\section{Lorentzian geometry}
\label{Lorge}
The reader who wishes a formal introduction to Lorentzian geometry can consult~\cite{he:lssst}.
For the reader familiar with the concepts and notations of Riemannian geometry, the following
remarks should suffice for a quick introduction.

\subsection{The Lorentzian signature}
Lorentzian geometry is defined as in Riemannian geometry, except that the
metric $g$ is not assumed positive definite, but of signature $(-,+,\ldots ,+)$. 
That is to say,
we assume that at each point $p\in\mathcal{M}^{n+1}$,\footnote{It is conventional 
to denote the dimension of the manifold by $n+1$.} we may find a basis ${\bf e}_i$ of the tangent
space $T_p\mathcal{M}$, $i=0, \ldots, n$, such that
\[
g = -{\bf e}_0\otimes {\bf e}_0+{\bf e}_1\otimes {\bf e}_1+\cdots {\bf e}_n\otimes +{\bf e}_n.
\]
In Riemannian geometry, the $-$ in the first term on the right hand side would by $+$.

A non-zero vector $v\in T_p\mathcal{M}$ is called
\emph{timelike}, \emph{spacelike}, or \emph{null},
according to whether $g(v,v)<0$, $g(v,v)>0$, or $g(v,v)=0$. 
Null and timelike vectors collectively are known as \emph{causal}. There are various
conventions for the $0$-vector. Let us not concern ourselves with such issues here.

The appellations timelike, spacelike, null 
are inherited by vector fields and immersed curves by their tangent vectors,
i.e.~a vector field $V$ is timelike if $V(p)$ is timelike, etc., and a curve
$\gamma$ is timelike if $\dot\gamma$ is timelike, etc. On the other hand,
a submanifold $\Sigma\subset\mathcal{M}$ is said to be spacelike if
its induced geometry is Riemannian, timelike if its induced geometry is
Lorentzian, and null if its induced geometry is degenerate.  (Check that these two 
definitions coincide for embedded curves.)
For a codimension-$1$ submanifold $\Sigma\subset\mathcal{M}$,
at every $p\in M$, 
there exists a non-zero normal $n^\mu$, i.e.~a vector in $T_p\mathcal{M}$
such that $g(n,v)=0$ for all $v\in T_p\Sigma$. 
It is easily seen that $\Sigma$ is spacelike iff $n$ is timelike, 
$\Sigma$ is timelike iff $n$ is spacelike, and $\Sigma$ is null iff
$n$ is null. Note that in the latter case $n\in T_p\Sigma$. The normal of $\Sigma$
is thus tangent to $\Sigma$. 

\subsection{Time-orientation and causality}
A \emph{time-orientation} on $(\mathcal{M},g)$ is defined by an equivalence class $[K]$ where $K$ is a continuous timelike vector field, where $K_1\sim K_2$ if $g(K_1,K_2)<0$.
A Lorentzian manifold admitting a time-orientation is called \emph{time-orientable}, and a triple 
$(\mathcal{M},g,[K])$ is said to be a \emph{time-oriented} Lorentzian manifold. Sometimes
one reserves the use of the word ``spacetime'' for such triples. In any case,
we shall
always consider time-oriented Lorentzian manifolds and often drop explicit mention
of the time orientation.

Given this, we may further partition causal vectors as follows. A causal vector $v$ is said to
be \emph{future-pointing} if $g(v,K)<0$, otherwise \emph{past-pointing}, 
where $K$ is a representative 
for the time orientation. As before, these names are inherited by causal curves, i.e.~we may
now talk of a \emph{future-directed} timelike curve, etc.
Given $p$, we define the \emph{causal future} $J^+(p)$ by
\[
J^+(p) = p\cup\{q\in\mathcal{M}: \exists \gamma:[0,1]\to \mathcal{M}:
\dot\gamma \hbox{\rm\ future-pointing, causal}\}
\]
Similarly, we define $J^-(p)$ where future is replaced by past in the above.
If $S\subset \mathcal{M}$ is a set, then we define
\[
J^\pm (S)=\cup_{p\in S} J^\pm(p).
\]

\subsection{Covariant derivatives, geodesics, curvature}
The standard local notions of Riemannian geometry carry over.
In particular, one defines the Christoffel symbols
\[
\Gamma^\mu_{\nu\lambda}= \frac12 g^{\mu\alpha}(\partial_\nu g_{\alpha\lambda}+
\partial_\lambda g_{\nu\alpha} -\partial_\alpha g_{\nu\lambda}),
\]
and
geodesics $\gamma(t)=(x^\alpha(t))$ are defined
as solutions to 
\[
\ddot x^\mu +\Gamma^\mu_{\nu\lambda}\dot x^\nu \dot x^\lambda = 0.
\]
Here $g_{\mu\nu}$ denote the components of $g$ with respect to 
a local coordinate system $x^\mu$, $g^{\mu\nu}$ denotes
the components of the inverse metric, 
and we are applying the Einstein summation
convention where repeated upper and lower indices are summed.
The Christoffel symbols allow us to define the \emph{covariant derivative} on
$(k,l)$ tensor fields by
\[
\nabla_\lambda  A^{\nu_1\ldots \nu_k}_{\mu_1\ldots \mu_\ell}
=\partial_\lambda A^{\nu_1\ldots \nu_k}_{\mu_1\ldots \mu_\ell}
+\sum_{i=1}^k \Gamma_{\lambda\rho}^{\nu_i} A^{\nu_1\ldots \rho\ldots  \nu_k}_{\mu_1\ldots \mu_\ell}
- \sum_{i=1}^l \Gamma_{\lambda\mu_i}^{\rho} A^{\nu_1\ldots \nu_k}_{\mu_1\ldots\rho\ldots \mu_\ell}
\]
where it is understood that $\rho$ replaces $\nu_i$, $\mu_i$, respectively in the
two terms on the right.
This defines $(k,l+1)$ tensor. 
As usual, if we contract this with a vector $v$ at $p$, then we will denote this operator
as $\nabla_v$ and we note that this can be defined in the case that the tensor field
is defined only on a curve tangent to $v$ at $p$.
We may thus express the geodesic equation as 
\[
\nabla_{\dot\gamma} \dot\gamma =0.
\]

The \emph{Riemann curvature tensor} is given by
\[
R^{\mu}_{\nu\lambda\rho}\doteq
\partial_\lambda \Gamma^{\mu}_{\rho\nu}-
\partial_\rho \Gamma^\mu_{\lambda\nu}
+\Gamma^\alpha_{\rho\nu}\Gamma^{\mu}_{\lambda\alpha}
-\Gamma^\alpha_{\lambda\nu}\Gamma^{\mu}_{\rho\alpha},
\]
and the \emph{Ricci} and \emph{scalar} curvatures by    
\[
R_{\mu\nu}\doteq R^{\alpha}_{\mu\alpha\nu}
 ,\qquad R \doteq g^{\mu\nu}R_{\mu\nu}.
\]
Using the same letter $R$ to denote all these tensors is conventional in relativity,
the number of indices indicating which tensor is being referred to.
For this reason we will avoid  writing  ``the tensor $R$''. The expression $R$ 
without indices will 
always denote the scalar curvature.
As usual, we shall also use the letter $R$ with indices to denote the
various manifestations of these tensors with indices raised and lowered
by the inverse metric and metric, e.g.
\[
R_{\mu\nu\lambda\rho} =  g_{\mu\alpha}R^{\alpha}_{\nu\lambda\rho}
\]
Note the important formula
\[
\nabla_\a \nabla_\b Z_\mu -\nabla_\a \nabla_\b Z_\mu = R_{\si \mu\a\b} Z^\si 
\]

We say that an immersed curve $\gamma:I\to \mathcal{M}$ is \emph{inextendible}
if there does not exist an immersed curve $\tilde\gamma: J\to \mathcal{M}$ where $J\supset I$
and $\tilde\gamma|_I=\gamma$.

We say that $(\mathcal{M},g)$ is
\emph{geodesically complete} if for all inextendible geodesics $\gamma:I\to \mathcal{M}$,
then $I=\mathbb R$. Otherwise, we say that it is \emph{geodesically incomplete}.
We can similarly define the notion of spacelike geodesic (in)completeness,
timelike geodesic completeness,
causal geodesic completeness, etc, by restricting the definition to such geodesics.
In the latter two cases, we may further specialise, e.g.~to the notion of \emph{future causal
geodesic completeness}, by replacing the condition $I=\mathbb R$ with $I\supset (a,\infty)$
for some $a$. 

We say that a spacelike hypersurface $\Sigma\subset \mathcal{M}$ is \emph{Cauchy} if
every inextendible causal curve in $\mathcal{M}$ intersects it precisely once.
A spacetime $(\mathcal{M},g)$ admitting such a hypersurface is called
\emph{globally hyperbolic}. This notion was first introduced by Leray~\cite{leray}.

\section{The Cauchy problem for the Einstein equations}
\label{cauchyproblem}
We outline here for reference the basic framework of
the Cauchy problem for the Einstein equations
\begin{equation}
\label{Einstcp}
R_{\mu\nu}-\frac12g_{\mu\nu}R+\Lambda g_{\mu\nu} = 8\pi T_{\mu\nu}.
\end{equation}
Here $\Lambda$ is a constant known as the \emph{cosmological constant}
and $T_{\mu\nu}$ is the so-called energy momentum tensor of matter.
We will consider mainly the
vacuum case
\begin{equation}
\label{Einstvaccp}
R_{\mu\nu}=\Lambda g_{\mu\nu},
\end{equation}
where the system closes in itself.  If the reader wants to set $\Lambda=0$, he should
feel free to do so.
To illustrate the case of matter, we will consider the example of a scalar field.

\subsection{The constraint equations}
Let $\Sigma$ be a spacelike hypersurface in $(\mathcal{M},g)$, with future directed unit
timelike normal $N$.
By definition, $\Sigma$ inherits a Riemannian metric from $g$.
On the other hand, we can define the so-called second fundamental form of $\Sigma$
to be the
symmetric covariant $2$-tensor in $T\Sigma$ defined by
\[
K(u,v)= 
-g(\nabla_u V,N)
\]
where  $V$ denotes an arbitrary extension of $v$ to a vector field along $\Sigma$,
and $\nabla$ here denotes the connection of $g$.
As in Riemannian geometry, one easily shows that the above indeed defines a tensor
on $T\Sigma$, and that it is symmetric.

Suppose now $(\mathcal{M},g)$ satisfies $(\ref{Einstcp})$ with some tensor $T_{\mu\nu}$.
With $\Sigma$ as above, let $\bar{g}_{ab}$, $\bar{\nabla}$, $K_{ab}$ denote
the induced metric, connection, and second fundamental form, respectively, of
$\Sigma$. Let barred quantities and Latin indices refer to tensors, curvature, etc.,
on $\Sigma$, and let $\Pi^\nu_a (p)$ denote the components of the
pullback map $T^*\mathcal{M} \to T^*\Sigma$.
It follows that
\begin{equation}
\label{constrai1}
\bar{R}+ (K^a_a)^2- K^a_bK^b_a = 16\pi\, T_{\mu\nu}n^\mu n^\nu  +2\Lambda,
\end{equation}
\begin{equation}
\label{constrai2}
\nabla_b K^{b}_a-\nabla_{a}K^b_b=16\pi\,  \Pi^\nu_a T_{\mu \nu}n^\mu.
\end{equation}
To see this, one
derives as in Riemannian geometry the Gauss and Codazzi equations, take traces,
and apply $(\ref{Einstcp})$.

\subsection{Initial data}
\label{initdatasec}
It is clear that $(\ref{constrai1})$, $(\ref{constrai2})$ are \emph{necessary conditions}
on the induced geometry of a spacelike hypersurface $\Sigma$ so as to arise as
a hypersurface in a spacetime satisfying $(\ref{Einstcp})$. 
As we shall see, immediately, they will also be sufficient conditions
for solving the initial value problem.

\subsubsection{The vacuum case}
Let $\Sigma$ be a $3$-manifold, $\bar{g}$ a Riemannian metric on $\Sigma$,
and $K$ a symmetric covariant $2$-tensor. We shall call $(\Sigma, \bar{g}, K)$
a \emph{vacuum initial data set with cosmological constant $\Lambda$} 
if $(\ref{constrai1})$--$(\ref{constrai2})$ are satisfied
with $T_{\mu\nu}=0$.
Note that in this case, equations $(\ref{constrai1})$--$(\ref{constrai2})$ refer
only to $\Sigma$, $\bar{g}$, $K$.

\subsubsection{The case of matter}
\label{ESF}
Let us here provide only the case for the Einstein-scalar field case.
Here, the system is $(\ref{Einstcp})$ coupled with
\begin{equation}
\label{Einst-sf1}
\Box_g\psi =0,
\end{equation}
\begin{equation}
\label{Einst-sf2}
T_{\mu\nu}= \partial_\mu\psi\partial_\nu\psi -\frac12 g_{\mu\nu} \nabla^\alpha\psi \nabla_\alpha\psi.
\end{equation}

First note that were $\Sigma$ a spacelike hypersurface in a spacetime $(\mathcal{M},g)$
satisfying the Einstein-scalar field system with massless scalar field $\psi$,
and $n^\mu$ were the future-directed normal,
then setting $\uppsi'=n^\mu\partial_\mu\phi$, $\uppsi=\phi|_\Sigma$ we have
\[
T_{\mu\nu}n^\mu n^\nu = \frac12((\uppsi')^2+\bar\nabla^a\uppsi
\bar\nabla_a\uppsi),
\]
\[
\Pi^\nu_a T_{\mu \nu}n^\mu= \uppsi'\bar\nabla_a\uppsi,
\]
where latin indices and barred quantities refer to $\Sigma$ and its induced metric
and connection.

This motivates the following:
Let $\Sigma$ be a $3$-manifold, $\bar{g}$ a Riemannian metric on $\Sigma$,
$K$ a symmetric covariant $2$-tensor, and $\uppsi:\Sigma\to \mathbb R$, $\uppsi':\Sigma\to
\mathbb R$ functions. We shall call $(\Sigma, \bar{g}, K)$
an \emph{Einstein-scalar field initial data set with cosmological constant $\Lambda$} 
if $(\ref{constrai1})$--$(\ref{constrai2})$ are satisfied
replacing $T_{\mu\nu}n^\mu n^\nu$ with $\frac12((\uppsi')^2+\bar\nabla^a\uppsi
\bar\nabla_a\uppsi)$,
and replacing $\Pi^\nu_a T_{\mu \nu}n^\mu$ with $\uppsi'\bar\nabla_a\uppsi$.

Note again that with the above replacements the equations $(\ref{constrai1})$--$(\ref{constrai2})$
do not refer to an ambient spacetime $\mathcal{M}$. See~\cite{gluing} for the 
construction of solutions
to this system.

\subsubsection{Asymptotic flatness and the positive mass theorem}
\label{asymptflat}
The study of the Einstein constraint equations is non-trivial!

Let us refer in this section to a triple $(\Sigma, \bar{g}, K)$ where $\Sigma$ is
a $3$-manifold, $\bar{g}$ a Riemannian metric, and $K$ a symmetric two-tensor
on $\Sigma$ as an \emph{initial data set}, even though we have not specified
a particular closed system of equations. 
An initial data set $(\Sigma, \bar{g}, K)$ is \emph{strongly asymptotically flat
with one end} if
there exists a compact set $\mathcal{K}\subset \Sigma$ and a coordinate
chart on $\Sigma\setminus \mathcal{K}$ which is a diffeomorphism to the complement
of a ball in $\mathbb R^3$, and for which
\[
g_{ab} = \left(1+\frac{2M}r\right)\delta_{ab} + o_2(r^{-1}), \qquad k_{ab}=o_1(r^{-2}),
\]
where $\delta_{ab}$ denotes the Euclidean metric and $r$ denotes
the Euclidean polar coordinate.

In appopriate units, $M$ is the ``mass'' measured by asymptotic observers, 
when comparing to Newtonian motion in the frame $\delta_{ab}$.
On the other hand, under the assumption of a global coordinate system well-behaved
at infinity, $M$ can be computed by integration of the $t_{0}^0$ component
of a certain pseudotensor\footnote{This is subtle: The Einstein vacuum
equations arise from the Hilbert Lagrangian $\mathcal{L}(g)=\int R$
which is $2$nd order in the metric. In local coordinates,
the highest order term is a divergence,
and the Lagrangian can thus be replaced by a new Lagrangian which is $1$st order in the metric.
The resultant
Lagrangian density, however, is no longer coordinate invariant. The quantity $t_0^0$ now arises from
``Noether's theorem''~\cite{noether}. See~\cite{notes} for a nice discussion.}   added to $T_0^0$. In this manifestation, the quantity $E= M$ is known as 
the \emph{total energy}.\footnote{With the above asymptotics, the so-called linear
momentum
 vanishes. Thus, in this case ``mass'' and energy are
equivalent.}
This relation was first studied by Einstein and is discussed 
in Weyl's book Raum-Zeit-Materie~\cite{weyl}. In one looks at $E$ for a family
of hypersurfaces with the above asymptotics, then $E$ is conserved.

A celebrated theorem of Schoen-Yau~\cite{S-Y, S-Y2} (see also~\cite{witten}) states
\begin{theorem}
\label{PMth}
Let $(\Sigma, \bar{g}, K)$ be strongly asymptotically flat with one end and
satisfy $(\ref{constrai1})$, $(\ref{constrai2})$ with $\Lambda=0$,
and where $T_{\mu\nu}n^\mu n^\nu$, $\Pi^\nu_a T_{\mu\nu}n^\mu$ are replaced by
the scalar $\mu$ and the tensor $J_a$, respectively, defined on $\Sigma$, such that moreover
$\mu\ge \sqrt{J^a J_a}$. Suppose moreover the asymptotics are strengthened
by replacing $o_2(r^{-1})$ by $O_4(r^{-2})$ and $o_1(r^{-2})$ by $O_3(r^{-3})$.
Then $M\ge0$ and $M=0$ iff $\Sigma$ embeds isometrically into $\mathbb R^{3+1}$
with induced metric $\bar{g}$ and second fundamental form $K$.
\end{theorem}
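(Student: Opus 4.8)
The plan is to follow Witten's spinorial argument~\cite{witten}, which is the most self-contained route and meshes well with the analytic techniques of these notes; I will comment on the Schoen--Yau minimal-surface proof~\cite{S-Y, S-Y2} at the end. The starting point is that an orientable $3$-manifold is parallelizable and hence spin, so $\Sigma$ carries a spinor bundle $S$ (the restriction to $\Sigma$ of the spacetime spinor bundle), equipped with Clifford multiplication $\gamma$ and a Hermitian inner product. On $S$ one introduces the \emph{Sen--Witten connection}
\begin{equation}
\hat\nabla_a\psi = \bar\nabla_a\psi + \tfrac12 K_{ab}\gamma^b\gamma^0\psi,
\end{equation}
where $\bar\nabla$ is the spin connection of $(\Sigma,\bar g)$ and $\gamma^0$ is the Clifford action of the future normal, together with the associated \emph{Witten--Dirac operator} $\hat D=\gamma^a\hat\nabla_a$. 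The content of the constraint equations $(\ref{constrai1})$--$(\ref{constrai2})$ is precisely that the zeroth-order curvature term in the Weitzenb\"ock formula for $\hat D$ is governed by the matter density.

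First I would establish the \emph{Witten identity}: for any spinor $\psi$,
\begin{equation}
\hat D^2 = \hat\nabla^{*}\hat\nabla + \mathcal{R}, \qquad \mathcal{R}=\tfrac12\left(\mu + J_a\gamma^a\gamma^0\right),
\end{equation}
where $\mu$ and $J_a$ are the energy and momentum densities replacing $T_{\mu\nu}n^\mu n^\nu$ and $\Pi^\nu_a T_{\mu\nu}n^\mu$ in $(\ref{constrai1})$--$(\ref{constrai2})$. The operator $\mathcal{R}$ is pointwise self-adjoint with eigenvalues $\tfrac12(\mu\pm\sqrt{J^aJ_a})$, so the dominant energy condition $\mu\ge\sqrt{J^aJ_a}$ is exactly what makes $\mathcal{R}\ge0$. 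Integrating $\langle\hat D^2\psi,\psi\rangle$ over the end gives, for $\hat D\psi=0$,
\begin{equation}
\int_\Sigma\left(|\hat\nabla\psi|^2 + \langle\psi,\mathcal{R}\psi\rangle\right)
= \lim_{r\to\infty}\oint_{S_r}\langle \ldots\rangle,
\end{equation}
and a careful expansion of the flux integral using $\bar g_{ab}=(1+2M/r)\delta_{ab}+O_4(r^{-2})$ and $K_{ab}=O_3(r^{-3})$ identifies the right-hand side with $4\pi\left(E|\psi_0|^2 - P_i\langle\psi_0,\gamma^i\gamma^0\psi_0\rangle\right)$, where $\psi_0=\lim_{r\to\infty}\psi$ is the asymptotic constant value and $(E,P_i)$ is the ADM energy--momentum, with $E=M$. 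The strengthened decay is used precisely to guarantee absolute convergence of this boundary integral and the vanishing of all subleading contributions.

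The main analytic obstacle is the \emph{solvability of the Dirac equation with prescribed asymptotics}: given any constant $\psi_0$, I must produce $\psi$ with $\hat D\psi=0$ and $\psi\to\psi_0$ at infinity. I would write $\psi=\chi\psi_0+\phi$ with $\chi$ a cutoff equal to $1$ near infinity, reducing to solving $\hat D\phi=-\hat D(\chi\psi_0)=:f$ with $f$ of compact support and $\phi\to0$. On asymptotically flat manifolds $\hat D$ is Fredholm between suitable weighted Sobolev spaces; the Weitzenb\"ock identity furnishes the coercive estimate $\|\hat\nabla\phi\|_{L^2}^2\le\langle\hat D\phi,\hat D\phi\rangle$ modulo the nonnegative zeroth-order term, which rules out $L^2$ kernel and, with the Fredholm alternative, yields existence and the required decay. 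Granting this, the displayed identity gives $E|\psi_0|^2\ge0$; running $\psi_0$ over a basis and diagonalizing the resulting quadratic form in $(E,P_i)$ produces $E\ge\sqrt{P_iP^i}$, hence $M=E\ge0$.

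For the rigidity statement, equality $M=0$ forces the integrand to vanish for every asymptotically constant $\psi_0$, so $\hat\nabla\psi\equiv0$ and $\mathcal{R}\psi\equiv0$ for a space of Sen--Witten-parallel spinors of maximal rank. The integrability conditions for such a family force the intrinsic curvature $\bar R_{abcd}$ and the second fundamental form $K_{ab}$ to vanish, whence $(\Sigma,\bar g)$ is flat with $K=0$ and embeds as a flat spacelike slice of $\mathbb{R}^{3+1}$, with $\mu$ and $J_a$ vanishing as well; I expect the cleanest packaging of this last step to require re-deriving the Gauss--Codazzi relations from the parallel spinors. As an alternative, one can follow Schoen--Yau: the general case is reduced via \emph{Jang's equation} to a manifold of nonnegative scalar curvature, for which negativity of $M$ would permit the construction of a complete stable minimal surface whose stability inequality, combined with Gauss--Bonnet and the asymptotic geometry, yields a contradiction; there the main obstacle is instead the regularity and asymptotic analysis of Jang's equation.
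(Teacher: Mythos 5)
The paper does not prove this statement: Theorem~\ref{PMth} is quoted from the literature, with \cite{S-Y, S-Y2} (the Schoen--Yau minimal-surface argument) as the primary citation and \cite{witten} mentioned as an alternative. Your proposal is therefore not comparable to an in-paper argument; what it is, is a faithful outline of the Witten route, and as such it is essentially correct. The chain you describe -- Sen--Witten connection, Weitzenb\"ock identity whose zeroth-order term is $\tfrac12(\mu+J_a\gamma^a\gamma^0)$ by the constraints (up to the $8\pi$ normalisation conventions of $(\ref{constrai1})$--$(\ref{constrai2})$, which you should track), nonnegativity of that term under $\mu\ge\sqrt{J^aJ_a}$, identification of the flux integral with $E|\psi_0|^2-P_i\langle\psi_0,\gamma^i\gamma^0\psi_0\rangle$, and solvability of $\hat D\psi=0$ with prescribed constant asymptotics via weighted Sobolev/Fredholm theory -- is the standard argument. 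Note that with the strengthened decay $K_{ab}=O_3(r^{-3})$ assumed in the statement the linear momentum $P_i$ vanishes (as the paper's footnote observes), so you do not even need the full $E\ge|P|$ step to conclude $M\ge0$. The two places where your sketch is most schematic, and where the real work lies, are (i) the weighted elliptic theory giving existence and the decay $\phi\to0$ needed for the boundary terms at infinity to reduce to the ADM expression with no subleading contributions, and (ii) the rigidity step, where a maximal-rank family of $\hat\nabla$-parallel spinors must be converted, via the integrability conditions and Gauss--Codazzi, into flatness of the reconstructed spacetime data and then a \emph{global} isometric embedding into $\mathbb R^{3+1}$ (the passage from local to global uses the asymptotics/simple connectivity at infinity). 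Compared with Schoen--Yau, the spinorial route is shorter and self-contained in dimension $3$ where every orientable manifold is spin, while the minimal-surface route avoids spin structures but requires the regularity and asymptotic analysis of Jang's equation and of the stable minimal surface, as you correctly indicate.
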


The assumption $\mu\ge \sqrt{J^a J_a}$ holds if the 
matter satisfies the \emph{dominant energy condition}~\cite{he:lssst}. 
In particular, it holds for the Einstein
scalar field system of Section~\ref{ESF}, and (of course) for the vacuum case.
The statement we have given above is weaker than the full strength of the Schoen-Yau result.
For the most general assumptions under which mass can be defined, see~\cite{bartnik}.

One can define the notion of  \emph{strongly asymptotically flat with $k$ ends}
by assuming that there exists a compact $\mathcal{K}$ such that
$\Sigma\setminus\mathcal{K}$ is a disjoint union of $k$ regions possessing
a chart as in the above definition.
The Cauchy surface $\Sigma$ of Schwarzschild of Kerr with $0\le |a| <M$,
can be chosen to be strongly asymptotically flat with $2$-ends.  The mass
of both ends
coincides with the parameter $M$ of the solution.

The above theorem applies to this case as well for the parameter $M$ associated
to any end. If $M=0$ for one end, then it follows by the rigidity statement that there is only one end.
Note why Schwarzschild with $M<0$ does not provide a counterexample.

The association of ``naked singularities'' with negative mass Schwarzschild gave
the impression that the positive energy theorem protects against 
naked singularities. This is not true! See the examples discussed in Section~\ref{nakedsings}.

In the presence of black holes, one expects a strengthening of the lower bound
on mass in Theorem~\ref{PMth} to include a term related to the square root of
the area of a cross section of the
horizon. Such inequalities were first discussed by 
Penrose~\cite{Penr2} with the Bondi mass
in place of the mass defined above. All inequalities of this type are often called
\emph{Penrose inequalities}.   It is not clear what this 
term should be, as the horizon is only identifyable after
global properties of the maximal development have been understood. Thus,
one often replaces this area in the conjectured inequality
with the area of a suitably defined apparent horizon.
Such a statement
 has indeed been obtained
in the so-called Riemannian case (corresponding to $K=0$)
where the relevant notion of apparent horizon coincides with that of 
minimal surface. See the important papers of
Huisken-Ilmanen~\cite{huiskenilm} and Bray~\cite{hugh}.

\subsection{The maximal development}
\label{maxdevy}
Let $(\Sigma, \bar{g}, K)$ denote a smooth vacuum initial data set with cosmological
constant $\Lambda$.
We say that  a smooth spacetime $(\mathcal{M}, g)$ is 
a smooth \emph{development of initial data} if
\begin{enumerate}
\item
\label{sateq}
$(\mathcal{M},g)$ satisfies the Einstein vacuum equations $(\ref{Evac})$
with cosmological constant
$\Lambda$. 
\item
\label{cau}
There exists a smooth embedding $i:\Sigma\to \mathcal{M}$ such that
$(\mathcal{M},g)$ is globally hyperbolic with Cauchy surface
$i(\Sigma)$, and $\bar{g}$, $K$ are the induced metric and second fundamental
form, respectively.
\end{enumerate}

The original local existence and uniqueness theorems were proven in 1952 by 
Choquet-Bruhat~\cite{choquet2}.\footnote{Then called Four\`es-Bruhat.}
In modern language, they can be formulated as follows
\begin{theorem}
\label{exist}
Let $(\Sigma, \bar{g}, K)$ be as in the statement of the above theorem. 
Then there exists a smooth development $(\mathcal{M},g)$ of initial data.
\end{theorem}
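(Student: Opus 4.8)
The plan is to prove this through the classical \emph{harmonic gauge} (wave coordinate) reduction due to de Donder and Choquet-Bruhat, which converts the geometric, diffeomorphism-invariant system $(\ref{Einstvaccp})$ into a determined quasilinear system of wave equations, to which the cited local existence theory for hyperbolic equations applies. First I would recall the obstruction: because $(\ref{Evac})$ is invariant under diffeomorphisms, the principal symbol of $R_{\mu\nu}$, viewed as an operator on $g$, is degenerate and the system is not hyperbolic as written. To break the gauge freedom I impose that the coordinate functions be wave functions, i.e.~that the \emph{harmonic gauge conditions}
\[
H^\mu \doteq g^{\alpha\beta}\Gamma^\mu_{\alpha\beta}=0
\]
hold. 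A direct computation shows that in these coordinates the Ricci tensor takes the form
\[
R_{\mu\nu}= -\tfrac12 g^{\alpha\beta}\partial_\alpha\partial_\beta g_{\mu\nu} + Q_{\mu\nu}(g,\partial g)+\tfrac12\left(g_{\mu\lambda}\partial_\nu H^\lambda+g_{\nu\lambda}\partial_\mu H^\lambda\right),
\]
where $Q_{\mu\nu}$ is quadratic in $\partial g$ with coefficients smooth in $g$. Dropping the last term defines the \emph{reduced Einstein operator}; the reduced equations $-\tfrac12 g^{\alpha\beta}\partial_\alpha\partial_\beta g_{\mu\nu}+Q_{\mu\nu}=\Lambda g_{\mu\nu}$ form a quasilinear system of wave equations for the ten components $g_{\mu\nu}$, whose principal part $g^{\alpha\beta}\partial_\alpha\partial_\beta$ is hyperbolic precisely because $g$ is Lorentzian.

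Next I would build initial data for this reduced system out of $(\Sigma,\bar g,K)$. Choosing coordinates $(x^0,\dots,x^3)$ with $\Sigma=\{x^0=0\}$, I prescribe $g_{\mu\nu}|_\Sigma$ and $\partial_0 g_{\mu\nu}|_\Sigma$ so that (i) the first and second fundamental forms of $\Sigma$ are $\bar g$ and $K$, and (ii) the gauge conditions $H^\mu=0$ hold on $\Sigma$. The latter is an algebraic normalization fixing the remaining freedom in the lapse, shift, and their time derivatives, and can always be arranged. With such data, the general local existence and uniqueness theory for quasilinear wave equations (the work of Friedrichs, Leray, Sobolev, Schauder and Petrovsky referred to in the introduction) yields a unique smooth solution $g$ of the reduced equations on some neighborhood $\mathcal{M}$ of $\Sigma$, with $\Sigma$ a Cauchy surface, establishing condition~$(\ref{cau})$.

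The crux --- and the step I expect to be the main obstacle --- is to show that this solution of the \emph{reduced} equations actually solves the full vacuum equations $(\ref{Einstvaccp})$, equivalently that $H^\mu$ vanishes identically rather than merely on $\Sigma$. The mechanism is gauge propagation. The twice-contracted second Bianchi identity $\nabla^\mu(R_{\mu\nu}-\tfrac12 g_{\mu\nu}R+\Lambda g_{\mu\nu})=0$ holds for any metric; feeding in that $g$ solves the reduced equations, so that $R_{\mu\nu}-\Lambda g_{\mu\nu}=\tfrac12(g_{\mu\lambda}\partial_\nu H^\lambda+g_{\nu\lambda}\partial_\mu H^\lambda)$, this identity becomes a homogeneous linear wave equation for $H^\mu$, schematically $\Box_g H^\mu+(\text{lower order})H=0$. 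It therefore suffices to check that $H^\mu$ and its normal derivative vanish on $\Sigma$. The condition $H^\mu|_\Sigma=0$ is built into the data, so all derivatives of $H^\mu$ tangent to $\Sigma$ vanish; the normal derivative $\partial_0 H^\mu|_\Sigma$ is controlled by the remaining components of $R_{\mu\nu}-\Lambda g_{\mu\nu}$, and here is precisely where the constraint equations $(\ref{constrai1})$--$(\ref{constrai2})$ enter: they are exactly the normal-normal and normal-tangential components of $(\ref{Einstvaccp})$ along $\Sigma$, and imposing them forces $\partial_0 H^\mu|_\Sigma=0$. By uniqueness for the linear wave equation with vanishing Cauchy data, $H^\mu\equiv 0$ throughout $\mathcal{M}$.

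With $H^\mu\equiv 0$, the reduced equations coincide with $(\ref{Einstvaccp})$, so $(\mathcal{M},g)$ satisfies condition~$(\ref{sateq})$ and is a smooth development of the data, completing the proof. I would emphasize that the genuine difficulty is conceptual and lies entirely in the gauge analysis --- the reduction to a hyperbolic system and the Bianchi-plus-constraint argument for propagation --- whereas the analytic local existence for the quasilinear wave system is imported wholesale from the classical theory. The harder \emph{geometric uniqueness} up to diffeomorphism, leading to the maximal development of Theorem~\ref{Maxdev}, is a separate matter requiring the domain-of-dependence patching argument of Choquet-Bruhat--Geroch and is not needed for the existence statement at hand.
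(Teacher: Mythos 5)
Your proposal is correct and follows essentially the same route as the paper's own sketch in Appendix~\ref{harmonic}: the harmonic-gauge reduction to the quasilinear system $(\ref{reduced})$, local existence from the classical hyperbolic theory, and propagation of the gauge condition $(\ref{harcor2})$ from the fact that the constraints $(\ref{constrai1})$--$(\ref{constrai2})$ allow one to arrange it ``to first order'' on $\Sigma$. You have merely filled in the Bianchi-identity mechanism behind that propagation, which the paper leaves implicit.
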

\begin{theorem}
\label{unique}
Let $\mathcal{M}$, $\widetilde{\mathcal{M}}$ be two smooth developments
of initial data.
Then there exists a third development $\mathcal{M}'$ and isometric embeddings
$j:\mathcal{M}'\to \mathcal{M}$, $\tilde{j}:\mathcal{M}'\to\widetilde{\mathcal{M}}$
commuting with $i$, $\tilde{i}$.
\end{theorem}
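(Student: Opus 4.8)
The plan is to exploit the \emph{harmonic gauge} reduction already alluded to in the discussion of de Donder, which converts this geometric uniqueness statement into a standard uniqueness statement for a quasilinear system of wave equations. The essential point is that the Einstein vacuum equations $R_{\mu\nu}=\Lambda g_{\mu\nu}$, while not hyperbolic as written because of their diffeomorphism invariance, become manifestly hyperbolic once a coordinate gauge is fixed. I would fix this gauge by demanding that the coordinate functions $x^\mu$ themselves solve the wave equation $\Box_g x^\mu=0$, equivalently $g^{\alpha\beta}\Gamma^\lambda_{\alpha\beta}=0$. In such coordinates the Ricci tensor takes the form
\[
R_{\mu\nu}=-\tfrac12 g^{\alpha\beta}\partial_\alpha\partial_\beta g_{\mu\nu}+Q_{\mu\nu}(g,\partial g),
\]
where $Q_{\mu\nu}$ is smooth in $g$ and quadratic in $\partial g$, so that $R_{\mu\nu}=\Lambda g_{\mu\nu}$ reduces to a quasilinear wave equation for the components $g_{\mu\nu}$ with principal symbol the metric itself.

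First I would set up harmonic coordinates in each of the two developments using \emph{identical} initial conditions pinned to the geometry of $\Sigma$. Fix a coordinate chart on a piece of $\Sigma$. In $\mathcal{M}$, define $x^0$ so that $\{x^0=0\}=i(\Sigma)$ and $\partial_{x^0}$ is normal to $i(\Sigma)$, let $x^i$ extend the given chart, and solve $\Box_g x^\mu=0$ with these Cauchy data; do the same in $\widetilde{\mathcal{M}}$ for $\tilde x^\mu$. The content of this step is to verify that the harmonic condition evaluated on $\Sigma$, together with $(\bar g,K)$, determines uniquely all of $g_{\mu\nu}|_\Sigma$ and the transversal derivatives $\partial_{x^0}g_{\mu\nu}|_\Sigma$ --- the former from the induced metric and the normalisation of $\partial_{x^0}$, the latter from $K$ together with the gauge relation $g^{\alpha\beta}\Gamma^\lambda_{\alpha\beta}=0$. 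Consequently the components $g_{\mu\nu}(x)$ and $\tilde g_{\mu\nu}(\tilde x)$, regarded as functions on a common coordinate domain $U\subset\mathbb{R}^{3+1}$, are two solutions of the \emph{same} reduced quasilinear wave system with the \emph{same} Cauchy data on $U\cap\{x^0=0\}$.

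Next I would invoke uniqueness for quasilinear wave equations (the theory of Friedrichs, Schauder, Sobolev, Leray et al.\ cited in the introduction), in the sharp form giving finite speed of propagation: two solutions agreeing on an initial slice agree throughout the domain of dependence. This yields $g_{\mu\nu}\equiv\tilde g_{\mu\nu}$ on a neighborhood $U'$ of the initial chart. The identification of points carrying equal coordinate values is then a diffeomorphism pulling $\tilde g$ back to $g$, i.e.\ an isometry; patching over a covering of $\Sigma$ by charts (overlaps match because harmonic coordinates with prescribed Cauchy data are unique) produces the common development $\mathcal{M}'$ together with isometric embeddings $j:\mathcal{M}'\to\mathcal{M}$ and $\tilde j:\mathcal{M}'\to\widetilde{\mathcal{M}}$. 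Since both coordinate systems restrict to the fixed chart on $\Sigma$, these embeddings are the identity there and hence commute with $i$ and $\tilde i$; shrinking if necessary, $\mathcal{M}'$ can be taken globally hyperbolic with Cauchy surface $\Sigma$.

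The main obstacle is the gauge-theoretic bookkeeping of the second step: one must check that harmonic coordinates constructed from the \emph{same} geometric data genuinely produce the \emph{same} Cauchy data for the reduced metric components, so that the two problems being compared are literally identical. This is where diffeomorphism invariance is tamed. Note that, in contrast to the existence theorem, no separate argument for propagation of the harmonic gauge (via the contracted Bianchi identity) is needed here, because we begin with honest solutions of the full equations for which $\Box_g x^\mu=0$ holds by fiat, so the reduction to the wave system is exact rather than merely formal.
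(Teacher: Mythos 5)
Your proposal is correct and takes essentially the same route as the paper's own sketch in Appendix~\ref{harmonic}: construct harmonic coordinates adapted to $\Sigma$ in each development so that $g_{\mu\nu}$ and $\partial_\lambda g_{\mu\nu}$ agree along $\Sigma$, reduce the vacuum equations to the quasilinear wave system $(\ref{reduced})$, and conclude by uniqueness with the domain of dependence property for such systems (Schauder, Stellmacher). Your closing remark that, unlike in the existence theorem, no separate propagation-of-gauge argument is needed because the coordinates are solved for on an honest solution is precisely the right observation.
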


Application of
Zorn's lemma,
the above two theorems
and simple facts about Lorentzian causality yields:
\begin{theorem}~(Choquet-Bruhat--Geroch \cite{chge:givp})
\label{Maxdev}
Let $(\Sigma, \bar{g}, K)$ denote a smooth vacuum initial data set with cosmological
constant $\Lambda$.
Then there exists a unique development of initial data $(\mathcal{M},g)$
satisfying
the following maximality statement:
If $(\widetilde{\mathcal{M}},\widetilde{g})$ satisfies $(\ref{sateq})$, $(\ref{cau})$ with embedding
$\tilde{i}$, then
there exists an isometric embedding $j:\widetilde{\mathcal{M}}\to \mathcal{M}$
such that $j$ commutes with $\tilde{i}$.
\end{theorem}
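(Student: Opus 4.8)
The plan is to deduce Theorem~\ref{Maxdev} from the local existence Theorem~\ref{exist} and the geometric uniqueness Theorem~\ref{unique} by a Zorn's lemma argument, following Choquet-Bruhat--Geroch. First I would organise the collection of all smooth developments $(\mathcal{M},g,i)$ of the fixed data $(\Sigma,\bar g, K)$ into a partially ordered set: declare $(\mathcal{M}_1,g_1,i_1) \preceq (\mathcal{M}_2,g_2,i_2)$ if there is an isometric embedding $j:\mathcal{M}_1\to\mathcal{M}_2$ with $j\circ i_1 = i_2$ (its image is then automatically open, being a local isometry between equidimensional manifolds). There is a genuine set-theoretic subtlety here, since a priori the developments form a proper class; I would dispose of it at the outset by observing that every development is second countable and hence has cardinality at most that of the continuum, so one may assume all underlying point-sets are subsets of a single fixed set of sufficiently large cardinality, rendering the collection a set. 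I would then pass to isometry classes so that $\preceq$ becomes a genuine partial order.

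The heart of the argument is to verify the hypothesis of Zorn's lemma: every chain $\{(\mathcal{M}_\alpha,g_\alpha,i_\alpha)\}$ admits an upper bound. For this I would form the direct limit $\mathcal{M}=\varinjlim \mathcal{M}_\alpha$ along the connecting embeddings. The smooth structures and the metrics $g_\alpha$ are compatible under the embeddings, so they descend to a smooth Lorentzian metric $g$ on $\mathcal{M}$; since $(\ref{Einstvaccp})$ is a pointwise condition it continues to hold on $\mathcal{M}$, and the common embedding $i$ of $\Sigma$ has image a Cauchy surface. Here one must check global hyperbolicity of the limit, namely that every inextendible causal curve in $\mathcal{M}$ already lies in some $\mathcal{M}_\alpha$ and hence meets $i(\Sigma)$ exactly once. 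This produces an upper bound, and Zorn then yields a $\preceq$-maximal development $(\mathcal{M},g,i)$.

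It remains to upgrade $\preceq$-maximality into the strong maximality asserted in the theorem, i.e.\ that \emph{every} development $(\widetilde{\mathcal{M}},\widetilde g,\tilde i)$ embeds isometrically into $\mathcal{M}$ commuting with the embeddings, together with uniqueness of $\mathcal{M}$. The key is a gluing construction: given $\mathcal{M}$ and $\widetilde{\mathcal{M}}$, Theorem~\ref{unique} furnishes a common sub-development, and I would show there is a \emph{largest} such, obtained by taking the union $\mathcal{U}$ of all open sets on which the two developments are isometric via a map commuting with $i,\tilde i$. One then forms the quotient of $\mathcal{M}\sqcup\widetilde{\mathcal{M}}$ by identification along $\mathcal{U}$ and checks it is again a development; by $\preceq$-maximality of $\mathcal{M}$ the resulting space must coincide with $\mathcal{M}$, which forces the desired embedding $j:\widetilde{\mathcal{M}}\to\mathcal{M}$. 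Uniqueness of $\mathcal{M}$ follows by applying this to two $\preceq$-maximal developments.

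The main obstacle --- and the genuinely substantive point of Geroch's contribution --- is ensuring that the glued space $\mathcal{M}\sqcup_{\mathcal{U}}\widetilde{\mathcal{M}}$ is \emph{Hausdorff} and \emph{globally hyperbolic}. Gluing two developments that agree only on an open set can in general create non-Hausdorff ``branch points'' along the boundary of $\mathcal{U}$, and can destroy the Cauchy property. The remedy is to exploit the domain-of-dependence property: I would show that the largest common isometry region $\mathcal{U}$ is itself a globally hyperbolic development with Cauchy surface $i(\Sigma)$, so that its image is causally convex in both $\mathcal{M}$ and $\widetilde{\mathcal{M}}$; this causal convexity is exactly what rules out branching and guarantees that the quotient is a well-defined Hausdorff manifold still possessing $i(\Sigma)$ as a Cauchy surface. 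Establishing that the maximal common sub-development is causally convex, and hence that the identification is ``clean'', is where the real work lies, and it is precisely here that one uses that one is gluing \emph{globally hyperbolic developments} rather than arbitrary local solutions of $(\ref{Einstvaccp})$.
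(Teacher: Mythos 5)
Your proposal is correct and follows essentially the same route the paper indicates: the paper's entire ``proof'' of Theorem~\ref{Maxdev} is the single remark that it follows from Zorn's lemma, Theorems~\ref{exist} and~\ref{unique}, and simple facts about Lorentzian causality, and your sketch is a faithful expansion of exactly that argument. The one point I would sharpen is that Hausdorffness of the glued space does not follow from causal convexity of $\mathcal{U}$ alone, but from the maximality of $\mathcal{U}$ combined with Theorem~\ref{unique} applied at a putative pair of non-separated boundary points (extending the isometry there yields a larger common sub-development, a contradiction) --- though you correctly flag this as where the real work lies.
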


The spacetime $(\mathcal{M},g)$ is known as the \emph{maximal development}
of $(\Sigma, \bar{g}, K)$. The spacetime $\mathcal{M}\cap J^+(\Sigma)$
is known as the \emph{maximal future development} and $\mathcal{M}\cap J^-(\Sigma)$
the \emph{maximal past development}.

We have formulated the above theorems in the class of smooth initial data. They are of course
proven in classes of finite regularity. There has been much recent work in 
proving a version of 
Theorem~\ref{exist} under minimal regularity assumptions. The current state of the art
requires only $\bar{g}\in H^{2+\epsilon}$, $K\in H^{1+\epsilon}$. See~\cite{rough}.

We leave as an {\bf exercise} formulating the analogue of Theorem~\ref{Maxdev}
for the Einstein-scalar field system $(\ref{Einstcp})$, $(\ref{Einst-sf1})$, $(\ref{Einst-sf2})$,
where the notion of initial data set is that given in Section~\ref{ESF}.

\subsection{Harmonic coordinates and the proof of local existence}
\label{harmonic}
The statements of Theorems~\ref{exist}  and~\ref{unique} are coordinate
independent. Their proofs, however, require fixing a gauge which determines
the form of the metric functions in coordinates from initial data. The classic
gauge is the so-called \emph{harmonic} gauge\footnote{also known as wave coordinates}.
Here the coordinates $x^\mu$ are required to satisfy
\begin{equation}
\label{harcor1}
\Box_g x^\mu =0.
\end{equation}
Equivalently, this gauge is characterized by the condition
\begin{equation}
\label{harcor2}
g^{\mu\nu}\Gamma^\alpha_{\mu\nu}=0.
\end{equation}

A linearised version of these coordinates
was used by Einstein~\cite{waves} to predict gravitational waves. 
It appears that  de Donder~\cite{deDonder}
was the first to consider harmonic coordinates in general.
These coordinates are discussed extensively in the book of Fock~\cite{fock}.

The result of Theorem~\ref{unique} 
actually predates Theorem~\ref{exist}, and in some
form was first proven by  Stellmacher~\cite{Stell}.
Given two developments $(\mathcal{M},g)$, $(\widetilde{\mathcal{M}},\tilde g)$
one constructs for each harmonic coordinates $x^\mu$, $\tilde{x}^\mu$
adapted to $\Sigma$, such that $g_{\mu\nu}=\tilde{g}_{\mu\nu}$,
$\partial_\lambda g_{\mu\nu} =\partial_{\lambda} \tilde{g}_{\mu\nu}$
along $\Sigma$.
In these coordinates, the Einstein vaccum equations can be expressed as
\begin{equation}
\label{reduced}
\Box_g g^{\mu\nu} = Q^{\mu\nu, \alpha\beta}_{\iota\kappa \lambda\rho\sigma\tau} \,g^{\iota\kappa}
\, \partial_\alpha g^{\lambda\rho}
\, \partial_\beta g^{\sigma\tau}
\end{equation}
for which uniqueness follows from   general results of Schauder~\cite{Schauder}.
This theorem gives in addition a domain of dependence property.\footnote{There
is even earlier work on uniqueness in the analytic category going back to Hilbert,
appealing to Cauchy-Kovalevskaya.  Unfortunately, nature is not analytic; in particular, one
cannot infer the domain of dependence property from those considerations.}

Existence for solutions of the system $(\ref{reduced})$ with smooth initial data
would also follow
from the results of Schauder~\cite{Schauder}.
This does not immediately yield a proof of  Theorem~\ref{exist}, because one does not
have a priori the spacetime metric $g$ so as to impose $(\ref{harcor1})$ or $(\ref{harcor2})$!
The crucial observation is that if $(\ref{harcor2})$ is true ``to first order'' on $\Sigma$,
and $g$ is defined to be the unique solution to $(\ref{reduced})$, then 
$(\ref{harcor2})$ will hold, and thus, $g$ will solve $(\ref{Einstcp})$. 
Thus, to prove Theorem~\ref{exist}, it suffices to show that one can
arrange for $(\ref{harcor2})$ to be true ``to first order'' initially. 
Choquet-Bruhat~\cite{choquet2} showed that this can be done
precisely when the constraint equations $(\ref{constrai1})$--$(\ref{constrai2})$ are 
satisfied with vanishing right hand side.
Interestingly, to obtain existence for $(\ref{reduced})$,
Choquet-Bruhat's proof~\cite{choquet2}  does not in fact
appeal to the techniques of Schauder~\cite{Schauder},
but, following Sobolev, 
rests on a Kirchhoff formula representation of the solution. 
Recently, new representations of
this type have found applications to refined extension criteria~\cite{parametrix}.

An interesting feature of the classical existence and uniqueness proofs is that
Theorem~\ref{unique} requires more regularity than Theorem~\ref{exist}. This is 
because solutions of $(\ref{harcor1})$ are a priori only as regular as the metric. 
This difficulty has recently been overcome in~\cite{Fabrod}.

\subsection{Stability of Minkowski space}
\label{stabsect}
The most celebrated global result on the Einstein equations 
is the stability of Minkowski space, first proven in monumental work of
Christodoulou and Klainerman~\cite{book}:
\begin{theorem}
\label{monumental}
Let $(\Sigma,\bar{g}, K)$ be a strongly asymptotically flat vacuum initial data set, assumed
sufficiently close to Minkowski space in a weighted sense. Then the maximal development
is geodesically complete, and the spacetime approaches Minkowski space (with quantitative
decay rates) in all directions. Moreover, a complete future null infinity $\mathcal{I}^+$ can be attached
to the spacetime such that $J^-(\mathcal{I}^+)=\mathcal{M}$.
\end{theorem}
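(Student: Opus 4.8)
The plan is to treat Theorem~\ref{monumental} as the large-scale, nonlinear incarnation of the vector-field program developed throughout these notes: the scalar field $\psi$ is replaced by the Weyl (= Riemann, in vacuum) curvature, the energy-momentum tensor $T_{\mu\nu}(\psi)$ is replaced by the \emph{Bel--Robinson tensor}, and, since the background is not fixed but must be recovered in a bootstrap setting, global existence is obtained through a continuity argument rather than the a priori estimates of Sections~\ref{S1}--\ref{S2}. Local existence and the existence of a maximal development are already supplied by Theorem~\ref{exist} and Theorem~\ref{Maxdev}; the entire task is therefore global. First I would fix the gauge and the foliation: foliate the maximal future development by a canonical family of maximal (constant-mean-curvature, $\tr K=0$) spacelike hypersurfaces $\Sigma_t$, and simultaneously introduce an optical function $u$ solving the eikonal equation $g^{\mu\nu}\pa_\mu u\,\pa_\nu u=0$, whose level sets are outgoing null cones $C_u$ terminating on future null infinity. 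This double foliation $\{\Sigma_t,C_u\}$ furnishes null frames relative to which one decomposes both the curvature into its null components $\alpha,\beta,\rho,\sigma,\underline\beta,\underline\alpha$ and the connection into Ricci coefficients $\chi,\underline\chi,\zeta,\ldots$ (cf.\ the role of $\tr\chi$ in Section~\ref{trapsec}).

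The energetic core is then as follows. In vacuum $(\ref{Evac})$ the Riemann tensor satisfies the Bianchi system $(\ref{spin2})$, which is precisely the spin-$2$ analog of $\Box_g\psi=0$. The natural substitute for the current $J^V_\mu(\psi)=T_{\mu\nu}(\psi)V^\nu$ is the contraction of the Bel--Robinson tensor $Q_{\mu\nu\lambda\rho}(R)$ --- a fully symmetric, trace-free, divergence-free tensor, quadratic in $R$, enjoying a dominant-energy positivity --- with the approximate conformal symmetries of Minkowski space: the time translation $T$ (boundedness, as in Section~\ref{S1}), the scaling $S$, the rotations $O_i$ (the curvature analog of commuting with $\Omega_i$), and, crucially, the conformal Morawetz field $K$, which is the nonlinear analog of the vector field $Z$ of Section~\ref{Morawetz}. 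Contracting $Q$ with these and applying the divergence theorem yields weighted energy fluxes on $\Sigma_t$ and $C_u$ whose growth is controlled by the deformation tensors ${}^{(V)}\pi$ of these would-be symmetries. Since the spacetime is only close to Minkowski, the ${}^{(V)}\pi$ do not vanish; they are instead controlled by the Ricci coefficients, hence by the bootstrap.

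Closing the loop is where the geometry re-enters. The Ricci coefficients are governed by the null structure (transport) equations along the cones, coupled to Hodge-elliptic systems on the $2$-spheres $S_{t,u}=\Sigma_t\cap C_u$, with the curvature components as sources. I would therefore run a bootstrap: assume weighted smallness of a master norm collecting the curvature fluxes and the Ricci-coefficient norms on a maximal time interval, use the Bel--Robinson energy estimates together with the structure equations to recover improved bounds on both, and invoke Klainerman--Sobolev-type inequalities (the curvature analog of the Sobolev step passing from $(\ref{oresti})$ to $(\ref{ellipt})$) to convert fluxes into the pointwise decay rates in $r$ and $t$ that render the error integrals borderline convergent. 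Geodesic completeness and the stated uniform decay follow; one then attaches a complete $\mathcal{I}^+$ by exhibiting the limiting null structure along the $C_u$ as $t\to\infty$, and the identity $J^-(\mathcal{I}^+)=\mathcal{M}$ follows from the global character of the foliation.

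The main obstacle --- and the reason this is a book rather than a lemma --- is the simultaneous, \emph{borderline} control of geometry and curvature with no slack. The weights available from $K$ and $S$ supply decay that is exactly critical, so the error terms generated by the ${}^{(V)}\pi$ decay no faster than the reciprocal of the evolution parameter; one cannot afford to lose. This forces one to identify the precise peeling rates of the null curvature components and to propagate them without degradation, while at the same time controlling the optical function $u$ --- whose regularity is intrinsically limited, since it solves an eikonal equation sourced by the very metric one is estimating --- well enough to keep the null frames and the Bel--Robinson fluxes mutually consistent. Managing this tightly coupled hierarchy at its critical decay threshold is the crux of the argument.
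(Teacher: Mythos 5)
The paper does not prove Theorem~\ref{monumental} itself; it only cites the Christodoulou--Klainerman monograph~\cite{book}, and your outline is a faithful reconstruction of exactly that argument (maximal foliation plus optical function, null decomposition, Bel--Robinson energies contracted with the approximate conformal symmetries $T$, $S$, $K$, $O_i$, null structure equations coupled to Hodge systems on the spheres $S_{t,u}$, a bootstrap closed by Klainerman--Sobolev inequalities at the borderline decay threshold). The only caveat worth recording is that the decay rates actually propagated in~\cite{book} are weaker than classical peeling for some null components --- which is precisely the point of Christodoulou's later observation~\cite{rome} cited in Section~\ref{general?} --- but this does not affect the correctness of your strategy.
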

The above theorem also allows one to rigorously define the laws of gravitational radiation.
These laws are nonlinear even at infinity. 
Theorem~\ref{monumental} led to the discovery
 of Christodoulou's memory effect~\cite{memory}.

A new proof of a version of stability of Minkowski space using
harmonic coordinates has been given in~\cite{LR}.
This has now been extended in various directions in~\cite{loizelet}.
The original result~\cite{book} was
extended to the Maxwell case in the Ph.D.~thesis of Zipser~\cite{zipser}.
 Bieri~\cite{bieri0} has very recently
given a proof of a version of stability of Minkowski space
under weak asymptotics and regularity
assumptions, following the basic setup of~\cite{book}. 

There was an earlier semi-global result 
of Friedrich~\cite{Fr2} where initial data were prescribed
on a hyperboloidal initial hypersurface meeting $\mathcal{I}^+$.

A common misconception is that it is the positivity of mass which is somehow
responsible for the stability
of Minkowski space. 
The results of~\cite{LR} for this are very telling, for they apply
not only to the Einstein-vacuum equations, but also to the Einstein-scalar field 
system of Section~\ref{ESF},
including the case where the definition of the energy-momentum tensor $(\ref{Einst-sf2})$ 
is replaced with its negative. 
Minkowski space is then not even a local minimizer for the mass 
functional in the class of perturbations
allowed! Nonetheless, by the results of~\cite{LR}, Minkowski space 
is still stable in this context.

Another point which cannot be overemphasized: It is essential that the smallness
in $(\ref{monumental})$ concern a weighted norm. Compare with the results of
Section~\ref{vaccol}. 
 
Stability of Minkowski space is the only truly global result on the maximal development
which has been obtained for asymptotically flat initial data without symmetry. 
There are a number of important results 
applicable in cosmological settings, due to Friedrich~\cite{Fr2}, 
Andersson-Moncrief~\cite{larsvince}, and most recently Ringstrom~\cite{ringstrom}.
 
Other than this, our current global understanding of solutions to the Einstein equations 
(in particular all work on the cosmic censorship conjectures)
has been confined to solutions under symmetry. We have given many such references
in the asymptotically flat setting in the course of Section~\ref{introsec}. 
The cosmological setting is beyond the scope of these notes, but we refer the
reader to the recent review article and
book of Rendall~\cite{rendallrev, rendall}  for an overview and many references.

\section{The divergence theorem}
\label{divthesec}
Let $(\mathcal{M},g)$ be a spacetime, 
and let $\Sigma_0$, $\Sigma_1$ be homologous spacelike hypersurfaces with 
common boundary, bounding a spacetime region $\mathcal{B}$,
with $\Sigma_1 \subset J^+(\Sigma_0)$. 
Let $n^\mu_0$, $n^\mu_1$ denote the future unit normals of $\Sigma_0$, $\Sigma_1$
respectively, and let $P_\mu$ denote a one-form. 
Under our convention on the signature,
the divergence theorem takes the form
\begin{equation}
\label{divthe}
\int_{\Sigma_1} P_\mu n^\mu_1  + \int_{\mathcal{B}}
\nabla^\mu P_\mu =
\int_{\Sigma_0} P_\mu n^\mu_0,
\end{equation}
where all integrals are with respect to the \emph{induced volume form}.

This is defined as follows. The volume form of spacetime is
\[
\sqrt{-\det g}  dx^0\ldots dx^n
\]
where $\det g$ denotes the determinant of the matrix $g_{\alpha\beta}$ in the above 
coordinates. The induced volume form of a spacelike hypersurface
is defined as in Riemannian geometry.  

We will also consider the case where (part of) $\Sigma_1$ is null.
Then, we choose  arbitrarily a future directednull generator $n^\Sigma_1$ for $\Sigma_1$ arbitrarily and define the
volume element so that the divergence theorem applies.
For instance the divergence theorem in the 
region $\mathcal{R}(\tau',\tau'')$ (described in the lectures) for an arbitrary
current $P_\mu$
then takes the form
\[
\int_{\Sigma_{\tau''}} P_\mu n^\mu_{\Sigma_{\tau''}}+
\int_{\mathcal{H}(\tau',\tau'')} P_\mu n^\mu_{\mathcal{H}}
+\int_{\mathcal{R}(\tau',\tau'')} \nabla^\mu P_\mu =
\int_{\Sigma_{\tau'}} P_\mu n^\mu_{\Sigma_{\tau'}},
\]
where the volume elements are as described.

Note how the form of this theorem can change depending on sign conventions regarding
the directions of the normal, the definition of the divergence and the signature of the metric.

\section{Vector field multipliers and their currents}
\label{VFM}
Let
$\psi$ be a solution of
\begin{equation}
\label{WE}
\Box_g\psi=0
\end{equation}
on a Lorentzian manifold $(\mathcal{M},g)$.
Define 
\begin{equation}
\label{noec}
T_{\mu\nu}(\psi)=\partial_\mu\psi\partial_\nu\psi -\frac12 g_{\mu\nu}\partial^\alpha\psi
\partial_\alpha\psi
\end{equation}
We call $T_{\mu\nu}$ the \emph{energy-momentum} tensor of $\psi$.\footnote{Note that this
is the same expression that appears on the right hand side of $(\ref{Einstcp})$ in the
Einstein-scalar field system. See Section~\ref{ESF}.} Note the 
symmetry property 
\[
T_{\mu\nu}=T_{\nu\mu}.
\]
The wave equation $(\ref{WE})$ implies
\begin{equation}
\label{divfree}
\nabla^\mu T_{\mu\nu}=0.
\end{equation}

Given a vector field $V^\mu$, we may define
the associated currents
\begin{equation}
\label{assoc1}
J^V_\mu(\psi) = V^\nu T_{\mu\nu}(\psi)
\end{equation}
\begin{equation}
\label{assoc2}
K^V= {}^V\pi_{\mu\nu}T^{\mu\nu}(\psi)
\end{equation}
where $\pi^X$ is the deformation tensor defined by
\[
{}^X\pi_{\mu\nu}= \frac12\nabla_{(\mu} X_{\nu)}=\frac12 (\mathcal{L}_Xg)_{\mu\nu}.
\]
The identity $(\ref{divfree})$ gives
\[
\nabla^\mu J^V_\mu (\psi)= K^V(\psi).
\]

Note that $J^V_\mu(\psi)$ and $K^V(\psi)$ both depend only on the $1$-jet of $\psi$,
yet the latter is the divergence of the former.         
Applying the divergence theorem $(\ref{divthe})$, this allows one to relate
quantities of the same order. 

The existence of a tensor $T_{\mu\nu}(\psi)$  satisfying 
$(\ref{divfree})$ follows from the
fact that equation~$(\ref{WE})$ derives from a Lagrangian of a specific type.
These issues were first systematically studied by Noether~\cite{noether}.
For more general such Lagrangian theories, two
currents $J_\mu$, $K$ with $\nabla^\mu J_\mu = K$, both depending only on the $1$-jet,
but not necessarily arising from $T_{\mu\nu}$ as above, are
known as  \emph{compatible currents}. These have been introduced and classified
by Christodoulou~\cite{book2}.

\section{Vector field commutators}
\label{commutat}

\begin{proposition}
Let $\psi$ be a solution of the equation of the scalar equation
$$
\Box_g\psi=f,
$$
and $X$ be a vectorfield. Then
$$
\Box_g (X\psi) = X(f) - 2\, ^X\pi^{\a\b}  \nabla_\a \nabla_\b \psi - 2 \left (2 (\nabla^\a\, ^X\pi_{\a\mu})   - 
(\nabla_\mu\, ^X\pi^\a_{\a} ) \right) \nabla^\mu\psi.
$$
\end{proposition}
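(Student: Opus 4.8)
The plan is to derive the commutator formula by a direct computation starting from the fundamental commutation identity for covariant derivatives, i.e. the Ricci identity recorded earlier in Appendix~\ref{Lorge}. First I would compute $\Box_g(X\psi)$ by unwinding $\Box_g = g^{\alpha\beta}\nabla_\alpha\nabla_\beta$ applied to the scalar $X\psi = X^\mu \nabla_\mu\psi$. Since $X\psi$ is a scalar, we have $\nabla_\beta(X\psi) = \nabla_\beta(X^\mu \nabla_\mu\psi) = (\nabla_\beta X^\mu)\nabla_\mu\psi + X^\mu \nabla_\beta\nabla_\mu\psi$, and then applying $\nabla_\alpha$ again and contracting with $g^{\alpha\beta}$ produces terms of three types: a term in which both derivatives hit $\psi$ through $X^\mu$ (which will reassemble into $X(\Box_g\psi) = X(f)$ plus corrections), a term with one derivative on $X$ and two on $\psi$, and a term with two derivatives on $X$ and one on $\psi$.

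The key step is to organize these terms so that the second-order-in-$\psi$ piece is expressed through the deformation tensor ${}^X\pi_{\alpha\beta} = \tfrac12(\mathcal{L}_X g)_{\alpha\beta} = \tfrac12 \nabla_{(\alpha}X_{\beta)}$ as defined in Appendix~\ref{VFM}. The term $g^{\alpha\beta}(\nabla_\alpha X^\mu)\nabla_\beta\nabla_\mu\psi$ must be symmetrized in the indices contracted against the Hessian $\nabla_\beta\nabla_\mu\psi$ (which is symmetric since $\psi$ is a scalar), and this symmetrization is exactly what converts $\nabla_\alpha X_\mu$ into its symmetric part $2\,{}^X\pi_{\alpha\mu}$, yielding the $-2\,{}^X\pi^{\alpha\beta}\nabla_\alpha\nabla_\beta\psi$ term. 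Getting the sign and the factor of $2$ correct here, by carefully tracking the conventions for raising indices and the factor in the definition of ${}^X\pi$, is where I would be most careful.

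The first-order term is the main obstacle, since producing the specific combination $2\big(2\nabla^\alpha\,{}^X\pi_{\alpha\mu} - \nabla_\mu\,{}^X\pi^\alpha_\alpha\big)\nabla^\mu\psi$ requires invoking the Ricci identity to swap the order of the two covariant derivatives that fall on $X$, thereby generating a Riemann curvature contraction. The plan is to show that this curvature term, together with the trace terms coming from $\Box_g X^\mu$, reorganizes into divergences of the deformation tensor; this is a standard but delicate manipulation relying on the identity $\nabla_\alpha\nabla_\beta Z_\mu - \nabla_\beta\nabla_\alpha Z_\mu = R_{\sigma\mu\alpha\beta}Z^\sigma$ from Appendix~\ref{Lorge} and on the contracted second Bianchi identity to eliminate the explicit Ricci curvature in favor of derivatives of ${}^X\pi$. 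Once the second- and first-order pieces are identified, the remaining terms collapse to $X(f)$ because $g^{\alpha\beta}\nabla_\alpha\nabla_\beta\psi = \Box_g\psi = f$, and commuting $X$ past $\Box_g$ leaves precisely the displayed error terms. I expect the bookkeeping of curvature terms in this last reduction to be the genuinely nontrivial part; everything else is routine index gymnastics.
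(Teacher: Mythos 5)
Your plan is correct in substance, but it follows a genuinely different route from the one the paper takes. The paper never expands $\Box_g(X^\mu\nabla_\mu\psi)$ directly: it computes $X(\Box_g\psi)=\mathcal{L}_X\bigl(g^{\a\b}\nabla_\a\nabla_\b\psi\bigr)$, uses $\mathcal{L}_Xg^{\a\b}=\mp 2\,{}^X\pi^{\a\b}$ for the principal term, invokes the standard formula for the commutator $[\mathcal{L}_X,\nabla]$ acting on the one-form $\nabla_\b\psi$ --- which is a tensor expressible purely in first derivatives of ${}^X\pi$ --- and closes with the observation $\mathcal{L}_X\nabla_\b\psi=\nabla_\b(X\psi)$. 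In that formalism the Riemann tensor never appears explicitly; all curvature is hidden inside the identity for $\mathcal{L}_X$ of a covariant derivative. Your direct expansion is equally valid and arguably more elementary (no Lie-derivative calculus needed), at the cost of making the curvature explicit in intermediate steps: you get $\Box_g(X\psi)=X(f)+2\,{}^X\pi^{\b\mu}\nabla_\b\nabla_\mu\psi+\bigl(\Box_g X_\mu\pm R_{\si\mu}X^\si\bigr)\nabla^\mu\psi$, with the Ricci term coming from commuting the third derivatives of $\psi$. One correction to your plan: the contracted second Bianchi identity plays no role here. What you need instead is a \emph{second} application of the commutation identity, this time to the vector field $X$ itself, to rewrite $\nabla^\a\nabla_\mu X_\a=\nabla_\mu\nabla^\a X_\a\pm R_{\si\mu}X^\si$; this is exactly what identifies $\Box_g X_\mu\pm R_{\si\mu}X^\si$ with $2\nabla^\a\,{}^X\pi_{\a\mu}-\nabla_\mu\,{}^X\pi^\a_{\ \a}$ and absorbs the Ricci term. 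Your caution about signs and factors of $2$ is warranted --- the normalization ${}^X\pi_{\mu\nu}=\frac12\nabla_{(\mu}X_{\nu)}$ and the placement of terms on either side of the identity are precisely where the discrepancies with the displayed formula arise --- but this is bookkeeping, not a gap.
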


\begin{proof}
To  show this we write 
$$
X(\Box_g \psi) ={\mathcal L}_X (g^{\a\b} \nabla_\a \nabla_\b \psi)= 2\, ^X\pi^{\a\b}  
\nabla_\a \nabla_\b \psi + 
g^{\a\b} {\mathcal L}_X (\nabla_\a \nabla_\b \psi).
$$
Furthermore,
$$
{\mathcal L}_X (\nabla_\a \nabla_\b \psi)-\nabla_\a {\mathcal L}_X 
\nabla_\b\psi=2 \left ((\nabla_\b\, ^X\pi_{\a\mu})   - 
(\nabla_\mu\, ^X\pi_{\b\a} ) +(\nabla_\a \,^X\pi_{\mu\b}) \right) \nabla^\mu\psi
$$
and 
$$
{\mathcal L}_X \nabla_\b\psi = \nabla_X \nabla_\b\psi +
\nabla_\b X^\mu \nabla_\mu \psi= \nabla_\b (X\psi).
$$
\end{proof}

\section{Some useful Schwarzschild computations}
In this section, $(\mathcal{M},g)$ refers to maximal Schwarzschild with $M>0$,
$\mathcal{Q}=\mathcal{M}/SO(3)$, $\mathcal{I}^\pm$, $J^\mp (\mathcal{I}^\pm)$ are as defined
in Section~\ref{pendiag}.
\label{computs}
\subsection{Schwarzschild coordinates $(r,t)$}
The coordinates are $(r,t)$ and 
the metric takes the form
\[
-(1-2M/r)dt^2+(1-2M/r)^{-1} dr^2 +r^2 d\sigma_{\mathbb S^2}
\]
These coordinates can be used to cover any of the four connected components
of  $\mathcal{Q}\setminus\mathcal{H}^\pm$.
In particular, the region $J^-(\mathcal{I}^+_A)\cap J^+(\mathcal{I}^-_A)$ (where 
$\mathcal{I}^\pm_A$ correspond to a pair of connected components
of $\mathcal{I}^\pm$ sharing a limit point in the embedding) is covered
by a Schwarzschild coordinate system where 
$2M<r<\infty$, $-\infty<t<\infty$. Note that $r$
has an invariant characterization
namely $r(x)=\sqrt{\rm Area(S)/4\pi}$ where $S$ is the unique group orbit 
of the ${\rm SO}(3)$ action containing $x$.\footnote{Compare with the Minkowski case
$M=0$ where the ${\rm SO}(3)$ action is of course not unique.}

The hypersurface $\{t=c\}$ in the Schwarzschild coordinate region
$J^-(\mathcal{I}^+_A)\cap J^+(\mathcal{I}^-_A)$ 
 extends regularly to a hypersurface with boundary
in $\mathcal{M}$
where the boundary is precisely $\mathcal{H}^+\cap\mathcal{H}^-$.

The coordinate vector field $\partial_t$ is Killing (and extends to the globally defined
Killing field $T$).

In a slight abuse of notation, we will often
extend Schwarzschild coordinate notation to $\mathcal{D}$, the closure of 
$J^-(\mathcal{I}^+_A)\cap J^+(\mathcal{I}^-_A)$. For instance, 
we may talk of the vector 
field $\partial_t$ ``on'' $\mathcal{H}^\pm$, or of $\{t=c\}$ having boundary 
$\mathcal{H}^+\cap\mathcal{H}^-$, etc.

\subsection{Regge-Wheeler coordinates $(r^*,t)$}
\label{RWcs}
Here  $t$ is as before and
\begin{equation}
\label{burada}
r^*= r+2M\log (r-2M)-3M-2M\log M
\end{equation}
and the metric takes the form
\[
-(1-2M/r)(-dt^2+(dr^*)^2) + r^2 d\sigma_{\mathbb S^2}
\]
where $r$ is defined implictly by $(\ref{burada})$.
A coordinate
chart defined in $-\infty<r^*<\infty$, $-\infty<t<\infty$
covers $J^-(\mathcal{I}^+_A)\cap J^+(\mathcal{I}^-_A)$.

The constant renormalisation of the coordinate is taken so that $r^*=0$ at the photon sphere,
where $r=3M$.

Note the explicit form of the wave operator
\[
\Box_{g}\psi = -(1-2M/r)^{-1} (\partial_t^2\psi-r^{-2}\partial_{r^*} (r^2\partial_{r^*}\psi))
+\nabb^A\nabb_A\psi
\]
where $\nabb$ denotes the induced covariant derivative on the group orbit spheres.

Similar warnings of abuse of notation apply, for instance, we may write
$\partial_t=\partial_{r*}$ on $\mathcal{H}^+$.

\subsection{Double null coordinates $(u,v)$}
\label{dnc}
Our convention is to define
\[
u=\frac12(t-r^*),
\]
\[
v=\frac12(t+r^*).
\]
The metric takes the form
\[
-4(1-2M/ r)du dv + r^2 d\sigma_{\mathbb S^2}
\]
and $J^-(\mathcal{I}^+_A)\cap J^+(\mathcal{I}^-_A)$ is covered
by a chart $-\infty<u<\infty$, $-\infty<v<\infty$.

The usual comments about abuse of notation hold, in particular, 
we may now parametrize $\mathcal{H}^+\cap \mathcal{D}$ with
$\{\infty\}\times [-\infty,\infty)$ and similarly
$\mathcal{H}^-\cap \mathcal{D}$ with $(-\infty,\infty]\times\{-\infty\}$, and write
$\partial_v(-\infty,v)=\partial_t(-\infty,v)$, $\partial_u(-\infty,v)=0$.

Note that the vector field
$(1-2M/r)^{-1}\partial_u$ extends to a regular vector null field across 
$\mathcal{H}^+\setminus\mathcal{H}^-$.
Thus, with the basis $\partial_v$, $(1-2M/r)^{-1}\partial_u$,
one can choose regular vector fields near $\mathcal{H}^+\setminus \mathcal{H}^-$ without
changing to regular coordinates. In practice, this can be convenient.

\end{document}